\renewcommand\paragraph{\@startsection{paragraph}{4}{\z@}%
                                    {1.25ex \@plus1ex \@minus.2ex}%
                                    {-0.5em}%
                                    {\normalfont\normalsize\bfseries\@adddotafter}}
\newcommand{\li}[2][XX]{%
   \ifthenelse{\equal{#1}{XX}}%
      {\ensuremath{L(#2)}}%
            {\ensuremath{L^{#1}(#2)}}}
\newlength{\rWidth}
\newcommand{\fun}[3][XX]{%
   \ifthenelse{\equal{#1}{XX}}%
      {\ensuremath{#2 {\rightarrow} #3}}%
      { \settowidth{\rWidth}{\ensuremath{#1}}%
        \ensuremath{ #2\hspace{.1em} {\xrightarrow{\hspace{\rWidth}}\hspace{-1.1\rWidth}}^%
         {#1}
         \hspace{0.3\rWidth}\hspace{.1em} #3}}}
\newcommand{\Rule}[4][]{\ensuremath{\inferrule*[lab={\footnotesize{(#2)}},#1]{#3}{#4}}}
\newcommand{\p}[1]{\ensuremath{(#1)}}
\newcommand{\sq}[1]{\ensuremath{[#1]}}
\newcommand{\set}[1]{\ensuremath{\{#1\}}}
\newcommand{\tuple}[1]{\ensuremath{\langle#1\rangle}}
\newcommand{\interp}[1]{\ensuremath{\llbracket #1 \rrbracket}}
\newcommand{\interps}[1]{\ensuremath{\llparenthesis #1 \rrparenthesis}}
\newcommand{\defeq}{\overset{\underset{\textnormal{def}}{}}{=}}
\newcommand{\bbB}{\ensuremath{\mathbb{B}}}
\newcommand{\bbN}{\ensuremath{\mathbb{N}}}
\newcommand{\bbZ}{\ensuremath{\mathbb{Z}}}
\newcommand{\bbU}{\ensuremath{\mathbb{U}}}
\newcommand{\calI}{\ensuremath{\mathcal{I}}}
\newcommand{\calJ}{\ensuremath{\mathcal{J}}}
\newcommand{\calT}{\ensuremath{\mathcal{T}}}
\newcommand{\scrB}{\ensuremath{\mathscr{B}}}
\newcommand{\subst}[3]{\ensuremath{\sq{#1/#2}#3}}
\newcommand{\sharing}{\ensuremath{\mathrel{\curlyveedownarrow}}}
\newcommand{\potv}[3]{\ensuremath{\Phi_{#1}\p{#2 : #3}}}
\newcommand{\potc}[2]{\ensuremath{\Phi_{#1}\p{#2}}}
\newcommand{\pot}[1]{\ensuremath{\Phi\p{#1}}}
\newcommand{\condv}[3]{\ensuremath{\Psi_{#1}\p{#2:#3}}}
\newcommand{\condc}[2]{\ensuremath{\Psi_{#1}\p{#2}}}
\newcommand{\etriv}{\ensuremath{\mathsf{triv}}}
\newcommand{\enat}[1]{\ensuremath{\overline{#1}}}
\newcommand{\etrue}{\ensuremath{\mathsf{true}}}
\newcommand{\efalse}{\ensuremath{\mathsf{false}}}
\newcommand{\econd}[3]{\ensuremath{\mathsf{if}\p{#1,#2,#3}}}
\newcommand{\elet}[3]{\ensuremath{\mathsf{let}\p{#1,#2.#3}}}
\newcommand{\enil}{\ensuremath{\mathsf{nil}}}
\newcommand{\econs}[2]{\ensuremath{\mathsf{cons}\p{#1,#2}}}
\newcommand{\eabs}[2]{\ensuremath{\lambda\p{#1.#2}}}
\newcommand{\efix}[3]{\ensuremath{\mathsf{fix}\p{#1.#2.#3}}}
\newcommand{\eapp}[2]{\ensuremath{\mathsf{app}\p{#1,#2}}}
\newcommand{\eimp}{\ensuremath{\mathsf{impossible}}\xspace}
\newcommand{\econsumename}{\ensuremath{\mathsf{tick}}\xspace}
\newcommand{\econsume}[2]{\ensuremath{\econsumename\p{#1,#2}}}
\newcommand{\epair}[2]{\ensuremath{\mathsf{pair}\p{#1,#2}}}
\newcommand{\ematp}[4]{\ensuremath{\mathsf{matp}\p{#1, #2.#3.#4}}}
\newcommand{\tprod}[2]{\ensuremath{#1 \times #2}}
\newcommand{\jval}[1]{\ensuremath{#1 \in \mathsf{Val}}}
\newcommand{\jstep}[4]{\ensuremath{\tuple{#1,#3} \mapsto \tuple{#2,#4}}}
\newcommand{\jsteps}[4]{\ensuremath{\tuple{#1,#3} \mapsto^* \tuple{#2,#4}}}
\newcommand{\jstepn}[5][n]{\ensuremath{\tuple{#2,#4} \mapsto^{#1} \tuple{#3,#5}}}
\newcommand{\jsort}[3]{\ensuremath{#1 \vdash #2 \in #3}}
\newcommand{\jwftype}[2]{\ensuremath{#1 \vdash #2~\mathsf{type}}}
\newcommand{\jwfctxt}[1]{\ensuremath{\vdash #1~\mathsf{context}}}
\newcommand{\jsubty}[3]{\ensuremath{#1 \vdash #2 <: #3}}
\newcommand{\jsharing}[4]{\ensuremath{#1 \vdash #2 \sharing #3 \mid #4}}
\newcommand{\jctxsharing}[3]{\ensuremath{\vdash #1 \sharing #2 \mid #3}}
\newcommand{\jprop}[2]{\ensuremath{#1 \models #2}}
\newcommand{\jatyping}[3]{\ensuremath{#1 \vdash #2 : #3}}
\newcommand{\jstyping}[3]{\ensuremath{#1 \vdash #2 \dblcolon #3}}
\newcommand{\jctxtyping}[3][\cdot]{\ensuremath{#1 \vdash #2 \dblcolon #3}}
\newcommand{\jkind}[2]{\ensuremath{ #1 \rightsquigarrow #2}}
\newcommand{\jtunfoldnil}[3]{\ensuremath{#1 \vdash #2 \triangleleft^\mathsf{nil} #3}}
\newcommand{\jtunfoldcons}[3]{\ensuremath{#1 \vdash #2 \triangleleft^\mathsf{cons} #3}}
\newcommand{\many}[1]{\overrightarrow{#1}}
\newcommand{\vbind}[2]{\ensuremath{#1:#2}}
\newcommand{\tunit}{\ensuremath{\mathsf{unit}}}
\newcommand{\tbool}{\ensuremath{\mathsf{bool}}}
\newcommand{\tnat}{\ensuremath{\mathsf{nat}}}
\newcommand{\tlist}[1]{\ensuremath{L\p{#1}}}
\newcommand{\tinduct}[5][\lhd,\pi]{\ensuremath{\mathsf{ind}_{#1}^{#5}\p{ \many{ #2 {\,:\,} (#3,#4) } }}}
\newcommand{\tpot}[2]{\ensuremath{{#1}^{#2}}}
\newcommand{\tsubset}[2]{\ensuremath{\set{ #1 \mid #2 }}}
\newcommand{\tarrow}[3]{\ensuremath{#1{\,:\,}#2 \rightarrow #3}}
\newcommand{\tarrowm}[4]{\ensuremath{{#4}\cdot\p{#1{\,:\,}#2 \rightarrow #3}}}
\newcommand{\trefined}[3]{\tpot{\tsubset{#1}{#2}}{#3}}
\newcommand{\sprod}[2]{\ensuremath{#1 \times #2}}
\newcommand{\sarrow}[2]{\ensuremath{#1 \Rightarrow #2}}
\newcommand{\rabs}[3]{\ensuremath{\lambda #1{\,:\,}#2.#3}}
\newcommand{\rpair}[2]{\ensuremath{\p{#1,#2}}}
\newcommand{\rforall}[3]{\ensuremath{\forall #1{\,:\,}#2.#3}}
\newcommand{\projl}[1]{\ensuremath{#1.\mathbf{1}}}
\newcommand{\projr}[1]{\ensuremath{#1.\mathbf{2}}}
\newcommand{\bindvar}[2]{\vbind{#1}{#2}}
\newcommand{\tscalar}[1]{\ensuremath{#1~\mathsf{scalar}}}
\algrenewcommand\algorithmicrequire{\textbf{Input:}}
\algrenewcommand\algorithmicensure{\textbf{Output:}}
\newcommand{\Subt}{<:}
\newcommand{\env}{\Gamma}
\newcommand{\eg}{\textit{e.g.}\@\xspace}
\newcommand{\ie}{\textit{i.e.}\@\xspace}
\newcommand{\tname}[1]{\textsc{#1}\xspace}
\newcommand{\tool}{\tname{LRTChecker}}
\newcommand{\resyn}{\tname{ReSyn}}
\newcommand{\raml}{\tname{RaML}}
\newcommand{\relcost}{\tname{RelCost}}
\newcommand{\synquid}{\tname{Synquid}}
\newcommand{\typesys}{$\mathrm{Re}^2$\xspace}
\newcommand{\Omit}[1]{}
\newcommand{\numBench}{12\xspace}
\newcommand{\tktype}[1]{\ensuremath{\mathsf{#1}}}
\newcommand{\tkptype}[2]{\ensuremath{\mathsf{#1}^{#2}}}
\newcommand{\tklist}[1]{\ensuremath{\mathsf{List \, #1}}}
\newcommand{\tkplist}[2]{\ensuremath{\mathsf{List \, #1}^{#2}}}
\newcommand{\tkaplist}[3]{\ensuremath{\mathsf{List \, #1}^{#2} \, \langle #3 \rangle}}
\newcommand{\tkapdt}[4]{\ensuremath{\mathsf{#4 \, #1}^{#2} \, \langle #3 \rangle}}
\newcommand{\tkadt}[3]{\ensuremath{\mathsf{#3 \, #1} \, \langle #2 \rangle}}
\newcommand{\tksym}[1]{\textcolor{blue}{#1}}
\newif\iflong
\begin{document}

\title{Liquid Resource Types}

\author{Tristan Knoth}
\affiliation{
  \institution{University of California, San Diego}
  \country{USA}
}
\email{tknoth@ucsd.edu}

\author{Di Wang}
\affiliation{
  \institution{Carnegie Mellon University}
  \country{USA}
}
\email{diw3@cs.cmu.edu}

\author{Adam Reynolds}
\affiliation{
  \institution{University of California, San Diego}
  \country{USA}
}
\email{acreynol@ucsd.edu}

\author{Jan Hoffmann}
\affiliation{
  \institution{Carnegie Mellon University}
  \country{USA}
}
\email{jhoffmann@cmu.edu}

\author{Nadia Polikarpova}
\affiliation{
  \institution{University of California, San Diego}
  \country{USA}
}
\email{npolikarpova@ucsd.edu}

\begin{abstract}
This article presents \emph{liquid resource types}, 
a technique for automatically verifying the resource consumption
of functional programs.
Existing resource analysis techniques trade automation for flexibility --
automated techniques are restricted to relatively constrained
families of resource bounds, while more expressive proof techniques
admitting value-dependent bounds rely on handwritten proofs.
Liquid resource types combine the best of these approaches,
using logical refinements to automatically prove precise
bounds on a program's resource consumption.
The type system augments refinement types with potential annotations
to conduct an amortized resource analysis.
Importantly, users can annotate data structure declarations to
indicate how potential is allocated within the type,
allowing the system to express bounds with polynomials and exponentials,
as well as more precise expressions depending on program values.
We prove the soundness of the type system,
provide a library of flexible and reusable data structures
for conducting resource analysis, and use our prototype
implementation to automatically verify resource bounds
that previously required a manual proof.
\end{abstract}

\begin{CCSXML}
<ccs2012>
  <concept>
  <concept_id>10011007.10011006.10011008.10011009.10011012</concept_id>
  <concept_desc>Software and its engineering~Functional languages</concept_desc>
  <concept_significance>500</concept_significance>
  </concept>
  <concept>
  <concept_id>10003752.10010124.10010138.10010143</concept_id>
  <concept_desc>Theory of computation~Program analysis</concept_desc>
  <concept_significance>500</concept_significance>
  </concept>
</ccs2012>
\end{CCSXML}
  
\ccsdesc[500]{Software and its engineering~Functional languages}
\ccsdesc[500]{Theory of computation~Program analysis}

\keywords{Automated amortized resource analysis, Refinement types}  

\maketitle

\section{Introduction}
\label{sec:intro}

Open any algorithms textbook and one will read about
a number of sorting algorithms, all functionally equivalent.
Why then, are there so many algorithms that do the same thing?
The answer is that there are subtle differences in their performance characteristics.
Consider, for example, the choice between quicksort and insertion sort.
In the worst case, both algorithms run in quadratic time.
Insertion sort, however, only needs to move the values
that are out of place, 
so it can perform much better on mostly-sorted data.

\paragraph{Resource analysis}
Choosing between implementations of seemingly simple functions like
these requires precise resource analysis.
Thus, there has been a lot of existing work in
both inferring and verifying bounds on a program's resource consumption.
In general existing approaches must trade automation
for flexibility and precision.

On one end of the spectrum,
Resource-Aware ML (\raml) \cite{RAML10} automatically infers polynomial 
bounds on recursive programs
by allocating \emph{potential} amongst data structures.
\raml{} reduces least upper bound inference to finding a minimal
solution to a system of linear constraints corresponding
to the program's resource demands.
On the other hand,
\relcost \cite{Radicek18} offers greater flexibility 
at the expense of automation.
\relcost allows users to prove precise resource bounds 
that depend on program values,
but requires hand-written proofs.

For example, consider insertion sort:
\autoref{fig:insert} shows a recursive implementation of this sorting algorithm in a functional language.
In this example we adopt a simple cost model where recursive calls
incur unit cost, and all other operations do not require resources;
we indicate this by wrapping recursive calls in a special operation $\mathsf{tick}$,
which consumes a given amount of resources. 
\raml{} can infer a tight quadratic bound on the cost
of evaluating $\mathsf{sort}:\, 0.5(n^2 + n)$, where $n$
is the length of the input list.
\relcost allows one to prove a more complex bound:
insertion sort requires resources proportional to the number
of out-of-order pairs in the input.
However, the proof must be written by hand.
\emph{Is it possible to develop a technique that admits both automation
and expressiveness and can automatically verify these kinds of
fine-grained bounds?}

\begin{figure}
\begin{minipage}{0.45\textwidth}
\begin{nanoml}
insert = \x. \xs.
  match xs with 
    Nil -> Cons x xs
    Cons hd tl -> if hd < x
      then Cons hd (tick 1 (insert x tl))
      else Cons x (Cons hd tl)
\end{nanoml} 
\end{minipage}
\begin{minipage}{0.45\textwidth}
\begin{nanoml}
sort = \xs. 
  match xs with
    Nil -> Nil
    Cons hd tl -> 
      insert hd (tick 1 (sort tl))
\end{nanoml} 
\end{minipage}

\caption{Insertion sort}
\label{fig:insert}
\end{figure}

\paragraph{Liquid Types and Resources}

Recent work on \resyn~\cite{resyn} takes a first step in this direction by
extending a \emph{liquid type system} with resource analysis.
Liquid types~\cite{RondonKaJh08} support automatic verification of nontrivial 
functional properties with an SMT solver.
\resyn augments an existing liquid type system~\cite{PolikarpovaKS16}
with a single construct: types can be annotated with
a numeric quantity called \emph{potential}.
For example, a value of type \tkptype{Int}{1} carries a single unit of potential,
which can be used to pay for an operation with unit cost.
Combined with polymorphic datatypes, this mechanism can describe
uniform assignment of potential to the elements of a data structure.
For example, instantiating a polymorphic list type \tklist{a} with $\tktype{a} \mapsto \tkptype{Int}{1}$  
yields \tkplist{Int}{1}, a type of lists where every element has a single unit of potential.

The \resyn type checker verifies that a
program has enough potential to pay for all operations
that may occur during evaluation.
For example, \resyn can check the implementation of \T{insert} in \autoref{fig:insert} against the (polymorphic) type
$\tarrow{x}{a}{\tarrow{xs}{\tkplist{a}{1}}{\tklist{a}}}$
to verify that the function makes one recursive call per element in the input \T{xs}.
Here $\tkplist{a}{1}$ stands for the type of lists where each element has one more unit of potential than prescribed by type \tktype{a}.

More interestingly, the combination of refinements and
potential annotations allows \resyn to verify \emph{value-dependent} resource bounds.
To this end, \resyn supports the use of conditional linear arithmetic (CLIA) terms
as potential annotations, as opposed to just constants.
%
For example, \resyn can also check \T{insert} against the type
$\tarrow{x}{a}{\tarrow{xs}{\tkplist{a}{\mathsf{ite}(x > \nu, 1, 0)}}{\tklist{a}}}$,
which states that \T{insert} only makes a recursive call
for each element in \T{xs} smaller than \T{x}.
The annotation on the type of the list elements conditionally assigns
potential to a value in the list only when it is smaller than \T{x}%
\footnote{Throughout the paper, the special variable $\nu$ refers to an arbitrary inhabitant of the annotated type.}.
\resyn reduces this type checking problem to a system of second-order CLIA constraints,
which can be solved relatively efficiently using existing program synthesis techniques~\cite{alur2013}. 

\paragraph{Challenge: Analyzing super-linear bounds}

A major limitation of the \resyn type system is that it only supports \emph{linear bounds}.
In particular, a type of the form $\tkplist{a}{p}$ distributes the potential $p$ \emph{uniformly} throughout the list,
%
and hence cannot express resource consumption
of a super-linear function like insertion sort, 
which traverses the end of the input list \emph{more often} than the beginning 
(recall that insertion sort recursively sorts the tail of the list 
and traverses the newly sorted tail again to insert an element).
To verify this function, we need a type that allots more potential
to elements in the tail of a list than the head.
In this paper, we propose two simple extensions to the \resyn 
type system to support the verification of super-linear resource bounds, 
while still generating only second-order CLIA constraints
to keep type checking efficiently decidable.

\paragraph{Super-linear Resource Analysis with Inductive Potentials}

Our first insight is that we can describe non-uniform allocation of potential 
in a data structure
by embedding potential annotations \emph{into datatype definitions}.
We dub this mechanism \emph{inductive potentials}.
%
%
For example, the datatype \tktype{QList} in \autoref{fig:inductive-lists} (left) 
represents lists where every element has one more unit of potential than the one before it
(the total amount of potential in the list is thus \emph{quadratic} in its length).
We express this non-uniform distribution of potential with the type of \tktype{QCons}:
the elements in the tail of the list are of type \tkptype{a}{1} instead of \T{a},
indicating that they must contain one more unit of potential than the head does.
The datatype \tktype{ISList} in \autoref{fig:inductive-lists} (right)
is similar, but only assigns extra potential to those elements of the tail that are smaller than the head. 
Using these custom datatypes we can specify a coarse-grained (with \tktype{QList}) 
and fine-grained (with \tktype{ISList}) resource bound for insertion sort.
Importantly, all potential annotations are still expressed in CLIA, 
so we can verify super-linear resource bounds while reusing \resyn's constraint-solving infrastructure. 

\begin{figure}
\begin{minipage}{0.42\textwidth}
\begin{nanoml}
data QList a where
  QNil :: QList a
  QCons :: a -> QList a$^1$ -> QList a
\end{nanoml} 
\end{minipage}%
\begin{minipage}{0.58\textwidth}
\begin{nanoml}
data ISList a where
  ISNil :: ISList a
  ISCons :: x:a -> xs:ISList a$^{\mathsf{ite}(x > \nu,1,0)}$ -> ISList a
\end{nanoml} 
\end{minipage} 

  \caption{Two list types defined with inductive potentials: 
  \tktype{QList} carries quadratic potential;
  in \tktype{ISList}, elements in the tail only have potential when they are larger than the head.}
  \label{fig:inductive-lists}
\end{figure}

\paragraph{Flexibility via Abstract Potentials}

Inductive potentials, as descried so far, are somewhat restrictive. 
One must define a custom datatype for every resource bound.
In the insertion sort example, we had to define \tktype{QList} to perform a coarse-grained analysis
and \tktype{ISList} to perform a fine-grained analysis;
moreover, both types have a fixed constant $1$ embedded in their definition, 
so if the cost of \T{tick} inside \T{insert} were to increase,
these types would no longer work.
This is clearly unwieldy:
instead, we would like to be able to write \emph{libraries}
of reusable data structures,
each able to express a broad family of resource bounds.

To address this limitation,
our second insight is to \emph{parameterize} datatypes 
by numeric logic-level functions, 
which can then be used inside the datatype definition to allocate potential.
We dub this second type system extension \emph{abstract potentials}.
With abstract potentials, the programmer can define a single datatype
that represents a family of resource bounds,
and then instantiate it with appropriate potential functions to verify different concrete bounds.
%
For example, instead of defining \T{QList} and \T{ISList} separately,
we can define a more general type
$\tkaplist{a}{}{q \dblcolon a \rightarrow a \rightarrow \mathsf{Nat}}$,
where the parameter $q$ abstracts over the potential annotation in the constructor.
We can then instantiate $q$ with different logic-level functions 
to perform different analyses.
Importantly, type checking still generates constraints in the same logic fragment as \resyn. 
%
This design enables our type checker to automate resource analyses that 
would have previously required a handwritten proof. 

\paragraph{Contributions}
In summary, this paper the following technical contributions:

\begin{enumerate}
  \item \emph{Liquid resource types} (LRT), 
  a flexible type system for automatic resource analysis. 
  With inductive and abstract potentials,
  programmers can analyze a variety of resource bounds 
  by specifying how potential is allocated within a data structure.
  \item \emph{Semantics and a soundness proof} for the type system, including
  user-defined inductive data types. 
  \item A \emph{prototype implementation}, \tool, 
  that automatically checks precise value-dependent
  resource bounds with existing constraint solving technology.
  \item A \emph{library of data types} corresponding to families of resource bounds,
  such as lists admitting polynomial or exponential bounds over their length,
  and trees admitting linear combinations of their size and height.
  \item An \emph{evaluation} on a set of challenging examples showing that \tool
  automatically performs resource analyses out of scope of prior approaches.
\end{enumerate}

\section{Overview}
\label{sec:overview}

We begin with examples to better illustrate how
liquid resource types enable the automatic verification of precise 
resource bounds.
First, we show how \resyn integrates resource analysis into a liquid type system.
Second, we show how inductive potentials enable the analysis of 
super-linear bounds.
Finally, we show how abstract potentials make this paradigm flexible
and reusable.

\subsection{Background: \resyn}
\label{sec:overview:resyn}

\paragraph{Liquid Types}
In a refinement type system~\cite{fstar,Denney99}, 
types are annotated with logical predicates that constrain the range of their values.
For instance, the type of natural numbers can be expressed as \T{type Nat = \{Int | _v >= 0\}},
where the special variable $\nu$, as before, denotes an inhabitant of the type.
Liquid types~\cite{RondonKaJh08,VazouRoJh13} are a kind of refinement types
that restrict logical refinements 
to only appear on \emph{scalar} (\ie non-function) types,
and be expressed in decidable logics.
Due to these restrictions, liquid types support fully automatic verification
of nontrivial functional properties with the help of an SMT solver.

\paragraph{Potential Annotations}
\resyn~\cite{resyn} extends liquid types with the ability 
to reason about the resource consumption of programs in addition to their functional properties.
To this end, 
a type can also be annotated with a numeric logic expression called \emph{potential},
as well as a logical refinement.
For example, the type $\tkptype{Nat}{1}$ ranges over natural numbers that carry a single unit of potential.
Intuitively, potential can be used to ``pay'' for evaluating special \T{tick} terms,
which are placed throughout the program to encode a cost model.
For example, the context $[\T{x} : \tkptype{Nat}{1}]$ has a total of 1 unit of \emph{free potential}, 
which is sufficient to type-check a term like \T{tick 1 ()}.
Because duplicating potential would lead to unsound resource analysis,
\resyn's type system is \emph{affine},
which means that creating two copies of a context---for example, to type-check both sides of an application---%
requires distributing the available potential between them.

Simple potential annotations can be combined with other features of the type system,
such as polymorphic datatypes,
to specify more complex allocation of resources.
For example, instantiating a polymorphic datatype \tklist{a} with $\T{a} \mapsto \tkptype{Nat}{1}$
yields the type \tkplist{Nat}{1} of natural-number lists that carry one unit of potential per element.
Here and throughout the paper, a missing potential annotation defaults to zero,
so the type above stands for $(\tkplist{Nat}{1})^0$.
This default annotation hints at our more general notion of type substitution,
where potential annotations are added together:
instantiating a polymorphic datatype \tkplist{a}{m} with $\T{a} \mapsto \tkptype{Nat}{n}$
yields the type \tkplist{Nat}{m+n}.

Note that only ``top-level'' potential in a type contributes to the free potential of the context:
for example, the context $[\T{xs} : \tkplist{Nat}{1}]$ has no free potential
(which makes sense, since \T{xs} could be empty).
The potential bundled inside an inductive datatype
can be freed via pattern matching:
for example, matching the \T{xs} variable above against \T{Cons hd tl}
extends the context with new bindings $\T{hd} \dblcolon \mathsf{Nat}^1$ and $\T{tl} \dblcolon \tkplist{Nat}{1}$;
this new context has a single unit of free potential attached to \T{hd}
(which also makes sense, since we now know that \T{xs} had at least one element).


Using potential annotations and \T{tick} terms,
\resyn is able to specify upper bounds on resource consumption of recursive functions.
Consider, for example, the function \T{insert} that inserts a value
into a sorted list \T{xs}, as shown in \autoref{fig:insert} (left).
We wish to check that \T{insert} traverses the list linearly:
more precisely, that it only makes a single recursive call per list element.
To this end, we wrap the recursive call in a \T{tick} with unit cost,
and annotate \T{insert} with the following type signature,
which allocates one unit of potential per element of the input list:
\[ \T{insert} \dblcolon \tarrow{x}{\T{a}}{\tarrow{xs}{\tkplist{a}{1}}{\tklist{a}}} \]
%


\begin{figure}
\begin{minipage}{.55\textwidth}
\begin{nanoml}[numbers=left]
[insert: $\tarrow{x}{\mathsf{a}}{\tarrow{xs}{\tkplist{a}{\tksym{P}}}{\tklist{a}}}$]
[insert:..., x:a, xs:$\tkplist{a}{\tksym{P}}$]
[insert:..., x:a, xs:$\tkplist{a}{\tksym{P}}$]
[insert:..., x:a, xs:$\tklist{a}$]
[insert:..., x:a, xs:$\tklist{a}$, hd:a$^{\tksym{P}}$, tl:$\tkplist{a}{\tksym{P}}$]
[insert:..., x:a, xs:$\tklist{a}$, hd:a$^{p_1}$, tl:$\tkplist{a}{q_1}$]
[insert:..., x:a, xs:$\tklist{a}$, hd:a$^{p_2}$, tl:$\tkplist{a}{q_2}$, @hd < x@]
[insert:..., x:a, xs:$\tklist{a}$, hd:a$^{p_2-1}$, tl:$\tkplist{a}{q_2}$, @hd < x@]
[insert:..., x:a, xs:$\tklist{a}$, hd:a$^{p_2}$, tl:$\tkplist{a}{q_2}$, @!(hs < x)@]
\end{nanoml}
\end{minipage}\vline%
\begin{minipage}{.38\textwidth}
  \begin{nanoml}[numbers=right]
  insert = \x. \xs.
    match 
      xs with 
        Nil -> Cons x Nil
        Cons hd tl -> 
          if hd < x
            then Cons hd (tick 1 
                           (insert x tl))
            else Cons x (Cons hd tl)
  \end{nanoml} 
\end{minipage}



  \caption{On the right, the implementation of \T{insert} alongside the contexts used for type checking.
  Each line of the program corresponds to a subexpression that generates resource constraints,
  with the typing context relevant for constraint generation alongside it to the left. The start of the \T{match}
  expression is split between two lines to separate the context used to type the entire \T{match} expression from 
  the context used to type the scrutinee. \tksym{P} is used as a symbolic resource annotation,
  as we will check this program against different bounds by providing concrete valuations for \tksym{P}.}
  \label{fig:insert-annotated} 
\end{figure}

\paragraph{Type checking}

We now describe how \resyn checks \T{insert} against this specification.
At a high level, type checking reduces to generating a system
of linear arithmetic constraints asserting that it is possible
to partition the potential available in the context amongst all expressions
that need to be evaluated.
If this system of constraints is satisfiable, 
the given resource bound is sufficient.
We generate three kinds of constraints:
\emph{sharing} constraints, which nondeterministically 
partition resources between subexpressions,
\emph{subtyping} constraints, which check that a given 
term has enough potential to be used in a given context,
and \emph{well-formedness} constraints, which assert that
potential annotations are non-negative.
%
%
%

\autoref{fig:insert-annotated} illustrates type-checking of \T{insert}:
its left-hand side shows the context in which various subexpressions are checked
(for now you can ignore the \emph{path constraints}, shown in red).
The annotations in the figure are abstract; we will use the same figure to
describe how we check both dependent and constant resource bounds.
For this first example, we set $\tksym{P} = 1$ in the top-level type annotation of \T{insert} --
we are checking that \T{insert} only makes one recursive call per element in \T{xs}.

The body of \T{insert} starts with a pattern match,
which requires distributing the resources in the context on line 2 
between the match scrutinee and the branches.
This context has no free potential, but it does have some bundled potential in \T{xs:$\tkplist{a}{1}$};
bundled potential also has to be shared between the two copies of the context,
since it could later be freed by pattern matching.
In this case, however, \T{xs} is not mentioned in either of the branches,
so for simplicity we elide the sharing constraints and assign all its potential to line 3, 
leaving $\tkplist{a}{1}$ in the context of the match scrutinee
and $\tkplist{a}{0}$ in the context of the branches.
Matching the scrutinee type $\tkplist{a}{1}$ against the type of the \T{Cons} constructor
introduces new bindings $\T{hd} \dblcolon \mathsf{a}^1$ and $\T{tl} \dblcolon \tkplist{a}{1}$ into the context:
now we have $1$ unit of free potential at our disposal,
as the input list has at least one element.

When checking the conditional,
we must again partition all available resources between the
guard and either of the two branches.
In particular, we partition the \T{hd} binding from line 5
into $\T{hd}:\T{a}^{p_1}$ and $\T{hd}:\T{a}^{p_2}$,
generating a \emph{sharing constraint} that reduces to $1 = p_1 + p_2$.
Similarly, we also partition the remaining potential in $\T{tl}$ into
$\T{tl}: \tkplist{a}{q_1}$ and $\T{tl}: \tkplist{a}{q_2}$, which 
produces a constraint $1 = q_1 + q_2$ preventing us from reusing potential
still contained in the list.
\resyn partitions resources non-deterministically and offloads
the work of finding a concrete partitioning to the constraint solver.
Neither the guard nor the \T{else} branch contains a \T{tick}
expression, so they generate only trivial constraints.
The \T{then} branch is more involved, as it does contain
a \T{tick} with a unit cost.
We must pay for this \T{tick} using the free potential $p_2$ on \T{hd}%
leaving $\T{hd}:\mathsf{a}^{p_2 - 1}$ in the context
when checking the expression inside the tick on line 8. 
Like all bindings in the context, this binding generates a \emph{well-formedness constraint} on its type,
which reduces to the arithmetic constraint $p_2 - 1 \geq 0$,
thereby implicitly checking that $p_2$ is sufficient to pay for the \T{tick}.

Finally, type-checking the application of \T{insert x} to \T{tl}
produces a \emph{subtyping constraint} between the actual and the formal argument types:
$\env \vdash \tkplist{a}{q_2} \Subt \tkplist{a}{1}$.
This in turn reduces to an arithmetic constraint $q_2 \geq 1$,
asserting that \T{tl} contains enough potential to execute the recursive call.

Now, consider the complete system of generated arithmetic constraints:
\[ \exists p_1, p_2, q_1, q_2 \in \mathbb{N}. \, 1 = p_1 + p_2 \land 1 = q_1 + q_2 \land p_2 - 1 \geq 0 \land q_2 \geq 1  \]
Though elided above, recall that all symbolic annotations are also required to be non-negative.
This system of constraints is satisfiable by setting $p_2, q_2 = 1$ and the rest of the unknowns to $0$,
which \resyn automatically infers using an SMT solver.

\paragraph{Value-dependent resource bounds}
\resyn also supports verification of dependent resource bounds.
We can use a logic-level conditional to give the following 
more precise bound for \T{insert}:
\[ \T{insert} \dblcolon \tarrow{x}{\mathsf{a}}{\tarrow{xs}{ \tkplist{a}{ \mathsf{ite}(x > \nu,1,0) }}{\tklist{a}}} \]
The dependent annotation on \T{xs} indicates that only those list elements smaller
than \T{x} carry potential, reflecting the fact that the implementation does not 
make any recursive calls once it has found the appropriate place
to insert \T{x}. 

Type checking proceeds similarly to the non-dependent case,
except that we set $\tksym{P} = \T{ite}(\T{x} > \nu, 1, 0)$ 
and treat all other symbolic potential annotations
as unknown \emph{logic-level terms} over the program variables (including the special variable $\nu$).
As a result, type checking generates 
second-order CLIA constraints,
which are universally quantified over the program variables,
and may contain assumptions on these variables,
derived from their logical refinements or from \emph{path constraints} of branching expressions.
For example, \autoref{fig:insert-annotated} shows in red the path constraints
derived from the conditional.
In particular, when checking the first branch, we can assume that $\T{hd < x}$ holds
and thus conclude that \T{hd} has potential $1$ in this branch and is able to pay the cost of \T{tick}.
When we check that an annotation is well-formed, we must also assume
that the relevant variable's logical refinements hold.
For example, to check that the annotation $p_2(x,\nu)$ on \T{hd} is non-negative
we must assert that $\nu = \T{hd}$.

More precisely, the full system of constraints (omitting irrelevant program variables) becomes:
\begin{align*}
\exists & p_1, p_2, q_1, q_2 \in \mathbb{N}\times\mathbb{N}\to\mathbb{N}. 
\forall \T{x}, \T{hd}, \nu. \\
& \mathsf{ite}(\T{x} > \nu,1,0) = p_1(\T{x}, \nu) + p_2(\T{x}, \nu)
&& \text{Sharing \T{hd} (line 5)} \\
& \land \mathsf{ite}(\T{x} > \nu,1,0) = q_1(\T{x}, \nu) + q_2(\T{x}, \nu) 
&& \text{Sharing \T{tl} (line 5)} \\
& \land (\nu = \T{hd} \land \T{hd} < x) \implies p_2(\T{x}, \nu) - 1 \geq 0 
&& \text{Well-formedness of \T{hd} (line 8)} \\
& \land \T{hd} < \T{x} \implies q_2(\T{x}, \nu) \geq \mathsf{ite}(\T{x} > \nu,1,0) 
&& \text{Subtyping of \T{tl} (from recursive call)}
\end{align*}
\resyn satisfies these constraints by setting
$p_2, q_2 = \lambda(\T{x}, \nu).\mathsf{ite}(\T{x} > \nu,1,0)$,
and the rest of the unknowns to to $\lambda(\T{x}, \nu).0$.
Synthesis of CLIA expressions is a well-studied problem \cite{alur2013,ReynoldsKTBD19},
and \resyn uses counterexample-guided inductive synthesis (CEGIS)~\cite{Solar-LezamaTBSS06}
to solve the particular form of constraints that arise.

\paragraph{Limitations}

While \resyn's type system enables the analysis
of the resource consumption of a wide variety of functions,
and can automatically check value-dependent resource bounds,
it still falls short of analyzing many useful programs.
The system only expresses linear bounds, which 
are sufficient for many data structure traversals,
but not sufficient for programs that compose several
traversals.
Thus, \resyn cannot check the resource consumption of \T{sort}.
We need a way to extend this technique to programs
with more complex recursive structure.
\resyn also formalizes the technique only for lists,
while we would like to be able to analyze 
programs that manipulate arbitrary algebraic data types.

\subsection{Our Contribution: Liquid Resource Types}
\label{sec:overview:inductive}

To address these limitations and enable verification of super-linear bounds,
this work extends the \resyn type system with two powerful mechanisms:
\emph{inductive potentials} allow the programmer to define inductively
how potential is allocated within a datatype,
while \emph{abstract potentials} support parameterizing datatype definitions by potential functions.
We dub the extended type system \emph{liquid resource types} (LRT).


\paragraph{Inductive Potentials}

Inductive potentials are expressed simply as potential annotations on constructors of a datatype.
\autoref{fig:inductive-lists} (left) shows a simple example of a datatype, \T{QList}, with inductive potentials.
Here the \T{QCons} constructor mandates that the tail of the list
\begin{inparaenum}[(a)]
\item carries at least one more unit of potential in each element than the head, and
\item is itself a \T{QList}.
\end{inparaenum}
As a result, the total potential in a value $L = [ a_1, a_2, \ldots, a_n ]$ of type $\T{QList}\ T$ is 
\emph{quadratic} in $n$ and 
given by the following expression (where $p$ is the potential of type $T$):

\[ \Phi(L) = \sum_i p + \sum_i \sum_{j > i} 1 = n p + \sum_i i = \frac{n(n + 2p -1)}{2} \]

%
We can now specify that insertion sort runs in quadratic time by giving it the type:
\[ \T{sort} \dblcolon \tarrow{\T{xs}}{\T{QList a}^1}{\T{List a}}\]
According to the formula above, this type assigns \T{xs} the total potential of $0.5(n^2 + n)$, 
which is precisely the bound inferred by \raml{},
as we mentioned in the introduction.
More interestingly, we can use \emph{value-dependent} inductive potentials
to specify a tighter bound for \T{sort},
by the replacing \T{QList} in the type signature above with \T{ISList} defined in \autoref{fig:inductive-lists} (right).
In an \T{ISList}, the elements in the tail only carry the extra potential when their value is less than the head.
Hence, the total potential stored in an $\T{ISList a}^1$ is equal to 
the number of list elements plus the number of \emph{out-of-order pairs} of list elements.
Verifying \T{sort} against this bound implies, for example,
that insertion sort behaves linearly on a fully sorted list (with no decreasing element pairs)
and takes the full $0.5(n^2 + n)$ steps on a list sorted in reverse order.


While inductive potentials are able to express non-linear bounds,
on their own, they are difficult to use:
the non-linear coefficient of a resource bound is built into the datatype definition,
and hence any slight change in the analysis or the cost model---such as changing the cost of a recursive call from 1 to 2---%
requires defining a new datatype. 
We would like to be able to reuse the \emph{structure} of these
types without relying on the precise potential annotations embedded within.



\begin{figure}
\begin{nanoml}
data List t <q::t -> t -> Nat> where
  Nil :: List t <q>
  Cons :: x:t -> xs:List t$^{\mathsf{q(x, \nu)}}$ <q> -> List t <q>
\end{nanoml}
\caption{A list datatype parameterized by a value-dependent, quadratic abstract potential.}\label{fig:abs-list}
\end{figure}

\paragraph{Abstract potentials}
To make inductive potentials reusable, 
we introduce the second new feature of LRT, which we dub \emph{abstract potentials}.
This feature is inspired by abstract refinement types~\cite{VazouRoJh13}, 
which parameterize datatypes by a refinement predicate; 
similarly, LRT allows parameterizing a datatype a potential function.
Consider the definition of the \T{List} datatype in \autoref{fig:abs-list}:
this datatype is parameterized by a numeric logic-level function $q$,
which represents the additional potential 
contained in every element of every proper suffix of the list.
This interpretation is revealed in the \T{Cons}
constructor, where the value $q(x, \nu)$ is \emph{added}
to the linear potential annotation on the tail of the list.
Note that since $q$ is a function, 
this datatype subsumes both \T{QSort} and \T{ISSort},
as well as a broad range of value-dependent ``quadratic'' potential functions.
%
More precisely, if a list element $\nu$ of type $T$ carries $p(\nu)$ units of potential,
then the total potential in a list $L = [ a_1, a_2, \ldots, a_n ]$ of type $\T{List}\ T$ is given 
by the following formula:
\[ \Phi(L) = \sum_i p(a_i) + \sum_i \sum_{j > i} q(a_i, a_j) \]
Note that we can add higher-arity abstract potentials
to extend the \T{List} datatype to support higher-degree polynomials.
Similarly, we can add a unary abstract potential $p(\nu)$
to express the linear component of the list potential more explicitly
(as opposed to relying on polymorphism in the type of the elements). 

\begin{figure}
\begin{minipage}{.72\textwidth}
\begin{nanoml}[numbers=left]
[insert: $\forall \mathsf{b}.\tarrow{x}{\mathsf{b}}{\tarrow{xs}{\tkplist{b}{1}}{\tklist{b}}}$, sort: $\forall \mathsf{c}.\tarrow{xs}{\tkaplist{c}{1}{\tksym{Q}}}{\tklist{c}}$]
[insert, sort:..., xs:$\tkaplist{a}{1}{\tksym{Q}}$]
[insert, sort:..., xs:$\tkaplist{a}{1}{\tksym{Q}}$]
[insert, sort:..., xs:$\tklist{a}$]
[insert, sort:..., xs:$\tklist{a}$, hd:a$^{1}$, tl:$\tkaplist{a}{1 + \tksym{Q}(\mathsf{hd}, \nu)}{\tksym{Q}}$]
[insert, sort:..., xs:$\tklist{a}$, hd:a$^{p_1}$, tl:$\tkaplist{a}{q_1(\mathsf{hd}, \nu)}{q_1}$]
[insert, sort:..., xs:$\tklist{a}$, hd:a$^{p_2}$, tl:$\tkaplist{a}{q_2(\mathsf{hd}, \nu)}{q_2}$]
[insert, sort:..., xs:$\tklist{a}$, hd:a$^{p_2-1}$, tl:$\tkaplist{a}{q_2(\mathsf{hd}, \nu)}{q_2}$]
\end{nanoml}
\end{minipage}\vline%
\begin{minipage}{.23\textwidth}
\begin{nanoml}[numbers=right]
  sort = \xs.
    match 
      xs with
        Nil -> Nil
        Cons hd tl -> 
          insert hd 
            (tick 1 
              (sort tl))
\end{nanoml} 
\end{minipage}
\caption{ Similar to Figure~\autoref{fig:insert-annotated}, the evolution of the typing 
context while checking different subexpressions of \T{sort}.
\tksym{Q} is used as a symbolic resource annotation,
as we will check this program against different bounds by providing concrete valuations for \tksym{Q}.}
\label{fig:sort-annotated} 
\end{figure}

\paragraph{Type checking}

With abstract potentials, we can use the same  \T{List} datatype from \autoref{fig:abs-list} 
to verify both coarse- and fine-grained bounds for insertion sort.
For the coarse-grained case, we can give this function the following type signature:
\[\T{sort} \dblcolon \tarrow{\mathsf{xs}}{\tkaplist{a}{1}{\lambda (\_, \_). 1} }{\tklist{a}} \]
As before, omitted potential annotations are zero by default, 
so the return type $\tklist{a}$ is short for $(\tkaplist{a}{0}{\lambda (\_, \_). 0})^0$
The type checking process is illustrated in \autoref{fig:sort-annotated},
where we set $\tksym{Q} = \lambda (\_, \_). 1$.
The initial context contains bindings for both the helper function \T{insert}
and the function \T{sort} itself, which can be used to make a recursive call.
More precisely, the binding for \T{sort} is added to the context as a result of type-checking 
the implicit fixpoint construct that wraps the lambda abstraction.
Importantly for this example, LRT supports \emph{polymorphic recursion}:
the type \T{c} of list elements in the recursive call can be different from the type \T{a}
of list elements in the body.


The top-level term in the body of \T{sort} is a pattern-match,
so, as before, we have to split the context between the scrutinee and the branches.
Since neither of the branches mentions \T{xs}, for simplicity we omit the sharing 
constraints and leave all of its potential with line 3, thus
inferring the type $\tkaplist{a}{1}{1}$ for the scrutinee.
Matching this type against the return type of the \T{Cons} constructor in \autoref{fig:abs-list},
yields the substitution $\T{t} \mapsto \T{a}^1, q\mapsto 1$,
adding the following two new bindings to the context of the \T{Cons} branch:
$\T{hd} : \T{a}^{1}$ and $\T{tl} : \tkaplist{a}{2}{\lambda(\_,\_).1}$.
Importantly, the tail list \T{tl} ends up with more linear potential than the original list \T{xs},
which is precisely the purpose of the inductive potential annotations in \autoref{fig:abs-list},
and is necessary to afford \emph{both} the recursive call and the call to \T{insert}.

Proceeding with type-checking the \T{Cons} branch,
note that there are three terms that consume resources:
the application of \T{insert hs}, the \T{tick} expression, and the recursive call.
We can use the free unit of potential attached to \T{hd} to pay for \T{tick}.
As for \T{tl}, recall that it has twice the potential that the recursive call to \T{sort} consumes,
and we would like to ``save up'' this extra potential to pay for the application of \T{insert hs} to the result of the recursive call.
This is where polymorphic recursion comes in:
the type checker is free to instantiate \T{c} in the type of the recursive call with $\T{a}^s$,
essentially giving every list element some amount of extra potential $s$ which is simply ``piped through'' the call; 
LRT leaves the exact value of $s$ for the solver to find.


All together, type checking leaves us the following
system of arithmetic constraints:
\begin{align*}
\exists p_1, p_2, q_1, q_2, s \in \mathbb{N}. & \, p_1 + p_2 = 1 \land p_2 - 1 \geq 0  \\
& \land q_1 + q_2 = 2 \land q_2 \geq s + 1  \land s \geq 1
\end{align*}
which is satisfiable with $p_2, q_2, s = 1$ and the rest of unknowns set to $0$.
Note that while the annotations in \autoref{fig:sort-annotated} involve applications of abstract potentials,
all potential functions involved in the coarse-grained version of the example are constants, 
so we can treat these as simple first-order numerical constraints. 



\paragraph{Value-dependent resource bounds}
Instantiating the abstract potentials with non-constant functions 
allows us to use the exact same \T{List} datatype to verify a fine-grained 
bound for insertion sort.
To this end, we give it the type signature:
\[\T{sort} \dblcolon \tarrow{xs}{\tkaplist{a}{1}{\lambda (x_1, x_2). \, \mathsf{ite}(x_1 > x_2, 1, 0)}}{\tklist{a}} \]
Type checking still proceeds as illustrated in \autoref{fig:sort-annotated},
except we set $\tksym{Q} = \lambda (x_1, x_2). \, \mathsf{ite}(x_1 > x_2, 1, 0)$.
%
One key difference is that matching the type of the scrutinee \T{xs} against
the return type of \T{Cons} 
requires applying the abstract potential function to yield
$\T{tl} : \tkaplist{a}{1 + \mathsf{ite}(\mathsf{x} > \nu, 1, 0)}{\lambda (x_1, x_2). \, \mathsf{ite}(x_1 > x_2, 1, 0)}$,
in the context. 
The generated arithmetic constraints are similar to the coarse-grained case,
but now symbolic potentials can be functions, 
so the constraints are second-order and must quantify over the program variables $\T{hd}, \nu$ 
and parameters $x_1, x_2$ of abstract potentials:
\begin{align*}
  \exists & p_1, p_2, q_1, q_2, s \in \mathbb{N}\times\mathbb{N}\to\mathbb{N} . \, \forall \T{hd}, \nu, x_1, x_2 \in \mathbb{N} .\\ 
  &p_1(\T{hd}, \nu) + p_2(\T{hd}, \nu) = 1 
  && \text{Sharing \T{hd} (line 5)} \\
  &\land p_2(\T{hd}, \nu) - 1 \geq 0 
  && \text{Well-formedness of \T{hd} (line 8)} \\
  &\land q_1(\T{hd}, \nu) + q_2(\T{hd}, \nu) = 1 + \mathsf{ite}(\T{hd} > \nu, 1, 0)
  && \text{Sharing \T{tl} (line 5)} \\
  &\land q_2(\T{hd}, \nu) \geq s(\T{hd}, \nu) + 1 
  && \text{Subtyping from the call to \T{sort}} \\
  &\land s(\T{hd}, \nu) \geq \mathsf{ite}(\T{hd} > \nu, 1, 0)
  && \text{Subtyping from the call to \T{insert}}
\end{align*}
The solver can validate these constraints by setting 
$p_2, \lambda(x_1,x_2).1$, $q_2, s = \lambda(x_1,x_2). \mathsf{ite}(x_1 > x_2, 1, 0)$, and the rest of the unknowns to $\lambda(x_1,x_2).0$.
Importantly, even though inductive and abstract potentials significantly increase the expressiveness of the type system,
the generated constraints still belong to the same logic fragment (second-order CLIA),
as constraints generated by \resyn, and hence are efficiently decidable.
This is a consequence of the core design principle that differentiates LRT 
from other fine-grained resource analysis techniques~\cite{RadicekBG0Z18,OOPSLA:WWC17,HandleyVH20}:
to encode complex resource consumption,
rather than increasing the complexity of the resource annotations,
we embed \emph{simple annotations} into \emph{complex types}.

Although in this section we focused solely on the resource consumption of insertion sort,
LRT is also able to specify and verify its functional properties---%
that the output list is sorted and contains the same number and/or set of elements as the input list.
To this end, LRT relies on existing liquid type checking techniques~\cite{VazouRoJh13,PolikarpovaKS16}.
Additionally, while this section only shows the use of inductive and abstract potentials
for expressing quadratic potentials on lists,
\autoref{sec:eval} further demonstrates the flexibility of this specification style.
In particular, we show how to use abstract potentials to analyze exponential-time algorithms,
as well as reason about the resource consumption of tree-manipulating
programs in terms of their height and size.

\newcommand{\diw}[1]{{\color{ACMOrange}DW: #1}}

\section{Technical Details}
\label{sec:technical}

In this section, we formulate a substantial subset of our type system as a core calculus and prove type soundness.
This subset features natural numbers and Booleans that are refined by their values, as well as user-defined inductive datatypes that can be refined by user-defined measures.
The gap from the core calculus to our full type system involves abstract refinements and polymorphic datatypes.
The restriction to this subset in the technical development is only for brevity and proofs carry over to all the features of our tool.

\subsection{Setting the Stage: A Resource-Aware Core Language}
\label{sec:core-language}

\paragraph{Syntax}
\autoref{fig:syntax} presents the grammar of terms in the core calculus via abstract binding trees~\cite{book:PFPL16}.
We extend the core language of $\mathrm{Re}^2$~\cite{resyn} with natural numbers, null tuples, ordered pairs, and replace lists with general inductive data structures. 
Expressions are in \emph{a-normal-form}~\cite{LFP:SF92}, which means that syntactic forms for non-tail positions allow only \emph{atoms} $\hat{a} \in \mathsf{Atom}$, which are irreducible terms, \eg, variables and values, without loss of expressivity.
The restriction simplifies typing rules in our system, as we will explain in \autoref{sec:typing-rules}.
We further identify a subset $\mathsf{SimpAtom}$ of $\mathsf{Atom}$ that contains \emph{interpretable} atoms in the refinement logic.
Intuitively, the type of an interpretable atom $a \in \mathsf{SimpAtom}$ admits a well-defined \emph{interpretation} that maps the value of $a$ to its logical refinements, \eg, lists can be refined by their lengths.
A \emph{value} $\jval{v}$ is an atom without reference to any program variable.
An inductive data structure $C(v_0,\tuple{v_1,\cdots,v_m})$ is represented by the constructor name $C$, the stored data $v_0$ in this constructor, and a sequence of child nodes $\tuple{v_1,\cdots,v_m}$.
Note that the core language has two kinds of match expressions: $\mathsf{matp}$ for pairs and $\mathsf{matd}$ for inductive data structures.

The syntactic form $\eimp$ is used as a placeholder for unreachable code, \eg, the then-branch of a conditional expression whose predicate is always false.
The syntactic form $\econsume{c}{e_0}$ is introduced to define the cost model, and it is intended to cost $c \in \bbZ$ units of resource and then reduce to $e_0$.
A negative $c$ means that ${-c}$ units of resource will become available.
The $\mathsf{tick}$ expressions support flexible user-defined resource metrics.
For example, the programmers can wrap every recursive call in $\econsume{1}{\cdot}$ to count those function calls;
alternatively, they may wrap every data constructor in $\econsume{c}{\cdot}$ to keep track of memory consumption, where $c$ is the amount of memory allocated by the constructor.

\begin{figure}
  \[
  \begin{array}{rcl}
    a \in \mathsf{SimpAtom} & \Coloneqq & x \mid \enat{n} \mid \etrue \mid \efalse \mid \etriv \mid \epair{a_1}{a_2} \mid C(a_0,\tuple{a_1,\cdots,a_m}) \\
    \hat{a} \in \mathsf{Atom} & \Coloneqq & a \mid \eabs{x}{e_0} \mid \efix{f}{x}{e_0} \\
    e \in \mathsf{Exp} & \Coloneqq & a \mid \econd{a_0}{e_1}{e_2} \mid \ematp{a_0}{x_1}{x_2}{e_1} \mid \mathsf{matd}(a_0,\many{C_j(x_0,\tuple{x_1,\cdots,x_{m_j}}).e_j}) \\
    & \mid & \eapp{\hat{a}_1}{\hat{a}_2} \mid \elet{e_1}{x}{e_2} \mid \eimp \mid \econsume{c}{e_0} \\
    v \in \mathsf{Val} & \Coloneqq & \enat{n} \mid \etrue \mid \efalse \mid \etriv \mid \epair{v_1}{v_2} \mid C(v_0,\tuple{v_1,\cdots,v_m}) \mid \eabs{x}{e_0} \mid \efix{f}{x}{e_0}
  \end{array}
  \]
  \caption{Syntax of the core calculus}
  \label{fig:syntax}
\end{figure}

\paragraph{Semantics}
The resource consumption of a program is determined by a small-step operational cost semantics.
The semantics is a standard structural semantics augmented with a \emph{resource parameter}, which indicates the amount of available resources.
The \emph{single-step} reduction judgments have the form $\jstep{e}{e'}{q}{q'}$, where $e$ and $e'$ are expressions, and $q,q' \in \bbZ^+_0$ are nonnegative integers.
The intuitive meaning of such a judgment is that with $q$ units of available resources, $e$ reduces to $e'$ without running out of resources, and $q'$ resources are left.
\autoref{fig:semantics} shows some of the reduction rules of the small-step cost semantics.
Note that all the judgments $\jstep{e}{e'}{q}{q'}$ implicitly constrain that $q,q' \ge 0$, so in the rule \textsc{(E-Tick)} for resource consumption, we do not need to distinguish whether the cost $c$ is nonnegative or not.

\begin{figure}
  \begin{flushleft}
  \small\fbox{$\jstep{e}{e'}{q}{q'}$}
  \end{flushleft}
	\begin{mathpar}\footnotesize
		\Rule{E-Cond-True}{ }{ \jstep{\econd{\etrue}{e_1}{e_2}}{e_1}{q}{q} }
		\and
		\Rule{E-Cond-False}{ }{ \jstep{\econd{\efalse}{e_1}{e_2}}{e_2}{q}{q} }
		\and
		\Rule{E-Let-Val}{ \jval{v_1} }{ \jstep{\elet{v_1}{x}{e_2}}{\subst{v_1}{x}{e_2}}{q}{q} }
		\and
		\Rule{E-Tick}{ }{ \jstep{\econsume{c}{e_0}}{e_0}{q}{q-c} }
		\and
		\Rule{E-MatP-Val}{ \jval{v_1} \\ \jval{v_2} }{ \jstep{\ematp{\epair{v_1}{v_2}}{x_1}{x_2}{e_1}}{ \subst{v_1,v_2}{x_1,x_2}{e_1} }{q}{q} }
		\and
		\Rule{E-MatD-Val}{ \jval{v_0} \\ \jval{v_1} \\ \cdots \\ \jval{v_{m_j}} }{ \jstep{\mathsf{matd}(C_j(v_0,\tuple{v_1,\cdots,v_{m_j}}), \many{C_j(x_0,\tuple{x_1,\cdots,x_{m_j}}).e_j}) }{ \subst{v_0,v_1,\cdots,v_{m_j}}{x_0,x_1,\cdots,x_{m_j}}{e_j}  }{q}{q} }
		\and
		\Rule{E-App-Abs}{ \jval{v_2} }{ \jstep{\eapp{\eabs{x}{e_0}}{v_2}}{\subst{v_2}{x}{e_0}}{q}{q} }
		\and
		\Rule{E-App-Fix}{ \jval{v_2} }{ \jstep{\eapp{\efix{f}{x}{e_0}}{v_2}}{\subst{\efix{f}{x}{e_0},v_2}{f,x}{e_0}}{q}{q} }
	\end{mathpar}
	\caption{Selected rules of the small-step operational cost semantics}
	\label{fig:semantics}
\end{figure}

The \emph{multi-step} reduction relation $\mapsto^{*}$ is defined as the reflexive transitive closure of $\mapsto$.
Multi-step reduction can be used to reason about \emph{high-water mark} resource usage of a reduction from $e$ to $e'$, by finding the minimal $q$ such that $\jsteps{e}{e'}{q}{q'}$ for some $q'$.
For monotone resources such as time, the high-water mark cost coincides with the \emph{net cost}, \ie, the sum of costs specified by $\mathsf{tick}$ expressions in the reduction.  
In general, net costs are invariant, \ie, $p-p'=q-q'$ if $\jstepn[m]{e}{e'}{p}{p'}$ and $\jstepn[m]{e}{e'}{q}{q'}$, where $\mapsto^m$ is the $m$-element composition of $\mapsto$.

\subsection{Types and Refinements}
\label{sec:type-definition}

\paragraph{Refinements}
We follow the approach of liquid types~\cite{RondonBKJ12,PolikarpovaKS16,resyn} and develop a refinement language that is distinct from the term language.
\autoref{fig:type-system} formulates the syntax of the core type system.
The refinement language is essentially a simply-typed lambda calculus augmented with logical connectives and linear arithmetic.
As terms are classified by types, refinements $\psi,\phi$ are classified by \emph{sorts} $\Delta$.
The core type system's sorts include Booleans $\bbB$, natural numbers $\bbN$, nullary $\bbU$ and binary products $\sprod{\Delta_1}{\Delta_2}$, arrows $\sarrow{\Delta_1}{\Delta_2}$, and \emph{uninterpreted symbols} $\delta_\alpha$ parametrized by type variables $\alpha$.
In our system, logical constraints $\psi$ have sort $\bbB$, potential annotations $\phi$ have sort $\bbN$, and refinement-level functions have arrow sorts.
Refinements can reference program variables.
Our system interprets a program variable of Boolean, natural-number, or product type as its value, type variable $\alpha$ as an uninterpreted symbol of sort $\delta_\alpha$, and inductive datatype as its \emph{measurement}, which is computed by a total function $\calI_D : (\text{values of datatype $D$}) \to (\text{refinements of sort $\Delta_D$})$.
The function $\calI_D$ is derived by user-defined \emph{measures} for datatypes, which we omit from the formal presentation;
Although measures play an important role in specifying functional properties (\eg, in~\cite{PolikarpovaKS16}), they are orthogonal to resource analysis.
We include the full development with measures in 
\iflong
\autoref{sec:appendixre2}
\else
the technical report~\cite{techreport}
\fi

Formally, we define the following \emph{interpretation} $\calI(\cdot)$ to reflect interpretable atoms $a \in \mathsf{SimpAtom}$ as their logical refinements:
\begin{align*}
  \calI(x) & = x \\
  \calI(\enat{n}) & = n & \calI(\etriv) & = \star \\
  \calI(\etrue) & = \top & \calI(\efalse) & = \bot \\
  \calI(\epair{a_1}{a_2}) & = \rpair{\calI(a_1)}{\calI(a_2)} & \calI(C(a_0,\tuple{a_1,\cdots,a_m})) & = \calI_D(C(a_0,\tuple{a_1,\cdots,a_m}))  
\end{align*}

\begin{figure}
  \[
  \begin{array}{rclrcl}
    \multicolumn{3}{l}{\fbox{\text{Refinement}}} \\
    \psi,\phi & \Coloneqq & \multicolumn{4}{l}{\nu \mid x \mid n \mid \star \mid \top \mid \neg\psi \mid \psi_1\wedge \psi_2 \mid \phi_1 \le \phi_2 \mid \phi_1 + \phi_2 \mid \psi_1 = \psi_2 \mid \rforall{a}{\Delta}{\psi}} \\
    & \mid & \multicolumn{4}{l}{a \mid \rabs{a}{\Delta}{\psi} \mid \psi_1~\psi_2 \mid \rpair{\psi_1}{\psi_2} \mid \projl{\psi} \mid \projr{\psi} } \\
    \multicolumn{3}{l}{\fbox{\text{Sort}}}  \\
    \Delta & \Coloneqq & \bbB \mid  \bbN \mid  \bbU \mid \delta_\alpha \mid \sprod{\Delta_1}{\Delta_2} \mid \sarrow{\Delta_1}{\Delta_2} \\
    \multicolumn{3}{l}{\fbox{\text{Base Type}}} & \multicolumn{3}{l}{\fbox{\text{Resource-Annotated Type}}}  \\
    B & \Coloneqq & \tnat \mid \tbool \mid \tunit \mid \tprod{B_1}{B_2} \mid \tinduct{C}{T}{m}{\theta} \mid m \cdot \alpha & \quad T & \Coloneqq & \tpot{R}{\phi}   \\
    \multicolumn{3}{l}{\fbox{\text{Refinement Type}}} & \multicolumn{3}{l}{\fbox{\text{Type Schema}}} \\
    R & \Coloneqq & \tsubset{B}{\psi} \mid \tarrowm{x}{T_x}{T}{m} & \quad S & \Coloneqq & T \mid \forall\alpha. S 
  \end{array}
  \]
  \caption{Syntax of the core type system}
  \label{fig:type-system}
\end{figure}

\begin{example}[Interpretations of datatypes]\label{exa:inductive-measures}
  Consider a natural-number list type $\mathsf{NatList}$ with constructors $\mathsf{Nil}$ and $\mathsf{Cons}$.
  In the core language, an empty list is encoded as $\mathsf{Nil}(\etriv,\tuple{})$ and a singleton list containing a zero is represented as $\mathsf{Cons}(\enat{0},\tuple{\mathsf{Nil}(\etriv,\tuple{})})$.
  Below defines an interpretation $\calI_\mathsf{NatList}:(\text{values of $\mathsf{NatList}$}) \to (\text{refinements of sort $\bbN$})$ that computes the length of a list:
  \begin{align*}
    \calI_\mathsf{NatList}(\mathsf{Nil}(\etriv,\tuple{})) & \defeq 0, & \calI_\mathsf{NatList}(\mathsf{Cons}(v_h,\tuple{v_t})) &   \defeq \calI_\mathsf{NatList}(v_t) + 1.
  \end{align*}
  %
  %
  In the rest of this section, we will assume that the type $\mathsf{NatList}$ admits a length interpretation.
\end{example}

We will use the abbreviations $\bot,{\vee},{\implies},{\ge},{<},{>},\mathbf{ite}$ with obvious semantics;
\eg, $\psi_1 \vee \psi_2 \defeq \neg(\neg\psi_1 \wedge \neg\psi_2)$ and $\mathbf{ite}(\psi_0,\psi_1,\psi_2) \defeq (\psi_0 \implies \psi_1) \wedge (\neg\psi_0 \implies \psi_2)$.
We will also abbreviate the $m$-element sum $\psi+\psi+\cdots+\psi$ as $m \times \psi$.
We will use finite-product sorts $\Delta_1 \times \Delta_2 \times \cdots \times \Delta_m$, or $\prod_{i=1}^m \Delta_i$ for short, with an obvious encoding with nullary and binary products.
We will also write $\psi.\mathbf{i}$ as the $i$-th projection from a refinement of a finite-product sort.

\paragraph{Types}
We adapt the methodology of $\mathrm{Re}^2$~\cite{resyn} and classify types into four categories.
Base types $B$ are natural numbers, Booleans, nullary and binary products, inductive datatypes, and type variables.
%
%
An inductive datatype $\tinduct{C}{T}{m}{\theta}$ consists of a sequence of constructors, each of which has a name $C$, a content type $T$ (which must be a scalar type), and a finite number $m \in \bbZ_0^+$ of child nodes.
In terms of recursive types,  $(\many{C{\,:\,}(T,m)})$ compactly represents $\mathsf{rec}( X. \many{C{\,:\,} \tprod{T}{X^m}})$, where $X^m$ is the $m$-element product type $X \times X \times \cdots \times X$,
\eg, the type $\mathsf{NatList}$ in \autoref{exa:inductive-measures} can be seen as an abbreviation of $\mathsf{ind}(\mathsf{Nil}{\,:\,} (\tunit,0), \mathsf{Cons}{\,:\,}(\tnat,1))$.
We will explain the resource-related parameters $\theta$,$\lhd$, and $\pi$ later in \autoref{sec:potentials}.
Type variables $\alpha$ are annotated with a \emph{multiplicity}~$m \in \bbZ^+_0 \cup \{\infty\}$, which specifies an upper bound on the number of references for a program variable of such a type.
For example, $\mathsf{ind}(\mathsf{Nil}{\,:\,}(\tunit,0),\mathsf{Cons}{\,:\,}(2\cdot\alpha,1))$ denotes a universal list, each of whose elements can be used at most twice.

Refinement types $R$ are \emph{subset types} and \emph{dependent arrow types}.
Inhabitants of a subset type $\tsubset{B}{\psi}$ are values of type $B$ that satisfy the refinement $\psi$.
The refinement $\psi$ is a logical formula over program variables and a special \emph{value variable} $\nu$, which is distinct from program variables and represents the inhabitant itself.
For example, $\tsubset{\tbool}{\neg \nu}$ is a type of $\efalse$, $\tsubset{\tnat}{\nu > 0}$ is a type of positive integers, and $\tsubset{\mathsf{NatList}}{\nu = 1}$ stands for singleton lists of natural numbers.
A dependent arrow type $\tarrow{x}{T_x}{T}$ is a function type whose return type may reference its formal argument $x$.
Similar to type variables, these arrow types are also annotated with a multiplicity $m \in \bbZ_0^+ \cup \{\infty\}$ bounding from above the number of times a function of such a type can be applied.

Resource-annotated types $\tpot{R}{\phi}$ are refinement types $R$ augmented with potential annotations $\phi$.
The resource annotations are used to carry out the potential method of amortized analysis~\cite{kn:Tarjan85}; intuitively, $\tpot{R}{\phi}$ assigns $\phi$ units of potential to values of the refinement type $R$.
The potential annotation $\phi$ can also reference the value variable $\nu$.
For example, $\tpot{\mathsf{NatList}}{2 \times \nu}$ describes natural-number lists $\ell$ with $2 \cdot\calI_\mathsf{NatList}(\ell) = 2\cdot |\ell|$ units of potential where $|\ell|$ is the length of $\ell$.
As we will show in \autoref{sec:potentials}, the same potential can also be expressed by assigning 2 units of potential to each element in the list.

Type schemas represent possibly polymorphic types, where the type quantifier $\forall$ is only allowed to appear outermost in a type.
Similar to $\mathrm{Re}^2$~\cite{resyn}, we only permit polymorphic types to be instantiated with \emph{scalar} types, which are resource-annotated base types (possibly with subset constraints).
Intuitively, the restriction derives from the fact that our refinement-level logic is first-order, which renders our type system decidable.

We will abbreviate $1 \cdot \alpha$ as $\alpha$, $\tsubset{B}{\top}$ as $B$, $\tarrowm{x}{T_x}{T}{\infty}$ as $\tarrow{x}{T_x}{T}$, and $\tpot{R}{0}$ as $R$.

\subsection{Potentials of Inductive Data Structures}
\label{sec:potentials}


Resource-annotated types $\tpot{R}{\phi}$ provide a mechanism to specify potential functions of inductive data structures in terms of their interpretations.
However, this mechanism is not so expressive because it can only describe potential functions that are \emph{linear} with respect to the interpretations of data structures, since our refinement logic only has linear arithmetic.
One way to support non-linear potentials is to extend the refinement logic with non-linear arithmetic, which would come at the expense of decidability of the type system.
In contrast, our type system adapts the idea of \emph{univariate polynomial potentials}~\cite{RAML10} to a refinement-type setting.
This combination allows us to not only reason about polynomial resource bounds with linear arithmetic in the refinement logic, but also derive fine-grained resource bounds that go beyond the scope of prior work on typed-based amortized resource analysis~\cite{RAML10,RAML11,resyn}.

\paragraph{Simple numeric annotations}
We start by adding numeric annotations to datatypes, following the approach of univariate polynomial potentials~\cite{RAML10}.
Recall the type $\mathsf{NatList}$ introduced in \autoref{exa:inductive-measures}.
We now annotate it with a vector $\vec{q} = (q_1,\cdots,q_k) \in (\bbZ^+_0)^k$ and denote the annotated type by $\mathsf{NatList}^{\vec{q}}$.
The annotation is intended to assign $q_1$ units of potential to every element of the list, $q_2$ units of potential to every element of every suffix of the list (\ie, to every ordered pair of elements), $q_3$ units of potential to the elements of the suffixes of the suffixes (\ie, to every ordered triple of elements), etc.
Let $\ell$ be a list of type $\mathsf{NatList}$ and $\pot{\ell : \mathsf{NatList}^{\vec{q}}}$ be its potential with respect to the annotated type.
Then the potential function $\pot{\cdot}$ can be expressed as a linear combination of binomial coefficients, where $|\ell|$ is the length of $\ell$:
\begin{equation}\label{eq:natlist-potential}
  \pot{\ell : \mathsf{NatList}^{\vec{q}}} = \sum_{i=1}^k \sum_{1 \le j_1 < \cdots < j_i \le |\ell|} q_i = \sum_{i=1}^k q_i \cdot \binom{|\ell|}{i}.
\end{equation}
For example, $\mathsf{NatList}^{(2)}$ assigns $2$ units of potential to each list element, so it describes lists $\ell$ with $2 \cdot |\ell|$ units of potential.

As shown by the proposition below,
one benefit of the binomial representation in \eqref{eq:natlist-potential} is that the potential function $\pot{\cdot}$ can be defined \emph{inductively} on the data structure, and be expressed using only linear arithmetic.

\begin{proposition}\label{prop:natlist-potential}
  Define the potential function $\pot{\cdot}$ for type $\mathsf{NatList}^{\vec{q}}$ as follows:
  \begin{align*}
    \pot{\mathsf{Nil}(\etriv,\tuple{}) : \mathsf{NatList}^{\vec{q}}} & \defeq 0, & \pot{\mathsf{Cons}(v_h,\tuple{v_t}) : \mathsf{NatList}^{\vec{q}}} & \defeq q_1 + \pot{v_t : \mathsf{NatList}^{\lhd(\vec{q})}},
  \end{align*}
  where a potential \emph{shift} operator $\lhd$ is defined as $\lhd(\vec{q}) \defeq (q_1+q_2,q_2+q_3,\cdots,q_{k-1}+q_k,q_k)$.
  Then \eqref{eq:natlist-potential} gives a closed-form solution to the inductive definition above.
\end{proposition}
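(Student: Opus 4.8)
The plan is to observe that the two defining equations constitute a structural recursion on the datatype $\mathsf{NatList}$ and hence uniquely determine the function $\pot{\cdot}$; it therefore suffices to verify that the candidate closed form $F(\ell,\vec q) \defeq \sum_{i=1}^k q_i\binom{|\ell|}{i}$ satisfies both equations. The base case is immediate: for $\ell = \mathsf{Nil}(\etriv,\tuple{})$ we have $|\ell| = 0$, and since $\binom{0}{i} = 0$ for every $i \ge 1$, the sum $F(\mathsf{Nil}(\etriv,\tuple{}),\vec q)$ collapses to $0$, matching the first equation.

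For the inductive case I would set $\ell = \mathsf{Cons}(v_h,\tuple{v_t})$ and write $n \defeq |v_t|$, so that $|\ell| = n+1$. The heart of the argument is Pascal's rule $\binom{n+1}{i} = \binom{n}{i} + \binom{n}{i-1}$, which I would apply to each term of $F(\ell,\vec q) = \sum_{i=1}^k q_i\binom{n+1}{i}$. This splits the sum into $\sum_{i=1}^k q_i\binom{n}{i}$ and $\sum_{i=1}^k q_i\binom{n}{i-1}$; re-indexing the second sum by $j = i-1$ and peeling off its $j=0$ term (which contributes $q_1\binom{n}{0} = q_1$) recombines everything into
\[ q_1 + \sum_{i=1}^{k-1}(q_i + q_{i+1})\binom{n}{i} + q_k\binom{n}{k}. \]
By the definition of the shift operator $\lhd$, the coefficients $q_i + q_{i+1}$ (for $i < k$) together with the final coefficient $q_k$ are exactly the components of $\lhd(\vec q)$, so this equals $q_1 + F(v_t,\lhd(\vec q))$, which is the right-hand side of the second defining equation once we appeal to the induction hypothesis identifying $F(v_t,\lhd(\vec q))$ with $\pot{v_t : \mathsf{NatList}^{\lhd(\vec q)}}$.

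The only genuine subtlety---rather than a true obstacle---is the bookkeeping at the two boundaries of the sum: the lowest term, where Pascal's rule produces the stray $q_1$ that matches the additive constant in the $\mathsf{Cons}$ equation, and the highest term, where $q_k$ survives unchanged and must line up with the fixed last component $q_k$ of $\lhd(\vec q)$. I would emphasize that the shift operator is defined precisely so that these boundary terms align. Note also that the induction proceeds over the structure of $\ell$ (equivalently, its length) but must be stated for an \emph{arbitrary} annotation vector $\vec q$, since the recursive call replaces $\vec q$ by $\lhd(\vec q)$; quantifying over $\vec q$ in the induction hypothesis is what makes the argument go through.
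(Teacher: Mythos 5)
Your proof is correct. The paper itself states this proposition without an explicit proof (it is adapted from the univariate polynomial potentials of \raml{}), but your argument---structural induction with the induction hypothesis quantified over an arbitrary annotation vector $\vec q$, using Pascal's rule $\binom{n+1}{i}=\binom{n}{i}+\binom{n}{i-1}$ and checking the two boundary terms against the definition of $\lhd$---is exactly the standard verification, and your observation that the generalization over $\vec q$ is what makes the induction go through is the one genuinely important point.
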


Based on the observation presented above, prior work~\cite{RAML10,RAML11} builds an automatic resource analysis that infers polynomial resource bounds via efficient \emph{linear programming} (LP).
In this work, our main goal is not to develop an automatic inference algorithm, but rather to extend the expressivity of the potential annotations.

\paragraph{Dependent annotations}
Our first step is to generalize numeric potential annotations to dependent ones.
The idea is to express the potential annotations in the refinement language of our type system.
For example, we can annotate the type $\mathsf{NatList}$ with a vector $\theta=(\theta_1,\cdots,\theta_k)$, where $\theta_i$ is a refinement-level abstraction of sort $\sarrow{\bbN^i}{\bbN}$, for every $i=1,\cdots,k$.
Intuitively, $\theta_i$ denotes the amount of potential assigned to ordered $i$-tuple of elements in a list, depending on the actual values of the elements, \ie, let $\ell=[v_1,\cdots,v_{|\ell|}]$ be a list of natural numbers, then the potential function $\pot{\cdot}$ with respect to the dependently annotated type $\mathsf{NatList}^\theta$ can be expressed as
\begin{equation}\label{eq:natlist-potential-dependent}
  \pot{\ell : \mathsf{NatList}^\theta} = \sum_{i=1}^k \sum_{1 \le j_1 < \cdots < j_i \le |\ell|} \theta_i(v_{j_1},\cdots,v_{j_i}).
\end{equation}

\begin{example}[Dependent potential annotations]\label{exa:dependent-annotation}
  Suppose we want to assign the number of ordered pairs $(a,b)$ satisfying $a>b$ in a list $\ell$ of type $\mathsf{NatList}^\theta$ as the potential of $\ell$.
  Then the desired potential function is $\pot{\ell : \mathsf{NatList}^\theta} = \sum_{1 \le j_1 < j_2 \le |\ell|}  \mathbf{ite}(v_{j_1} >v_{j_2},1,0)$.
  Compared with \eqref{eq:natlist-potential-dependent}, a feasible $\theta = (\theta_1,\theta_2)$ can be defined as follows:
  \begin{align*}
    \theta_1 & \defeq \rabs{x}{\bbN}{0}, & \theta_2 & \defeq \lambda(x_1{\,:\,}\bbN,x_2{\,:\,}\bbN). \mathbf{ite}(x_1>x_2,1,0).
  \end{align*}
  Later we will show the dependent annotation given here can be used to derive a fine-grained resource bound for insertion sort at the end of \autoref{sec:typing-rules}.
\end{example}

Although dependent annotations seem to complicate the representation of potential functions, they \emph{do} retain the benefit of numeric annotations.
The key observation is that we can still express the potential \emph{shift} operator $\lhd$ in our refinement language, which only permits linear arithmetic.
Below presents a generalization of \autoref{prop:natlist-potential}.

\begin{proposition}\label{prop:natlist-potential-dependent}
  Define the potential function $\pot{\cdot}$ for type $\mathsf{NatList}^\theta$ as follows:
  \begin{align*}
    \pot{\mathsf{Nil}(\etriv,\tuple{}) : \mathsf{NatList}^\theta } & \defeq  0, & \pot{\mathsf{Cons}(v_h,\tuple{v_t}) : \mathsf{NatList}^\theta } & \defeq \theta_1(v_h) + \pot{v_t : \mathsf{NatList}^{\lhd(v_h)(\theta)}},
  \end{align*}
  where a dependent potential \emph{shift} operator $\lhd$ is defined in the refinement-level language as 
  \[
  \lhd \defeq \rabs{y}{\bbN}{ \lambda(\theta_1{\,:\,}\sarrow{\bbN}{\bbN}, \cdots, \theta_k{\,:\,} \sarrow{\bbN^k}{\bbN}). (\theta_1',\cdots,\theta_k')  },
  \]
  where $\theta_1' \defeq \rabs{x}{\bbN}{(\theta_1(x) + \theta_2(y,x))}$, $\theta_2' \defeq \rabs{x}{\bbN^2}{(\theta_2(x)+\theta_3(y,x))}$, \ldots, $\theta_{k-1}' \defeq \rabs{x}{\bbN^{k-1}}{(\theta_{k-1}(x) + \theta_k(y,x))}$, and $\theta_k' \defeq \theta_k$.
  Then \eqref{eq:natlist-potential-dependent} gives a closed-form solution to the inductive definition above.
\end{proposition}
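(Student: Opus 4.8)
The plan is to verify directly that the closed-form expression \eqref{eq:natlist-potential-dependent} satisfies both defining equations of the inductive definition; since the recursion is well-founded on the structure of finite lists, satisfying both equations shows that the closed form \emph{is} the unique solution. The $\mathsf{Nil}$ case is immediate: an empty list admits no index tuples $1 \le j_1 < \cdots < j_i \le 0$, so every inner sum in \eqref{eq:natlist-potential-dependent} is empty and $\pot{\mathsf{Nil}(\etriv,\tuple{}) : \mathsf{NatList}^\theta} = 0$. All the work is in the $\mathsf{Cons}$ case.

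For the $\mathsf{Cons}$ case, write $\ell = \mathsf{Cons}(v_h,\tuple{v_t})$ with $\ell = [v_1,\ldots,v_n]$, $v_1 = v_h$, and $v_t = [v_2,\ldots,v_n]$ of length $n-1$. The key step is to partition, for each arity $i$, the index tuples $1 \le j_1 < \cdots < j_i \le n$ appearing in \eqref{eq:natlist-potential-dependent} according to whether they contain the head, that is, whether $j_1 = 1$ or $j_1 \ge 2$. The tuples with $j_1 \ge 2$ lie entirely in the tail; after shifting indices down by one they range over exactly the tuples counted by $\pot{v_t : \mathsf{NatList}^\theta}$ under the \emph{same} annotation $\theta$. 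The single $i=1$, $j_1=1$ term contributes exactly $\theta_1(v_h)$. It remains to account for the tuples with $j_1 = 1$ and $i \ge 2$.

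Each such tuple is obtained by prepending the head to an $(i-1)$-tuple drawn from the tail, so its contribution is $\theta_i(v_h, w_1,\ldots,w_{i-1})$, where $(w_1,\ldots,w_{i-1})$ ranges over the $(i-1)$-tuples of $v_t$. Re-indexing $i \mapsto i-1$, these terms assemble into $\sum_{j=1}^{k-1}\sum_{\text{$j$-tuples of }v_t} \theta_{j+1}(v_h, \cdot)$. Adding this to $\pot{v_t:\mathsf{NatList}^\theta} = \sum_{j=1}^{k}\sum_{\text{$j$-tuples of }v_t}\theta_j(\cdot)$ and grouping by arity reproduces, term by term, the sum $\sum_{j=1}^k \sum_{\text{$j$-tuples of }v_t}\theta_j'(\cdot)$, where $\theta_j' = \rabs{x}{\bbN^j}{(\theta_j(x) + \theta_{j+1}(v_h,x))}$ for $j < k$ and $\theta_k' = \theta_k$ --- which is precisely the definition of $\pot{v_t : \mathsf{NatList}^{\lhd(v_h)(\theta)}}$. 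Hence $\pot{\ell:\mathsf{NatList}^\theta} = \theta_1(v_h) + \pot{v_t:\mathsf{NatList}^{\lhd(v_h)(\theta)}}$, as required.

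The main obstacle is purely bookkeeping: making the head/tail split and the arity re-indexing line up, and in particular handling the boundary arity $i = k$ correctly, where there is no $\theta_{k+1}$ to contribute and $\theta_k'$ must reduce to $\theta_k$. This is the value-dependent analogue of the Pascal-type identity $\binom{n}{i} = \binom{n-1}{i} + \binom{n-1}{i-1}$ underlying \autoref{prop:natlist-potential}: the head-excluded tuples play the role of $\binom{n-1}{i}$ and the head-included tuples the role of $\binom{n-1}{i-1}$, but now each tuple carries its own value-dependent weight, so the split is performed on the tuples directly rather than on their counts. Once the split is set up, the remaining steps are routine algebra.
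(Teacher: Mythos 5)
Your proof is correct and follows essentially the same strategy the paper uses: verify that the closed form satisfies both defining equations, splitting the index tuples by whether they include the head and re-indexing the arity so that the head-included $(i{+}1)$-tuples combine with the head-excluded $i$-tuples to reproduce the shifted annotation $\theta_i' = \rabs{x}{\bbN^i}{(\theta_i(x)+\theta_{i+1}(v_h,x))}$, with $\theta_k'=\theta_k$ at the boundary. The paper only writes out this verification explicitly for the quadratic ($k=2$) instance in \autoref{sec:appendixclosedf}, and your argument is the straightforward generalization to arbitrary $k$, handling the boundary arity correctly.
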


\paragraph{Generic annotations}
In general, the potential annotation $\theta$ does not need to have the form of vectors of refinement-level functions; it can be an arbitrary well-sorted refinement, as long as we know how to \emph{extract} potentials from it (\eg, a projection from $\theta=(\theta_1,\cdots,\theta_k)$ to $\theta_1$), and how to \emph{shift} potential annotations to get annotations for child nodes (\eg, \autoref{prop:natlist-potential-dependent}).
This form of generic annotations formulates the notion of \emph{abstract potentials} (introduced in \autoref{sec:overview:inductive}), which is one major contribution of this paper.

In our type system, we parametrize inductive datatypes with not only a potential annotation $\theta$, but also a shift operator $\lhd$ and an extraction operator $\pi$.
For natural-number lists of type $\mathsf{NatList}^\theta$, the potential function $\pot{\cdot}$ is defined inductively in terms of $\lhd$ and $\pi$ as follows:
\begin{align*}
  \pot{\mathsf{Nil}(\etriv,\tuple{}):\mathsf{NatList}^\theta} & \defeq 0, \\
  \pot{\mathsf{Cons}(v_h,\tuple{v_t}) : \mathsf{NatList}^\theta} & \defeq \pi(v_h)(\theta) + \pot{v_t : \mathsf{NatList}^{\lhd(v_h)(\theta)}}.
\end{align*}
Recall that in our type system, an inductive datatype is represented as $\tinduct{C}{T}{m}{\theta}$, where $C$'s are constructor names, $T$'s are content types of data stored at constructors, and $m$'s are numbers of child nodes of constructors.
Let the potential annotation $\theta$ be sorted $\Delta_\theta$, and values of content type $T_j$ be sorted as $\Delta_{T_j}$ for each constructor $C_j{\,:\,}(T_j,m_j)$.
Then the extraction operator $\pi$ is supposed to be a tuple, the $j$-th component of which is a refinement-level function with sort $\sarrow{\Delta_{T_j}}{\sarrow{\Delta_\theta}{\bbN}}$, \ie, extracts potential for the $j$-th constructor from the annotation $\theta$.
Similarly, the shift operator $\lhd$ is also a tuple whose $j$-th component is a refinement-level function with sort $\sarrow{\Delta_{T_j}}{\sarrow{\Delta_\theta}{\Delta_\theta^{m_j}}}$, \ie, shifts potential annotations for the child nodes of the $j$-th constructor.
With the two operators $\lhd,\pi$ and the potential annotation $\theta$, we can now define the potential function $\pot{\cdot}$ for general inductive datatypes as an inductive function:
\begin{equation}\label{eq:generic-potential-function}
\begin{split}
  \pot{C_j(v_0,\tuple{v_1,\cdots,v_{m_j}}) : \tinduct{C}{T}{m}{\theta}} & \defeq \pot{v_0 : T_j} \\
  &  + \pi.\mathbf{j}(\calI(v_0))(\theta)  \\
  & + \sum_{i=1}^{m_j} \pot{ v_i : \tinduct{C}{T}{m}{ \lhd.\mathbf{j}( \calI(v_0))(\theta).\mathbf{i} } }.
\end{split}
\end{equation}
Note that
(i) the definition above includes the potential of the value $v_0$ stored at the constructor with respect to its type $T_j$, because the elements in the data structure may also carry potentials, and
(ii) we use the interpretation $\calI(\cdot)$ defined in \autoref{sec:type-definition} to interpret values as their logical refinements.

\begin{example}[Generic potential annotations]\label{exa:generic-annotation}
  Recall the dependently annotated list type $\mathsf{NatList}^{(\theta_1,\theta_2)}$ in \autoref{exa:dependent-annotation}.
  We can now formalize it in the core type system.
  Let
  \[
  \mathsf{NatList}^{(\theta_1,\theta_2)} \defeq \mathsf{ind}^{(\theta_1,\theta_2)}_{\lhd,\pi}(\mathsf{Nil}{\,:\,}(\tunit,0), \mathsf{Cons}{\,:\,}(\tnat,1)),
  \]
  where 
  $\lhd=(\lhd_\mathsf{Nil},\lhd_{\mathsf{Cons}})$ and $\pi = (\pi_{\mathsf{Nil}},\pi_{\mathsf{Cons}})$ are defined as follows:
  \begin{align*}
  \pi_{\mathsf{Nil}} & \defeq \rabs{\_}{\bbU}{  \lambda (\theta_1{\,:\,}\sarrow{\bbN}{\bbN}, \theta_2{\,:\,} \sarrow{\sprod{\bbN}{\bbN}}{\bbN}) . 0 }, \\
  \pi_{\mathsf{Cons}} & \defeq \rabs{y}{\bbN}{ \lambda (\theta_1{\,:\,}\sarrow{\bbN}{\bbN}, \theta_2{\,:\,} \sarrow{\sprod{\bbN}{\bbN}}{\bbN}). \theta_1(y) }, \\
  \lhd_{\mathsf{Nil}} & \defeq \rabs{\_}{\bbU}{ \lambda (\theta_1{\,:\,}\sarrow{\bbN}{\bbN}, \theta_2{\,;\,} \sarrow{\sprod{\bbN}{\bbN}}{\bbN} ). \star  }, \\
  \lhd_{\mathsf{Cons}} & \defeq \rabs{y}{\bbN}{ \lambda (\theta_1{\,:\,}\sarrow{\bbN}{\bbN}, \theta_2{\,:\,} \sarrow{\sprod{\bbN}{\bbN}}{\bbN}). (\rabs{x}{\bbN}{\theta_1(x)+\theta_2(y,x)}, \theta_2)  }.
  \end{align*}
  
  Different instantiations of $\theta_1,\theta_2$ lead to different potential functions.
  \autoref{exa:dependent-annotation} presents an instantiation to count the out-of-order pairs in a natural-number list.
  Meanwhile, one can implement the simple numeric annotations $(q_1,q_2)$ by setting $\theta_1 \defeq \rabs{x}{\bbN}{q_1}$ and $\theta_2 \defeq \rabs{x}{\sprod{\bbN}{\bbN}}{q_2}$ as constant functions.
\end{example}


\subsection{Typing Rules}
\label{sec:typing-rules}

In this section, we formulate our type system as a set of derivation rules.
The \emph{typing context} $\Gamma$ is a sequence of bindings for program variables $x$, bindings for refinement variables $a$, type variables $\alpha$, path constraints $\psi$, and free potentials $\phi$:
\[
\Gamma \Coloneqq \cdot \mid \Gamma, \bindvar{x}{S} \mid \Gamma,\bindvar{a}{\Delta} \mid \Gamma,\alpha \mid \Gamma, \psi \mid \Gamma, \phi.
\]
%
Our type system consists of five kinds of judgments: sorting, well-formedness, subtyping, sharing, and typing.
We omit sorting and well-formedness rules and include them in
\iflong
\autoref{sec:appendixre2} 
\else
the technical report~\cite{techreport}
\fi
The sorting judgment $\jsort{\Gamma}{\psi}{\Delta}$ states that a term $\psi$ has a sort $\Delta$ under the context $\Gamma$ in the refinement language.
A type $S$ is said to be well-defined under a context $\Gamma$, denoted by $\jwftype{\Gamma}{S}$, if every referenced variable in $S$ is in the proper scope.

\paragraph{Typing with refinements}
\autoref{fig:typing} presents the typing rules of the core type system.
The typing judgment $\jstyping{\Gamma}{e}{S}$ states that the expression $e$ has type $S$ under context $\Gamma$.
Its intuitive meaning is that if all path constraints in $\Gamma$ are satisfied, and there is \emph{at least} the amount resources as indicated by the potential in $\Gamma$ then this suffices to evaluate $e$ to a value $v$ that satisfies logical constraints indicated by $S$, and after the evaluation there are \emph{at least} as many resources available as indicated by the potential in $S$.
The rules can be organized into syntax-directed and structural rules.
Structural rules \textsc{(S-*)} can be applied to every expression; in the implementation, we apply these rules strategically to avoid redundant proof search.

The auxiliary \emph{atomic-typing} judgment $\jatyping{\Gamma}{a}{B}$ assigns base types to interpretable atoms $a \in \mathsf{SimpAtom}$.
Atomic typing is useful in the rule \textsc{(T-SimpAtom)}, which uses the interpretation $\calI(\cdot)$ to derive a most precise refinement type for interpretable atoms, \eg, $\etrue$ is typed $\tsubset{\tbool}{\nu = \top}$, $\enat{5}$ is typed $\tsubset{\tnat}{\nu = 5}$, and a singleton list $\mathsf{Cons}(\enat{5},\tuple{\mathsf{Nil}(\etriv,\tuple{})})$ is typed $\tsubset{ \mathsf{NatList}^{\theta} }{ \nu = 1 }$ with some appropriate $\theta$ (recall that $\mathsf{NatList}$ admits a length interpretation).

The \emph{subtyping} judgment $\jsubty{\Gamma}{T_1}{T_2}$ is defined via a common approach for refinement types, with the extra requirement that the potential in $T_1$ should be not less than that in $T_2$.
\autoref{fig:sharing-and-subtyping} shows the subtyping rules.
A canonical use of subtyping is to ``forget'' locally introduced program variables in the result type of an expression, \eg, to ``forget'' $x$ in the type of $e_2$ when typing $\elet{e_1}{x}{e_2}$.
In rule \textsc{(Sub-Dtype)}, we introduce a partial order $\sqsubseteq_{\Delta_\theta}$ over potential annotations $\theta$ of sort $\Delta_\theta$.
For example, if $\theta_1$ and $\theta_2$ are sorted $\bbN$, then $\theta_1 \sqsubseteq_\bbN \theta_2$ is encoded as $\theta_1 \le \theta_2$ in the refinement language.
We carefully define the partial order, in a way that the partial-order relation can be encoded as a first-order fragment of the refinement language.
Notable is that we introduce \emph{validity-checking} judgments $\jprop{\Gamma}{\psi}$ to reason about logical constraints, \ie, to state that the Boolean-sorted refinement $\psi$ is always true under any instance of the context $\Gamma$.
We formalize the validity-checking relation via a set-based denotational semantics for the refinement language.
Validity checking is then reduced to Presburger arithmetic, making it decidable.
The full development of validity checking is included in
\iflong
\autoref{sec:appendixvalidity} 
\else
the technical report~\cite{techreport}
\fi

The rule \textsc{(T-MatD)} reasons about \emph{invariants} for inductive datatypes.
These invariants come from the associated interpretation of inductive data structures, \eg, the length of a list $\mathsf{Cons}(a_h,\tuple{a_t})$ is one plus the length of its tail $a_t$.
Intuitively, if the data structure $a_0$ can be deconstructed as $C_j(x_0,\tuple{x_1,\cdots,x_{m_j}})$ of a datatype $D$ with the form $\tinduct{C}{T}{m}{\theta}$,
then by the definition of the interpretation $\calI(\cdot)$, we can derive
\begin{align*}
    \calI(a_0) & = \calI(C_j(x_0,\tuple{x_1,\cdots,x_{m_j}})) = \calI_D(C_j(x_0,\tuple{x_1,\cdots,x_{m_j}})), 
\end{align*}
which is exactly the path constraint required by the rule \textsc{(T-MatD)} to type the $j$-th branch $e_j$.
For example, if $a_0$ has type $\mathsf{NatList}^\theta$, then the path constraints for the $\mathsf{Nil}(\_,\tuple{})$ and $\mathsf{Cons}(x_h,\tuple{x_t})$ constructors become $\calI(a_0)=0$ and $\calI(a_0)=x_t+1$, respectively.

The type system has two rules for function applications: \textsc{(T-App)} and \textsc{(T-App-SimpAtom)}. 
In the former case, the function return type $T$ does not mention $x$, and thus can be directly used as the type of the application.
This rule deals with cases \eg for all applications with higher-order arguments, since our sorting rules prevent functions from showing up in the refinements language.
In the latter case, the function return type $T$ mentions $x$, but the argument has a scalar type, and thus must be an interpretable atom $a \in \mathsf{SimpAtom}$, so we can substitute $x$ in $T$ with its interpretation $\calI(a)$.
Note that it is the use of a-normal-form that brings us the ability to derive precise types for dependent function applications.

\begin{figure}
  \begin{flushleft}
  \small\fbox{$\jatyping{\Gamma}{a}{B}$}
  \end{flushleft}
  \begin{mathpar}\footnotesize
    \Rule{SimpAtom-Var}
    { \Omit{\jwfctxt{\Gamma} \\} \Gamma(x) = \trefined{B}{\psi}{\phi} }
    { \jatyping{\Gamma}{x}{B} }
    \and
    \Rule{SimpAtom-Bool}
    { \Omit{\jwfctxt{\Gamma} \\} b \in \set{ \etrue, \efalse } }
    { \jatyping{\Gamma}{b}{\tbool} }
    \and
    \Rule{SimpAtom-Nat}
    { \Omit{\jwfctxt{\Gamma}} }
    { \jatyping{\Gamma}{\enat{n}}{\tnat} }
    \and
    \Rule{SimpAtom-Unit}
    { \Omit{\jwfctxt{\Gamma}} }
    { \jatyping{\Gamma}{\etriv}{\tunit} }
    \and
    \Rule{SimpAtom-Pair}
    { \jctxsharing{\Gamma}{\Gamma_1}{\Gamma_2} \\\\ \jatyping{\Gamma_1}{a_1}{B_1} \\ \jatyping{\Gamma_2}{a_2}{B_2} }
    { \jatyping{\Gamma}{\epair{a_1}{a_2}}{\tprod{B_1}{B_2}} }
    \and
    \Rule{SimpAtom-ConsD}
    { \jctxsharing{\Gamma}{\Gamma_1}{\Gamma_2} \\
    \jstyping{\Gamma_1}{a_0}{T_j} \\\\
    \jatyping{\Gamma_2}{\tuple{a_1,\cdots,a_{m_j}}}{\textstyle\prod_{i=1}^{m_j} \tinduct{C}{T}{m}{ \lhd.\mathbf{j}(\calI(a_0))(\theta).\mathbf{i} } } }
    { \jatyping{\Gamma,\pi.\mathbf{j}(\calI(a_0))(\theta)}{C_j(a_0, \tuple{a_1,\cdots,a_{m_j}} )}{ \tinduct{C}{T}{m}{\theta}  }}
  \end{mathpar}
  \begin{flushleft}
  \small\fbox{$\jstyping{\Gamma}{e}{S}$}
  \end{flushleft}
  \begin{mathpar}\footnotesize
    \Rule{T-SimpAtom}
	{ \jatyping{\Gamma}{a}{B} }
	{ \jstyping{\Gamma}{a}{\tsubset{B}{\nu = \calI(a)}} }
	\and
    \Rule{T-Var}
    { \Gamma(x) = S }
    { \jstyping{\Gamma}{x}{S} }
    \and
    \Rule{T-Imp}
    { \jprop{\Gamma}{\bot} \\ \jwftype{\Gamma}{T} }
    { \jstyping{\Gamma}{\eimp}{T} }
    \and
    \Rule{T-Tick-P}
    {  c \ge 0 \\ \jstyping{\Gamma}{e_0}{T} }
    { \jstyping{\Gamma,c}{\econsume{c}{e_0}}{T} }
    \and
    \Rule{T-Tick-N}
    { c < 0 \\ \jstyping{\Gamma,-c}{e_0}{T} }
    { \jstyping{\Gamma}{\econsume{c}{e_0}}{T} }
    \and
    \Rule{T-Cond}
    { \jatyping{\Gamma}{a_0}{\tbool} \\\\
    \jstyping{\Gamma,\calI(a_0)}{e_1}{T} \\\\
    \jstyping{\Gamma,\neg\calI(a_0)}{e_2}{T} }
    { \jstyping{\Gamma}{\econd{a_0}{e_1}{e_2}}{T} }
    \and
    \Rule{T-MatP}
    { \jctxsharing{\Gamma}{\Gamma_1}{\Gamma_2} \\
    \jatyping{\Gamma_1}{a_0}{\tprod{B_1}{B_2}} \\
    \jwftype{\Gamma}{T} \\\\
    \jstyping{\Gamma_2,x_1:B_1,x_2:B_2,\calI(a_0)=(x_1,x_2)}{e_1}{T} }
    { \jstyping{\Gamma}{\ematp{a_0}{x_1}{x_2}{e_1}}{T} }
    \and
    \Rule{T-MatD}
		{ \jctxsharing{\Gamma}{\Gamma_1}{\Gamma_2} \\
		\jatyping{\Gamma_1}{a_0}{ \tinduct{C}{T}{m}{\theta} } \\
		\jwftype{\Gamma}{T'}  \\\\
		\text{for each $j$,} \enskip \Gamma_{2,j} \defeq {} \Big(\Gamma_2,x_0\!:\!T_j, \\\\
		x_1\!:\!\tinduct{C}{T}{m}{\lhd.\mathbf{j}(x_0)(\theta).\mathbf{1}},\makebox[1em][c]{.\hfil.\hfil.},x_{m_j}\!:\!\tinduct{C}{T}{m}{\lhd.\mathbf{j}(x_0)(\theta).\mathbf{m_j}}, \calI(a_0)=\calI_D(C_j(x_0,\tuple{x_1,\makebox[1em][c]{.\hfil.\hfil.},x_{m_j}})), \pi.\mathbf{j}(x_0)(\theta) \Big) , \\\\
		\jstyping{\Gamma_{2,j} }{e_j}{T'}	 }
		{ \jstyping{\Gamma}{ \mathsf{matd}(a_0,\many{ C_j(x_0,\tuple{x_1,\cdots,x_{m_j}}).e_j }) }{ T' }}
    \and
    \Rule{T-Let}
    { \jctxsharing{\Gamma}{\Gamma_1}{\Gamma_2} \\
    \jwftype{\Gamma}{T_2} \\\\
    \jstyping{\Gamma_1}{e_1}{S_1} \\
    \jstyping{\Gamma_2,\bindvar{x}{S_1}}{e_2}{T_2}  }
    { \jstyping{\Gamma}{\elet{e_1}{x}{e_2}}{T_2} }
    \and
    \Rule{T-App-SimpAtom}
    { \jctxsharing{\Gamma}{\Gamma_1}{\Gamma_2} \\
    \jstyping{\Gamma_1}{\hat{a}_1}{\tarrowm{x}{\trefined{B}{\psi}{\phi}}{T}{1}} \\
    \jstyping{\Gamma_2}{a_2}{\trefined{B}{\psi}{\phi}} }
    { \jstyping{\Gamma}{\eapp{\hat{a}_1}{a_2}}{\subst{\calI(a_2)}{x}{T}} }
    \and
    \Rule{T-App}
    { \jctxsharing{\Gamma}{\Gamma_1}{\Gamma_2} \\
    \jstyping{\Gamma_1}{\hat{a}_1}{ \tarrowm{x}{T_x}{T}{1} } \\
    \jstyping{\Gamma_2}{\hat{a}_2}{T_x} \\
    \jwftype{\Gamma}{T} }
    { \jstyping{\Gamma}{\eapp{\hat{a}_1}{\hat{a}_2}}{T} }
    \and
    \Rule{T-Abs}
    { \jwftype{\Gamma}{T_x} \\
    \jstyping{\Gamma,\bindvar{x}{T_x} }{e_0}{T} \\
    \jctxsharing{\Gamma}{\Gamma}{\Gamma} }
    { \jstyping{\Gamma}{\eabs{x}{e_0}}{ \tarrow{x}{T_x}{T}} }
    \and
    \Rule{T-Abs-Lin}
    { \jwftype{\Gamma}{T_x} \\
    \jstyping{\Gamma,\bindvar{x}{T_x}}{e_0}{T} }
    { \jstyping{m \times \Gamma}{\eabs{x}{e_0}}{\tarrowm{x}{T_x}{T}{m}} }
    \and
    \Rule{T-Fix}
    { S = \forall \many{\alpha}. \tarrow{x}{T_x}{T} \\ \jwftype{\Gamma}{S} \\\\
    \jstyping{\Gamma,\bindvar{f}{S}, \many{\alpha}, \bindvar{x}{T_x}}{e_0}{T} \\
    \jctxsharing{\Gamma}{\Gamma}{\Gamma} }
    { \jstyping{\Gamma}{\efix{f}{x}{e_0}}{S} }
    \and
    \Rule{S-Gen}
    { \jval{v} \\ \jstyping{\Gamma,\alpha}{v}{S} \\\\ \jsharing{\Gamma,\alpha}{S}{S}{S} }
    { \jstyping{\Gamma}{v}{\forall\alpha.S} }
    \and
    \Rule{S-Inst}
    { \jstyping{\Gamma}{e}{\forall\alpha.S} \\ \jwftype{\Gamma}{\tpot{\tsubset{B}{\psi}}{\phi}} }
    { \jstyping{\Gamma}{e}{\subst{\tpot{\tsubset{B}{\psi}}{\phi}}{\alpha}{S}} }
    \and
    \Rule{S-Subtype}
    { \jstyping{\Gamma}{e}{T_1} \\ \jsubty{\Gamma}{T_1}{T_2} }
    { \jstyping{\Gamma}{e}{T_2} }
    \and
    \Rule{S-Transfer}
    { \jstyping{\Gamma'}{e}{S} \\\\
    \jprop{\Gamma}{\pot{\Gamma} = \pot{\Gamma'}} \\ |\Gamma| = |\Gamma'| }
    { \jstyping{\Gamma}{e}{S} }
    \and
    \Rule{S-Relax}
    { \jstyping{\Gamma}{e}{\tpot{R}{\phi}} \\ \jsort{\Gamma}{\phi'}{\bbN} }
    { \jstyping{\Gamma,\phi'}{e}{\tpot{R}{\phi+\phi'}} }
  \end{mathpar}
  \caption{Typing rules}
  \label{fig:typing}
\end{figure}

\paragraph{Resources}

There are two typing rules for the syntactic form $\econsume{c}{e_0}$, one for nonnegative costs and the other for negative costs.
The rule \textsc{(T-Tick-N)} assumes $c<0$ and adds $-c$ units of free potential to the context for typing $e_0$.
The rule \textsc{(T-Tick-P)} behaves differently; it states that $\econsume{c}{e_0}$ is only typable in a context containing a free-potential term $c$.
Nevertheless, we can use the rule \textsc{(S-Transfer)} to rearrange free potentials within the context into this form, as long as the total amount of free potential stays unchanged.
In the rule \textsc{(S-Transfer)}, $\pot{\Gamma}$ extracts all the free potentials in the context $\Gamma$, while $|\Gamma|$ removes all the free potentials, \ie, $|\Gamma|$ keeps the functional specifications of $\Gamma$.

To carry out amortized resource analysis~\cite{kn:Tarjan85}, our type system is supposed to properly reason about potentials, that is, potentials cannot be generated from nothing.
This \emph{linear} nature of potentials motivates us to develop an \emph{affine} type system~\cite{kn:Walker02}.
As in $\mathrm{Re}^2$~\cite{resyn}, we have to introduce explicit \emph{sharing} to use a program variable multiple times.
The sharing judgment takes the form $\jsharing{\Gamma}{S}{S_1}{S_2}$ and is intended to state that under the context $\Gamma$, the potential associated with type $S$ is apportioned into two parts to be associated with type $S_1$ and type $S_2$.
\autoref{fig:sharing-and-subtyping} also presents the sharing rules.
In rule \textsc{(Share-Dtype)}, we introduce a notation $\theta = \theta_1 \oplus_{\Delta_\theta} \theta_2$, which means that the annotation $\theta$ is the ``sum'' of two annotations $\theta_1,\theta_2$ that have sort $\Delta_\theta$.
For example, we define $\theta_1 \oplus_\bbN \theta_2$ by $\theta_1 + \theta_2$ in the refinement language.
Similar to the partial order $\sqsubseteq_{\Delta_\theta}$, which is used in the subtyping rules, we encode the ``sum`` operator $\oplus_{\Delta_\theta}$ using a first-order fragment of the refinement language.
The sharing relation is further extended to \emph{context sharing}, written $\jctxsharing{\Gamma}{\Gamma_1}{\Gamma_2}$, which means that $\Gamma_1$ and $\Gamma_2$ have the same sequence of bindings as $\Gamma$, but the free potentials in $\Gamma$ are split into two parts to be associated with $\Gamma_1$ and $\Gamma_2$.
Context sharing is used extensively in the typing rules where the expression has at least two sub-expressions to evaluate, \eg, in the rule \textsc{(T-Let)} for an expression $\elet{e_1}{x}{e_2}$, we apprortion $\Gamma$ into $\Gamma_1$ and $\Gamma_2$, use $\Gamma_1$ for typing $e_1$ and $\Gamma_2$ for typing $e_2$.
Note that the rule \textsc{(T-Abs)} and \textsc{(T-Fix)} has self-sharing $\jctxsharing{\Gamma}{\Gamma}{\Gamma}$ as a premise, which means that the function can only use free variables with zero potential in the context.
This restriction ensures that the program cannot gain potential through free variables by repeatedly applying a function of type $\tarrowm{x}{T_x}{T}{\infty}$ with an infinite multiplicity.

The rule \textsc{(T-Abs-Lin)} is introduced for typing functions with upper bounds on the number of applications.
The rule associates a multiplicity $m \in \bbZ^+_0$ with the function type as the upper bound.
We use a finer-grained premise than context self-sharing to state that the potential of the free variables in the function is enough to pay for $m$ function applications.
This rule is useful for deriving types of curried functions \eg a function of type $\tarrow{x}{T_x}{\tarrow{y}{T_y}{T}}$ that require nonzero units of potential in its first argument $x$.
In that case, a function $f$ can be assigned a type $\tarrow{x}{T_x}{\tarrowm{y}{T_y}{T}{m}}$, which means that the potential stored in the first argument $x$ is enough for the partially applied function $\eapp{f}{x}$ to be invoked for $m$ times.

The elimination rule \textsc{(T-MatD)} realizes the inductively defined potential function in \eqref{eq:generic-potential-function}:
for typing the $j$-th branch $e_j$, one has to add bindings of the content type $x_0:T_j$ and properly shifted types for child nodes $x_i:\tinduct{C}{T}{m}{\lhd.\mathbf{j}(x_0)(\theta).\mathbf{i}}$, as well as a free-potential term $\pi.\mathbf{j}(x_0)(\theta)$ indicated by the potential-extraction operator $\pi.\mathbf{j}$, to the context.
The introduction rule \textsc{(SimpAtom-ConsD)} stores the amount of potentials required for deconstructing data structures.
For typing $C_j(a_0,\tuple{a_1,\cdots,a_{m_j}})$ with type $\tinduct{C}{T}{m}{\theta}$, the rule requires $\pi.\mathbf{j}(\calI(a_0))(\theta)$ as free potential in the context, which is used to pay for potential extraction $\pi.\mathbf{j}$, and a premise stating that each child node $a_i$ has a corresponding properly-shifted annotated datatype $\tinduct{C}{T}{m}{\lhd.\mathbf{j}(\calI(a_0))(\theta).\mathbf{i}}$.

Finally, the structural rule \textsc{(S-Relax)} is usually used when we are analyzing function applications.
Both the rule \textsc{(T-App)} and the rule \textsc{(T-App-SimpAtom)} use up all the potential in the context, but in practice it is necessary to pass some potential through the function call to analyze non-tail-recursive programs.
This is achieved by using the rule \textsc{(S-Relax)} at a function application with $\phi'$ as the potential threaded to the computation that continues after the function returns.

\begin{figure}
  \begin{flushleft}
  \small\fbox{$\jsharing{\Gamma}{S}{S_1}{S_2}$}
  \end{flushleft}
  \begin{mathpar}\footnotesize
    \Rule{Share-Nat}
    {  }
    { \jsharing{\Gamma}{\tnat}{\tnat}{\tnat} }
    \and
    \Rule{Share-Bool}
    { }
    { \jsharing{\Gamma}{\tbool}{\tbool}{\tbool} }
    \and
    \Rule{Share-Unit}
    {  }
    { \jsharing{\Gamma}{\tunit}{\tunit}{\tunit} }
    \and
    \Rule{Share-Poly}
    { \jsharing{\Gamma,\alpha}{S}{S}{S} }
    { \jsharing{\Gamma}{\forall\alpha.S}{\forall\alpha.S}{\forall\alpha.S} }
    \and
    \Rule{Share-Prod}
    { \jsharing{\Gamma}{B_1}{B_{11}}{B_{12}} \\
    \jsharing{\Gamma}{B_2}{B_{21}}{B_{22}} }
    { \jsharing{\Gamma}{\tprod{B_1}{B_2}}{\tprod{B_{11}}{B_{21}}}{\tprod{B_{12}}{B_{22}}} }
    \and
    \Rule{Share-Dtype}
    { 
    \jsharing{\Gamma}{\many{T}}{\many{T_1}}{\many{T_2}} \\
    \jsort{\Gamma}{\theta,\theta_1,\theta_2}{\Delta_\theta} \\   \jprop{\Gamma}{\theta = \theta_1 \oplus_{\Delta_\theta} \theta_2}  }
    { \jsharing{\Gamma}{ \tinduct{C}{T}{m}{\theta}  }{ \tinduct{C}{T_1}{m}{\theta_1} }{ \tinduct{C}{T_2}{m}{\theta_2} } }
    \and
    \Rule{Share-Tvar}
    { \alpha \in \Gamma \\ {m = m_1 + m_2} }
    { \jsharing{\Gamma}{m \cdot \alpha}{m_1 \cdot \alpha}{m_2 \cdot \alpha} }
    \and
    \Rule{Share-Subset}
    { \jsharing{\Gamma}{B}{B_1}{B_2} \\ \jwftype{\Gamma}{\tsubset{B}{\psi}} }
    { \jsharing{\Gamma}{\tsubset{B}{\psi}}{\tsubset{B_1}{\psi}}{\tsubset{B_2}{\psi}} }
    \and
    \Rule{Share-Arrow}
    {  \jwftype{\Gamma}{\p{\tarrow{x}{T_x}{T}}} \\ m = m_1+m_2 }
    { \jsharing{\Gamma}{\p{\tarrowm{x}{T_x}{T}{m}}}{\p{\tarrowm{x}{T_x}{T}{m_1}}}{\p{\tarrowm{x}{T_x}{T}{m_2}}} }
    \and
    \Rule{Share-Pot}
    { \jsharing{\Gamma}{R}{R_1}{R_2} \\ \jprop{\Gamma,\bindvar{\nu}{R}}{\phi=\phi_1+\phi_2} }
    { \jsharing{\Gamma}{\tpot{R}{\phi}}{\tpot{R_1}{\phi_1}}{\tpot{R_2}{\phi_2}} } 
  \end{mathpar}
  \begin{flushleft}
  \small\fbox{$\jsubty{\Gamma}{T_1}{T_2}$}
  \end{flushleft}
  \begin{mathpar}\footnotesize
    \Rule{Sub-Nat}
    {  }
    { \jsubty{\Gamma}{\tnat}{\tnat} }
    \and
    \Rule{Sub-Unit}
    {  }
    { \jsubty{\Gamma}{\tunit}{\tunit} }
    \and
    \Rule{Sub-Bool}
    {  }
    { \jsubty{\Gamma}{\tbool}{\tbool} }
    \and
    \Rule{Sub-Prod}
    { \jsubty{\Gamma}{B_1}{B_1'} \\ \jsubty{\Gamma}{B_2}{B_2'} }
    { \jsubty{\Gamma}{\tprod{B_1}{B_2}}{\tprod{B_1'}{B_2'}} }
    \and
    \Rule{Sub-Dtype}
    { 
    \jsubty{\Gamma}{\many{T}}{\many{T'}} \\
    \jsort{\Gamma}{\theta,\theta'}{\Delta_\theta} \\
    \jprop{\Gamma}{\theta' \sqsubseteq_{\Delta_\theta} \theta} }
    { \jsubty{\Gamma}{ \tinduct{C}{T}{m}{\theta} }{ \tinduct{C}{T'}{m}{\theta'} } }
    \and
    \Rule{Sub-TVar}
    { \alpha \in \Gamma \\ {m_1 \ge m_2} }
    { \jsubty{\Gamma}{m_1 \cdot \alpha}{m_2 \cdot \alpha} }
    \and
    \Rule{Sub-Subset}
    { \jsubty{\Gamma}{B_1}{B_2} \\\\ \jprop{\Gamma,\bindvar{\nu}{B_1} }{\psi_1 \implies \psi_2} }
    { \jsubty{\Gamma}{\tsubset{B_1}{\psi_1}}{\tsubset{B_2}{\psi_2}} }
    \and
    \Rule{Sub-Arrow}
    { \jsubty{\Gamma}{T_x'}{T_x} \\ \jsubty{\Gamma,\bindvar{x}{T_x'}}{T}{T'} \\ m \ge m' }
    { \jsubty{\Gamma}{\tarrowm{x}{T_x}{T}{m}}{\tarrowm{x}{T_x'}{T'}{m'}} }
    \and
    \Rule{Sub-Pot}
    { \jsubty{\Gamma}{R_1}{R_2} \\ \jprop{\Gamma,\bindvar{\nu}{R_1}}{\phi_1 \ge \phi_2} }
    { \jsubty{\Gamma}{\tpot{R_1}{\phi_1}}{\tpot{R_2}{\phi_2}} }
  \end{mathpar}
  \caption{Sharing and subtyping}
  \label{fig:sharing-and-subtyping}
\end{figure}

\begin{example}[Insertion sort]\label{exa:insertion-sort}
  As shown in \autoref{sec:overview:inductive}, our type system is able to verify that an implementation of insertion sort 
  performs exactly the same amount of insertions as the number of out-of-order pairs in the input list.
  %
  %
  We rewrite the function \T{insert} as follows in the core calculus, using the dependently annotated list type $\mathsf{NatList}^{(\theta_1,\theta_2)}$ from \autoref{exa:dependent-annotation}:
\begin{align*}
  \T{insert} & \dblcolon \tarrow{y}{\tnat}{\tarrow{\ell}{\mathsf{NatList}^{(\rabs{x}{\bbN}{\mathbf{ite}(y>x,1,0)}, \rabs{x}{\sprod{\bbN}{\bbN}}{0}  ) }}{ \mathsf{NatList}^{(\rabs{x}{\bbN}{0}, \rabs{x}{\sprod{\bbN}{\bbN}}{0} )} } } \\
  \T{insert} & = \lambda(y. \mathsf{fix}(f.\ell.\mathsf{matd}(\ell, \\  
  & \qquad  \mathsf{Nil}(\_,\tuple{}). \mathsf{Cons}(y,\tuple{\mathsf{Nil}(\etriv,\tuple{})}), \\
  & \qquad \mathsf{Cons}(h,\tuple{t}). \mathsf{let}(y>h, b . \\
  & \qquad\qquad \econd{b}{ \econsume{1}{\elet{{\eapp{f}{t}}}{t'}{ \mathsf{Cons}(h, \tuple{t'}) }} }{ \mathsf{Cons}(y, \tuple{\mathsf{Cons}(h,\tuple{t})}) }  )
\end{align*}
  We assume that a comparison function $>$ with signature $\tarrow{a}{\tnat}{\tarrow{b}{\tnat}{\tsubset{\tbool}{\nu = (a > b)}}}$ is provided in the typing context.
  Next, we illustrate how our type system justifies the number of recursive calls in \T{insert} is bounded by the number of elements in $\ell$ that are less than the element $y$ that is being inserted to $\ell$.
  Suppose $\Gamma$ is a typing context that contains the signature of $>$, as well as type bindings for $y$, $f$, and $\ell$.
  To reason about the pattern match on the list $\ell$, we apply the \textsc{(T-MatD)} rule, where $T \defeq  \mathsf{NatList}^{(\rabs{x}{\bbN}{0},\rabs{x}{\sprod{\bbN}{\bbN}}{0} ) }$:
  \begin{mathpar}\footnotesize
  \inferrule{  \jctxsharing{\Gamma}{\Gamma_1}{\Gamma_2 } \\ \jatyping{\Gamma_1}{\ell}{\mathsf{NatList}^{(\rabs{x}{\bbN}{\mathbf{ite}(y>x,1,0)}, \rabs{x}{\sprod{\bbN}{\bbN}}{0}  ) }}  \\ \jstyping{\Gamma_2,\ell = 0}{e_1}{T} \\ \jstyping{\Gamma_2,h:\tnat,t:\mathsf{NatList}^{(\rabs{x}{\bbN}{\mathbf{ite}(y>x,1,0)}, \rabs{x}{\sprod{\bbN}{\bbN}}{0}  ) },\ell=t+1, \mathbf{ite}(y>h,1,0) }{e_2}{T}  }{ \jstyping{\Gamma}{\mathsf{matd}(\ell,\mathsf{Nil}(\_,\tuple{}).e_1, \mathsf{Cons}(h,\tuple{t}).e_2)}{T } }
  \end{mathpar}
  For the context sharing, we apportion all the potential of $\ell$ to $\Gamma_1$ and the rest of potential of $\Gamma$ to $\Gamma_2$.
  In fact, since $y$ and $f$ do not carry potentials, the context $\Gamma_2$ is potential-free \ie $\jctxsharing{\Gamma_2}{\Gamma_2}{\Gamma_2}$.
  For the $\mathsf{Nil}$-branch, $e_1$ is a value that describes a singleton list containing $y$, thus we can easily conclude this case by rule \textsc{(SimpAtom-ConsD)} and the fact that the return type $T$ is potential-free.
  For the $\mathsf{Cons}$-branch, we first apply the \textsc{(T-Let)} rule with \textsc{(T-App-SimpAtom)} rule to derive a precise refinement type for the comparison result $b$:
  \begin{mathpar}\footnotesize
    \inferrule{ \jctxsharing{\Gamma_2}{\Gamma_2}{\Gamma_2} \\ \jstyping{\Gamma_2,h:\cdots,t:\cdots,\ell=t+1,0}{y>h}{\tsubset{\tbool}{\nu = (y > h)}} \\ \jstyping{\Gamma_2,h:\cdots,t:\cdots,\ell=t+1,\mathbf{ite}(y>h,1,0),b:\tsubset{\tbool}{\nu = (y > h)}}{e_3}{T} }{ \jstyping{\Gamma_2,h:\cdots,t:\cdots,\ell=t+1,\mathbf{ite}(y>h,1,0) }{ \elet{y>h}{b}{e_3} }{ T }}
  \end{mathpar}
  Then we use the rule \textsc{(T-Cond)} to reason about the conditional expression $e_3$:
  \begin{mathpar}\footnotesize
    \inferrule{ \jstyping{\Gamma_2,h:\cdots,t:\cdots,\ell=t+1,\mathbf{ite}(y>h,1,0),b:\tsubset{\tbool}{\nu = (y > h)}, b}{e_4}{T} \\ \jstyping{\Gamma_2,h:\cdots,t:\cdots,\ell=t+1,\mathbf{ite}(y>h,1,0),b:\tsubset{\tbool}{\nu = (y > h)}, \neg b}{e_5}{T} }{ \jstyping{\Gamma_2,h:\cdots,t:\cdots,\ell=t+1,\mathbf{ite}(y>h,1,0),b:\tsubset{\tbool}{\nu = (y > h)}}{\econd{b}{e_4}{e_5}}{T} }  
  \end{mathpar}
  By validity checking, we can show that $\jprop{y:\tnat,h:\tnat,b:\tsubset{\tbool}{\nu = (y>h)},b}{y > h}$, thus $\jprop{y:\tnat,h:\tnat,b:\tsubset{\tbool}{\nu = (y>h)},b}{ \mathbf{ite}(y>h,1,0) = 1 }$.
  Then, by the \textsc{(S-Transfer)} rule on the goal involving the then-branch $e_4$, it suffices to show that $\jstyping{\Gamma_2,h:\cdots,t:\cdots,\ell=t+1,b:\tsubset{\tbool}{\nu = (y > h)},b, 1}{e_4}{T}$.
  Note that we now have one unit of free potential in the context, so we can use it for typing the $\mathsf{tick}$ expression by \textsc{(T-Tick-P)}:
  \begin{mathpar}\footnotesize
    \inferrule{ {  \jstyping{\Gamma_2,h:\cdots,t:\cdots,\ell=t+1,b:\tsubset{\tbool}{\nu = (y > h)},b}{{\elet{\eapp{f}{t}}{t'}{\mathsf{Cons}(h,\tuple{t'})}} }{T} } }{ \jstyping{\Gamma_2,h:\cdots,t:\cdots,\ell=t+1,b:\tsubset{\tbool}{\nu = (y > h)},b, 1}{\econsume{1}{\elet{\eapp{f}{t}}{t'}{\mathsf{Cons}(h,\tuple{t'})}} }{T} }
  \end{mathpar}
  It remains to derive the type of the recursive function application $\eapp{f}{t}$, and the list construction $\mathsf{Cons}(h,\tuple{t'})$ where $t'$ is the return of the application.
  The derivation is straightforward as $f$ has type $\tarrow{\ell}{\mathsf{NatList}^{(\rabs{x}{\bbN}{\mathbf{ite}(y>x,1,0)}, \rabs{x}{\sprod{\bbN}{\bbN}}{0}  ) }}{T}$, $t$ has type  $\mathsf{NatList}^{(\rabs{x}{\bbN}{\mathbf{ite}(y>x,1,0)}, \rabs{x}{\sprod{\bbN}{\bbN}}{0}  ) }$, thus the returned list $t'$ has type $T$ and so does $\mathsf{Cons}(h,\tuple{t'})$.


We now turn to the function \T{sort} that makes use of \T{insert}:
\begin{align*}
  \T{sort} & \dblcolon  \tarrow{\ell}{ \mathsf{NatList}^{ (\rabs{x}{\bbN}{1}, \lambda(x_1{\,:\,}\bbN,x_2{\,:\,}\bbN). \mathbf{ite}(x_1>x_2,1,0)  ) } }{ \mathsf{NatList}^{(\rabs{x}{\bbN}{0}, \rabs{x}{\sprod{\bbN}{\bbN}}{0} )} } \\
  \T{sort} & =  \mathsf{fix}(f.\ell. \mathsf{matd}(\ell, \\
  & \qquad  \mathsf{Nil}(\_,\tuple{}). \mathsf{Nil}(\etriv,\tuple{}) , \\
  & \qquad \mathsf{Cons}(h,\tuple{t}). \econsume{1}{ \elet{{\eapp{f}{t}}}{t'}{\elet{\eapp{\T{insert}}{h}}{{ins}}{ \eapp{ins}{t'}}  }} )
\end{align*}
Recall that in \autoref{exa:dependent-annotation}, we explain that the type of the argument list $\ell$ defines a potential function in terms of the number of out-of-order pairs in $\ell$.
Let $\Gamma'$ be a typing context that contains the signature of \T{insert}, as well as potential-free type bindings for $f$ and $\ell$.
Using the shift operation $\lhd$ for $\mathsf{NatList}$, we are supposed to derive the following judgment for the $\mathsf{Cons}$-branch of the pattern match:
\[
\jstyping{\Gamma',h:\tnat,t: \mathsf{NatList}^{ (\rabs{x}{\bbN}{1+\mathbf{ite}(h>x,1,0)}, \lambda(x_1{\,:\,}\bbN,x_2{\,:\,}\bbN). \mathbf{ite}(x_1>x_2,1,0) )}  , \ell=t+1}{ \elet{\eapp{f}{t}}{t'}{\cdots} }{ T }.
\]
However, we get stuck here, because there is a mismatch between the argument type of $f$ \ie \T{sort}, and the shifted type of the tail list $t$ in the context.
\end{example}

\paragraph{Polymorphic recursion}
In general, it is often necessary to type recursive function calls with a type that has different potential annotations from the declared types of the recursive functions.
We achieve this using polymorphic recursion that allows recursive calls to be instantiated with types that have different potential annotations.
Although we get stuck when typing \T{sort} in \autoref{exa:insertion-sort}, we will show how our system is able to type a polymorphic version of \T{sort}, which has been informally demonstrated in \autoref{sec:overview:inductive}.

\begin{example}[Insertion sort with polymorphic recursion]\label{exa:insertion-sort-cont}
  We start with a polymorphic list type, which is supported by our implementation but not formulated in the core calculus:
  \[
  \mathsf{List}^\theta(\alpha) \equiv \mathsf{ind}^\theta_{\lhd,\pi}( \mathsf{Nil}:\tunit, \mathsf{Cons}: \tprod{(x{\,:\,} \alpha)}{ \mathsf{List}^{ \lhd_\mathsf{Cons}(x)(\theta) }(\tpot{\alpha}{ \theta(x,\nu) }) }  ),
  \]
  where 
  $\lhd=(\lhd_\mathsf{Nil},\lhd_\mathsf{Cons})$, $\pi=(\pi_\mathsf{Nil},\pi_\mathsf{Cons})$ are defined as follows:
  \begin{align*}
    \pi_\mathsf{Nil} & \defeq \lambda\_.\lambda\theta.0, & 
    \pi_\mathsf{Cons} & \defeq \lambda y. \lambda \theta. 0, \\
    \lhd_\mathsf{Nil} & \defeq \lambda\_.\lambda\theta.\star, &
    \lhd_\mathsf{Cons} & \defeq \lambda y. \lambda \theta. \theta.
  \end{align*}
  We then generalize the type signatures of \T{insert} and \T{sort} with the polymorphic list type:
  \begin{align}
    \T{insert} & \dblcolon \forall \alpha. \tarrow{y}{\alpha}{\tarrow{\ell}{ \mathsf{List}^{\lambda(x_1,x_2).0}(\tpot{\alpha}{\mathbf{ite}(y>\nu,1,0)}) }{\mathsf{List}^{\lambda(x_1,x_2).0}(\alpha)}} ,\\
    \T{sort} & \dblcolon \forall\alpha. \tarrow{\ell}{\mathsf{List}^{\lambda(x_1,x_2).\mathbf{ite}(x_1>x_2,1,0) }(\tpot{\alpha}{1})}{ \mathsf{List}^{\lambda(x_1,x_2).0  }(\alpha) } \label{eq:decl-resource-polymorphism}
  \end{align}
  Similar to the type derivation in \autoref{exa:insertion-sort}, we are supposed to derive the following judgment for the $\mathsf{Cons}$-branch of the pattern match in the implementation of \T{sort}:
\[
\jstyping{\Gamma',h:\alpha,t: \mathsf{List}^{ \lambda(x_1,x_2). \mathbf{ite}(x_1>x_2,1,0) }(\tpot{\alpha}{ 1+\mathbf{ite}(h>\nu,1,0) })  , \ell=t+1}{ \elet{\eapp{f}{t}}{t'}{\cdots} }{ T }.
\]
  Now the function $f$ is bound to the polymorphic type in \eqref{eq:decl-resource-polymorphism}.
  To type the function call $\eapp{f}{t}$, we instantiate $f$ with $\tpot{\alpha}{\mathbf{ite}(h>\nu,1,0)}$,
  \ie, $f$ has type $ \tarrow{\ell}{\mathsf{List}^{\lambda(x_1,x_2).\mathbf{ite}(x_1>x_2,1,0)}(\tpot{\alpha}{1+\mathbf{ite}(h>\nu,1,0)})}{\mathsf{List}^{\lambda(x_1,x_2).0}( \tpot{\alpha}{\mathbf{ite}(h>\nu,1,0)} )} $.
  Thus, the type of the return value $t'$ of $\eapp{f}{t}$ matches the argument type of \T{insert}, and we can derive the function application $\elet{\eapp{\T{insert}}{h}}{{ins}}{ \eapp{ins}{t'}}$ has the desired return type $\mathsf{List}^{\lambda(x_1,x_2).0}(\alpha)$.
\end{example}

\Omit{
In order to type \T{insertionSort} in \autoref{exa:insertion-sort}, intuitively, we would like to state that \T{insertionSort} has the following \emph{set of types}:
\begin{equation}\label{eq:decl-resource-polymorphism}
\T{insertionSort} \dblcolon \{ \tarrow{\ell}{\mathsf{NatList}^{(\theta_1,\lambda(x_1{\,:\,}\bbN,x_2{\,:\,}\bbN). \mathbf{ite}(x_1>x_2,1,0))}}{\mathsf{NatList}^{(\theta_1, \rabs{x}{\sprod{\bbN}{\bbN}}{0} )}} \mid \text{$\theta_1$ has sort $\sarrow{\bbN}{\bbN}$} \},
\end{equation}
so that we can ``thread'' some of the potentials from the input list to the returned list.
Prior work has studied the theory and practice of such \emph{resource-polymorphic} recursion~\cite{RAML10,RAML11,POPL:HDW17}.
We can extend our core type system with the following rule for fixpoints:
\begin{mathpar}\footnotesize  
  \Rule{T-Fix-P}{ \jctxsharing{\Gamma}{\Gamma}{\Gamma} \\ \forall (\tarrow{x}{T_x}{T}) \in \calT : \jwftype{\Gamma}{\tarrow{x}{T_x}{T}} \\ \jstyping{\Gamma,f:\tarrow{x}{T_x}{T},x:T_x}{e_0}{T}  }{ \jstyping{\Gamma}{\efix{f}{x}{e_0}}{\calT}  }
\end{mathpar}
The function application rules \textsc{(T-App)} and \textsc{(T-App-SimpAtom)} should also be modified to pick one candidate from the set $\calT$ of possible function types.
We don't augment our technical development with these resource-polymorphic rules for brevity, but the soundness proof goes through with resource-polymorphic recursion.
%

\begin{example}[Insertion sort, continued]\label{exa:insertion-sort-cont}
  Using resource-polymorphic recursion, we can now derive the following by rule \textsc{(T-App)} with the type in \eqref{eq:decl-resource-polymorphism} for \T{insertionSort}:
  \begin{mathpar}\footnotesize
    \inferrule{ }{  \Gamma',h:\tnat,t: \mathsf{NatList}^{ (\rabs{x}{\bbN}{\mathbf{ite}(h>x,1,0)}, \lambda(x_1{\,:\,}\bbN,x_2{\,:\,}\bbN). \mathbf{ite}(x_1>x_2,1,0) )}  , \ell=t+1 \\\\
    \vdash \eapp{f}{t} \dblcolon \mathsf{NatList}^{ ( \rabs{x}{\bbN}{\mathbf{ite}(h>x,1,0)}, \lambda(x_1{\,:\,}\bbN,x_2{\,:\,}\bbN). 0  )} }
  \end{mathpar}
  Since the type of the return value $t'$ matches the argument type of \T{insert}, we can derive that the function application $\eapp{\eapp{\T{insert}}{h}}{t'}$ has the desired return type $\mathsf{NatList}^{(\rabs{x}{\bbN}{0}, \rabs{x}{\sprod{\bbN}{\bbN}}{0} )}$.
\end{example}
}

\subsection{Soundness}
\label{sec:soundness}

We now extend $\mathrm{Re}^2$'s type soundness~\cite{resyn} to new features we introduced in previous sections, including refinement-level computation and user-defined inductive datatypes.
The soundness of the type system is based on progress and preservation, and takes resources into account.
The progress theorem states that if $\jstyping{q}{e}{S}$, then either $e$ is already a value, or we can make a step from $e$ with at least $q$ units of available resource.
Intuitively, progress indicates that our type system derives bounds that are indeed upper bounds on the high-water mark of resource usage.

\begin{lemma}[Progress]
	If $\jstyping{q}{e}{S}$ and $p \ge q$, then either $\jval{e}$ or there exist $e'$ and $p'$ such that $\jstep{e}{e'}{p}{p'}$.
\end{lemma}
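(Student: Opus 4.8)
The plan is to prove Progress by induction on the typing derivation $\jstyping{q}{e}{S}$, doing a case analysis on the last rule applied. Throughout I will exploit the fact that the context is pure free potential: it contains no program-variable, refinement-variable, or type-variable bindings and no path constraints, so $e$ is closed and the total potential recorded by the context is exactly $q$. I first dispatch the structural rules \textsc{(S-Gen)}, \textsc{(S-Inst)}, \textsc{(S-Subtype)}, \textsc{(S-Transfer)}, and \textsc{(S-Relax)}, which leave $e$ unchanged. In each the premise types the same $e$ under a context that is again pure potential of total amount $q' \le q \le p$: for \textsc{(S-Relax)} the premise drops the added $\phi' \ge 0$, for \textsc{(S-Transfer)} the total potential is preserved, and for \textsc{(S-Inst)} the context is unchanged, so the induction hypothesis applies directly; \textsc{(S-Gen)} already supplies a value via its side condition $\jval{v}$.

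For the syntax-directed rules I will first establish a canonical-forms fact by a short inversion (peeling off the structural rules) on the atomic-typing and value-typing derivations: a closed atom of atomic type $\tbool$ is $\etrue$ or $\efalse$, one of a product type is a pair $\epair{v_1}{v_2}$, one of a datatype $D$ is some constructor $C_j(\cdots)$ of $D$, and a closed value of an arrow type is $\eabs{x}{e_0}$ or $\efix{f}{x}{e_0}$. Using this I handle the remaining shapes: any closed atom $e = \hat{a}$ (including $\lambda$- and $\mathsf{fix}$-abstractions and constructor applications) is already a value by the definition of $\mathsf{Val}$; $\econd{a_0}{e_1}{e_2}$ steps by \textsc{(E-Cond-True)} or \textsc{(E-Cond-False)} since $a_0$ is $\etrue$ or $\efalse$; $\ematp{a_0}{x_1}{x_2}{e_1}$ steps by \textsc{(E-MatP-Val)}; $\mathsf{matd}(a_0,\cdots)$ steps by \textsc{(E-MatD-Val)}, because $a_0$ is some constructor $C_j$ of $D$ for which \textsc{(T-MatD)} provides a matching branch; an application $\eapp{\hat{a}_1}{\hat{a}_2}$ steps by \textsc{(E-App-Abs)} or \textsc{(E-App-Fix)}, since both operands are closed atoms hence values and the operator is a $\lambda$ or $\mathsf{fix}$; and $\elet{e_1}{x}{e_2}$ uses \textsc{(E-Let-Val)} when $e_1$ is a value, and otherwise steps by the congruence rule for $\mathsf{let}$ (among the reductions elided from \autoref{fig:semantics}) after applying the induction hypothesis to $e_1$, whose context $\Gamma_1$ from the context-sharing premise is again pure potential of amount $\le q \le p$.

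Two cases carry the real content. The form $e = \eimp$ has no reduction and must therefore be excluded: this follows because a pure-potential context (possibly extended only with type variables by the structural rules) is logically consistent, so $\jprop{q}{\bot}$ cannot hold and \textsc{(T-Imp)} is inapplicable; I will justify consistency from the denotational (Presburger) semantics of validity checking, under which the context $q$ is satisfiable. The genuinely resource-sensitive case is $e = \econsume{c}{e_0}$, where stepping by \textsc{(E-Tick)} requires $p - c \ge 0$. Under \textsc{(T-Tick-N)} we have $c < 0$ and this is immediate; under \textsc{(T-Tick-P)} the context has the form $\Gamma,c$ with $c \ge 0$ and $\pot{\Gamma,c} = \pot{\Gamma} + c = q$, so $c \le q \le p$ and \textsc{(E-Tick)} fires with $p' = p - c \ge 0$. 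I expect this tick case to be the main obstacle, since it is precisely where the amortized potential bound is cashed out against the operational resource parameter: one must thread the hypothesis $p \ge q$ through the structural rules \textsc{(S-Transfer)} and \textsc{(S-Relax)} so that the potential demanded by the innermost \textsc{(T-Tick-P)} is genuinely covered by the available resources. The remaining obligations — the canonical-forms inversions and the congruence step for $\mathsf{let}$ — are routine.
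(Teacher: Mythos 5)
Your proposal is correct and follows essentially the same route as the paper: generalize the context to a sequence of free potentials and type variables, induct on the typing derivation, dispatch the structural rules by preservation of total context potential, use canonical forms for the elimination forms, discharge \textsc{(T-Imp)} by consistency of the pure-potential context, and cash out $p \ge q$ against the free-potential term demanded by \textsc{(T-Tick-P)}. The only difference is that the paper's appendix version additionally strengthens the conclusion so that values are \emph{consistent} with their types (needed later for the Consistency lemma), but that strengthening is not required for the progress disjunction you prove.
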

\begin{proof}
	By strengthening the assumption to $\jstyping{\Gamma}{e}{S}$ where $\Gamma$ is a sequence of type variables and free potentials, and then induction on $\jstyping{\Gamma}{e}{S}$.
\end{proof}

The preservation theorem then relates leftover resources after a step in computation and the typing judgment for the new term to reason about resource consumption.

\begin{lemma}[Preservation]
	If $\jstyping{q}{e}{S}$, $p \ge q$ and $\jstep{e}{e'}{p}{p'}$, then $\jstyping{p'}{e'}{S}$.
\end{lemma}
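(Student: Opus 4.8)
The plan is to prove Preservation by induction on the typing derivation, with an inner inversion on the reduction. As in the Progress lemma, I would first \emph{generalize} the statement to an arbitrary context $\Gamma$ consisting only of type variables and free-potential terms: if $\jstyping{\Gamma}{e}{S}$, $p \ge \pot{\Gamma}$, and $\jstep{e}{e'}{p}{p'}$, then there is a context $\Gamma'$ of the same shape with $p' \ge \pot{\Gamma'}$ and $\jstyping{\Gamma'}{e'}{S}$; the lemma as stated is the instance $\Gamma = q$, after which the single-free-potential presentation $\jstyping{p'}{e'}{S}$ is recovered by merging the free potentials of $\Gamma'$ via \textsc{(S-Transfer)}. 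The induction handles the structural rules uniformly: since the reduction relation never inspects the type, for \textsc{(S-Subtype)}, \textsc{(S-Inst)}, \textsc{(S-Relax)}, and \textsc{(S-Transfer)} I apply the induction hypothesis to the subderivation typing $e$ and then reapply the same rule to $e'$, observing that subtyping, instantiation, relaxation, and potential transfer all commute with the step. The rules that apply only to values, notably \textsc{(S-Gen)}, are vacuous here because values do not step.

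The workhorse supporting lemma is a resource-aware \emph{value-substitution} lemma: if $\jval{v}$, $\jstyping{\Gamma_1}{v}{T_x}$, $\jstyping{\Gamma_2,\bindvar{x}{T_x}}{e}{S}$, and $\jctxsharing{\Gamma}{\Gamma_1}{\Gamma_2}$, then $\jstyping{\Gamma}{\subst{v}{x}{e}}{S}$. The delicate point is that the binding $\bindvar{x}{T_x}$ with $T_x = \tpot{R}{\phi}$ contributes potential $\phi$ at every use of $x$, so the lemma must show that the potential of $v$ computed by the potential function $\Phi$ is exactly conserved as $v$ is threaded through $e$; this is where \eqref{eq:generic-potential-function}, together with the interpretation $\calI(\cdot)$ relating a value to its refinement, is invoked. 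Alongside this I would establish the routine canonical-forms and inversion lemmas (a value of arrow type is a $\lambda$ or a $\mathsf{fix}$; a value of type $\tinduct{C}{T}{m}{\theta}$ is some $C_j(v_0,\tuple{v_1,\cdots,v_{m_j}})$), and a lemma that the path constraints introduced by \textsc{(T-Cond)} and \textsc{(T-MatD)} become valid once the scrutinee has reduced to a constructor or Boolean.

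For the syntax-directed cases I invert the reduction and the typing rule together. For \textsc{(E-Tick)} on $\econsume{c}{e_0}$ with $c\ge 0$, I first use \textsc{(S-Transfer)} to rearrange the free potential into the form $\Gamma_0,c$ demanded by \textsc{(T-Tick-P)}, peel off $c$ to expose $\jstyping{\Gamma_0}{e_0}{S}$, and note $\pot{\Gamma_0} = \pot{\Gamma}-c \le p-c = p'$; the $c<0$ case is symmetric via \textsc{(T-Tick-N)}. For \textsc{(E-App-Abs)}, \textsc{(E-App-Fix)}, \textsc{(E-Let-Val)}, and \textsc{(E-MatP-Val)}, I invert the corresponding rule, pull out the subderivation for the body under its new bindings, and discharge the goal with the substitution lemma. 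The critical case is \textsc{(E-MatD-Val)}: the potential of the scrutinee $C_j(v_0,\tuple{v_1,\cdots,v_{m_j}})$ must redistribute exactly into the content binding $x_0{:}T_j$, the shifted child bindings $x_i{:}\tinduct{C}{T}{m}{\lhd.\mathbf{j}(\calI(v_0))(\theta).\mathbf{i}}$, and the freed potential $\pi.\mathbf{j}(\calI(v_0))(\theta)$, which is precisely the decomposition recorded by \eqref{eq:generic-potential-function}. The conditional cases \textsc{(E-Cond-True)} and \textsc{(E-Cond-False)} close because the selected branch was already typed under the path constraint $\calI(a_0)$ (resp. $\neg\calI(a_0)$), which holds once the guard has reduced to $\etrue$ (resp. $\efalse$).

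I expect the main obstacle to be the potential bookkeeping inside the substitution lemma for inductive datatypes: proving that the annotation $\theta$ with its shift and extraction operators $\lhd,\pi$ accounts for the potential of a constructed value with nothing created or destroyed, which is exactly what makes the amortized analysis sound. Concretely, I must show that \textsc{(SimpAtom-ConsD)} and \textsc{(T-MatD)} are inverse with respect to $\Phi$ — the potential consumed when building $C_j(v_0,\tuple{v_1,\cdots,v_{m_j}})$ equals the potential released when matching on it — uniformly over the value-dependent annotations living in the refinement logic. A secondary obstacle is threading the affine sharing discipline through substitution: I must verify that splitting $\Gamma$ by $\jctxsharing{\Gamma}{\Gamma_1}{\Gamma_2}$ and then substituting never duplicates potential, which relies on the self-sharing premises of \textsc{(T-Abs)} and \textsc{(T-Fix)} forcing captured free variables to carry zero potential, so that unfolding a fixpoint in \textsc{(E-App-Fix)} is resource-neutral.
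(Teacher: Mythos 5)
Your proposal matches the paper's proof essentially step for step: the paper likewise strengthens the statement to a judgment under a context of free-potential terms, inducts on the typing derivation with an inner inversion on the evaluation judgment, discharges the structural rules \textsc{(S-Subtype)}, \textsc{(S-Inst)}, \textsc{(S-Transfer)}, \textsc{(S-Relax)} by re-applying them after the induction hypothesis, and uses a resource-aware value-substitution theorem (together with the shift/extraction decomposition of \eqref{eq:generic-potential-function} and the validity of path constraints once the scrutinee is a value) to close the application, let, pair-match, and datatype-match cases. The one imprecision is the statement of your substitution lemma, whose conclusion must substitute the interpretation into the result type, i.e.\ $\jstyping{\Gamma}{\subst{v}{x}{e}}{\subst{\calI(v)}{x}{S}}$, since $S$ may depend on $x$ — a correction you implicitly make when you invoke $\calI(\cdot)$.
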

\begin{proof}
	By strengthening the assumption to $\jstyping{\Gamma}{e}{S}$ where $\Gamma$ is a sequence of free potentials, and then induction on $\jstyping{\Gamma}{e}{S}$, followed by inversion on the evaluation judgment $\jstep{e}{e'}{p}{p'}$.
\end{proof}

As in other refinement type systems, purely syntactic soundness statement about results
of computations (\ie, they are well-typed values) is unsatisfactory.
Thus, we also formulate a denotational notation of \emph{consistency}. 
For example,
the literal $b=\etrue$, but not $b=\efalse$, is consistent with $\jstyping{0}{b}{\tsubset{\tbool}{\nu}}$;
A list of values $\ell = [v_1,\cdots,v_n]$ is consistent with $\jstyping{q}{\ell}{\mathsf{NatList}^{\rabs{x}{\bbN}{x}}}$, if $q \ge \sum_{i=1}^n v_i$.
%
We then show that well-typed values are \emph{consistent} with their typing judgement.

\begin{lemma}[Consistency]
  If $\jstyping{q}{v}{S}$, then $v$ satisfies the conditions indicated
  by $S$ and $q$ is greater than or equal to the potential stored in
  $v$ with respect to $S$.
\end{lemma}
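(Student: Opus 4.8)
The plan is to prove consistency by induction on the typing derivation $\jstyping{q}{v}{S}$, after the same strengthening used for Progress and Preservation: I generalize the context to an arbitrary $\Gamma$ built only from type variables and free-potential terms, with $q = \pot{\Gamma}$. Since $v$ is closed, no program-variable bindings occur, so the only rules that can conclude $\jstyping{\Gamma}{v}{S}$ are the value-forming syntax-directed rules (\textsc{T-SimpAtom} with the atomic rules \textsc{SimpAtom-*}, and the function rules \textsc{T-Abs}, \textsc{T-Abs-Lin}, \textsc{T-Fix}) together with the structural rules \textsc{S-*}. Before that, I would pin down consistency itself, defined by induction on the type: a value is consistent with $\tsubset{B}{\psi}$ at free potential $q$ when its interpretation satisfies $\psi$ (under the refinement semantics used for validity checking) and $q$ dominates the potential $\pot{v : \tsubset{B}{\psi}}$ stored inside its sub-structures, and with $\tpot{R}{\phi}$ when $q$ additionally covers $\phi[\calI(v)/\nu]$. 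For arrow types consistency imposes only nonnegativity of the carried potential, since behavioural guarantees are provided separately by progress and preservation.

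Two algebraic facts about $\pot{\cdot}$ carry the structural rules. \emph{Additivity}: whenever $\jsharing{\Gamma}{S}{S_1}{S_2}$, we have $\pot{v : S} = \pot{v : S_1} + \pot{v : S_2}$, which follows by lifting the premises $\theta = \theta_1 \oplus_{\Delta_\theta} \theta_2$ of \textsc{Share-Dtype} and $\phi = \phi_1 + \phi_2$ of \textsc{Share-Pot} through \eqref{eq:generic-potential-function}. \emph{Monotonicity}: whenever $\jsubty{\Gamma}{T_1}{T_2}$, every value consistent with $T_1$ is consistent with $T_2$ and $\pot{v : T_2} \le \pot{v : T_1}$, using the implication premise of \textsc{Sub-Subset} on the refinement side and $\phi_1 \ge \phi_2$ of \textsc{Sub-Pot} together with $\theta' \sqsubseteq_{\Delta_\theta} \theta$ of \textsc{Sub-Dtype} on the potential side. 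Given these, \textsc{S-Subtype} is immediate; \textsc{S-Relax} and \textsc{S-Transfer} adjust the free potential of the context and the result annotation in lock-step (by $\phi'$, respectively by nothing), so the inequality is preserved since consistency depends on the context only through its total free potential $\pot{\Gamma}$; and \textsc{S-Inst}/\textsc{S-Gen} reduce to a substitution property stating that consistency commutes with instantiating a type variable by a scalar type.

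The syntax-directed value cases are the core. For \textsc{T-SimpAtom} the refinement $\nu = \calI(v)$ holds by reflexivity and the carried potential is zero. The essential case is a data constructor $C_j(v_0, \tuple{v_1, \cdots, v_{m_j}})$ typed at $\tinduct{C}{T}{m}{\theta}$ via \textsc{SimpAtom-ConsD}: the rule supplies $\pi.\mathbf{j}(\calI(v_0))(\theta)$ as free potential, types $v_0$ at $T_j$, and types each child $v_i$ at the shifted datatype $\tinduct{C}{T}{m}{\lhd.\mathbf{j}(\calI(v_0))(\theta).\mathbf{i}}$. By the induction hypotheses and additivity across the context sharing, the available free potential is at least $\pot{v_0 : T_j} + \pi.\mathbf{j}(\calI(v_0))(\theta) + \sum_i \pot{v_i : \tinduct{C}{T}{m}{\lhd.\mathbf{j}(\calI(v_0))(\theta).\mathbf{i}}}$, which is exactly $\pot{C_j(v_0,\tuple{v_1,\cdots,v_{m_j}}) : \tinduct{C}{T}{m}{\theta}}$ by \eqref{eq:generic-potential-function}; the refinement side follows from the interpretation $\calI_D$ precisely as in the \textsc{T-MatD} invariant.

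The hard part will be the datatype reasoning underlying both the constructor case and monotonicity. For the constructor case I must show that the type-level shift $\lhd.\mathbf{j}$ used to retype the children coincides with the denotational shift in \eqref{eq:generic-potential-function}; this is the content of \autoref{prop:natlist-potential-dependent}, generalized to arbitrary constructors and annotation sorts $\Delta_\theta$, and establishing it in the generic setting requires an inner induction on the data structure with the $\lhd$ and $\pi$ operators threaded correctly. Monotonicity is subtler: to derive $\pot{v : \tinduct{C}{T}{m}{\theta'}} \le \pot{v : \tinduct{C}{T}{m}{\theta}}$ from $\theta' \sqsubseteq_{\Delta_\theta} \theta$, the partial order must be preserved by every application of $\lhd$ and dominate after every application of $\pi$, so that the inequality propagates to all nodes. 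Making $\sqsubseteq_{\Delta_\theta}$ strong enough for this while keeping it encodable as a first-order refinement formula is the delicate design point, and is where I expect most of the work to concentrate.
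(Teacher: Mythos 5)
Your proposal is correct and follows essentially the same route as the paper: the same strengthening of the context to a sequence of type variables and free potentials, the same two auxiliary facts (additivity of stored potential under sharing and monotonicity under subtyping, which are Propositions~\ref{prop:sharingsplit} and~\ref{prop:subtyping}), and the same constructor case discharging the free potential $\pi.\mathbf{j}(\calI(v_0))(\theta)$ plus the children's shifted potentials against \eqref{eq:generic-potential-function} (Lemma~\ref{lem:consistentcons}); the paper merely folds the value cases into the Progress theorem rather than running a separate induction. The one point you flag as where you expect the work to concentrate---that $\sqsubseteq_{\Delta_\theta}$ and $\oplus_{\Delta_\theta}$ must propagate through $\lhd$ and $\pi$ to all nodes---is discharged in the paper not by an inner induction on the data structure but by requiring exactly that propagation as a premise of datatype well-formedness (rule \textsc{(Dtype-Index)}), so no further proof obligation arises inside the consistency argument itself.
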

\begin{proof}
  By inversion on the typing judgment we have $\jatyping{q}{v}{B}$ for some base type $B$ or $v$ is an abstraction.
  The latter case is easy as the refinement language cannot mention function values.
  For the former case, we proceed by strengthening the assumption to $\jatyping{\Gamma}{v}{B}$ where $\Gamma$ is a sequence of type variables and free potentials, then induction on $\jatyping{\Gamma}{v}{B}$.
\end{proof}

As a result of the lemmas above, we derive the following main technical theorem of this paper.

\begin{theorem}[Soundness]
  If $\jstyping{q}{e}{S}$ and $p \ge q$ then either
  \begin{itemize}
  \item $\jsteps{e}{v}{p}{p'}$ and $v$ is consistent with $\jstyping{p'}{v}{S}$ or
  \item for every $n$ there is $\tuple{e',p'}$ such that $\jstepn{e}{e'}{p}{p'}$.
  \end{itemize}
\end{theorem}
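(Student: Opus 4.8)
The plan is to derive the theorem directly from the three lemmas already established—Progress, Preservation, and Consistency—by a single induction on the length of the reduction sequence emanating from $e$. The only real content is the careful bookkeeping of the resource parameter, so that the leftover resources tracked by $\mapsto$ stay synchronized with the resource index on the typing judgment; no further case analysis on the structure of $e$ is needed, since each of the three lemmas already handles that internally.

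First I would set up the invariant maintained along the reduction. \textbf{Claim:} for every $n \ge 0$, either the reduction reaches a value within $n$ steps, or there exist $e_n$ and $p_n$ with $\jstepn{e}{e_n}{p}{p_n}$ together with some $q_n \le p_n$ such that $\jstyping{q_n}{e_n}{S}$. I would prove this by induction on $n$. The base case $n = 0$ is immediate: take $e_0 = e$, $p_0 = p$, and $q_0 = q$, using the hypotheses $\jstyping{q}{e}{S}$ and $p \ge q$ and the reflexivity of $\mapsto^{*}$. For the inductive step, given the invariant at stage $n$ with $\jstyping{q_n}{e_n}{S}$ and $p_n \ge q_n$, I apply Progress: either $e_n$ is already a value (and the first branch of the claim holds), or there is a single step $\jstep{e_n}{e_{n+1}}{p_n}{p_{n+1}}$. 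In the latter case Preservation (instantiated with $q_n$, $p_n$, and this step) yields $\jstyping{p_{n+1}}{e_{n+1}}{S}$; setting $q_{n+1} = p_{n+1}$ makes $p_{n+1} \ge q_{n+1}$ trivially true, and composing the new step with the $n$-step reduction gives $\jstepn[n+1]{e}{e_{n+1}}{p}{p_{n+1}}$, re-establishing the invariant.

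With the invariant in hand, the dichotomy in the conclusion follows by a classical case split. If for some $n$ the reduction reaches a value $v$, then $\jsteps{e}{v}{p}{p'}$ holds (with $p' = p_n$) and $\jstyping{q'}{v}{S}$ for some $q' \le p'$; I then invoke Consistency on $\jstyping{q'}{v}{S}$, which tells me that $v$ satisfies the conditions indicated by $S$ and that $q'$ is at least the potential stored in $v$. Since $p' \ge q'$, the leftover resource $p'$ also dominates that potential, so $v$ is consistent with $\jstyping{p'}{v}{S}$, giving the first disjunct. Otherwise no prefix ever reaches a value, so for every $n$ the second branch of the invariant supplies $\tuple{e_n, p_n}$ with $\jstepn{e}{e_n}{p}{p_n}$, which is exactly the second disjunct.

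I expect the main obstacle to be purely the resource accounting rather than anything structural: I must apply Progress with the available amount $p_n$ (so that the step is taken against the full leftover budget) while Preservation threads the residual $p_{n+1}$ back into the typing judgment as its new resource index, and the final appeal to Consistency must correctly upgrade the syntactic ``well-typed value'' conclusion to the denotational consistency statement, including the inequality $p' \ge {}$(potential of $v$). Because Progress, Preservation, and Consistency are each proved by strengthening to an arbitrary context of type variables and free potentials and then inducting on the typing derivation, the soundness theorem itself is then a clean corollary of these three together with the termination/divergence dichotomy above.
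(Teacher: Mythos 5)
Your proof is correct and matches the paper's intent: the paper derives the Soundness theorem as a direct corollary of Progress, Preservation, and Consistency, which is exactly the iterate-progress-and-preservation argument you spell out, including the final use of $p' \ge q'$ to upgrade consistency at $q'$ to consistency at $p'$. The only difference is that the paper leaves this standard iteration implicit rather than writing out the invariant.
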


Detailed proofs are included in
\iflong 
\autoref{sec:appendixproofs}
\else
the technical report~\cite{techreport}
\fi

\section{Evaluation}
\label{sec:eval}

We have implemented the new features of liquid resource types, inductive and abstract potentials,
on top of the \resyn type checker;
we refer to the resulting implementation as \tool.
In this section, we evaluate \tool according to three metrics: 

\textbf{Expressiveness:} How well can \tool express non-linear and dependent bounds?  To what extent can \tool express bounds that systems like \resyn and \raml{} could not? 

\textbf{Automation:} Can \tool automatically verify expressive bounds which other tools cannot?  Are the verification times reasonable?

\textbf{Flexibility:}  Can we define reusable datatypes that can express a variety of resource bounds across different programs?

\subsection{Reusable Datatypes} 
\label{sec:eval:lib}
We first describe a small library of resource-annotated datatypes we created,
which we will use to specify type signatures for our benchmark functions.
The definitions of the four datatypes are listed in \autoref{fig:eval:data}.
Since potential is only specified inductively in these definitions, 
we also provide a closed form expression for the potential associated with each such data structure
(omitting the potential stored in the element type \T{a}).
The proofs of these closed forms can be found in 
\iflong
\autoref{sec:appendixclosedf}.  
\else
the technical report~\cite{techreport}.
\fi

\begin{table*}[!t]
\centering
\begin{tabular}{ >{\centering\arraybackslash}m{.5cm} >{\centering\arraybackslash}m{8.5cm}| >{\centering\arraybackslash}m{4cm}}
         & Datatype & Potential Interpretation \\ \hline \hline
      1 & 
{\begin{nanomltbl}
data List a <q::a->a->Int> where
  Nil :: List a <q>
  Cons :: x:a -> List a$^{q\, x}$ <q> -> List a <q>
\end{nanomltbl}} &
      $\sum_{i<j} q(a_i,a_j)$ \\ \hline
      2 & 
{\begin{nanomltbl}
data EList a <q::Int> where
  Nil :: EList a <q>
  Cons :: x:a$^q$ -> EList a <2*q> -> EList a <q> 
\end{nanomltbl}} &
$q\cdot (2^n - 1)$ \\ \hline 
3 &
{\begin{nanomltbl}
data LTree a <q::Int> where
  Leaf :: a -> LTree a <q>
  Node :: LTree a$^q$ <q> -> LTree a$^q$ <q> -> LTree a <q>
\end{nanomltbl}} &
$\approx q\cdot n \log_2 (n)$ \\ \hline
4 &
{\begin{nanomltbl}
data PTree a <p::a->Bool, q::Int> where
  Leaf :: PTree a <p,q>
  Node :: x:a$^q$ -> PTree a <p, $\mathsf{ite}(\mathsf{p(x), q, 0})$>
     -> PTree a <p, $\mathsf{ite}(\mathsf{p(x), 0, q})$> -> PTree a <p,q>
\end{nanomltbl}} &
$q\cdot \lvert\ell\rvert$ \\
\hline
\end{tabular}
\caption{Annotated data structures with their corresponding potential functions. 
$n$ is taken to be the number of elements in the data structure.
In \T{PTree}, $\lvert\ell\rvert$ is the length of the path specified by the predicate $p$. 
}
\label{fig:eval:data}
\end{table*}

\T{List} and \T{EList} are general purpose list data structures that contain quadratic and exponential potential, respectively. In particular, \T{List} admits dependent potential expressions, as the abstract potential parameter is a function of the list elements.  This list type can be adapted to express higher-degree polynomial potential functions via the generalized left shift operation, described in \autoref{sec:potentials}. \T{EList} can be modified to express exponential potential for any positive integer base $k$ by modifying the type of the second argument to \T{Cons}: 
\[ \T{Cons} \dblcolon \tarrow{\mathsf{x}}{\mathsf{a}^q}{\tarrow{\mathsf{xs}}{\tkadt{a}{k \cdot q}{EList}}{\tkadt{a}{q}{EList}}} \]
Such a list contains $q \cdot (k^n-1)$ units of potential; $k$ has to be fixed for annotations to remain linear. 

\T{LTree} is a binary tree with values (and thus, potential) stored in its leaves.
We show that the total potential stored in the tree is $q\cdot n \cdot h$,
where $n$ is the number of leaves in the tree and $h$ is its height.
If we additionally assume that the tree is balanced,
then $h = O(\log(n))$, and hence the amount of potential in the tree is $O(n \cdot \log(n))$.
In \autoref{sec:eval:benchmarks} we use this tree as an intermediate data structure in order to reason about logarithmic bounds.

\T{PTree} is a binary tree with elements in the nodes,
which uses dependent potential annotations to specify the exact \emph{path} through the tree that carries potential;
we refer to this data structure as \emph{pathed potential tree}. 
\T{PTree} is parameterized by a boolean-sorted \emph{abstract refinement}~\cite{VazouRoJh13}, $p$, 
which is then used in the potential annotations to conditionally allocate potential either to the left or to the right subtree,
depending on the element in the node.
Since $p$ is used to pick exactly one subtree at each step, it specifies a path from root to leaf.

These data structures showcase a variety of ways in which liquid resource types can be used to reason about a program's performance. 
Additionally, because the interpretation of abstract potentials is left entirely to the user, 
one can define custom data structures to describe other resource bounds as needed.  

\subsection{Benchmark Programs}
\label{sec:eval:benchmarks}
We evaluate the expressiveness of \tool on a suite of \numBench benchmark programs listed in \autoref{fig:eval:functions}.
The resource consumptions of these benchmarks covers a wide range of complexity classes.  
We choose functions with quadratic, exponential, logarithmic, and value-dependent resource bounds in order to showcase the breadth of bounds \tool can verify.  
We are able to express these bounds using only the datatypes from \autoref{fig:eval:data}, showing the flexibility and reusability of these datatype definitions. 
The cost model in all benchmarks is the number of recursive calls (as in \autoref{sec:overview}). 

\begin{table*}[!t]
\begin{center}
\footnotesize
\resizebox{\textwidth}{!}{
\begin{tabular}{c|c|c|c|c|c} 
Type & No. & Description & Type Signature & t (s) & Source \\ \hline \hline

\multirow{7}{*}{\begin{tabular}{c}Polynomial\\Quadratic\\Potential\end{tabular}}
&1 & All ordered pairs & $\tkaplist{a}{2}{2} \to \tklist{(Pair\ a)}$ & 0.5 & \raml\\

&2 & List Reverse & $\tkaplist{a}{2}{1} \to \tklist{a}$ & 0.4 & \synquid \\

&3 & List Remove Duplicates& $\tkaplist{a}{2}{1} \to \tklist{a}$ & 0.4 & \synquid \\

&4 & Insertion Sort (Coarse) & $\tkaplist{a}{2}{1} \to \tklist{a}$ & 0.6 & \synquid \\

&5 & Selection Sort & $\tkaplist{a}{4}{3} \to \tklist{a}$ & 0.5 & \synquid \\

&6 & Quick Sort & $\tkaplist{a}{3}{3} \to \tklist{a}$ & 1.0 & \synquid \\

&7 & Merge Sort & $\tkaplist{a}{2}{2} \to \tklist{a}$ & 0.9 & \synquid \\

\hline

\multirow{2}{*}{\begin{tabular}{c}Non-Polynomial\\Potential\end{tabular}}
&8 & Subset Sum & $\tkadt{Int}{2}{EList} \to \T{Int} \to \T{Bool}$ & 0.3 & --\\
&9 & Merge Sort Flatten & $\tkapdt{a}{1}{1}{LTree} \to \tklist{a}$ & 0.9 & --\\
\hline
\multirow{3}{*}{\begin{tabular}{c}Value-\\Dependent\\Potential\end{tabular}}
&10 & Insertion Sort (Fine) & $\tkaplist{a}{1}{\lambda x_1,x_2. \, \mathsf{ite}(x_1 < x_2,1,0)} \to \tklist{a}$& 5.4 & \relcost \\
&11 & BST Insert & $\tarrow{x}{\T{a}}{\tkadt{a}{\lambda x_1. \, x < x_1, 1}{PTree} \to \tkadt{a}{\lambda x_1. \, x < x_1, 0}{PTree}}$ & 2.4 & --\\
&12 & BST Member & $\tarrow{x}{\T{a}}{\tkadt{a}{\lambda x_1. \, x < x_1, 1}{PTree} \to \T{Bool}}$ & 6.0 & --\\ 
\end{tabular}
}
\end{center}
\caption{ Functional benchmarks. For each benchmark, we list its type signature, verification time (t), and source for the benchmark -- either \raml{} \cite{RAML10}, \synquid \cite{PolikarpovaKS16}, or \relcost \cite{Radicek18}.}
\label{fig:eval:functions}
\end{table*}

Benchmarks 1-7 require only standard quadratic bounds.
Benchmarks 2-7 are those programs from the original \synquid benchmark suite~\cite{PolikarpovaKS16}
that \resyn could not handle, because they require non-linear bounds.
Some of the analyses, such as merge sort, are overapproximate.
Benchmark 8 moves beyond polynomials, solving the well-known subset sum problem. The function runs in exponential time, so we can write a resource bound using our \T{EList} data structure to require exponential potential in the input. 
Once we have verified that $\T{subsetSum} \dblcolon \tkadt{Int}{2}{EList} \to \T{Int} \to \T{Bool}$, we can use the provided closed-form potential function to calculate total resource usage: $2(2^n - 1)$, exactly the number of recursive calls made at runtime.

Benchmark 9 illustrates how \tool can verify a more precise $O(n \log(n))$ bound for a version of merge sort.
LRT is unable to allocate logarithmic amount of potential directly to a list,
hence we specify this benchmark using \T{LTree} as an intermediate data structure.
Prior work has shown~\cite{morphisms} that merge sort can be written more explicitly as a composition of two function:
\T{build}, which converts a list into a tree 
(where each internal node represents a split of a list into halves),
and \T{flatten}, which takes a tree and recursively merges its subtrees into a single sorted list.
In the traditional implementation of merge sort, the two passes are fused, and the intermediate tree is never constructed;
however, keeping this tree explicit, enables us to specify a logarithmic bound on the \T{flatten} phase of merge sort,
which performs the actual sorting.
We accomplish this by typing its input as $\tkapdt{a}{1}{1}{LTree}$;
because \T{build} always constructs balanced trees,
this tree carries approximately $n \log(n)$ units of potential,
where $n$ is the number of leaves in the tree, which coincides with the number of list elements.
Unfortunately, LRT is unable to express a precise resource specification for the \T{build} phase of merge sort,
or for the traditional, fused version without the intermediate tree.

Benchmarks 10 through 12 show the expressiveness of value-dependent potentials. 
Benchmark 10 is the dependent version of insertion sort from \autoref{sec:overview}. 
%
Benchmarks 11 and 12 use the \T{PTree} data structure to allocate linear potential along a value-dependent path in a binary tree.
We use \T{PTree} to specify the resource consumption of inserting into and checking membership in a binary search tree. 
\T{PTree} allows us to assign potential only along the specific path taken while searching for the relevant node in the tree.
As a result, we can endow our tree with exactly the amount of potential required to execute \T{member} or \T{insert} on an arbitrary BST.  
If we have the additional guarantee that our BST is balanced, we can also conclude that these bounds are logarithmic, 
as the relevant path is the same length as the height of the tree. 
%

\subsection{Discussion and Limitations}
\label{sec:eval:discussion}

\autoref{fig:eval:functions} confirms that \tool is reasonably efficient:
verification takes under a second for simple numeric bounds (benchmarks 1-9).
The precision of value-dependent bounds (benchmarks 10-12) comes with slightly higher verification times---up to six seconds;
these benchmarks generate second-order CLIA constraints and require the use of \resyn's CEGIS solver
(as opposed to first-order constraints that can be handled by an SMT solver).

No other automated resource analysis system can verify all of our benchmarks in \autoref{sec:eval:benchmarks}.
\relcost can be used to verify all of these bounds, but provides no automation.
\resyn cannot verify any of our benchmarks, as it can only reason about linear resource consumption.
\raml{} can infer an appropriate bound for benchmarks 1-7, which all require quadratic potential.
However, \raml{} cannot reason about the other examples, as it cannot reason about program values, and only support polynomials.
\raml{} relies on a built-in definition of potential in a data structure, 
while \tool exposes allocation of potential via datatype declarations,
allowing the programmer to easily configure it to handle non-polynomial bounds. 
In particular, a \tool user can adopt \raml's treatment of polynomial resource bounds via our \T{List} type,
and can also write non-polynomial specifications with other datatypes from our library
or with a custom datatype.

The \relcost formalism presented in \cite{Radicek18} allows one to manually
verify all of the bounds in \autoref{sec:eval:benchmarks}.
\cite{CicekQBG019} presents an implementation of a subset \relcost.
This tool can be used to automatically verify non-linear bounds that are dependent
only on the length of a list.
To verify non-linear bounds, the system still generates non-linear constraints,
and thus relies on incomplete heuristics for constraint solving.
Benchmarks 10-12 in \autoref{fig:eval:functions} all consist of conditional bounds,
which are not supported by the implementation of \relcost.

Despite \tool's flexibility, it has some limitations.
Firstly, 
our resource bounds must be defined inductively over the function's input,
and hence we cannot express bounds that do not match the structure of the input type.
A prototypical example is the logarithmic bound for merge sort:  
we can specify this bound for the \T{flatten} phase, which operates over a tree (where the logarithm is ``reified'' in the tree height),
but not for merge sort as a whole that operates over a list.


Secondly, \tool cannot express multivariate resource bounds.
Consider a function that takes two lists and returns a list of every pair in the cartestian product of the two inputs. 
This function runs in $O(m\cdot n)$, where $m$ and $n$ are the lengths of the two input lists. 
There is no way to express this bound by annotating the types of input lists with terms form CLIA.

Finally, \tool can verify, but not infer resource bounds.  So while verification is automatic, finding the correct type signature must be done manually, even if the correct data structure has been selected.  
Simple modifications would allow the system to infer non-dependent resource bounds following the approach of \raml ~\cite{RAML10}, but this technique does not generalize to the dependent case.

\section{Related work}
\label{sec:related}

Verification and inference techniques for resource analysis have been
extensively studied.
Traditionally, automatic techniques for resource analysis are based on
a two-phase process: (1) extract recurrence relations from a program
and (2) solve recurrence relations to obtain a closed-form bound.
This strategy has been pioneered by Wegbreit~\cite{Wegbreit75} and has
been later been studied for imperative
programs~\cite{kn:AAG11,TACAS:AFR15} using
techniques such as abstract interpretation and symbolic
analysis~\cite{PLDI:KBB17,POPL:KCB19}.
The approach can also be used for higher-order functional
programs by extracting higher-order recurrences~\cite{ICFP:DLR15}.
Other resource analysis techniques are based on static
analysis~\cite{POPL:GMC09,GulwaniJK09,SAS:ZSG11,SinnZV14} and term
rewriting~\cite{RTA:AM13,TLCA:HM15,JAR:NEG13,TACAS:BEG14}.

Most closely related to our work are type-based approaches to resource
bound analysis.
We biuld upon type-based automated amortized resource analysis (AARA).
AARA has been introduced by Hofmann and Jost~\cite{POPL:HJ03} to
automatically derive linear bounds on the heap-space consumption of
first-order programs.
It has then been extended to higher-order programs~\cite{POPL:JHL10},
polynomial bounds~\cite{ESOP:HH10,POPL:HAH11} and user-defined
types~\cite{POPL:JHL10,POPL:HDW17}.
Most recently, AARA has been combined with refinement types~\cite{PLDI:FP91} in
the Re$^2$ type system~\cite{resyn} behind \resyn, a resource-aware program
synthesizer.
None of these works support user-defined potential functions.
As discussed in \autoref{sec:intro}, this paper extends Re$^2$
with inductive datatypes that can be annotated with custom potential functions.
The introduction of abstract potential functions allows this work
to reuse \resyn's constraint solving infrastructure when
reasoning about richer resource bounds.
This work also formalizes the technique for user-defined inductive datatypes,
while the Re$^2$ formalism admitted only reasoning about lists.

Several other works have used refinement types and dependent types for
resource bound analysis. 
Danielsson~\cite{Danielsson08} presented a dependent cost monad that
has been integrated in the proof assistant Agda.
d$\ell$PCF~\cite{LICS:LG11} introduced linear dependent types to reason
about the worst-case cost of PCF terms.
Granule~\cite{orchard2019} introduces graded modal types, combining
the indexed types of d$\ell$PCF with bounded linear logic~\cite{GirardSS92}
and other modal type systems~\cite{ghica2014,brunel2014}.
While useful for a variety of applications, such as 
enforcing stateful protocols, reasoning about privacy,
and bounding variable reuse, these techniques do not allow 
an amortized resource analysis.
{\c{C}}i{\c{c}}ek et al.~\cite{POPL:CBG17,CicekQBG019} have pioneered the use of relational
refinement type systems for verifying the bounds on the difference
of the cost of two programs.
It has been shown that linear AARA can be embedded in a generalized
relational type systems for monadic refinements~\cite{RadicekBG0Z18}.
While this article does not consider relational verification, the
presented type system allows for decidable type checking and is a
conservative extension of AARA instead of an embedding.

Similarly, TiML~\cite{OOPSLA:WWC17} implements (non-relational)
refinement types in the proof assistant Coq to aid verification of
resource usage.
A recent article also studied refinement types for a
language with lazy evaluation~\cite{HandleyVH20}.
However, these works do not directly support amortized analysis and do
not reduce type checking of non-linear bounds to linear constraints.

\begin{acks}
  We thank the anonymous reviewers and our shepherd, Richard
  Eisenberg, for their valuable and detailed feedback on earlier
  versions of this article.

  This article is based on research supported by DARPA under AA
  Contract FA8750-18-C-0092 and by the National Science Foundation
  under SaTC Award 1801369, CAREER Award 1845514, and SHF Awards
  1812876, 1814358, and 2007784.
  Any opinions, findings, and conclusions contained in this document
  are those of the authors and do not necessarily reflect the views of
  the sponsoring organizations.

\end{acks}

\bibliography{references,db}

\iflong
\newpage
\appendix
\section{Appendix}
\label{sec:appendixclosedf}
In order for our closed-form potential function to be valid, we need to check that for each data structure we require that for each constructor, the sum of potentials of the arguments $r_1,\dots,r_s$ of that constructor equals the potential of the overall data structure.  
\subsection{List: Dependent Quadratic Potential}
\begin{nanoml}
data List a <q::a -> a -> Nat> where
  Nil::List a <q>
  Cons::x:a -> xs:List a$^{q(x,_v)}$ <q> -> List a <q>
\end{nanoml}

\noindent
{\it Claim. } Let $\ell \dblcolon \tklist{a} \langle q \rangle$ be of length $n$, so that $\ell$ is $[a_1,\dots,a_n]$.  Then: 
$$\Phi_{\T{List}}(\ell) = \sum_{1\leq i < j\leq n} q(a_i,a_j) \text{ is a sound potential function.}$$

\noindent
{\it Proof. } Matching $\ell$ to \T{Nil}, we have
$$\sum_i\Phi_{\T{List}}(u^{\T{Nil}}_i) = 0 = \sum_{1\leq i < j\leq n} q(a_i,a_j) = \Phi_{\T{List}}(\ell).$$
Matching $\ell$ to \T{Cons x xs}, we have
\begin{align*}
  \sum_i\Phi_{\T{List}}(u^{\T{Cons}}_i) = \Phi_{\T{List}}(\T{xs}) = \sum_{2\leq i \leq n} q(a_1,a_i) + \sum_{2\leq i < j\leq n} q(a_i,a_j)\\
  =\sum_{1\leq i < j \leq n} q(a_i,a_j) = \Phi_{\T{List}}(\ell).
\end{align*}

\subsection{List: Exponential Potential}

\begin{nanoml}
data EList a <q::Int> where
  Nil::EList a <q>
  Cons::x:a$^q$ -> xs: EList a <q + q> -> EList a <q>
\end{nanoml}

\noindent
{\it Claim. } Let $\ell \dblcolon \tkadt{a}{q \dblcolon \T{Nat}}{EList}$ be of length $n$.  Then:
$$\Phi_{\T{EList}}(\ell) = q * (2^n - 1) \text{ is a sound potential function}.$$

\noindent
{\it Proof. } Matching $\ell$ to \T{Nil}, we have
$$\sum_i \Phi_{\T{EList}} (u^{\T{Nil}_i}) = 0 = q * (2^n - 1) = \Phi_{EList}.$$

\noindent
Matching $\ell$ to \T{Cons x xs}, we have
\begin{align*}
  \sum_i \Phi_{\T{EList}} (u^{\T{Cons}}_i) = \Phi_{\T{EList}} (x) + \Phi_{\T{EList}} = q + (q+q)\left(2^{n-1}-1\right) = q * (1 + 2*2^{n-1}-2) \\
  = q + 2q * (2^{n-1} - 1) = q*(2^n - 1).
\end{align*}

\subsection{Balanced Binary Tree: Logarithmic potential}

\begin{nanoml}
data LTree a <q::Nat> where
  Leaf::a$^q$ -> LTree a <q>
  Node::LTree a$^q$ <q> -> LTree a$^q$ <q> -> LTree a <q>
\end{nanoml}

\noindent
{\it Claim. } Let $t \dblcolon \tkadt{a}{q \dblcolon \T{Nat}}{LTree}$ be a balanced tree storing $n$ values (leaves).  Then
$$\Phi_{LTree} (t) = q * (n\log_2(n)) \text{ is a sound potential function.}$$

\noindent
{\it Proof. } Matching $t$ to \T{Leaf}, we have
$$\sum_i \Phi_{\T{LTree}} (u^{\T{Leaf}}_i) = q = q * n\log_2(n) = \Phi_{\T{LTree}} (t).$$

\noindent
Matching $t$ to \T{Node x l r}, we have
\begin{align*}
    \sum_i \Phi_{\T{LTree}} (u^{\T{Node}}_i) = \Phi_{\T{LTree}} (l) + \Phi_{\T{LTree}} (r) = \Phi_{\T{LTree}} (x) + 2 * \Phi_{\T{LTree}} (l)\\
    = 2\left[q*\left(\frac{n}{2}\right) + q*\left(\frac{n}{2}\log_2\left(\frac{n}{2}\right)\right)\right] = q*\left[n + n(\log_2(n) - 1)\right] = q*n\log_2(n) = \Phi_{\T{LTree} (t)}.
\end{align*}

\subsection{Tree: Potential on path}

\begin{nanoml}
data PTree a <p::a -> Bool, q::Int> where
  Leaf :: PTree a <p,q>
  Node :: x: a$^q$
       -> l: PTree a <p,if (p x) then q else 0>
       -> r: PTree a <p,if (p x) then 0 else q>
       -> PTree a <p,q>
\end{nanoml}

\noindent
{\it Claim. } Let $t \dblcolon \tkadt{a}{p \dblcolon \T{a -> Bool}, q \dblcolon Nat}{PTree}$.  Then
$$\Phi_{\T{PTree}} (t) = q*\lvert \ell \rvert \text{ is a sound potential function,}$$
for the path $\ell$ to leaf defined by the predicate \T{p} having length $\lvert \ell \rvert$. 

\noindent
{\it Proof. } Matching $t$ to \T{Leaf}, we have
$$\sum_i \Phi_{\T{PTree}} (u^{\T{Leaf}}_i) = 0 = q*\lvert \ell \rvert = \Phi_{\T{PTree}} (t).$$

\noindent
Matching $t$ on $\T{Node x l r}$, we have
\begin{align*}
    \sum_i \Phi_{\T{PTree} (u^{\T{Node}}_i)} = \Phi_{\T{PTree}} (x) + \Phi_{\T{PTree}} (l) + \Phi_{\T{PTree}} (r) 
    \\ = q + \text{ (if $p(x)$ then $q$ else $0$) }*(\lvert \ell\rvert - 1) + \text{ (if $p(x)$ then $0$ else $q$) }*(\lvert\ell\rvert - 1)\\
    = q + q * (\lvert \ell\rvert - 1) = q *\lvert \ell \rvert = \Phi_{\T{PTree}} (t).  
\end{align*}

\mathtoolsset{showonlyrefs,showmanualtags}
\allowdisplaybreaks

\newcommand{\propref}[1]{Prop.~\ref{prop:#1}}
\newcommand{\lemref}[1]{Lem.~\ref{lem:#1}}
\newcommand{\theoref}[1]{Thm.~\ref{the:#1}}

\renewcommand{\tinduct}[5][\mu,\lhd,\pi]{\ensuremath{\mathsf{ind}_{#1}^{#5}\p{ \many{ #2 {\,:\,} (#3,#4) } }}}

\section{Full Specification of the Type System}
\label{sec:appendixre2}

\subsection{Inductive Datatypes with Measures}

In the full type system, an inductive datatype $\tinduct{C}{T}{m}{\theta}$ consists of a sequence of constructors, each of which has a name $C$, a content type $T$ (which must be a scalar type), and a finite number $m \in \bbZ_0^+$ of child nodes.
The parameter $\mu$ specifies a \emph{measure} of the inductive datatype, which is a tuple of refinement-level functions, which is used to derive the interpretation $\calI_D$ for a datatype $D$.
Intuitively, the $j$-th component of $\mu$, written $\mu.\mathbf{j}$, should be a function of sort $\sarrow{\Delta_{T_j}}{\sarrow{\Delta_D^{m_j}}{\Delta_D}}$ where $\Delta_{T_j}$ is the sort for refinements of the content type $T_j$, \ie, a function computes the measurement of a data structure $C_j(v_0,\tuple{v_1,\cdots,v_{m_j}})$ as $\mu.\mathbf{j}(s_0)(s_1,\cdots,s_{m_j})$, where $s_0$ is the logical refinement of the content value $v_0$, and $s_1,\cdots,s_{m_j}$ are $\Delta_D$-sorted measurements of child nodes $v_1,\cdots,v_{m_j}$.

\begin{example}[Measures in the refinement language]\label{appendix:exa:refinement-measures}
  Recall the length measure $\calI_{\mathsf{NatList}}$ for natural-number lists in \autoref{exa:inductive-measures}.
  We want to redefine $\mathsf{NatList}$ as $\mathsf{ind}_\mu(\mathsf{Nil}{\,:\,}(\tunit,0),\mathsf{Cons}{\,:\,}(\tnat,1))$ with some proper $\mu$ such that $\calI_\mathsf{NatList}$ can be derived from $\mu$.
  Indeed, we can define $\mu = (\mu_{\mathsf{Nil}},\mu_{\mathsf{Cons}})$ where
  \begin{align*}
    \mu_\mathsf{Nil} & \defeq \rabs{\_}{\bbU}{ \rabs{\_}{\bbU}{0}}, & \mu_\mathsf{Cons} & \defeq \rabs{h}{\bbN}{\rabs{t}{\bbN}{t + 1}}.
  \end{align*}
  The first argument of both $\mu_\mathsf{Nil}$ and $\mu_\mathsf{Cons}$ reflects the corresponding content type.
  In addition, $\mu_\mathsf{Cons}$ has a second argument that represents the measurement of the child node \ie the length of the tail list.
  We can now define $\calI_\mathsf{NatList}$ as
  \begin{align*}
    \calI_\mathsf{NatList}(\mathsf{Nil}(\etriv,\tuple{})) & \defeq \mu_{\mathsf{Nil}}(\calI(\etriv))(\star) \equiv 0, \\
     \calI_\mathsf{NatList}(\mathsf{Cons}(v_h,\tuple{v_t})) & \defeq \mu_{\mathsf{Cons}}(\calI(v_h))(\calI_\mathsf{NatList}(v_t)) \equiv \calI_\mathsf{NatList}(v_t) + 1.
  \end{align*}
\end{example}

Inspired by \autoref{appendix:exa:refinement-measures}, for a general inductive datatype $D$ with the form $\mathsf{ind}_\mu(\many{C{\,:\,}(T,m)})$, we can inductively define a measure $\calI_D$ for values of this datatype using $\mu$:
\begin{equation}\label{eq:interpretation-from-measure}
  \calI_D(C_j(a_0,\tuple{a_1,\cdots,a_{m_j}})) \defeq \mu.\mathbf{j}(\calI(a_0))(\calI(a_1),\cdots,\calI(a_{m_j})).
\end{equation}

\subsection{Sorting: $\jsort{\Gamma}{\psi}{\Delta}$}

Refinements are classified by sorts.
The \emph{sorting} judgment $\jsort{\Gamma}{\psi}{\Delta}$ states that a refinement $\psi$ has a sort $\Delta$ under a context $\Gamma$.
The typing context is needed because refinements can reference program variables.
\autoref{fig:sorting} presents the sorting rules.
To reflect types of program variables in the refinement level, we define a relation $B \rightsquigarrow \Delta$ as follows.
The relation $\rightsquigarrow$ defines a total function from base types to sorts.

\begin{mathpar}\footnotesize
\Rule{K-Nat}
{  }
{ \jkind{\tnat}{\bbN} }
\and
\Rule{K-Bool}
{ }
{ \jkind{\tbool}{\bbB} }
\and
\Rule{K-Unit}
{ }
{ \jkind{\tunit}{\bbU} }
\and
\Rule{K-Prod}
{ \jkind{B_1}{\Delta_1} \\\\ \jkind{B_2}{\Delta_2} }
{  \jkind{\tprod{B_1}{B_2}}{\sprod{\Delta_1}{\Delta_2}} }
\and
\Rule{K-Dtype}
{ D = \tinduct{C}{T}{m}{\theta}  }
{ \jkind{ D }{\Delta_{D} } }
\and
\Rule{K-Tvar}
{ }
{ \jkind{m \cdot \alpha}{\delta_\alpha} }
\end{mathpar}

We also define a relation $\tscalar{\Delta}$ to state a sort $\Delta$ is first-order as follows.
The relation is used to define first-order quantifications.

\begin{mathpar}\footnotesize
\Rule{Sc-Bool}{ }{ \tscalar{\bbB} }
\and
\Rule{Sc-Nat}{ }{ \tscalar{\bbN} }
\and
\Rule{Sc-Unit}{ }{ \tscalar{\bbU} }
\and
\Rule{Sc-Tvar}{ }{ \tscalar{\delta_\alpha} }
\and
\Rule{Sc-Prod}{ \tscalar{\Delta_1} \\ \tscalar{\Delta_2} }{ \tscalar{\sprod{\Delta_1}{\Delta_2}} }
\end{mathpar}

\begin{figure}[h]
\begin{flushleft}
  \small\fbox{$\jsort{\Gamma}{\psi}{\Delta}$}
\end{flushleft}
\begin{mathpar}\footnotesize
	\Rule{S-Var}
	{ {\jwfctxt{\Gamma}} \\ \jkind{ \Gamma(x) }{ \Delta }}
	{ \jsort{\Gamma}{x}{\Delta} }
	\and
	\Rule{S-Nat}
	{ {\jwfctxt{\Gamma}} }
	{ \jsort{\Gamma}{n}{\bbN} }
	\and
	\Rule{S-Triv}
	{ \jwfctxt{\Gamma} }
	{ \jsort{\Gamma}{\star}{\bbU} }
	\and
	\Rule{S-Top}
	{ {\jwfctxt{\Gamma}} }
	{ \jsort{\Gamma}{\top}{\bbB} }
	\and
	\Rule{S-Neg}
	{ \jsort{\Gamma}{\psi}{\bbB} }
	{ \jsort{\Gamma}{\neg\psi}{\bbB} }
	\and
	\Rule{S-And}
	{ \jsort{\Gamma}{\psi_1}{\bbB} \\ \jsort{\Gamma}{\psi_2}{\bbB} }
	{ \jsort{\Gamma}{\psi_1 \wedge \psi_2}{\bbB} }
	\and
	\Rule{S-Rel}
	{ \jsort{\Gamma}{\phi_1}{\bbN} \\ \jsort{\Gamma}{\phi_2}{\bbN} }
	{ \jsort{\Gamma}{\phi_1 \leq \phi_2}{\bbB} }
	\and
	\Rule{S-Op}
	{ \jsort{\Gamma}{\phi_1}{\bbN} \\ \jsort{\Gamma}{\phi_2}{\bbN} }
	{ \jsort{\Gamma}{\phi_1 + \phi_2}{\bbN} }
	\and
	\Rule{S-Eq}
	{ \jsort{\Gamma}{\psi_1}{\Delta} \\ \jsort{\Gamma}{\psi_2}{\Delta} \\ \tscalar{\Delta} }
	{ \jsort{\Gamma}{\psi_1 = \psi_2}{\bbB} }
	\and
	\Rule{S-Lvar}
	{ \jwfctxt{\Gamma} \\ \Gamma(a) = \Delta }
	{ \jsort{\Gamma}{a}{\Delta} }
	\and
	\Rule{S-Abs}
	{ \jsort{\Gamma,a:\Delta}{\psi}{\Delta'} }
	{ \jsort{\Gamma}{\rabs{a}{\Delta}{\psi}}{\sarrow{\Delta}{\Delta'}} }
	\and
	\Rule{S-App}
	{ \jsort{\Gamma}{\psi_1}{\sarrow{\Delta}{\Delta'}} \\
	\jsort{\Gamma}{\psi_2}{\Delta} }
	{ \jsort{\Gamma}{\psi_1~\psi_2}{\Delta'} }
	\and
\Rule{S-Pair}
{ \jsort{\Gamma}{\psi_1}{\Delta_1} \\ \jsort{\Gamma}{\psi_2}{\Delta_2} }
{ \jsort{\Gamma}{\rpair{\psi_1}{\psi_2}}{\sprod{\Delta_1}{\Delta_2}} }
\\
\Rule{S-Proj-Left}
{ \jsort{\Gamma}{\psi}{\sprod{\Delta_1}{\Delta_2}} }
{ \jsort{\Gamma}{\projl{\psi}}{\Delta_1} }
\and
\Rule{S-Proj-Right}
{ \jsort{\Gamma}{\psi}{\sprod{\Delta_1}{\Delta_2}} }
{ \jsort{\Gamma}{\projr{\psi}}{\Delta_2} }
\and
\Rule{S-Forall}
{ \jsort{\Gamma,a:\Delta}{\psi}{\bbB} \\ \tscalar{\Delta} }
{ \jsort{\Gamma}{\rforall{a}{\Delta}{\psi}}{\bbB} }
\end{mathpar}
\caption{Sorting rules}
\label{fig:sorting}
\end{figure}

\subsection{Type Wellformedness: $\jwftype{\Gamma}{S}$}

A type $S$ is said to be \emph{wellformed} under a context $\Gamma$ if the following three properties hold:
\begin{itemize}
	\item every referenced program variables in $S$ is in the correct scope, and
	\item polymorphic types can never carry positive potential.
\end{itemize}

\autoref{fig:wftype} presents the type wellformedness rules.

\begin{figure}[h]
\begin{flushleft}
  \small\fbox{$\jwftype{\Gamma}{S}$}
\end{flushleft}
\begin{mathpar}\footnotesize
	\Rule{Wf-Nat}
	{ \jwfctxt{\Gamma} }
	{ \jwftype{\Gamma}{\tnat} }
	\and
	\Rule{Wf-Bool}
	{ \jwfctxt{\Gamma} }
	{ \jwftype{\Gamma}{\tbool} }
	\and
	\Rule{Wf-Unit}
	{ \jwfctxt{\Gamma} }
	{ \jwftype{\Gamma}{\tunit} }
	\and
	\Rule{Wf-Prod}
	{ \jwftype{\Gamma}{B_1} \\ \jwftype{\Gamma}{B_2} }
	{ \jwftype{\Gamma}{\tprod{B_1}{B_2}} }
	\and
	\Rule{Wf-Dtype}
	{ \Gamma \vdash \tinduct{C}{T}{m}{} ~\mathsf{indexed~by}~\Delta_\theta \\
	\jsort{\Gamma}{\theta}{\Delta_\theta} }
	{ \jwftype{\Gamma}{ \tinduct{C}{T}{m}{\theta} } }
	\and
	\Rule{Wf-Tvar}
	{ \jwfctxt{\Gamma} \\ \alpha \in \Gamma  }
	{ \jwftype{\Gamma}{m \cdot \alpha} }
	\and
	\Rule{Wf-Refined}
	{ \jwftype{\Gamma}{B} \\ \jsort{\Gamma,\nu:B}{\psi}{\bbB} }
	{ \jwftype{\Gamma}{\tsubset{B}{\psi}} }
	\and
	\Rule{Wf-Arrow}
	{ \jwftype{\Gamma}{T_x}  \\ \jwftype{\Gamma,x:T_x}{T} }
	{ \jwftype{\Gamma}{\tarrowm{x}{T_x }{T}{m}} }
	\and
	\Rule{Wf-Pot}
	{ \jwftype{\Gamma}{R} \\ \jsort{\Gamma,\nu:R}{\phi}{\bbN} }
	{  \jwftype{\Gamma}{\tpot{R}{\phi}} }
	\and
	\Rule{Wf-Poly}
	{ \jsharing{\Gamma,\alpha}{S}{S}{S} }
	{ \jwftype{\Gamma}{\forall\alpha.S} }
\end{mathpar}
\caption{Type well-formedness rules}
\label{fig:wftype}
\end{figure}

%

In the rule \textsc{(Wf-Dtype)}, we make use of another judgment to check the well-definedness of datatypes $D=\tinduct{C}{T}{m}{\theta}$.
Our metatheory does \emph{not} impose a specific definition of well-definedness of inductive datatypes, but rather states that these types are consistent with their subtyping and sharing relation.
For example, for subtyping, we want to ensure that if $B = \tinduct{C}{T}{m}{\theta} <: \tinduct{C}{T'}{m}{\theta'} = B'$, then for every $j$, the shifted types for children nodes of the $j$-th constructor of $B$ and $B'$ satisfy the subtyping relation accordingly.
The reasoning on sharing follows the same scheme as subtyping.
The rule \textsc{(Dtype-Index)} below formalizes the idea.

\begin{tiny}
\[
\Rule{Dtype-Index}{
	\forall j{\,:\,} \jkind{T_j}{\Delta_{T_j}} \\
	\jsort{\Gamma}{\mu}{ \textstyle\prod_{j=1}^k (\Delta_{T_j} \Rightarrow \Delta_D^{m_j} \Rightarrow \Delta_D) } \\
	\jsort{\Gamma}{\lhd}{\textstyle\prod_{j=1}^k (\Delta_{T_j} \Rightarrow \Delta_\theta \Rightarrow \Delta_\theta^{m_j} ) } \\
	\jsort{\Gamma}{\pi}{\textstyle\prod_{j=1}^k (\Delta_{T_j} \Rightarrow \Delta_\theta \Rightarrow \bbN) }  \\\\
	\Gamma_{<:} = \Gamma,\theta:\Delta_\theta, \theta':\Delta_\theta ,\tinduct{C}{T}{m}{\theta} <: \tinduct{C}{T'}{m}{\theta'} \\
	\text{for each $j$}, \jsubty{\Gamma_{<:},y:T_j }{ \tpot{( \textstyle\prod_{i=1}^{m_j} \tinduct{C}{T}{m}{\lhd_{j}(y)(\theta).\mathbf{i}} )}{ \pi.\mathbf{j}(y)(\theta) } }{ \tpot{ (\textstyle\prod_{i=1}^{m_j} \tinduct{C}{T'}{m}{\lhd_{j}(y)(\theta').\mathbf{i}})}{ { \pi.\mathbf{j}(y)(\theta') }} } \\
	\Gamma_{\sharing} = \Gamma,\theta:\Delta_\theta,\theta_1:\Delta_\theta,\theta_2:\Delta_\theta, \tinduct{C}{T}{m}{\theta} \sharing \tinduct{C}{T_1}{m}{\theta_1} \mid \tinduct{C}{T_2}{m}{\theta_2} \\\\
	\text{for each $j$,}
	\\\\
	 \jsharing{\Gamma_{\sharing}, y:T_j }{ \tpot{( \textstyle\prod_{i=1}^{m_j} \tinduct{C}{T}{m}{\lhd_{j}(y)(\theta).\mathbf{i} })}{ \pi.\mathbf{j}(y)(\theta) } }{ \tpot{( \textstyle\prod_{i=1}^{m_j} \tinduct{C}{T_1}{m}{\lhd_{j}(y)(\theta_1).\mathbf{i}} )}{ \pi.\mathbf{j}(y)(\theta_1) } }{ \tpot{( \textstyle\prod_{i=1}^{m_j} \tinduct{C}{T_2}{m}{\lhd_{j}(y)(\theta_2).\mathbf{i}} )}{ \pi.\mathbf{j}(y)(\theta_2) } }
	}
	{ \Gamma \vdash  \tinduct{C}{T}{m}{} ~\mathsf{indexed~by}~\Delta_\theta }
\]
\end{tiny}

Note that there are subtyping and sharing relations appearing in the antecedents of the judgments, which are not covered in our definition of typing contexts.
In the metatheory, we ``instantiate'' the rule above with some properly designed subtyping and sharing relations that can be encoded in the refinement language.
As shown in \autoref{fig:sharing-and-subtyping}, we achieve this by introducing the partial-order $\sqsubseteq_{\Delta}$ and sum $\oplus_\Delta$ operators as follows.

\begin{align*}
  \psi_1 \sqsubseteq_\bbB \psi_2  & \defeq \psi_1 = \psi_2 \\
  \psi_1 \sqsubseteq_\bbN \psi_2 & \defeq \psi_1 \le \psi_2 \\
  \psi_1 \sqsubseteq_\bbU \psi_2 & \defeq \psi_1 = \psi_2 \\
  \psi_1 \sqsubseteq_{\delta_\alpha} \psi_2 & \defeq \psi_1 = \psi_2 \\
  \psi_1 \sqsubseteq_{\sprod{\Delta_1}{ \Delta_2}} \psi_2 & \defeq \projl{\psi_1} \sqsubseteq_{\Delta_1} \projl{\psi_2} \wedge \projr{\psi_1} \sqsubseteq_{\Delta_2} \projr{\psi_2} \\
  \psi_1 \sqsubseteq_{\sarrow{\Delta_1 }{ \Delta_2}} \psi_2 & \defeq \rforall{a}{\Delta_1}{ \psi_1(a) \sqsubseteq_{\Delta_2} \psi_2(a)} \\
  \\
  \psi = \psi_1 \oplus_{\bbN} \psi_2 & \defeq \psi = \psi_1 + \psi_2 \\
  \psi = \psi_1 \oplus_{\sprod{\Delta_1}{\Delta_2}} \psi_2 & \defeq (\projl{\psi} = \projl{\psi_1} \oplus_{\Delta_1} \projl{\psi_2}) \wedge
  (\projr{\psi} = \projr{\psi_1} \oplus_{\Delta_2} \projr{\psi_2}) \\
  \psi = \psi_1 \oplus_{\sarrow{\Delta_1}{\Delta_2}} \psi_2 & \defeq \rforall{a}{\Delta_1}{ \psi(a) = \psi_1(a) \oplus_{\Delta_2} \psi_2(a) } \\
  \psi = \psi_1 \oplus_{\_} \psi_2 & \defeq \neg\top 
\end{align*}

\subsection{Context Wellformedness: $\jwfctxt{\Gamma}$}

A context $\Gamma$ is said to be \emph{wellformed} if every binding in $\Gamma$ is wellformed under a ``prefix'' context before it.
Recall that the context is a sequence of variable bindings, type variables, path conditions, and free potentials.
\autoref{fig:wfctxt} shows these rules.

\begin{figure}[h]
\begin{mathpar}\footnotesize
	\Rule{Wf-Empty}
	{ }
	{ \jwfctxt{\cdot} }
	\and
	\Rule{Wf-Bind-Type}
	{ \jwfctxt{\Gamma} \\ \jwftype{\Gamma}{S} }
	{ \jwfctxt{\Gamma,\bindvar{x}{ S}} }
	\and
	\Rule{Wf-Bind-Sort}
	{ \jwfctxt{\Gamma}  }
	{ \jwfctxt{\Gamma,a:\Delta} }
	\and
	\Rule{Wf-Bind-Cond}
	{ \jwfctxt{\Gamma} \\ \jsort{\Gamma}{\psi}{\bbB} }
	{ \jwfctxt{\Gamma,\psi} }
	\and
	\Rule{Wf-Bind-TVar}
	{ \jwfctxt{\Gamma} }
	{ \jwfctxt{\Gamma,\alpha} }
	\and
	\Rule{Wf-Bind-Pot}
	{ \jwfctxt{\Gamma} \\ \jsort{\Gamma}{\phi}{\bbN} }
	{ \jwfctxt{\Gamma,\phi} }
\end{mathpar}
\caption{Context wellformedness rules}
\label{fig:wfctxt}
\end{figure}

\subsection{Context Sharing: $\jctxsharing{\Gamma}{\Gamma_1}{\Gamma_2}$}

We have already presented type sharing rules.
To apportion the associated potential of $\Gamma$ properly to two contexts $\Gamma_1,\Gamma_2$ with the same sequence of bindings, we introduce \emph{context sharing} relations.
The rules are summarized in \autoref{fig:ctxsharing}.

\begin{figure}[h]
\begin{mathpar}\footnotesize
	\Rule{Share-Empty}
	{ }
	{ \jctxsharing{\cdot}{\cdot}{\cdot} }
	\and
	\Rule{Share-Bind-Type}
	{ \jctxsharing{\Gamma}{\Gamma_1}{\Gamma_2} \\ \jsharing{\Gamma}{S}{S_1}{S_2} }
	{ \jctxsharing{\Gamma,\bindvar{x}{ S}}{\Gamma_1,\bindvar{x}{ S_1}}{\Gamma_2,\bindvar{x}{ S_2}} }
	\and
	\Rule{Share-Bind-Sort}
	{ \jctxsharing{\Gamma}{\Gamma_1}{\Gamma_2} }
	{ \jctxsharing{\Gamma,a:\Delta}{\Gamma_1,a:\Delta}{\Gamma_2,a:\Delta} }
	\and
	\Rule{Share-Bind-Cond}
	{ \jctxsharing{\Gamma}{\Gamma_1}{\Gamma_2} \\ \jsort{\Gamma}{\psi}{\bbB} }
	{ \jctxsharing{\Gamma,\psi}{\Gamma_1,\psi}{\Gamma_2,\psi} }
	\and
	\Rule{Share-Bind-TVar}
	{ \jctxsharing{\Gamma}{\Gamma_1}{\Gamma_2} }
	{ \jctxsharing{\Gamma,\alpha}{\Gamma_1,\alpha}{\Gamma_2,\alpha} }
	\and
	\Rule{Share-Bind-Pot}
	{ \jctxsharing{\Gamma}{\Gamma_1}{\Gamma_2} \\ \jprop{\Gamma}{\phi=\phi_1+\phi_2} }
	{ \jctxsharing{\Gamma,\phi}{\Gamma_1,\phi_1}{\Gamma_2,\phi_2} }
\end{mathpar}
\caption{Context sharing rules}
\label{fig:ctxsharing}
\end{figure}

\subsection{Total Free Potential: $\pot{\Gamma}$}

The \emph{free potentials} of a context $\Gamma$, written $\pot{\Gamma}$, include all the potential bindings, as well as outermost annotated potentials of variable bindings.
\begin{align*}
	\pot{\cdot} & =  0  & \pot{\Gamma,\alpha} & =  \pot{\Gamma} \\
	\pot{\Gamma,x: \tpot{\tsubset{B}{\psi}}{\phi}} & =  \pot{\Gamma} + \subst{x}{\nu}{\phi} \enskip &  \pot{\Gamma,\psi} & = \pot{\Gamma} \\
	\pot{\Gamma,x: \tpot{\p{\tarrowm{y}{T_y}{T}{m}}}{\phi}} & = \pot{\Gamma} + \phi & \pot{\Gamma,\phi} & = \pot{\Gamma} + \phi \\
	\pot{\Gamma,x: \forall\alpha.S} & = \pot{\Gamma}  \Omit{& \pot{\Gamma,x: \tprod{T_l}{T_r}}  & =  \pot{\Gamma} } & \pot{\Gamma,a:\Delta} & = \pot{\Gamma}
\end{align*}

\subsection{Type Substitution: $\subst{\trefined{B}{\psi}{\phi}}{\alpha}{S}$}

In \typesys, type substitution is restricted to resource-annotated subset types.
The substitution $\subst{\trefined{B}{\psi}{\phi}}{\alpha}{S}$ should take care of logical refinements and potential annotations from both $S$ and $\trefined{B}{\psi}{\phi}$.
Following gives the definition.
\begin{align}
    \subst{U}{\alpha}{\tunit} & = \tunit \\
    \subst{U}{\alpha}{\tnat} & = \tnat \\
	\subst{U}{\alpha}{\tbool} & =  \tbool \\
	\subst{U}{\alpha}{(\tprod{B_1}{B_2})} & = \trefined{\tprod{B_1'}{B_2'}}{ [\projl{\nu}/\nu]\psi_1 \wedge [\projr{\nu}/\nu]\psi_2 }{ [\projl{\nu}/\nu]\phi_1 + [\projr{\nu}/\nu]\phi_2 } \\
	& \text{where}~\subst{U}{\alpha}{B_1} = \trefined{B_1'}{\psi_1}{\phi_1} , \subst{U}{\alpha}{B_2} = \trefined{B_2'}{\psi_2}{\phi_2} \\
	\subst{U}{\alpha}{\tinduct{C}{T}{m}{\theta}} & =  \tinduct{C}{\subst{U}{\alpha}{T}}{m}{\theta} \\
	\subst{U}{\alpha}{m \cdot \beta} & =  m \cdot \beta \\
	\subst{\tpot{\tsubset{B}{\psi}}{\phi}}{\alpha}{m \cdot \alpha} & =   \tpot{\tsubset{m \times B}{\psi}}{ m \times \phi} \\
	\subst{U}{\alpha}{\tsubset{B}{\psi}} & =  \tpot{\tsubset{B'}{\psi \wedge \psi'}}{\phi'}\\
	& \text{where}~\subst{U}{\alpha}{B} =\tpot{\tsubset{B'}{\psi'}}{\phi'} \\
	\subst{U}{\alpha}{\tarrowm{x}{T_x}{T}{m}} & =  \tarrowm{x}{\subst{U}{\alpha}{T_x}}{\subst{U}{\alpha}{T}}{m} \\
	\subst{U}{\alpha}{\tpot{R}{\phi}} & =  \tpot{R'}{\phi + \phi'}\\
	& \text{where}~ \subst{U}{\alpha}{R} = R'^{\phi'} \\
	\subst{U}{\alpha}{\forall\beta.S} & =  \forall\beta. \subst{U}{\alpha}{S}
\end{align}

Type multiplication is defined as follows.
\begin{align}
	m \times \tbool & =  \tbool \\
	m \times \tnat & = \tnat \\
	m \times \tunit & = \tunit \\
	m \times (\tprod{B_1}{B_2 }) & = \tprod{(m \times B_1)}{(m \times B_2)} \\
	m \times \tinduct{C}{T}{k}{\theta} & =  \tinduct{C}{m \times T}{k}{m \times \theta} \\
	m_1 \times (m_2 \cdot \alpha) & =  (m_1 \cdot m_2) \cdot \alpha \\
	m \times \tsubset{B}{\psi} & = \tsubset{m \times B}{\psi} \\
	m_1 \times (\tarrowm{x}{T_x}{T}{m_2}) & = \tarrowm{x}{T_x}{T}{(m_1 \cdot m_2)} \\
	m \times \tpot{R}{\phi} & = \tpot{(m \times R)}{m \times \phi}
\end{align}

\subsection{Validity Checking}
\label{sec:appendixvalidity}

In this section, we define the \emph{validity checking} judgment $\jprop{\Gamma}{\psi}$ where $\Gamma$ is a wellformed context and $\psi$ is a Boolean-sorted refinement, following the approach of $\mathrm{Re}^2$~\cite{resyn}.
Intuitively, the judgment states that the formula $\psi$ is always true under any instance of $\Gamma$.
Our approach is to define a set-based denotational semantics for refinements and then reduce the validity checking to Presburger arithmetic. 

\paragraph{Semantics of Sorts}
A sort $\Delta$ represents a set $\interps{\Delta}$ of $\Delta$-sorted refinements.
The following gives the definition of $\interps{\Delta}$.
Note that we only define the semantics for sorts that do \emph{not} contain uninterpreted sorts.
We denote such sorts by $\Delta_o$.
\begin{align}
	\interps{\bbB} & =  \{ \top, \bot \} \\
	\interps{\bbN} & =  \bbZ^+_0 \\ 
	\interps{\bbU} & = \{ \star \} \\
	\interps{\sprod{\Delta_1}{\Delta_2}} & = \{ \rpair{\psi_1}{\psi_2} : \psi_1 \in \interps{\Delta_1} \wedge \psi_2 \in \interps{\Delta_2} \} \\
	\interps{\sarrow{\Delta_1}{\Delta_2}} & = \interps{\Delta_1} \to \interps{\Delta_2}
\end{align}

\paragraph{Semantics of Types}
As we have already done in the sorting rules, scalar types are reflected in the refinement level.
To interpret a wellformed scalar type as a sort without uninterpreted sorts, we define a transformation $\calT_E(\cdot)$ from types to sorts, parametrized by an \emph{environment} that resolves uninterpreted sorts $\delta_\alpha$.
\begin{align}
    \calT_E(\tunit) & = \bbU \\
	\calT_E(\tbool) & = \bbB \\
	\calT_E(\tnat) & = \bbN \\
	\calT_E(\tprod{B_1}{B_2}) & = \sprod{\calT_E(B_1)}{\calT_E(B_2)} \\
	\calT_E(D) & = \Delta_D ~\text{where}~D=\tinduct{C}{T}{m}{\theta}\\
	\calT_E(m \cdot \alpha) & = E(\delta_\alpha) 
\end{align}

\paragraph{Semantics of Contexts}
To give a meaning to a context $\Gamma$, we need to assign an instance for each variable binding with a scalar type, as well as type variables.
Intuitively, a context $\Gamma$ represents a set of \emph{environments} that resolves both program variables and uninterpreted sorts.
Making use of semantics for sorts and types defined above, we can define $\interps{\Gamma}$ inductively as follows.
\begin{align}
	\interps{\cdot} & = \{ \emptyset \} \\
	\interps{\Gamma,\bindvar{x}{ \tpot{\tsubset{B}{\psi}}{\phi}}} & =  \{ E[x \mapsto \psi] : E \in \interps{\Gamma} \wedge \psi \in \interps{\calT_E(B)}  \} \\
	\interps{\Gamma,a:\Delta} & = \{ E[a \mapsto \psi] : E \in \interps{\Gamma} \wedge \psi \in \interps{\Delta} \} \\
	\interps{\Gamma,\bindvar{x}{ \tpot{(\tarrowm{y}{T_y}{T}{m})}{\phi}}} & =  \interps{\Gamma} \\
	\interps{\Gamma,\bindvar{x}{ \forall\alpha.S}} & = \interps{\Gamma} \\
	\interps{\Gamma,\alpha} & = \{ E[\delta_\alpha \mapsto \Delta] \mid E \in \interps{\Gamma} \wedge \Delta \in \Delta_o  \}   \\
	\interps{\Gamma,\psi} & = \interps{\Gamma} \\
	\interps{\Gamma,\phi} &=  \interps{\Gamma}
\end{align}

\paragraph{Semantics of Refinements}
The meaning of a refinement $\psi$ is defined with respect to its sorting judgment $\jsort{\Gamma}{\psi}{\Delta}$.
The following defines an \emph{evaluation} map $\interp{\psi} : \interps{\Gamma} \to \interps{\Delta}$, by induction on the derivation of the sorting judgment, or essentially structural induction on $\psi$.
\begin{align}
	\interp{x}(E) & = E(x) \\
	\interp{a}(E) & = E(a) \\
	\interp{n}(E) & = n \\
	\interp{\star}(E) & = \star \\
	\interp{\top}(E) & = \top \\
	\interp{\neg\psi}(E) & = \neg\interp{\psi}(E) \\
	\interp{\psi_1 \wedge \psi_2}(E) & = \interp{\psi_1}(E) \wedge \interp{\psi_2}(E) \\
	\interp{n}(E) & = n \\
	\interp{\phi_1 \le \phi_2}(E) & = \interp{\phi_1}(E) \le \interp{\phi_2}(E) \\
	\interp{\phi_1 + \phi_2}(E) & =\interp{\phi_1}(E) + \interp{\phi_2}(E) \\
	\interp{\psi_1 = \psi_2}(E) & = \interp{\psi_1}(E) = \interp{\psi_2}(E) \\
	\interp{\rforall{a}{\Delta}{\psi}}(E) & = \forall \phi. \phi \in \interps{\Delta} \implies \interp{\psi}(E[a \mapsto \phi]) \\
	\interp{\rabs{a}{\Delta}{\psi}}(E) & = f~\text{where}~ f(\phi) \defeq \interp{\psi}(E[a \mapsto \phi]) \\
	\interp{\psi_1~\psi_2}(E) & = \interp{\psi_1}(E) ( \interp{\psi_2}(E)) \\
	\interp{\rpair{\psi_1}{\psi_2}}(E) & = (\interp{\psi_1}(E),\interp{\psi_2}(E)) \\
	\interp{\projl{\psi}}(E) & = \mathbf{let}~\rpair{\psi_l}{\psi_r} = \interp{\psi}(E)~\mathbf{in}~\psi_l \\
	\interp{\projr{\psi}}(E) & = \mathbf{let}~\rpair{\psi_l}{\psi_r} = \interp{\psi}(E)~\mathbf{in}~\psi_r
\end{align}

\paragraph{Validity Checking}
Now we show how to assign meanings to contexts and refinements, then the last step to define $\jprop{\Gamma}{\psi}$ is to collect all the refinement constraints mentioned in $\Gamma$.

We first define how to extract constraints from a type binding.
Note that only scalar types (i.e., subset types) can carry logical refinements.
\begin{align}
	\scrB_\Gamma(\bindvar{x}{\trefined{B}{\psi}{\phi}}) & = \subst{x}{\nu}{\psi} \\
	\scrB_\Gamma(\bindvar{x}{\tpot{(\tarrowm{y}{T_y}{T}{m})}{\phi}}) & = \top \\
	\scrB_\Gamma(\bindvar{x}{\forall\alpha.S}) & = \top
\end{align}

Then we define $\scrB(\Gamma)$ to collect all the constraints from variable bindings and path conditions in $\Gamma$.
It is defined inductively on $\Gamma$.
\begin{align}
	\scrB(\cdot) & = \top \\
	\scrB(\Gamma,\bindvar{x}{S}) & = \scrB(\Gamma) \wedge \scrB_\Gamma(\bindvar{x}{S}) \\
	\scrB(\Gamma,x:\Delta) & = \scrB(\Gamma) \\
	\scrB(\Gamma,\bindvar{x}{\tpot{(\tarrowm{y}{T_y}{T}{m})}{\phi}}) & = \scrB(\Gamma) \\
	\scrB(\Gamma,\alpha) & = \scrB(\Gamma) \\
	\scrB(\Gamma,\psi) & = \scrB(\Gamma) \wedge \psi \\
	\scrB(\Gamma,\phi) & = \scrB(\Gamma)
\end{align}

Now we can define the validity checking judgment $\jprop{\Gamma}{\psi}$.
\[
\jprop{\Gamma}{\psi} \defeq \forall E \in \interps{\Gamma}\!:  \interp{\scrB(\Gamma) \implies \psi}(E)
\]
Further, we can embed our denotational semantics for refinements in Presburger arithmetic, so we can also write the validity checking as the following formula
\[
\forall E \in \interps{\Gamma}\!: E \models \scrB(\Gamma) \implies \psi,
\]
where $\models$ is interpreted in Presburger arithmetic.

\subsection{Definition of Consistency}

To describe soundness of our type system, we will need a notion of \emph{consistency}.
Basically, given a typing judgment $\jstyping{\Gamma}{v}{S}$ of a value, we want to know that under the context $\Gamma$, $v$ satisfies the logical conditions indicated by $S$, as well as $\Gamma$ has sufficient amount of potential to be stored in $v$ with respect to $S$.


We use $\calI(\cdot)$ to transform a \emph{value stack} $V$ to a \emph{refinement environment} $E$ with respect to a context $\Gamma$.
The stack $V$ maps type variables to concrete types, program variables to values, and index variables to refinements.
The environment $E$ is used to define validity checking in former sections.
The following defines the transformation $\calI_V(\Gamma)$ by induction on $\Gamma$.
\begin{align}
	\calI_V(\cdot) & = \emptyset \\
	\calI_V(\Gamma,\bindvar{x}{\tpot{\tsubset{B}{\psi}}{\phi}}) & = \calI_V(\Gamma)[x \mapsto \calI(V(x))] \\
	\calI_V(\Gamma,a:\Delta) & = \calI_V(\Gamma)[a \mapsto V(a)] \\
	\calI_V(\Gamma,\bindvar{x}{\tpot{(\tarrowm{y}{T_y}{T}{m})}{\phi}}) & = \calI_V(\Gamma) \\
	\calI_V(\Gamma,\bindvar{x}{\forall\alpha.S}) & = \calI_V(\Gamma) \\
	\calI_V(\Gamma,\alpha) & = \mathbf{let}~E=\calI_V(\Gamma)~\mathbf{in} \\
	& \quad  E[\delta_\alpha \mapsto  \calT_E(V(\alpha))] \\
	\calI_V(\Gamma,\psi) & =\calI_V(\Gamma) \\
	\calI_V(\Gamma,\phi) & = \calI_V(\Gamma)
\end{align}

Now we define how to extract constraints from a value with respect to its type.
It is similar to how we extract constraints from a typing binding in the refinement level.
The differences are that (i) we need to use the interpretation $\calI(\cdot)$ to map values to refinements, (ii) we need to take care of list elements and pair components, (iii) we need to substitute type variables with concrete types, and (iv) for polymorphic type schemas, we assert that the constraints hold for all instantiations.
\begin{align}
	\condv{V}{b}{\trefined{\tbool}{\psi}{\phi}} & = \subst{\calI(b)}{\nu}{\psi} \\
	\condv{V}{u}{\trefined{\tunit}{\psi}{\phi} } & = \subst{\calI(u)}{\nu}{\psi} \\
	\condv{V}{n}{\trefined{\tnat}{\psi}{\phi}} & = \subst{\calI(n)}{\nu}{\psi} \\
	\condv{V}{\epair{v_1}{v_2}}{\trefined{\tprod{B_1}{B_2}}{\psi}{\phi}} & = \subst{\calI(\epair{v_1}{v_2})}{\nu}{\psi} \\
	\condv{V}{C_j(v_0,\tuple{v_1,\cdots,v_{m_j}})}{ \trefined{\tinduct{C}{T}{m}{\theta}}{\psi}{\phi} } & = \subst{\calI(C_j(v_0,\tuple{v_1,\cdots,v_{m_j}}))}{\nu}{\psi} \wedge \condv{V}{v_0}{T_j}  \\
	& \wedge \bigwedge_{i=1}^{m_j} \condv{V}{v_j}{ \tinduct{C}{T}{m}{\lhd.\mathbf{j}(\calI(v_0))(\theta).\mathbf{i}} } \\
	\condv{V}{v}{\trefined{m \cdot \alpha}{\psi}{\phi}} & = \condv{V}{v}{ \subst{V(\alpha)}{\alpha}{\tsubset{m \cdot \alpha}{\psi}} } \\
 	\condv{V}{v}{\tpot{(\tarrowm{x}{T_x}{T}{m})}{\phi}} & = \top \\
 	\condv{V}{v}{\forall\alpha.S} & = \forall \trefined{B}{\psi}{\phi} \!:  \condv{V'}{v}{S} \\
 	& \text{where}~\jwftype{\Gamma}{\tpot{\tsubset{B}{\psi}}{\phi}} \\
 	& \text{and} ~V'=V[\alpha \mapsto \tpot{\tsubset{B}{\psi}}{\phi}]
\end{align}

The following defines how to collect path conditions of a stack $V$ with respect to its typing context $\Gamma$, written $\condc{V}{\Gamma}$.
\begin{align}
	\condc{V}{\cdot} & = \top \\
	\condc{V}{\Gamma,\bindvar{x }{ \tpot{\tsubset{B}{\psi}}{\phi} }} & = \condc{V}{\Gamma} \wedge \condv{V}{V(x)}{\trefined{B}{\psi}{\phi}} \\
	\condc{V}{\Gamma,a:\Delta} & = \condc{V}{\Gamma} \\
	\condc{V}{\Gamma,\bindvar{x }{ \tpot{(\tarrowm{y}{T_y}{T}{m})}{\phi}}} & = \condc{V}{\Gamma} \\
	\condc{V}{\Gamma,\bindvar{x}{ \forall\alpha.S}} & = \condc{V}{\Gamma} \\
	\condc{V}{\Gamma,\alpha} & = \condc{V}{\Gamma} \\
	\condc{V}{\Gamma,\psi} & = \condc{V}{\Gamma} \wedge \psi\\
	\condc{V}{\Gamma,\phi} & = \condc{V}{\Gamma}
\end{align}

Similar to logical refinements, we can also collect potential annotations.
The following defines $\potv{V}{v}{S}$ as the potential stored in the value $v$ with respect to the type $S$ under the stack $V$.
\begin{align}
	\potv{V}{b}{\trefined{\tbool}{\psi}{\phi}} & =  \subst{\calI(b)}{\nu}{\phi} \\
	\potv{V}{u}{\trefined{\tunit}{\psi}{\phi}} & = \subst{\calI(u)}{\nu}{\phi} \\
	\potv{V}{n}{\trefined{\tnat}{\psi}{\phi}} & = \subst{\calI(n)}{\nu}{\phi} \\
	\potv{V}{\epair{v_1}{v_2 }}{\trefined{\tprod{B_1}{B_2}}{\psi}{\phi}} & = \subst{\calI(\epair{v_1}{v_2})}{\nu}{\phi} \\
	\potv{V}{C_j(v_0,\tuple{v_1,\cdots,v_{m_j}})}{\trefined{\tinduct{C}{T}{m}{\theta}}{\psi}{\phi}} & = \subst{\calI(C_j(v_0,\tuple{v_1,\cdots,v_{m_j}}))}{\nu}{\phi} + \potv{V}{v_0}{T_j} \\
	& + \pi.\mathbf{j}(\calI(v_0))(\theta) \\
	& + \sum_{i=1}^{m_j} \potv{V}{v_j}{\tinduct{C}{T}{m}{ \lhd.\mathbf{j}(\calI(v_0)).\mathbf{i}} } \\
	\potv{V}{v}{ \trefined{m \cdot \alpha}{\psi}{\phi} } & =  \potv{V}{v }{ \subst{V(\alpha)}{\alpha}{ \tpot{(m \cdot \alpha)}{\phi}} } \\
	\potv{V}{v}{ \tpot{(\tarrowm{x}{T_x}{T}{m})}{\phi}} & =  \phi\\
	\potv{V}{v}{ \forall\alpha.S } & =  0
\end{align}

Also we have a stack version for potentials $\potc{V}{\Gamma}$.
\begin{align}
	\potc{V}{\cdot} & =  0 \\
	\potc{V}{\Gamma,\bindvar{x }{ \tpot{\tsubset{B}{\psi}}{\phi}}} & =  \potc{V}{\Gamma} + \potv{V}{V(x) }{ \trefined{B}{\psi}{\phi} } \\
	\potc{V}{\Gamma,a:\Delta} & = \potc{V}{\Gamma} \\
	\potc{V}{\Gamma,\bindvar{x}{ \tpot{(\tarrowm{y}{T_y}{T}{m})}{\phi}}}& =  \potc{V}{\Gamma}+ \phi  \\
	\potc{V}{\Gamma,\bindvar{x}{\forall\alpha.S}} & =  \potc{V}{\Gamma} \\
	\potc{V}{\Gamma,\alpha} & =  \potc{V}{\Gamma} \\
	\potc{V}{\Gamma,\psi} & = \potc{V}{\Gamma} \\
	\potc{V}{\Gamma,\phi} & =  \potc{V}{\Gamma} + \phi
\end{align}

Finally, we are able to define two notions of consistency for values and stacks, respectively.

\begin{definition}[Value consistency]\label{de:valconsistency}
 A value $\jval{v}$ is said to be \emph{consistent} with $\jstyping{\Gamma}{v}{S}$, if for all $\jctxtyping{V}{\Gamma}$, $E=\calI_V(\Gamma)$ such that $E \models \condc{V}{\Gamma}$, we have $E \models \condv{V}{v}{S} \wedge \potc{V}{\Gamma} \ge \potv{V}{v}{S}$.
\end{definition}

\begin{definition}[Stack consistency]\label{de:envconsistency}
An environment $V'$ is said to be \emph{consistent} with $\jctxtyping[\Gamma]{V'}{\Gamma'}$, if for for all $\jctxtyping{V}{\Gamma}$, $E = \calI_V(\Gamma)$ such that $E \models \condc{V}{\Gamma}$, we have $E' \models \condc{V,V'}{\Gamma'} \wedge \potc{V}{\Gamma} \ge \potc{V,V'}{\Gamma'}$ where $E' \defeq \calI_{V,V'}(\Gamma,\Gamma')$.
\end{definition}

\makeatletter
\tagsleft@true
\makeatother
\newtagform{nobrackets}[\underline]{}{}
\usetagform{nobrackets}

\section{Proofs for Soundness}
\label{sec:appendixproofs}

\subsection{Progress}

\begin{lemma}\label{lem:consistentcons}
%
	Let $\Gamma = \overline{q \mid \alpha}$.
	If $\jctxsharing{\Gamma}{\Gamma_1}{\Gamma_2}$, $v_0$ is consistent with $\jstyping{\Gamma_1}{v_0}{T_j}$, and $\tuple{v_1,\cdots,v_{m_j}}$ is consistent with $\jatyping{\Gamma_2}{  \tuple{v_1,\cdots,v_{m_j}} }{ \prod_{i=1}^{m_j} \tinduct{C}{T}{m}{ \lhd.\mathbf{j}(\calI(v_0))(\theta).\mathbf{i} }}$, then $C_j(v_0,\tuple{v_1,\cdots,v_{m_j}})$ is consistent with $\jstyping{\Gamma,\pi.\mathbf{j}(\calI(v_0))(\theta)}{C_j(v_0,\tuple{v_1,\cdots,v_{m_j}})}{\tsubset{ \tinduct{C}{T}{m}{\theta} }{ \nu = \calI(C_j(v_0,\tuple{v_1,\cdots,v_{m_j}} )) } }$.
\end{lemma}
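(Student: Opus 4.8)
The plan is to unfold the definition of value consistency (Definition~\ref{de:valconsistency}) for the goal and discharge its two halves — the logical condition and the potential inequality — by directly computing $\condv{V}{\cdot}{\cdot}$ and $\potv{V}{\cdot}{\cdot}$ on the constructor value. Concretely, I would fix an arbitrary stack $V$ with $\jctxtyping{V}{\Gamma,\pi.\mathbf{j}(\calI(v_0))(\theta)}$, set $E = \calI_V(\Gamma,\pi.\mathbf{j}(\calI(v_0))(\theta))$, and assume $E \models \condc{V}{\Gamma,\pi.\mathbf{j}(\calI(v_0))(\theta)}$; the obligations are then $E \models \condv{V}{C_j(v_0,\tuple{v_1,\cdots,v_{m_j}})}{\tsubset{\tinduct{C}{T}{m}{\theta}}{\nu = \calI(C_j(v_0,\tuple{v_1,\cdots,v_{m_j}}))}}$ and $\potc{V}{\Gamma,\pi.\mathbf{j}(\calI(v_0))(\theta)} \ge \potv{V}{C_j(v_0,\tuple{v_1,\cdots,v_{m_j}})}{\cdots}$. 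The whole argument rests on the observation that the definitions of $\condv{V}{\cdot}{\cdot}$ and $\potv{V}{\cdot}{\cdot}$ on a constructor application decompose exactly into a head contribution (the value $v_0$ against $T_j$) and the child contributions (each $v_i$ against the shifted datatype $\tinduct{C}{T}{m}{\lhd.\mathbf{j}(\calI(v_0))(\theta).\mathbf{i}}$), which are precisely the subjects of the two consistency hypotheses.

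First I would record the bookkeeping needed to reuse one stack $V$ across all three contexts. Since $\Gamma = \overline{q \mid \alpha}$ contains only free potentials and type variables, appending the free-potential term $\pi.\mathbf{j}(\calI(v_0))(\theta)$ changes neither the induced environment nor the path conditions, so $\calI_V(\Gamma,\pi.\mathbf{j}(\calI(v_0))(\theta)) = \calI_V(\Gamma) = E$ and $\condc{V}{\Gamma,\pi.\mathbf{j}(\calI(v_0))(\theta)} = \condc{V}{\Gamma}$. By inversion on $\jctxsharing{\Gamma}{\Gamma_1}{\Gamma_2}$, the contexts $\Gamma_1$ and $\Gamma_2$ carry the same type-variable bindings as $\Gamma$ and differ only in how free potential is apportioned; hence $V$ is also a valid stack for each, $\calI_V(\Gamma_1) = \calI_V(\Gamma_2) = E$, and $E$ satisfies both $\condc{V}{\Gamma_1}$ and $\condc{V}{\Gamma_2}$. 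This lets me instantiate the two hypotheses at the same $V$ and $E$. The logical half then follows immediately: unfolding $\condv{V}{\cdot}{\cdot}$ on the constructor produces $(\calI(C_j(v_0,\tuple{v_1,\cdots,v_{m_j}})) = \calI(C_j(v_0,\tuple{v_1,\cdots,v_{m_j}}))) \wedge \condv{V}{v_0}{T_j} \wedge \bigwedge_{i=1}^{m_j} \condv{V}{v_i}{\tinduct{C}{T}{m}{\lhd.\mathbf{j}(\calI(v_0))(\theta).\mathbf{i}}}$, whose first conjunct is reflexively true and whose remaining conjuncts are supplied by the hypotheses (the tuple hypothesis being read componentwise).

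For the potential half, I would unfold $\potv{V}{\cdot}{\cdot}$ on the constructor; because the subset type carries the default annotation $\phi = 0$, its top-level contribution is $0$ and the total is $\potv{V}{v_0}{T_j} + \pi.\mathbf{j}(\calI(v_0))(\theta) + \sum_{i=1}^{m_j}\potv{V}{v_i}{\tinduct{C}{T}{m}{\lhd.\mathbf{j}(\calI(v_0))(\theta).\mathbf{i}}}$. On the context side, $\potc{V}{\Gamma,\pi.\mathbf{j}(\calI(v_0))(\theta)} = \potc{V}{\Gamma} + \pi.\mathbf{j}(\calI(v_0))(\theta)$, and context sharing splits free potential additively, giving $\potc{V}{\Gamma} = \potc{V}{\Gamma_1} + \potc{V}{\Gamma_2}$. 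Combining this with the hypotheses' bounds $\potc{V}{\Gamma_1} \ge \potv{V}{v_0}{T_j}$ and $\potc{V}{\Gamma_2} \ge \sum_{i=1}^{m_j}\potv{V}{v_i}{\tinduct{C}{T}{m}{\lhd.\mathbf{j}(\calI(v_0))(\theta).\mathbf{i}}}$ yields the desired inequality term by term. I expect the main obstacle to be precisely this accounting: justifying that context sharing preserves the functional content while distributing free potential additively (so that a single stack validates all three judgments), and that the tuple-against-product hypothesis decomposes componentwise for both $\condv{V}{\cdot}{\cdot}$ and $\potv{V}{\cdot}{\cdot}$. These facts follow from the definitions of context sharing and of the consistency predicates on products, but they are best isolated as explicit auxiliary observations rather than left implicit.
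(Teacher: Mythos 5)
Your proposal is correct and follows essentially the same route as the paper's proof: fix a single stack $V$ and environment $E$, use context sharing to split the free potential additively ($\potc{V}{\Gamma} = \potc{V}{\Gamma_1}+\potc{V}{\Gamma_2}$), unfold $\condv{V}{\cdot}{\cdot}$ and $\potv{V}{\cdot}{\cdot}$ on the constructor value into the head contribution, the extraction term $\pi.\mathbf{j}(\calI(v_0))(\theta)$, and the shifted child contributions, and then discharge each piece with the corresponding consistency hypothesis. The only differences are bookkeeping (you fix $V$ against $\Gamma,\pi.\mathbf{j}(\calI(v_0))(\theta)$ rather than $\Gamma$, and you are somewhat more explicit than the paper about why one stack validates all three contexts and why the tuple hypothesis decomposes componentwise), which does not change the argument.
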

\begin{proof}
		\begin{alignat}{2}
			& \text{Fix}~\jctxtyping{V}{\Gamma}, E=\calI_V(\Gamma)~\text{s.t.}~E\models \condc{V}{\Gamma} \\
			& \jctxsharing{\Gamma}{\Gamma_1}{\Gamma_2} & \text{[premise]} \\
			& \quad \implies \potc{V}{\Gamma} = \potc{V}{\Gamma_1}+\potc{V}{\Gamma_2} & \label{eq:foldcons:sharing} \\
			&  \jstyping{\Gamma_1}{v_0}{T_j}~\text{consistent} & \text{[premise]} \label{eq:foldcons:vhconsist}\\
			& \jstyping{\Gamma_2}{\tuple{v_1,\cdots,v_{m_j}}}{\tsubset{ \textstyle\prod_{i=1}^{m_j} \tinduct{C}{T}{m}{ \lhd.\mathbf{j}(\calI(v_0))(\theta).\mathbf{i} } }{\nu = \calI(\tuple{\cdots})}} ~\text{consistent} & \text{[premise]} \label{eq:foldcons:vtconsist} \\
			& \jatyping{\Gamma}{C_j(v_0,\tuple{v_1,\cdots,v_{m_j}})}{\tinduct{C}{T}{m}{\theta}} & \text{[typing]} \\
			& \jstyping{\Gamma}{C_j(v_0,\tuple{v_1,\cdots,v_{m_j}})}{\tsubset{\tinduct{C}{T}{m}{\theta}}{\nu = \calI(C_j(v_0,\tuple{\cdots})) }} & \text{[typing]} \\
			& \condv{V}{ C_j(v_0,\tuple{v_1,\cdots,v_{m_j}}) }{\tsubset{\tinduct{C}{T}{m}{\theta}}{\nu = \calI(C_j(v_0,\tuple{\cdots}))}}  \\
			& \quad = \subst{\calI(C_j(v_0,\tuple{\cdots}))}{\nu}{(\nu = \calI(C_j(v_0,\tuple{\cdots}))} \wedge \\
			& \qquad \condv{V}{v_0}{T_j} \wedge \bigwedge_{i=1}^{m_j} \condv{V}{v_i}{\tinduct{C}{T}{m}{\lhd.\mathbf{j}(\calI(v_0))(\theta).\mathbf{i} }} \\
			& \quad = \condv{V}{v_0}{T_j} \wedge \bigwedge_{i=1}^{m_j} \condv{V}{v_i}{\tinduct{C}{T}{m}{\lhd.\mathbf{j}(\calI(v_0))(\theta).\mathbf{i} }} \\
			& \potv{V}{C_j(v_0,\tuple{v_1,\cdots,v_{m_j}})}{\tpot{\tinduct{C}{T}{m}{\theta}}{0}} = 0 +\\
			& \quad \potv{V}{v_0}{T_j} + \pi.\mathbf{j}(\calI(v_0))(\theta) + \sum_{i=1}^{m_j} \potv{V}{v_j}{\tinduct{C}{T}{m}{\lhd.\mathbf{j}(\calI(v_0))(\theta).\mathbf{i}}} \\
			& \quad = \potv{V}{v_0}{T_j} + \pi.\mathbf{j}(\calI(v_0))(\theta) + \sum_{i=1}^{m_j} \potv{V}{v_j}{\tinduct{C}{T}{m}{\lhd.\mathbf{j}(\calI(v_0))(\theta).\mathbf{i}}} \\
			& E \models \condv{V}{v_0}{T_j} \wedge \potc{V}{\Gamma_1} \ge \potv{V}{v_0}{T_j} & \text{[\eqref{eq:foldcons:vhconsist}]} \\ 
			& E \models \bigwedge_{i=1}^{m_j} \condv{V}{v_j}{\tinduct{C}{T}{m}{\lhd.\mathbf{j}(\calI(v_0))(\theta).\mathbf{i}}} \wedge  \\
			& \quad \potc{V}{\Gamma_2} + \pi.\mathbf{j}(\calI(v_0))(\theta) \ge \pi.\mathbf{j}(\calI(v_0))(\theta) +  \sum_{i=1}^{m_j} \potv{V}{v_j}{\tpot{\tinduct{C}{T}{m}{ \lhd.\mathbf{j}(\calI(v_0))(\theta).\mathbf{i} }}{0}} & \text{[\eqref{eq:foldcons:vtconsist}]} \\
			& \text{done} & \text{[\eqref{eq:foldcons:sharing}]}
		\end{alignat}
\end{proof}

\begin{proposition}\label{prop:stepdelta}
	If $\jstep{e}{e'}{p}{p'}$ and $c \ge 0$, then $\jstep{e}{e'}{p+c}{p'+c}$.
\end{proposition}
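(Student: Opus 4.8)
The plan is to proceed by rule induction on the derivation of $\jstep{e}{e'}{p}{p'}$, exploiting the fact that the resource parameter is threaded through the operational semantics in a uniform way: it is left unchanged by every reduction rule except \textsc{(E-Tick)}, which subtracts the fixed tick cost. Adding the same nonnegative constant $c$ to both the incoming and the outgoing resource count therefore commutes with each rule, while the implicit side conditions $p,p'\ge 0$ are preserved because we only ever increase the amount of available resource. Throughout I would write the tick cost in \textsc{(E-Tick)} as $d$, to avoid clashing with the frame amount $c$ appearing in the statement.

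First I would dispatch the axiom rules. For every rule that leaves resources untouched---\textsc{(E-Cond-True)}, \textsc{(E-Cond-False)}, \textsc{(E-Let-Val)}, \textsc{(E-MatP-Val)}, \textsc{(E-MatD-Val)}, \textsc{(E-App-Abs)}, and \textsc{(E-App-Fix)}---we have $p'=p$, and re-instantiating the very same rule with $p+c$ in place of $p$ yields $\jstep{e}{e'}{p+c}{p+c}$, which is exactly the goal since $p'+c=p+c$. The only base case where the resource count genuinely changes is \textsc{(E-Tick)}: here $e=\econsume{d}{e_0}$, $e'=e_0$, and $p'=p-d$. Re-instantiating \textsc{(E-Tick)} with available resource $p+c$ gives $\jstep{\econsume{d}{e_0}}{e_0}{p+c}{(p+c)-d}$, and since $(p+c)-d=(p-d)+c=p'+c$, this is precisely the desired conclusion.

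For the congruence (structural) rules, such as the rule reducing the bound expression of a $\mathsf{let}$, the premise is a smaller step judgment $\jstep{e_1}{e_1'}{p}{p'}$ that carries the same resource pair $(p,p')$ to the conclusion. I would apply the induction hypothesis to this premise to obtain $\jstep{e_1}{e_1'}{p+c}{p'+c}$, and then reapply the same congruence rule to conclude $\jstep{e}{e'}{p+c}{p'+c}$. Finally I would check the implicit nonnegativity constraints attached to the conclusion: from the original judgment we have $p,p'\ge 0$, and with $c\ge 0$ this gives $p+c\ge 0$ and $p'+c\ge 0$, so the shifted judgment is well-formed. I expect no real obstacle here; the resource discipline is purely additive and local to each rule, so the only bookkeeping of note is the \textsc{(E-Tick)} case, where the additive shift must be reconciled with the subtraction of $d$, and everywhere else the shift is inert.
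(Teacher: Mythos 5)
Your proof is correct and follows exactly the route the paper takes: the paper's proof is simply ``by induction on $\jstep{e}{e'}{p}{p'}$,'' and your case analysis (resource-preserving axioms, the \textsc{(E-Tick)} case where the additive shift commutes with the subtraction of the tick cost, congruence rules via the induction hypothesis, and preservation of the implicit nonnegativity side conditions) is the expected elaboration of that one-line argument.
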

\begin{proof}
	By induction on $\jstep{e}{e'}{p}{p'}$.
\end{proof}

\begin{proposition}\label{prop:subtyping}
	If $\jval{v}$, $\jstyping{\Gamma}{v}{T_1}$, $\jsubty{\Gamma}{T_1}{T_2}$, $\jctxtyping{V}{\Gamma}$ and $E=\calI_V(\Gamma)$ such that $E \models \condc{V}{\Gamma}$, then $E \models \condv{V}{v}{T_1} \implies (\condv{V}{v}{T_2} \wedge \potv{V}{v}{T_1} \ge \potv{V}{v}{T_2})$.
\end{proposition}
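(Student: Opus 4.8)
The plan is to prove both consequences---the condition implication $\condv{V}{v}{T_1} \implies \condv{V}{v}{T_2}$ and the potential domination $\potv{V}{v}{T_1} \ge \potv{V}{v}{T_2}$---simultaneously, by induction on the derivation of $\jsubty{\Gamma}{T_1}{T_2}$, interleaved with structural induction on the value $v$ (the latter being needed only for the datatype case). Since $\condv{\cdot}{\cdot}{\cdot}$ and $\potv{\cdot}{\cdot}{\cdot}$ are both defined by recursion on the annotated type, each subtyping rule is designed to supply exactly the semantic fact that its corresponding clause needs. Concretely, I would fix the given $\jctxtyping{V}{\Gamma}$ with $E=\calI_V(\Gamma)$ and $E \models \condc{V}{\Gamma}$, assume $E \models \condv{V}{v}{T_1}$, and reduce each goal to a validity fact of the form $E \models \psi$ that is discharged by a premise of the last applied rule. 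A recurring point is that the validity premises (e.g.\ $\phi_1 \ge \phi_2$ and $\psi_1 \implies \psi_2$) are \emph{guarded} by a binding $\bindvar{\nu}{R_1}$ or $\bindvar{\nu}{B_1}$; the assumed antecedent $\condv{V}{v}{T_1}$ is precisely what certifies that $\calI(v)$ satisfies the relevant refinement, thereby licensing the instantiation of these guarded premises at $\nu \mapsto \calI(v)$.

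The scalar and structural cases are routine. For \textsc{(Sub-Nat)}, \textsc{(Sub-Unit)}, and \textsc{(Sub-Bool)} the two types coincide and both goals are immediate. The rule \textsc{(Sub-Pot)} handles the top-level potential annotation: $\potv{V}{v}{\tpot{R_1}{\phi_1}}$ splits as $\subst{\calI(v)}{\nu}{\phi_1}$ plus the recursive contribution $\potv{V}{v}{R_1}$, so I use $\jprop{\Gamma,\bindvar{\nu}{R_1}}{\phi_1 \ge \phi_2}$ for the top term and the inductive hypothesis on the subderivation $\jsubty{\Gamma}{R_1}{R_2}$ for the remainder, then add the inequalities. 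For \textsc{(Sub-Subset)}, the condition implication comes from instantiating $\jprop{\Gamma,\bindvar{\nu}{B_1}}{\psi_1 \implies \psi_2}$ at $\nu \mapsto \calI(v)$---valid because $\calI(v) \in \interps{\calT_E(B_1)}$ for a well-typed value $v$ of base type $B_1$---conjoined with the inductive hypothesis on $\jsubty{\Gamma}{B_1}{B_2}$ for any residual (datatype) subconditions. The case \textsc{(Sub-Arrow)} is trivial for conditions (both sides are $\top$) and reduces to the enclosing \textsc{(Sub-Pot)} for potentials; \textsc{(Sub-TVar)} follows from monotonicity of type multiplication in the multiplicity $m$, since potentials scale by $m$ and are non-negative while refinements are unaffected; and \textsc{(Sub-Prod)} is subsumed by the enclosing subset refinement, because the product clauses of $\condv{}{}{}$ and $\potv{}{}{}$ read off only the top-level $\psi$ and $\phi$.

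The main obstacle is the inductive datatype case \textsc{(Sub-Dtype)} (arising beneath \textsc{(Sub-Subset)}/\textsc{(Sub-Pot)}), where $v = C_j(v_0,\tuple{v_1,\cdots,v_{m_j}})$ and $\potv{}{}{}$ unfolds into the stored-value potential $\potv{V}{v_0}{T_j}$, the extracted free potential $\pi.\mathbf{j}(\calI(v_0))(\theta)$, and a sum of child potentials taken at the \emph{shifted} annotations $\lhd.\mathbf{j}(\calI(v_0))(\theta).\mathbf{i}$. The content-type term is dominated via the inductive hypothesis on the premise $\jsubty{\Gamma}{\many{T}}{\many{T'}}$, but the extraction and child terms cannot be compared from $\jprop{\Gamma}{\theta' \sqsubseteq_{\Delta_\theta} \theta}$ alone: I need that $\sqsubseteq_{\Delta_\theta}$ propagates through the operators $\pi$ and $\lhd$. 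This is exactly what the datatype well-definedness judgment \textsc{(Dtype-Index)} (implicit in \textsc{(Wf-Dtype)}) guarantees: assuming $\tinduct{C}{T}{m}{\theta} <: \tinduct{C}{T'}{m}{\theta'}$, the children---equipped with shifted annotations and the extracted potential $\pi.\mathbf{j}(y)(\theta)$---again stand in the subtyping relation. Because a value is finite, instantiating this guarded guarantee at the concrete children $v_1,\dots,v_{m_j}$ turns it into a well-founded recursion, so I apply the inductive hypothesis to each child using the child subtyping furnished by \textsc{(Dtype-Index)}, and read off $\pi.\mathbf{j}(\calI(v_0))(\theta) \ge \pi.\mathbf{j}(\calI(v_0))(\theta')$ from the \textsc{(Sub-Pot)} component of that same judgment. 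Summing the dominated terms with the top-level inequality from the enclosing \textsc{(Sub-Pot)} gives $\potv{V}{v}{T_1} \ge \potv{V}{v}{T_2}$, and conjoining the child condition implications with the top refinement implication gives $\condv{V}{v}{T_2}$.

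The delicate part throughout is keeping the two inductions properly interleaved---the subtyping derivation governs the type shape while structural recursion on $v$ drives the datatype unfolding---and verifying that the first-order encodings of $\sqsubseteq_{\Delta_\theta}$, together with the shift operator $\lhd$ and extraction operator $\pi$, make every step an honest Presburger validity that $E$ satisfies. I expect this datatype case, and in particular the appeal to \textsc{(Dtype-Index)} to bridge the top-level annotation ordering to the shifted child annotations, to be where essentially all the real work lies; the remaining cases are bookkeeping over the definitions of $\condv{}{}{}$ and $\potv{}{}{}$.
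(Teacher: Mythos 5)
Your proposal is correct and matches the paper's proof in essence: the crux in both is the inductive-datatype case, where one inverts the datatype subtyping, appeals to \textsc{(Dtype-Index)} (via the substitution lemma, instantiating the guarded premises at $\calI(v_0)$) to obtain subtyping of the shifted child annotations together with $\pi.\mathbf{j}(\calI(v_0))(\theta) \ge \pi.\mathbf{j}(\calI(v_0))(\theta')$, and then sums the inductively dominated child potentials with the content-type and extraction terms. The only (cosmetic) difference is the induction measure---the paper inducts on the typing derivation $\jstyping{\Gamma}{v}{T_1}$ followed by an inner induction on atomic typing, whereas you induct on the subtyping derivation interleaved with structural recursion on $v$; these coincide here because the atomic-typing derivation of a value mirrors the value's structure.
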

\begin{proof}
	By induction on $\jstyping{\Gamma}{v}{T_1}$, followed by an induction on atomic typing.
	\begin{itemize}
    \item \textsc{(SimpAtom-ConsD)}: Let $v = C_j(v_0,\tuple{v_1,\cdots,v_{m_j}})$ for some $j$ and $B_1 = \tinduct{C}{T}{m}{\theta}$.
    By inversion on $\jsubty{\Gamma}{B_1}{B_2}$, we know that $B_2 = \tinduct{C}{T'}{m}{\theta'}$ for some $\theta',\many{T'}$ satisfying $\jsubty{\Gamma}{\many{T}}{\many{T'}}$.
    By \textsc{(Dtype-Index)}, for all $i$, we know that
    \[
    \jsubty{ \Gamma_{<:} ,y:T_j}{ \tinduct{C}{T}{m}{\lhd_{j}(y)(\theta).\mathbf{i}} }{ \tinduct{C}{T'}{m}{\lhd_{j}(y)(\theta').\mathbf{i}}  }
    \]
    where
    \[
   \Gamma_{<:} \defeq  \Gamma,\theta:\Delta_\theta,\theta':\Delta_\theta,B_1 <: B_2,
    \]
    and $ \jprop{\Gamma_{<:},y:T_j}{ \pi.\mathbf{j}(y)(\theta) \ge \pi.\mathbf{j}(y)(\theta')}$.
    By the substitution lemma, we have $$\jsubty{\Gamma}{ \tinduct{C}{T}{m}{\lhd.\mathbf{j}(\calI(v_0))(\theta).\mathbf{i}} }{ \tinduct{C}{T'}{m}{\lhd.\mathbf{j}(\calI(v_0))(\theta').\mathbf{i}} },$$ and $\jprop{\Gamma}{\pi.\mathbf{j}(\calI(v_0))(\theta) \ge \pi.\mathbf{j}(\calI(v_0))(\theta')}$.
    Thus, by induction hypothesis, for all $i$, we know that $E \models \potv{V}{v_i}{ \tinduct{C}{T}{m}{ \lhd.\mathbf{j}(\calI(v_0))(\theta).\mathbf{i} } } \ge \potv{V}{v_i}{ \tinduct{C}{T'}{m}{\lhd.\mathbf{j}(\calI(v_0))(\theta').\mathbf{i}} }$.
    By an inner induction on $\jstyping{\Gamma_1}{v_0}{T_j}$, we obtain that $E \models \potv{V}{v_0}{T_j} \ge \potv{V}{v_0}{T_j'}$.
    
    By the definition of potential, we have
    \begin{align*}
      \potv{V}{v}{B_1} & = \potv{V}{v_0}{T_j} + \pi.\mathbf{j}(\calI(v_0))(\theta) + \sum_{i=1}^{m_j} \potv{V}{v_i}{ \tinduct{C}{T}{m}{\lhd.\mathbf{j}(\calI(v_0))(\theta).\mathbf{i}} }, \\
      \potv{V}{v}{B_2} & = \potv{V}{v_0}{T_j'} + \pi.\mathbf{j}(\calI(v_0))(\theta') + \sum_{i=1}^{m_j} \potv{V}{v_i}{ \tinduct{C}{T'}{m}{\lhd.\mathbf{j}(\calI(v_0))(\theta').\mathbf{i}} },
    \end{align*}
    and conclude $E \models \potv{V}{v}{B_1} \ge \potv{V}{v}{B_2}$ by the inequalities derived above.
  \end{itemize}
\end{proof}

\begin{proposition}\label{prop:sharingsplit}
	If $\jval{v}$, $\jstyping{\Gamma}{v}{S}$, $\jsharing{\Gamma}{S}{S_1}{S_2}$, $\jctxtyping{V}{\Gamma}$ and $E = \calI_V(\Gamma)$ such that $E \models \condc{V}{\Gamma}$, then $E \models \potv{V}{v}{S} = \potv{V}{v}{S_1} + \potv{V}{v}{S_2}$.
\end{proposition}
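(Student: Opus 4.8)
The plan is to prove the identity by induction on the derivation of the sharing judgment $\jsharing{\Gamma}{S}{S_1}{S_2}$, using the hypothesis $\jstyping{\Gamma}{v}{S}$ to pin down the shape of $v$ in each case by inversion, exactly as in the \textsc{(SimpAtom-ConsD)} case of \propref{subtyping}. For the structural rules acting on types that carry no value-level potential---\textsc{(Share-Nat)}, \textsc{(Share-Bool)}, \textsc{(Share-Unit)}, \textsc{(Share-Prod)}, \textsc{(Share-Arrow)}, and \textsc{(Share-Poly)}---the defining equations of $\potv{V}{\cdot}{\cdot}$ give $\potv{V}{v}{S}=\potv{V}{v}{S_1}=\potv{V}{v}{S_2}=0$, so the identity holds trivially. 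Rule \textsc{(Share-Tvar)} is equally direct: after resolving $\alpha$ through the stack $V$, the potential is scaled by the multiplicity, and the side condition $m=m_1+m_2$ together with the distributivity of $\times$ over $+$ yields the claim.

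The two genuinely arithmetical non-datatype cases are \textsc{(Share-Subset)} and \textsc{(Share-Pot)}. For \textsc{(Share-Subset)} the potential is insensitive to the logical refinement, so the goal reduces immediately to the inductive hypothesis for $\jsharing{\Gamma}{B}{B_1}{B_2}$. For \textsc{(Share-Pot)}, where $S=\tpot{R}{\phi}$ splits into $\tpot{R_1}{\phi_1}$ and $\tpot{R_2}{\phi_2}$, I would separate the top-level annotation from the residual potential of $R$: the top-level contributions are $\subst{\calI(v)}{\nu}{\phi}$, $\subst{\calI(v)}{\nu}{\phi_1}$, and $\subst{\calI(v)}{\nu}{\phi_2}$, and these add up because the premise $\jprop{\Gamma,\bindvar{\nu}{R}}{\phi=\phi_1+\phi_2}$, instantiated at $\nu\mapsto\calI(v)$ (legitimate since $v$ is well-typed at $R$, so $\calI(v)$ satisfies $R$'s refinement under $E=\calI_V(\Gamma)$), becomes a numerical equality under $E$; the residual parts add up by the inductive hypothesis for $\jsharing{\Gamma}{R}{R_1}{R_2}$.

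The crux, and the step I expect to be the main obstacle, is \textsc{(Share-Dtype)}, where $v=C_j(v_0,\tuple{v_1,\dots,v_{m_j}})$ and $S=\tinduct{C}{T}{m}{\theta}$ splits into the datatypes annotated by $\theta_1,\theta_2$ with $\theta=\theta_1\oplus_{\Delta_\theta}\theta_2$ and $\jsharing{\Gamma}{\many{T}}{\many{T_1}}{\many{T_2}}$. Expanding the datatype clause of $\potv{V}{\cdot}{\cdot}$, the goal breaks into three pieces: the content potential $\potv{V}{v_0}{T_j}$, which decomposes by the inductive hypothesis on the $j$-th component of $\many{T}\sharing\many{T_1}\mid\many{T_2}$; the extracted potential $\pi.\mathbf{j}(\calI(v_0))(\theta)$; and the recursive sum over the children at the shifted annotations $\lhd.\mathbf{j}(\calI(v_0))(\theta).\mathbf{i}$. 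The last two require that $\pi$ and $\lhd$ are compatible with $\oplus_{\Delta_\theta}$, namely $\pi.\mathbf{j}(s)(\theta)=\pi.\mathbf{j}(s)(\theta_1)+\pi.\mathbf{j}(s)(\theta_2)$ and $\lhd.\mathbf{j}(s)(\theta).\mathbf{i}=\lhd.\mathbf{j}(s)(\theta_1).\mathbf{i}\oplus_{\Delta_\theta}\lhd.\mathbf{j}(s)(\theta_2).\mathbf{i}$. These facts are exactly the sharing half of the datatype well-definedness rule \textsc{(Dtype-Index)}; I would invoke that rule, instantiate its universally bound $y$ with $\calI(v_0)$ via the substitution lemma (mirroring how \propref{subtyping} instantiates \textsc{(Dtype-Index)} for subtyping), and then apply the inductive hypothesis to each child $v_i$ at its shifted annotation, assembling the three pieces into the desired sum. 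The delicate part is not any deep difficulty but the bookkeeping: ensuring that the $\oplus_{\Delta_\theta}$-decomposition of the shifted annotations lines up precisely with the annotations under which the children are actually typed, so that the inductive hypothesis applies.
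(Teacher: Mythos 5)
Your overall plan---the case split on the sharing rule, the trivial cases for potential-free types, the $\oplus$-arithmetic for \textsc{(Share-Pot)}, and the use of \textsc{(Dtype-Index)} plus the substitution lemma at $y \mapsto \calI(v_0)$ in the datatype case---matches the paper's argument almost clause for clause. The one genuine problem is your choice of induction measure. You induct on the derivation of $\jsharing{\Gamma}{S}{S_1}{S_2}$, but in the \textsc{(Share-Dtype)} case you must apply the inductive hypothesis to each child $v_i$ at the \emph{shifted} annotation, i.e.\ to a judgment of the form $\jsharing{\Gamma}{\tinduct{C}{T}{m}{\lhd.\mathbf{j}(\calI(v_0))(\theta).\mathbf{i}}}{\cdot}{\cdot}$. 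That judgment is not a subderivation of the one you are inducting on: \textsc{(Share-Dtype)} has premises only about the content types $\many{T}$ and the annotations $\theta,\theta_1,\theta_2$, so the sharing derivation for a child's type is a fresh instance of \textsc{(Share-Dtype)} (manufactured from \textsc{(Dtype-Index)} and substitution) of essentially the same size as the original. Your induction is therefore not well-founded at exactly the step you identify as the crux; the ``bookkeeping'' worry you flag is really a missing decreasing measure.

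The fix is what the paper does: induct on the typing derivation $\jstyping{\Gamma}{v}{S}$, followed by an inner induction on the atomic-typing judgment (equivalently, on the structure of the value $v$). In \textsc{(SimpAtom-ConsD)} the premises do contain the typing derivations of $v_0$ and of each $v_i$ at the shifted types, so the recursion descends with the value and the inductive hypothesis is available for the children. With that change of measure, the rest of your argument---including the compatibility facts $\pi.\mathbf{j}(y)(\theta)=\pi.\mathbf{j}(y)(\theta_1)+\pi.\mathbf{j}(y)(\theta_2)$ and the $\oplus_{\Delta_\theta}$-decomposition of $\lhd.\mathbf{j}(y)(\theta).\mathbf{i}$, both extracted from \textsc{(Dtype-Index)} and instantiated at $\calI(v_0)$ via the substitution lemma---goes through exactly as you describe.
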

\begin{proof}
	By induction on $\jstyping{\Gamma}{v}{T_1}$, followed by an induction on atomic typing.
	\begin{itemize}
    \item \textsc{(SimpAtom-ConsD)}: Let $v = C_j(v_0,\tuple{v_1,\cdots,v_{m_j}})$ for some $j$ and $B = \tinduct{C}{T}{m}{\theta} $.
    By inversion on $\jsharing{\Gamma}{B}{B_1}{B_2}$, we know that $B_1 = \tinduct{C}{T_1}{m}{\theta_1}$, $B_2=\tinduct{C}{T_2}{m}{\theta_2}$ for some $\theta_1,\theta_2,\many{T_1},\many{T_2}$ satisfying $\jsharing{\Gamma}{\many{T}}{\many{T_1}}{\many{T_2}}$.
    By \textsc{(Dtype-Index)}, for all $i$, we know that
    \[
    \jsharing{\Gamma_{\sharing} , y : T_j }{\tinduct{C}{T}{m}{ \lhd.\mathbf{j}(y)(\theta).\mathbf{i}}}{ \tinduct{C}{T_1}{m}{\lhd.\mathbf{j}(y)(\theta_1).\mathbf{i}} }{ \tinduct{C}{T_2}{m}{\lhd.\mathbf{j}(y)(\theta_2).\mathbf{i}} }
    \]
    where
    \[
    \Gamma_{\sharing} \defeq \Gamma,\theta:\Delta_\theta,\theta_1:\Delta_\theta,\theta_2:\Delta_\theta, B \sharing B_1 \mid B_2,
    \]
    and $ \jprop{\Gamma_{\sharing},y:T_j}{ \pi.\mathbf{j}(y)(\theta) = \pi.\mathbf{j}(y)(\theta_1) + \pi.\mathbf{j}(y)(\theta_2)}$.
    By the substitution lemma, we have $\jsharing{\Gamma}{\tinduct{C}{T}{m}{\lhd_{j}(\calI(v_0))(\theta).\mathbf{i}}}{\tinduct{C}{T_1}{m}{\lhd_{j}(\calI(v_0))(\theta_1).\mathbf{i}}}{\tinduct{C}{T_2}{m}{\lhd_{j}(\calI(v_0))(\theta_2).\mathbf{i}}}$ and $\jprop{\Gamma}{ \pi.\mathbf{j}(\calI(v_0))(\theta) = \pi.\mathbf{j}(\calI(v_0))(\theta_1) + \pi.\mathbf{j}(\calI(v_0))(\theta_2)}$.
    Thus, by induction hypothesis, for all $i$, we know that $E \models \potv{V}{v_i}{\tinduct{C}{T}{m}{\lhd_{j}(\calI(v_0))(\theta).\mathbf{i}}} = \potv{V}{v_i}{\tinduct{C}{T_1}{m}{\lhd_{j}(\calI(v_0))(\theta_1).\mathbf{i}}} + \potv{V}{v_i}{\tinduct{C}{T_2}{m}{\lhd_{j}(\calI(v_0))(\theta_2).\mathbf{i}}} $.
    By an inner induction on $\jstyping{\Gamma_1}{v_0}{T_j}$, we obtain that $\potv{V}{v_0}{T_j} = \potv{V}{v_0}{T_{1j}} + \potv{V}{v_0}{T_{2j}}$.
    
    By the definition of potential, we have
    \begin{align*}
      \potv{V}{v}{B} & = \potv{V}{v_0}{T_j} + \pi.\mathbf{j}(\calI(v_0))(\theta) + \sum_{i=1}^{m_j} \potv{V}{v_i}{\tinduct{C}{T}{m}{\lhd_{j}(\calI(v_0))(\theta).\mathbf{i}}}, \\
      \potv{V}{v}{B_1} & = \potv{V}{v_0}{T_{1j}} + \pi.\mathbf{j}(\calI(v_0))(\theta_1) + \sum_{i=1}^{m_j} \potv{V}{v_i}{\tinduct{C}{T_1}{m}{\lhd_{j}(\calI(v_0))(\theta_1).\mathbf{i}}}, \\
      \potv{V}{v}{B_2} & = \potv{V}{v_0}{T_{2j}} + \pi.\mathbf{j}(\calI(v_0))(\theta_2) + \sum_{i=1}^{m_j} \potv{V}{v_i}{\tinduct{C}{T_2}{m}{\lhd_{j}(\calI(v_0))(\theta_2).\mathbf{i}}}, \\
    \end{align*}
    and conclude $E \models \potv{V}{v}{B} = \potv{V}{v}{B_1} + \potv{V}{v}{B_2}$ by the equalities derived above.
  \end{itemize}
\end{proof}

\begin{lemma}\label{lem:progressatom}
  If $\Gamma = \overline{q \mid \alpha}$, $\jatyping{\Gamma}{a}{B}$, $\jctxtyping{V}{\Gamma}$ and $p \ge \potc{V}{\Gamma}$, then $\jval{a}$ and $a$ is consistent with $\jstyping{\Gamma}{a}{\tsubset{B}{\nu = \calI(a)}}$.
\end{lemma}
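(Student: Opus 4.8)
The plan is to induct on the derivation of the atomic-typing judgment $\jatyping{\Gamma}{a}{B}$, carrying along the invariant that $\Gamma$ is a sequence of type variables and free potentials, i.e. of the form $\overline{q \mid \alpha}$. The first observation is that this invariant rules out the rule \textsc{(SimpAtom-Var)} entirely: that rule requires a program-variable binding $\Gamma(x) = \ldots$, and no such binding is present, so $a$ is never a bare variable and is therefore always a value-shaped atom. The second observation is that context sharing $\jctxsharing{\Gamma}{\Gamma_1}{\Gamma_2}$ only apportions free potentials and copies type variables, so whenever a rule splits $\Gamma$, the sub-contexts $\Gamma_1, \Gamma_2$ remain of the form $\overline{q \mid \alpha}$ and the induction hypothesis applies to the premises. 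I would also fix once and for all a value stack $\jctxtyping{V}{\Gamma}$ and the induced environment $E = \calI_V(\Gamma)$, and recall that free potentials are nonnegative, so $\potc{V}{\Gamma} \ge 0$.

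For the scalar leaf rules \textsc{(SimpAtom-Bool)}, \textsc{(SimpAtom-Nat)}, and \textsc{(SimpAtom-Unit)}, the atom $a$ is literally a value, establishing $\jval{a}$. Consistency is immediate: the refinement of the most-precise type is $\nu = \calI(a)$, and substituting $\nu := \calI(a)$ yields the valid equation $\calI(a) = \calI(a)$; meanwhile the potential stored in $\tsubset{B}{\nu = \calI(a)}$ is $0$, which is bounded by $\potc{V}{\Gamma} \ge 0$. The pair rule \textsc{(SimpAtom-Pair)} is handled by splitting $\Gamma$ into $\Gamma_1, \Gamma_2$, applying the induction hypothesis to $a_1$ and $a_2$ to obtain that both are values and both consistent, and then assembling: $\epair{a_1}{a_2}$ is a value, the constraint $\nu = \calI(\epair{a_1}{a_2})$ follows from $\calI(\epair{a_1}{a_2}) = \rpair{\calI(a_1)}{\calI(a_2)}$ together with the component equations, and the top-level potential of the product type is again $0$.

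The crux is the constructor rule \textsc{(SimpAtom-ConsD)}, where $a = C_j(a_0, \tuple{a_1, \ldots, a_{m_j}})$ and the ambient context carries the extra free-potential term $\pi.\mathbf{j}(\calI(a_0))(\theta)$ (still of the form $\overline{q \mid \alpha}$). After splitting the context, I would apply the induction hypothesis to the child tuple $\tuple{a_1, \ldots, a_{m_j}}$, whose typing is again atomic, to obtain that it is a value and consistent with its product type. The content $a_0$ is the delicate point: its premise $\jstyping{\Gamma_1}{a_0}{T_j}$ is a \emph{structured} typing rather than an atomic one, so I would first invert this derivation (through \textsc{(T-SimpAtom)} and the structural rules) to recover the atomic typing of $a_0$ and its most-precise type, apply the induction hypothesis there, and then transport consistency along the intervening subtyping using \ref{prop:subtyping}. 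With both consistency facts in hand, the result follows by invoking Lemma \ref{lem:consistentcons}, which packages exactly these two premises into consistency of $C_j(a_0,\tuple{\ldots})$ with the constructor type; and $C_j(\ldots)$ is a value since all its immediate subterms are.

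I expect the main obstacle to be this last case, specifically the mismatch between the structured typing judgment for the stored value $a_0$ and the atomic form demanded by the induction, together with the bookkeeping needed to make the shift operator $\lhd.\mathbf{j}(\calI(a_0))(\theta)$ on the child types and the extraction $\pi.\mathbf{j}(\calI(a_0))(\theta)$ line up with the inductive potential definition in \eqref{eq:generic-potential-function}, so that the potential inequality threads correctly through Lemma \ref{lem:consistentcons}. Once established, this lemma is precisely the engine behind the \emph{Consistency} lemma: combining it with the subtyping and sharing propositions \ref{prop:subtyping} and \ref{prop:sharingsplit} lifts consistency from the most-precise type $\tsubset{B}{\nu = \calI(a)}$ to an arbitrary type $S$.
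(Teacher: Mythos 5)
Your proof follows the same skeleton as the paper's: induction on the atomic-typing derivation, with the scalar leaves and the pair case handled exactly as you describe, and the \textsc{(SimpAtom-ConsD)} case discharged by combining consistency of the content and of the child tuple via Lemma~\ref{lem:consistentcons}. The one place where you diverge is the treatment of the content premise $\jstyping{\Gamma_1}{a_0}{T_j}$. The paper does not invert this derivation by hand; it simply invokes Theorem~\ref{the:progress} (Progress) on it, so that Lemma~\ref{lem:progressatom} and the progress theorem are proved by a \emph{mutual} induction on derivation size (the progress theorem's \textsc{(T-SimpAtom)} case calls back into this lemma, and this lemma's \textsc{(SimpAtom-ConsD)} case calls the progress theorem on the strictly smaller sub-derivation for $a_0$). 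Your plan of ``inverting through \textsc{(T-SimpAtom)} and the structural rules'' is essentially an inlined version of the same argument, but be aware that it cannot be a single inversion step: the structural rules \textsc{(S-Gen)}, \textsc{(S-Inst)}, \textsc{(S-Subtype)}, \textsc{(S-Transfer)}, and \textsc{(S-Relax)} can be stacked arbitrarily above \textsc{(T-SimpAtom)}, so you need an inner induction on that sub-derivation, transporting consistency through each layer (Prop.~\ref{prop:subtyping} for subtyping, Prop.~\ref{prop:sharingsplit} for instantiation, and direct potential bookkeeping for transfer and relax) --- which is precisely what the structural cases of the paper's Progress theorem already do. Either formulation works; the paper's factoring avoids duplicating those cases, while yours keeps the lemma self-contained at the cost of repeating them. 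Your identification of Lemma~\ref{lem:consistentcons} as the engine of the constructor case, and of the potential bookkeeping for $\pi.\mathbf{j}$ and $\lhd.\mathbf{j}$ as the delicate point, matches the paper.
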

\begin{proof}
  By induction on $\jatyping{\Gamma}{a}{B}$:
  \begin{alignat}{2}\footnotesize
    \shortintertext{\bf\textsc{(SimpAtom-True)}}
    & \text{SPS}~a=\etrue,B=\tbool \\
    & \jval{\etrue} & \text{[value]} \\
    & \condv{V}{\etrue}{\tsubset{\tbool}{\nu = \calI(\etrue)}} \\
    & \quad = \subst{\calI(\etrue)}{\nu}{(\nu = \calI(\etrue))} = \top \\
    & \potv{V}{\etrue}{\tpot{\tbool}{0}} = 0 \le \potc{V}{\Gamma}
		\shortintertext{\bf\textsc{(SimpAtom-False)}}
		& \text{SPS}~a=\efalse, B=\tbool \\
		& \jval{\efalse} & \text{[value]} \\
		& \condv{V}{\efalse}{\tsubset{\tbool}{\nu = \calI(\efalse)}} \\
		& \quad = \subst{\calI(\efalse)}{\nu}{(\nu = \calI(\efalse))} = \top \\
		& \potv{V}{\efalse}{\tpot{\tbool}{0}} = 0 \le \potc{V}{\Gamma}
	\shortintertext{\bf\textsc{(SimpAtom-Pair)}}
	& \text{SPS}~a=\epair{a_1}{a_2}, B=\tprod{B_1}{B_2} \\
	& \jctxsharing{\Gamma}{\Gamma_1}{\Gamma_2} & \text{[premise]} \label{eq:proga:pairshare} \\
	& \jatyping{\Gamma_1}{a_1}{B_1} & \text{[premise]} \label{eq:proga:pairprem1} \\
	& \jatyping{\Gamma_2}{a_2}{B_2} & \text{[premise]} \label{eq:proga:pairprem2} \\
	& \jval{a_1}, a_1~\text{consistent} & \text{[ind. hyp., \eqref{eq:proga:pairprem1}]} \label{eq:proga:pairval1} \\
	& \jval{a_2}, a_2~\text{consistent} & \text{[ind. hyp., \eqref{eq:proga:pairprem2}]} \label{eq:proga:pairval2} \\
	& \jval{\epair{a_1}{a_2}} & \text{[value]} \\
	& \condv{V}{\epair{a_1}{a_2}}{ \tsubset{ \tprod{B_1}{B_2}}{ \nu = \calI(\epair{a_1}{a_2}}} \\
	& \quad = \subst{\calI(\epair{a_1}{a_2})}{\nu}{(\nu = \calI(\epair{a_1}{a_2}))} = \top \\
	& \potv{V}{\epair{a_1}{a_2}}{\tpot{(\tprod{B_1}{B_2})}{0}}  \\
	& \quad = \potv{V}{a_1}{B_1} +\potv{V}{a_2}{B_2} \le \potc{V}{\Gamma_1} + \potc{V}{\Gamma_2} = \potc{V}{\Gamma} & \text{[\eqref{eq:proga:pairshare},\eqref{eq:proga:pairval1},\eqref{eq:proga:pairval2}]}
	\shortintertext{\bf\textsc{(SimpAtom-ConsD)}}
	& \text{SPS}~a=C_j(\hat{a}_0,\tuple{a_1,\cdots,a_{m_j}}), B = \tinduct{C}{T}{m}{\theta} \\
	& \Gamma~\text{contains no variables} \implies \jval{a_0} \label{eq:proga:conshead} \\
	& \Gamma=\Gamma',\pi.\mathbf{j}(\calI(a_0))(\theta), \jctxsharing{\Gamma'}{\Gamma_1}{\Gamma_2} & \text{[premise]} \label{eq:proga:consshare} \\
	& \jstyping{\Gamma_1}{a_0}{T_j} & \text{[premise]} \label{eq:proga:conspremh} \\
	& \jatyping{\Gamma_2}{\tuple{a_1,\cdots,a_{m_j}}}{ \textstyle\prod_{i=1}^{m_j} \tinduct{C}{T}{m}{\lhd.\mathbf{j}(\calI(a_0))(\theta).\mathbf{i} }  } & \text{[premise]} \label{eq:proga:conspremt} \\
	& a_0~\text{consistent} & \text{[\theoref{progress}, \eqref{eq:proga:conshead}, \eqref{eq:proga:conspremh}]} \\
	& \forall j :  \jval{a_j}, a_j~\text{consistent} & \text{[ind. hyp., \eqref{eq:proga:conspremt}]} \\
	& \jval{C_j(a_0,\tuple{a_1,\cdots,a_{m_j}})} & \text{[value]} \\
	& C_j(a_0,\tuple{a_1,\cdots,a_{m_j}})~\text{consistent} & \text{[\lemref{consistentcons}]}
  \end{alignat}  
\end{proof}

\begin{theorem}[Progress]\label{the:progress}
	If  $\Gamma=\overline{q \mid \alpha}$, $\jstyping{\Gamma}{e}{S}$, $\jctxtyping{V}{\Gamma}$ and $p \ge \potc{V}{\Gamma}$, then either $\jval{e}$ and $e$ is consistent with $\jstyping{\Gamma}{e}{S}$, or there exist $e'$ and $p'$ such that $\jstep{e}{e'}{p}{p'}$.
\end{theorem}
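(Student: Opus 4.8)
The plan is to prove the statement by induction on the typing derivation $\jstyping{\Gamma}{e}{S}$, exploiting throughout that the restricted context $\Gamma = \overline{q \mid \alpha}$ binds no program variables. This restriction immediately discharges two rules: \textsc{(T-Var)} is vacuous because there are no program-variable bindings to match, and \textsc{(T-Imp)} is vacuous because its premise $\jprop{\Gamma}{\bot}$ cannot hold when $\scrB(\Gamma) = \top$ — a context of only free potentials and type variables imposes no falsifiable constraint, yet by assumption a consistent stack $V$ exists. For \textsc{(T-SimpAtom)} I would invoke Lemma~\ref{lem:progressatom} directly: it shows both that the atom is already a value and that it is consistent, closing the case at once. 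The cases \textsc{(T-Abs)}, \textsc{(T-Abs-Lin)}, and \textsc{(T-Fix)} also produce values, and their consistency is trivial since the refinement language cannot mention function values, so $\condv{V}{v}{\cdot} = \top$ and the stored potential is $0 \le \potc{V}{\Gamma}$.

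For the elimination rules I would argue by canonical forms, again powered by Lemma~\ref{lem:progressatom} together with the value grammar. An interpretable atom typed at $\tbool$, $\tprod{B_1}{B_2}$, or a datatype must be a value of the matching shape (a Boolean literal, a pair, or a constructor application), and an atom of arrow type must be a $\lambda$ or a $\mathsf{fix}$, since it cannot be a variable. Hence \textsc{(T-Cond)} steps by \textsc{(E-Cond-True)}/\textsc{(E-Cond-False)}, \textsc{(T-MatP)} by \textsc{(E-MatP-Val)}, \textsc{(T-MatD)} by \textsc{(E-MatD-Val)}, and \textsc{(T-App)}/\textsc{(T-App-SimpAtom)} by \textsc{(E-App-Abs)} or \textsc{(E-App-Fix)}, the argument already being a value. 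For \textsc{(T-Let)} I would run the induction hypothesis on $e_1$ under $\Gamma_1$: context sharing $\jctxsharing{\Gamma}{\Gamma_1}{\Gamma_2}$ gives $\potc{V}{\Gamma} = \potc{V}{\Gamma_1} + \potc{V}{\Gamma_2} \ge \potc{V}{\Gamma_1}$, so $p \ge \potc{V}{\Gamma_1}$ and the hypothesis applies; if $e_1$ is a value we fire \textsc{(E-Let-Val)}, otherwise we step under the congruence rule for $\mathsf{let}$.

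The genuinely resource-sensitive case is \textsc{(T-Tick-P)}, where $e = \econsume{c}{e_0}$ with $c \ge 0$ and $\Gamma = \Gamma', c$. Here I would step by \textsc{(E-Tick)} to $e_0$ with residual $p - c$, and crucially use $p \ge \potc{V}{\Gamma} = \potc{V}{\Gamma'} + c \ge c$ to guarantee $p - c \ge 0$, so the reduction does not run out of resources; the negative-cost rule \textsc{(T-Tick-N)} steps unconditionally since it only frees resources. The remaining work is the structural rules \textsc{(S-$*$)}, which leave $e$ unchanged and therefore amount to threading the induction hypothesis while adjusting types and resource bounds: for \textsc{(S-Subtype)} I would transport consistency along the subtyping judgment via Proposition~\ref{prop:subtyping} (subtyping can only decrease stored potential); for \textsc{(S-Transfer)} I would replay the hypothesis in $\Gamma'$, using $|\Gamma| = |\Gamma'|$ so the same stack $V$ witnesses $\jctxtyping{V}{\Gamma'}$ and $\jprop{\Gamma}{\pot{\Gamma} = \pot{\Gamma'}}$ so $\potc{V}{\Gamma} = \potc{V}{\Gamma'}$; for \textsc{(S-Relax)} I would observe that adding $\phi'$ to both the available and the stored potential preserves the inequality; and \textsc{(S-Gen)}/\textsc{(S-Inst)} follow from the hypothesis on the polymorphic premise together with the value-consistency clause for type schemas.

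I expect the main obstacle to be the bookkeeping in the structural rules rather than any deep argument. Because \textsc{(S-$*$)} are not syntax-directed, they may be interleaved arbitrarily in the derivation, so each must independently re-establish \emph{both} disjuncts (``value and consistent'' and ``can step''), and the value branch in particular leans on the auxiliary facts that subtyping only decreases potential (Proposition~\ref{prop:subtyping}) and that potential rearrangement and relaxation preserve the resource invariant. Carefully matching the residual resource $p$ against $\potc{V}{\Gamma}$ through each of these rules, as well as through the context split in \textsc{(T-Let)} and the tick accounting in \textsc{(T-Tick-P)}, is the one place where the proof is easy to get wrong.
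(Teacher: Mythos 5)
Your proposal is correct and follows essentially the same route as the paper's proof: induction on the typing derivation, with \lemref{progressatom} supplying both the value and consistency facts for atoms, canonical forms powered by the absence of program-variable bindings in $\Gamma=\overline{q\mid\alpha}$, the accounting $p\ge\potc{V}{\Gamma}=\potc{V}{\Gamma'}+c\ge c$ for \textsc{(T-Tick-P)}, and \propref{subtyping} together with the potential-equality and relaxation bookkeeping for the structural rules. The paper handles \textsc{(T-Imp)} by deriving $\top\implies\bot$ and concluding exfalso, which matches your vacuity argument.
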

\begin{proof}
	By induction on $\jstyping{\Gamma}{e}{S}$:
	\begin{alignat}{2}\footnotesize
	   \shortintertext{\bf\textsc{(T-SimpAtom)}}
	   & \text{SPS}~e=a, S = \tsubset{B}{\nu = \calI(a)} \\
	   & \jval{a}, a~\text{consistent} & \text{[\lemref{progressatom}]}
	   %
		%
		\shortintertext{\bf\textsc{(T-Imp)}}
		& \text{SPS}~e=\eimp,S=T \\
		& \jprop{\Gamma}{\bot} & \text{[premise]} \\
		&  \top \implies \bot\\
		& \text{exfalso} 
		\shortintertext{\bf\textsc{(T-Consume-P)}}
		& \text{SPS}~\Gamma=(\Gamma',c), e=\econsume{c}{e_0}, c \ge 0 \\
		& p \ge \potc{V}{\Gamma}= \potc{V}{\Gamma'} + c \ge c \\
		& \jstep{e}{e_0}{p}{p-c} & \text{[eval.]}
		\shortintertext{\bf\textsc{(T-Consume-N)}}
		& \text{SPS}~e=\econsume{c}{e_0},c<0 \\
		& \jstep{e}{e_0}{p}{p-c} & \text{[eval.]}
		\shortintertext{\bf\textsc{(T-Cond)}}
		& \text{SPS}~e=\econd{a_0}{e_1}{e_2},S=T \\
        & \jatyping{\Gamma}{a_0}{\tbool} & \text{[premise]} \label{eq:prog:conde0} \\
        & \jval{a_0} & \text{[\lemref{progressatom}]} \label{eq:prog:conde0val} \\
        & \text{inv. on \eqref{eq:prog:conde0} with \eqref{eq:prog:conde0val}} \\
        & \textbf{case}~a_0=\etrue \\
        & \enskip \jstep{e}{e_1}{p}{p} & \text{[eval.]} \\
        & \textbf{case}~a_0=\efalse \\
        & \enskip \jstep{e}{e_2}{p}{p} & \text{[eval.]}
        \shortintertext{\bf\textsc{(T-MatP)}}
		& \text{SPS}~e=\ematp{a_0}{x_1}{x_2}{e_1}, S=T \\
		& \jctxsharing{\Gamma}{\Gamma_1}{\Gamma_2} & \text{[premise]} \\
		& \jatyping{\Gamma_1}{a_0}{\tprod{B_1}{B_2}} & \text{[premise]} \label{eq:prog:matpe0} \\
		& \jval{a_0} & \text{[\lemref{progressatom}]} \label{eq:prog:matpe0val} \\
		& \text{inv. on \eqref{eq:prog:matpe0} with \eqref{eq:prog:matpe0val}} \\
		& a_0 = \epair{v_1}{v_2}, \jatyping{\Gamma_{11}}{v_1}{B_1},\jatyping{\Gamma_{12}}{v_2}{B_2}, \jctxsharing{\Gamma_1}{\Gamma_{11}}{\Gamma_{12}} \\
		& \jstep{e}{\subst{v_1,v_2}{x_1,x_2}{e_1}}{p}{p} & \text{[eval.]}
		\shortintertext{\bf\textsc{(T-MatD)}}
		& \text{SPS}~e=\mathsf{matd}(a_0,\many{C_j(x_0,\tuple{x_1,\cdots,x_{m_j}}).e_j}),S=T \\
		& \jctxsharing{\Gamma}{\Gamma_1}{\Gamma_2}  & \text{[premise]} \\
		& \jatyping{\Gamma_1}{a_0}{\tinduct{C}{T}{m}{\theta}} & \text{[premise]} \label{eq:prog:matle0} \\
		& \jval{a_0} & \text{[\lemref{progressatom}]} \label{eq:prog:matle0val} \\
        & \text{inv. on \eqref{eq:prog:matle0} with \eqref{eq:prog:matle0val}} \\
        & \textbf{case}~a_0=C_j(v_0,\tuple{v_1,\cdots,v_{m_j}}) \\
        & \enskip \jstep{e}{ \subst{v_0,v_1,\cdots,v_{m_j}}{x_0,x_1,\cdots,x_{m_j}}{e_j} }{p}{p} & \text{[eval.]} \\
		\shortintertext{\bf\textsc{(T-Let)}}
		& \text{SPS}~e=\elet{e_1}{x}{e_2},S=T_2 \\
		& \jctxsharing{\Gamma}{\Gamma_1}{\Gamma_2}  & \text{[premise]}  \\
		& \quad \implies \potc{V}{\Gamma}=\potc{V}{\Gamma_1}+\potc{V}{\Gamma_2} \label{eq:prog:letsplit} \\
		& \jstyping{\Gamma_1}{e_1}{S_1} & \text{[premise]} \label{eq:prog:lete1} \\
		& p \ge \potc{V}{\Gamma_1} & \text{[asm., \eqref{eq:prog:letsplit}]} \label{eq:prog:letindhyp} \\
		& \text{ind. hyp. on \eqref{eq:prog:lete1} with \eqref{eq:prog:letindhyp}} \\
		& \textbf{case}~\jstep{e_1}{e_1'}{p}{p'}  \\
		& \enskip \jstep{e}{\elet{e_1'}{x}{e_2}}{p}{p'} & \text{[eval.]} \\
		& \textbf{case}~\jval{e_1}  \\
		& \enskip \jstep{e}{\subst{e_1}{x}{e_2}}{p}{p} & \text{[eval.]}
		\shortintertext{\bf\textsc{(T-App)}}
		& \text{SPS}~e=\eapp{\hat{a}_1}{\hat{a}_2}, S=T \\
		& \jctxsharing{\Gamma}{\Gamma_1}{\Gamma_2} & \text{[premise]} \\
		& \jstyping{\Gamma_1}{\hat{a}_1}{\tarrowm{x}{T_x}{T}{1}} & \text{[premise]} \label{eq:prog:appe1} \\
		& \jstyping{\Gamma_2}{\hat{a}_2}{T_x} & \text{[premise]} \label{eq:prog:appe2} \\
		&\Gamma~\text{contains no variables} \\
		& \quad \implies \jval{\hat{a}_1,\hat{a}_2} \label{eq:prog:appevals} \\
        & \text{inv. on \eqref{eq:prog:appe1} with \eqref{eq:prog:appevals}} \\
        & \textbf{case}~e_1=\eabs{x}{e_0} \\
        & \enskip \jstep{e}{\subst{\hat{a}_2}{x}{e_0}}{p}{p} & \text{[eval.]} \\
        & \textbf{case}~e_2 =\efix{f}{x}{e_0} \\
        & \enskip \jstep{e}{\subst{\efix{f}{x}{e_0},\hat{a}_2}{f,x}{e_0}}{p}{p} & \text{[eval.]}
        \shortintertext{\bf\textsc{(T-App-SimpAtom)}}
        & \text{SPS}~e=\eapp{\hat{a}_1}{a_2},S=\subst{\calI(a_2)}{x}{T} \\
        & \jctxsharing{\Gamma}{\Gamma_1}{\Gamma_2} & \text{[premise]} \\
        & \jstyping{\Gamma_1}{\hat{a}_1}{\tarrowm{x}{\trefined{B}{\psi}{\phi}}{T}{1}} & \text{[premise]} \label{eq:prog:appatome1} \\
        & \Gamma~\text{contains no variables} \\
        & \quad \implies \jval{\hat{a}_1} \label{eq:prog:appatome1val} \\
        & \jstyping{\Gamma_2}{\hat{a}_2}{\trefined{B}{\psi}{\phi}} & \text{[premise]} \\
        & \jval{a_2} & \text{[\lemref{progressatom}]} \label{eq:prog:appatome2val} \\
        & \text{inv. on \eqref{eq:prog:appatome1} with \eqref{eq:prog:appatome1val}} \\
        & \textbf{case}~e_1=\eabs{x}{e_0} \\
        & \enskip \jstep{e}{\subst{a_2}{x}{e_0}}{p}{p} & \text{[eval.]} \\
        & \textbf{case}~e_2 =\efix{f}{x}{e_0} \\
        & \enskip \jstep{e}{\subst{\efix{f}{x}{e_0},a_2}{f,x}{e_0}}{p}{p} & \text{[eval.]}
		\shortintertext{\bf\textsc{(T-Abs)}}
		& \text{SPS}~e =\eabs{x}{e_0},S = \tarrow{x}{T_x}{T} \\
		& \jval{\eabs{x}{e_0}} & \text{[value]} \\
		& \condv{V}{\eabs{x}{e_0}}{\tarrow{x}{T_x}{T}} = \top \\
		& \potv{V}{\eabs{x}{e_0}}{\tpot{(\tarrow{x}{T_x}{T})}{0}} = 0 \le \potc{V}{\Gamma}
		\shortintertext{\bf\textsc{(T-Abs-Lin)}}
		& \text{SPS}~\Gamma=m\cdot \Gamma', e =\eabs{x}{e_0},S = \tarrowm{x}{T_x}{T}{m} \\
		& \jval{\eabs{x}{e_0}} & \text{[value]} \\
		& \condv{V}{\eabs{x}{e_0}}{\tarrowm{x}{T_x}{T}{m}} = \top \\
		& \potv{V}{\eabs{x}{e_0}}{\tpot{(\tarrowm{x}{T_x}{T}{m})}{0}} = 0 \le \potc{V}{\Gamma}
		\shortintertext{\bf\textsc{(T-Fix)}}
		& \text{SPS}~e=\efix{f}{x}{e_0},S = \forall\many{\alpha}. \tarrow{x}{T_x}{T} \\
		& \jstyping{\Gamma,\bindvar{f}{ S},\many{\alpha},\bindvar{x}{T_x} }{e_0}{T} & \text{[premise]} \label{eq:prog:fixlambda} \\
		& \jval{\efix{f}{x}{e_0}} & \text{[value]} \\
		& \condv{V}{\efix{f}{x}{e_0}}{S} = \top \\
		& \potv{V}{\efix{f}{x}{e_0}}{S} = 0 \le \potc{V}{\Gamma}
		%
		%
		\shortintertext{\bf\textsc{(S-Gen)}}
		& \text{SPS}~e=v,S=\forall\beta.S' \\
		& \jstyping{\Gamma,\beta}{v}{S'} & \text{[premise]} \label{eq:prog:genprem} \\
		& \jval{v} & \text{[premise]} \\
		& \potv{V}{v}{\forall\beta.S'} = 0 \le \potc{V}{\Gamma} \\
		& \textbf{for all $\jwftype{\Gamma}{\tpot{\tsubset{B}{\psi}}{\phi}}$} \\
		& \enskip \textbf{let}~V'=V[\beta \mapsto \tpot{\tsubset{B}{\psi}}{\phi}] \\
		& \enskip \potc{V'}{\Gamma,\beta} = \potc{V}{\Gamma} \\
		& \enskip \text{ind. hyp. on \eqref{eq:prog:genprem} with $p \ge \potc{V'}{\Gamma,\beta}$} \\
		& \enskip \textbf{case}~\jstep{v}{e'}{p}{p'} \\
		& \quad \text{contradict}~\jval{v} \\
		& \enskip \textbf{case}~\jval{v} \\
		& \quad \condv{V'}{v}{S'} = \top & \text{[ind. hyp.]} \\
		& \implies \condv{V}{v}{\forall\beta.S'} = \top
		\shortintertext{\bf\textsc{(S-Inst)}}
		& \text{SPS}~S=\subst{\tpot{\tsubset{B}{\psi}}{\phi}}{\alpha'}{S'} \\
		& \jstyping{\Gamma}{e}{\forall\alpha'.S'} & \text{[premise]} \label{eq:prog:instprem} \\
		& \text{ind.~hyp. on \eqref{eq:prog:instprem} with $p \ge \potc{V}{\Gamma}$} \\
		& \textbf{case}~\jstep{e}{e'}{p}{p'} \\
		& \enskip \text{done} \\
		& \textbf{case}~\jval{e} \\
		& \enskip \condv{V}{e}{\forall\alpha'.S'} = \top & \text{[ind. hyp.]} \\
		& \enskip  \condv{V[\alpha' \mapsto \tpot{\tsubset{B}{\psi}}{\phi}]}{e}{S'} = \top \\
		& \enskip  \condv{V}{e}{\subst{\tpot{\tsubset{B}{\psi}}{\phi}}{\alpha'}{S'}} = \top \\
		& \enskip \jsharing{\Gamma,\alpha'}{S'}{S'}{S'} & \text{[wellformed.]} \\
		& \enskip  \potv{V[\alpha' \mapsto \tpot{\tsubset{B}{\psi}}{\phi}]}{e}{S'} = 0 & \text{[Prop.~\ref{prop:sharingsplit}]} \\
		& \enskip  \potv{V}{e}{\subst{\tpot{\tsubset{B}{\psi}}{\phi}}{\alpha'}{S'}} = 0 \\
		& \enskip  \potv{V}{e}{\subst{\tpot{\tsubset{B}{\psi}}{\phi}}{\alpha'}{S'}} \le \potc{V}{\Gamma}
		\shortintertext{\bf\textsc{(S-Subtype)}}
		& \text{SPS}~S=T_2 \\
		& \jstyping{\Gamma}{e}{T_1} & \text{[premise]} \label{eq:prog:subtypeprem} \\
		& \jsubty{\Gamma}{T_1}{T_2} & \text{[premise]} \label{eq:prog:subtyperel} \\
		& \text{ind. hyp. on \eqref{eq:prog:subtypeprem} with $p \ge \potc{V}{\Gamma}$} \\
		& \textbf{case}~\jstep{e}{e'}{p}{p'} \\
		& \enskip \text{done} \\
		& \textbf{case}~\jval{e} \\
		& \enskip \condv{V}{e}{T_1} = \top & \text{[ind. hyp.]} \\
		& \enskip \condv{V}{e}{T_1} \implies \condv{V}{e}{T_2} & \text{[Prop.~\ref{prop:subtyping}, \eqref{eq:prog:subtyperel}]} \\
		& \enskip \condv{V}{e}{T_2} = \top \\
		& \enskip \potv{V}{e}{T_1} \le \potc{V}{\Gamma} & \text{[ind. hyp.]} \\ 
		& \enskip \condv{V}{e}{T_1} \implies ( \potv{V}{e}{T_1} \ge \potv{V}{e}{T_2}) & \text{[Prop.~\ref{prop:subtyping}, \eqref{eq:prog:subtyperel}]} \\
		& \enskip \potv{V}{e}{T_2} \le \potc{V}{\Gamma}
		\shortintertext{\bf\textsc{(S-Transfer)}}
		& \jstyping{\Gamma'}{e}{S} & \text{[premise]} \label{eq:prog:transe} \\
		& \jprop{\Gamma}{\pot{\Gamma}=\pot{\Gamma'}} & \text{[premise]} \\
		& \Gamma'=\overline{q' \mid \alpha} \wedge \potc{V}{\Gamma}=\potc{V}{\Gamma'} \label{eq:prog:transequal} \\
		& p \ge \potc{V}{\Gamma'} \label{eq:prog:transindhyp} \\
		& \text{ind. hyp. on \eqref{eq:prog:transe} with \eqref{eq:prog:transindhyp}} \\
		& \textbf{case}~\jstep{e}{e'}{p}{p'} \\
		& \enskip \text{done} \\
		& \textbf{case}~\jval{e} \\
		& \enskip \condv{V}{e}{S} = \top & \text{[ind. hyp.]} \\
		& \enskip \potv{V}{e}{S} \le \potc{V}{\Gamma'} & \text{[ind. hyp.]} \\
		& \enskip \potv{V}{e}{S} \le \potc{V}{\Gamma} & \text{[\eqref{eq:prog:transequal}]}
		\shortintertext{\bf\textsc{(S-Relax)}}
		& \text{SPS}~\Gamma=(\Gamma',\phi'), S=\tpot{R}{\phi+\phi'} \\
		& \jstyping{\Gamma'}{e}{\tpot{R}{\phi}} & \text{[premise]}  \label{eq:prog:relaxprem} \\
		& p \ge \potc{V}{\Gamma',\phi'} = \potc{V}{\Gamma'}+\phi' \label{eq:prog:relaxindhyp} \\
		& \text{ind. hyp. on \eqref{eq:prog:relaxprem} with \eqref{eq:prog:relaxindhyp}} \\
		& \textbf{case}~\jstep{e}{e'}{p}{p'} \\
		& \enskip \text{done} \\
		& \textbf{case}~\jval{e} \\
		& \enskip \condv{V}{e}{R} = \top & \text{[ind. hyp.]} \\
		& \enskip \potv{V}{e}{\tpot{R}{\phi}} \le \potc{V}{\Gamma'} & \text{[ind. hyp.]} \\
		& \enskip \potv{V}{e}{\tpot{R}{\phi+\phi'}} \le \potc{V}{\Gamma',\phi'} & \text{[\eqref{eq:prog:relaxindhyp}]}
	\end{alignat}
\end{proof}

\subsection{Substitution}

\begin{proposition}\label{prop:ctxweaken}
	If $\jstyping{\Gamma}{e}{S}$ and $\jwfctxt{\Gamma,\Gamma'}$, then $\jstyping{\Gamma,\Gamma'}{e}{S}$.
\end{proposition}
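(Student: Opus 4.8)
The plan is to prove the proposition by induction on the derivation of $\jstyping{\Gamma}{e}{S}$, but first to \emph{strengthen} it so that the induction survives passing under binders. Concretely I would prove the more general \emph{insertion} form: if $\jstyping{\Gamma_1,\Gamma_2}{e}{S}$ and $\jwfctxt{\Gamma_1,\Gamma',\Gamma_2}$, then $\jstyping{\Gamma_1,\Gamma',\Gamma_2}{e}{S}$; the stated proposition is the special case $\Gamma_2 = \cdot$. This generalization is exactly what the rules \textsc{(T-Let)}, \textsc{(T-Abs)}, \textsc{(T-Fix)}, and \textsc{(T-MatD)} require, since each types a subexpression under a context extended with fresh bindings: there the induction hypothesis is applied with those fresh bindings playing the role of $\Gamma_2$, so $\Gamma'$ is inserted at the right position and, because $\jwfctxt{\Gamma_1,\Gamma',\Gamma_2}$ forces $\Gamma'$ to be in scope only of $\Gamma_1$, it never captures the locally bound variables.

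Before the main induction I would discharge the analogous weakening properties for the auxiliary judgments invoked in the typing premises. Sorting $\jsort{\Gamma}{\psi}{\Delta}$, type well-formedness $\jwftype{\Gamma}{S}$, subtyping $\jsubty{\Gamma}{T_1}{T_2}$, sharing $\jsharing{\Gamma}{S}{S_1}{S_2}$, and atomic typing $\jatyping{\Gamma}{a}{B}$ all weaken by a routine structural induction mirroring the main one. The validity judgment $\jprop{\Gamma}{\psi}$ is the one that must be argued \emph{semantically}: unfolding its definition $\forall E\in\interps{\Gamma}\!:\interp{\scrB(\Gamma)\implies\psi}(E)$, inserting $\Gamma'$ only enlarges the set of environments and only strengthens the antecedent $\scrB$ (free-potential and type-variable bindings contribute nothing to $\scrB$, and variable/path bindings merely add conjuncts), while $\psi$ and $\scrB(\Gamma_1,\Gamma_2)$ reference only variables already in scope; hence every environment of $\interps{\Gamma_1,\Gamma',\Gamma_2}$ restricts to one of $\interps{\Gamma_1,\Gamma_2}$ on which the original implication already holds.

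With these in hand the main induction is largely mechanical. Leaf rules (\textsc{(T-SimpAtom)}, \textsc{(T-Var)}, \textsc{(T-Imp)}) only inspect lookups and validity, both preserved under insertion, and the purely structural rules (\textsc{(S-Subtype)}, \textsc{(S-Inst)}, \textsc{(S-Relax)}, \textsc{(S-Transfer)}) weaken their single premise by the induction hypothesis and re-apply, discharging side conditions with the auxiliary lemmas. The genuinely new work is for every rule carrying a context-sharing premise $\jctxsharing{\Gamma}{\Gamma_1}{\Gamma_2}$, where I must exhibit a sharing of the inserted context. For this I would use a small lemma that $\Gamma'$ always admits the split $\jctxsharing{\Gamma'}{\Gamma'}{\Gamma'_{0}}$, where $\Gamma'_{0}$ is $\Gamma'$ with all potential annotations set to $0$; it follows from the sharing rules by routing every unit of potential to the first component, using $\phi = \phi + 0$ in \textsc{(Share-Pot)} and $\theta = \theta \oplus_{\Delta_\theta} 0$ in \textsc{(Share-Dtype)}. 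Splicing this split into the original sharing at the insertion point yields a valid context sharing of $\Gamma_1,\Gamma',\Gamma_2$, after which both sub-derivations are weakened by the induction hypothesis and the rule re-applied; the surplus free potential placed in the first branch is harmless, since the system is affine and leftover potential is simply discarded rather than forced to be spent.

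The step I expect to be the main obstacle is precisely this treatment of potential under sharing. Routing all of $\Gamma'$'s potential to one branch is sound but must be justified uniformly across every base type, including the inductive-datatype case, where the ``sum'' $\oplus_{\Delta_\theta}$ on abstract potential annotations is defined by recursion on the sort $\Delta_\theta$; establishing $\theta = \theta \oplus_{\Delta_\theta} 0$ for arbitrary $\Delta_\theta$ is the one place where the abstract-potential machinery of \autoref{sec:potentials} makes the otherwise-standard weakening argument nontrivial. The semantic weakening of $\jprop{\Gamma}{\psi}$ is the secondary subtlety, as it cannot be folded into the same syntactic induction and instead relies on the denotational characterization of validity from \autoref{sec:appendixvalidity}.
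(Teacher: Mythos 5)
Your top-level strategy---induction on the typing derivation, generalized to insertion of $\Gamma'$ into the middle of the context, with routine weakening lemmas for sorting, well-formedness, subtyping, sharing and atomic typing, and a denotational argument for weakening $\jprop{\Gamma}{\psi}$---is exactly the paper's approach: its entire proof of this proposition is the single line ``by induction on $\jstyping{\Gamma}{e}{S}$'', so everything you supply is detail the paper omits. Most of that detail is right, including your identification of the sharing premises as the place where real work happens and your split $\jctxsharing{\Gamma'}{\Gamma'}{\Gamma'_0}$ for the binary cases such as \textsc{(T-Let)}, \textsc{(T-MatD)} and the application rules.

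The one step that fails as you describe it is the case of \textsc{(T-Abs)} and \textsc{(T-Fix)}. You propose to handle ``every rule carrying a context-sharing premise'' by routing all of $\Gamma'$'s potential to the first component, but the premise of these two rules is the \emph{self}-sharing $\jctxsharing{\Gamma}{\Gamma}{\Gamma}$, in which both components are forced to be the whole context. Splicing in $\jctxsharing{\Gamma'}{\Gamma'}{\Gamma'_0}$ yields $\jctxsharing{\Gamma,\Gamma'}{\Gamma,\Gamma'}{\Gamma,\Gamma'_0}$, which is a valid sharing but not the premise $\jctxsharing{\Gamma,\Gamma'}{\Gamma,\Gamma'}{\Gamma,\Gamma'}$ the rule requires; the latter holds only when $\Gamma'$ is potential-free, since \textsc{(Share-Pot)} would force $\phi = \phi + \phi$. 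To close these cases you must instead weaken by the zero-potential erasure $\Gamma'_0$, re-apply \textsc{(T-Abs)}/\textsc{(T-Fix)} there, and then restore the discarded free potential of $\Gamma'$ via \textsc{(S-Relax)} followed by \textsc{(Sub-Pot)} and \textsc{(S-Transfer)}. Note that this detour recovers only the \emph{free} potential of $\Gamma'$ (the quantity measured by $\pot{\cdot}$), so the subcase where $\Gamma'$ binds a variable whose potential is bundled inside a datatype annotation $\theta$ still needs a separate argument or an explicit restriction; this residual difficulty is inherited from the paper's formulation rather than introduced by you, but your uniform recipe does not dispose of it. The rest of the proposal is sound.
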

\begin{proof}
	By induction on $\jstyping{\Gamma}{e}{S}$.
\end{proof}

\begin{proposition}\label{prop:ctxrelax}
	If $\jstyping{\Gamma_1}{e}{S}$ and $\jctxsharing{\Gamma}{\Gamma_1}{\Gamma_2}$, then $\jstyping{\Gamma}{e}{S}$.
\end{proposition}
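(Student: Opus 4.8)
The plan is to prove the statement by induction on the derivation of $\jstyping{\Gamma_1}{e}{S}$, carrying the witness $\jctxsharing{\Gamma}{\Gamma_1}{\Gamma_2}$ as extra data and showing that the surplus potential recorded in $\Gamma_2$ can always be threaded into a sub-derivation or harmlessly discarded. Two auxiliary facts drive the argument. First, \emph{sharing entails subtyping}: whenever $\jsharing{\Gamma}{S}{S_1}{S_2}$ holds, one has $\jsubty{\Gamma}{S}{S_1}$ (and symmetrically for $S_2$). This is immediate from the definitions of $\oplus_{\Delta_\theta}$ and $\sqsubseteq_{\Delta_\theta}$: since potentials are non-negative, $\theta = \theta_1 \oplus_{\Delta_\theta} \theta_2$ forces $\theta_1 \sqsubseteq_{\Delta_\theta} \theta$ and $\phi = \phi_1 + \phi_2$ forces $\phi_1 \le \phi$, which are exactly the side conditions of \textsc{(Sub-Dtype)} and \textsc{(Sub-Pot)}. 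Consequently $\Gamma$ is, binding-for-binding, a subtype of $\Gamma_1$ with at least as much free potential. Second, context sharing is \emph{commutative and associative} in its potential bookkeeping: from $\jctxsharing{\Gamma}{\Gamma_1}{\Gamma_2}$ and a further split $\jctxsharing{\Gamma_1}{\Gamma_{11}}{\Gamma_{12}}$ I can reassociate to obtain an intermediate $\Gamma'$ with $\jctxsharing{\Gamma}{\Gamma_{11}}{\Gamma'}$ and $\jctxsharing{\Gamma'}{\Gamma_{12}}{\Gamma_2}$, because $+$ on $\bbN$ and $\oplus_{\Delta_\theta}$ on annotations are commutative monoids.

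For the syntax-directed rules that split the context---\textsc{(T-Let)}, \textsc{(T-App)}, \textsc{(T-App-SimpAtom)}, \textsc{(T-MatP)}, \textsc{(T-MatD)}, and the atomic rules \textsc{(SimpAtom-Pair)}/\textsc{(SimpAtom-ConsD)}---I would use the reassociation fact to push the entire surplus $\Gamma_2$ into exactly one premise's sub-context, leaving the other premises untouched, and then apply the induction hypothesis to that premise before reassembling the rule. Rules with no potential split, such as \textsc{(T-Cond)}, are handled by extending the witness through the added path constraint via \textsc{(Share-Bind-Cond)} and invoking the hypothesis on each branch. The purely structural rules \textsc{(S-Relax)}, \textsc{(S-Transfer)}, and \textsc{(S-Subtype)} commute with relaxation directly. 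At the variable leaves \textsc{(T-Var)}/\textsc{(T-SimpAtom)} I would read off $x$ at its (more potent) $\Gamma$-type and re-weaken it to the expected type using \textsc{(S-Subtype)} together with the sharing-entails-subtyping fact; polymorphic bindings are shared by the identity (\textsc{(Share-Poly)}) and so need no adjustment.

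The crux---and the step I expect to fight---is the affine function rules \textsc{(T-Abs)} and \textsc{(T-Fix)}. Their premise $\jctxsharing{\Gamma_1}{\Gamma_1}{\Gamma_1}$ forces $\Gamma_1$ to be entirely potential-free (every free-potential term and every annotation $\theta$ must satisfy $\theta = \theta \oplus_{\Delta_\theta} \theta$, hence vanish), so all of $\Gamma$'s potential lands in $\Gamma_2$ and one cannot simply re-apply the rule under $\Gamma$, whose self-sharing premise now fails. Here I must actually \emph{discard} the surplus. The free-potential part is removed by the idiom \textsc{(S-Relax)} followed by \textsc{(Sub-Pot)}/\textsc{(S-Subtype)}: one lifts the value to a type carrying the surplus as extra potential and then subtypes that potential back down to the declared type. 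The potential buried inside datatype bindings of $\Gamma$ is exposed through the sharing-entails-subtyping fact, which exhibits $\Gamma$ as a pointwise subtype of $\Gamma_1$; discharging it cleanly requires the accompanying \emph{narrowing} observation (replacing a context variable by a subtype preserves typing), which must be established mutually with this proposition or packaged as a companion induction. Making this discard line up with the multiplicity annotations---so that the $\infty$-multiplicity arrow produced by \textsc{(T-Abs)} is recovered rather than the weaker finite-multiplicity arrow of \textsc{(T-Abs-Lin)}---is the delicate part of the proof.
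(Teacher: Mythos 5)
Your strategy coincides with the paper's: the paper proves this proposition with the single sentence ``by induction on $\jstyping{\Gamma_1}{e}{S}$'' and records no case analysis, and your induction, the reassociation of context sharing for the context-splitting rules, and the ``sharing entails subtyping'' bridge at \textsc{(T-Var)}/\textsc{(T-SimpAtom)} are the natural way to carry that induction out. Those parts of your write-up are sound.

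The gap is in the case you yourself single out, and your proposed repair does not close it. For \textsc{(T-Abs)} and \textsc{(T-Fix)} the premise $\jctxsharing{\Gamma_1}{\Gamma_1}{\Gamma_1}$ forces every annotation in $\Gamma_1$ to be zero, so the entire surplus lives in the discarded share $\Gamma_2$ and must be shed. Your \textsc{(S-Relax)}/\textsc{(Sub-Pot)}/\textsc{(S-Transfer)} idiom does dispose of surplus that $\pot{\cdot}$ classifies as free potential (standalone potential terms and outermost annotations $\phi$), because then $|\Gamma|=|\Gamma_1|$ holds and \textsc{(S-Transfer)} applies. But surplus recorded inside a datatype annotation $\theta$ or in a multiplicity is invisible to $\pot{\cdot}$ and is preserved by $|\cdot|$, so \textsc{(S-Transfer)} cannot move it, \textsc{(S-Relax)} cannot absorb it, and subtyping acts only on the result type. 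The ``narrowing observation'' you invoke to finish is not an independent companion lemma: narrowing a binding from $S_1$ to $S$ along the sharing-induced subtyping $S <: S_1$ is precisely this proposition again, and its own lambda case is blocked by the identical failed premise $\jctxsharing{\Gamma}{\Gamma}{\Gamma}$ --- so ``establishing it mutually'' is circular rather than a well-founded mutual induction. To actually discharge the case you would need either a separately proved admissible weakening principle for contexts carrying non-zero datatype annotations (which cannot be obtained by re-applying \textsc{(T-Abs)}, whose self-sharing premise fails under $\Gamma$), or a restriction of the proposition to surpluses consisting only of free potential. The paper's one-line proof is silent on this point as well, so you have faithfully reproduced its strategy; but as written your argument does not complete the \textsc{(T-Abs)}/\textsc{(T-Fix)} case.
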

\begin{proof}
	By induction on $\jstyping{\Gamma_1}{e}{S}$.
\end{proof}

\begin{proposition}\label{prop:typingvalinterp}
	If $\jstyping{\Gamma}{v}{\tpot{\tsubset{B}{\psi}}{\phi}}$ and $\jval{v}$, then $\jstyping{\Gamma}{v}{\tpot{\tsubset{B}{\nu = \calI(v)}}{\phi}}$.
\end{proposition}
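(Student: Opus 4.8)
The plan is to induct on the derivation of $\jstyping{\Gamma}{v}{\tpot{\tsubset{B}{\psi}}{\phi}}$, exploiting that $v$ is a closed value whose declared type is a subset type. Because values carry no free program variables, the rule \textsc{(T-Var)} is vacuous, and because the conclusion type is a subset (not arrow) type, the introduction rules \textsc{(T-Abs)}, \textsc{(T-Abs-Lin)}, \textsc{(T-Fix)} and the generalization rule \textsc{(S-Gen)} cannot be the last rule applied. Thus the only syntax-directed rule that can conclude the judgment is \textsc{(T-SimpAtom)}, which already assigns the sharp refinement: its conclusion is $\tsubset{B}{\nu = \calI(v)}$ with zero potential, so in that base case $\psi$ is literally $\nu = \calI(v)$, $\phi = 0$, and the goal coincides with the hypothesis.

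It then remains to push the sharpening through the structural rules. For \textsc{(S-Relax)} and \textsc{(S-Transfer)} this is immediate: I would apply the induction hypothesis to the (subset-typed) premise to obtain the sharp refinement with the same potential and context shape, then re-apply the very same rule, since neither the free-potential bookkeeping of \textsc{(S-Relax)} nor the potential-preserving rearrangement of \textsc{(S-Transfer)} touches the logical refinement. The case \textsc{(S-Inst)} is slightly more delicate because its premise types $v$ at a schema $\forall\alpha.S'$ rather than a subset type; I would handle it by generalizing the statement to schemas (sharpening the refinements appearing in the body $S'$) and invoking a type-substitution lemma, using that $\calI(v)$ depends only on the value $v$ and not on the type instantiated for $\alpha$, so the sharp refinement survives the substitution unchanged.

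The main obstacle is the \textsc{(S-Subtype)} case. Here the premise is $\jstyping{\Gamma}{v}{\tpot{\tsubset{B'}{\psi'}}{\phi'}}$ together with $\jsubty{\Gamma}{\tpot{\tsubset{B'}{\psi'}}{\phi'}}{\tpot{\tsubset{B}{\psi}}{\phi}}$; by the induction hypothesis I get the sharpened premise $\jstyping{\Gamma}{v}{\tpot{\tsubset{B'}{\nu = \calI(v)}}{\phi'}}$ and must re-derive a subtyping to the sharpened target $\tpot{\tsubset{B}{\nu = \calI(v)}}{\phi}$. The base-type inclusion $\jsubty{\Gamma}{B'}{B}$ and the refinement implication (now the trivial $\nu = \calI(v) \implies \nu = \calI(v)$) are unproblematic, but the potential inequality required by \textsc{(Sub-Pot)} must now be validated under the assumption $\nu = \calI(v)$ rather than $\psi'$, i.e. I need $\jprop{\Gamma, \nu : \tsubset{B'}{\nu = \calI(v)}}{\phi' \ge \phi}$. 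Since $\nu = \calI(v)$ pins $\nu$ to the interpretation of $v$, this reduces to re-instating the original inequality at that single point, which is licensed exactly when $\nu = \calI(v)$ entails $\psi'$, that is, when $v$ satisfies its own refinement $\psi'$. I would discharge this using the consistency of well-typed values established in the Progress development, which guarantees $\jprop{\Gamma}{\subst{\calI(v)}{\nu}{\psi'}}$; feeding this back makes the sharpened subtyping go through and \textsc{(S-Subtype)} re-applies. The one point demanding care is to confirm that this appeal to consistency is not circular, so that the consistency result for values is available as a prerequisite to the substitution lemmas and the two developments can be layered rather than proven mutually.
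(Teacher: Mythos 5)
Your proposal follows the same route the paper takes---the paper's entire proof is ``by induction on $\jstyping{\Gamma}{v}{\tpot{\tsubset{B}{\psi}}{\phi}}$''---and your case analysis is sound: \textsc{(T-SimpAtom)} is the only applicable syntax-directed rule and already yields the sharp refinement, and \textsc{(S-Subtype)} is indeed the one case where something must be proved, namely that the potential inequality of \textsc{(Sub-Pot)} survives replacing the hypothesis $\psi'$ by $\nu = \calI(v)$, which requires $\jprop{\Gamma}{\subst{\calI(v)}{\nu}{\psi'}}$. The only caution I would add is that the consistency statement you lean on is established in the Progress development only for contexts of the form $\overline{q \mid \alpha}$, whereas this proposition is invoked (inside the substitution lemma) under arbitrary contexts; it is cleaner to prove the needed fact ``a well-typed closed value validates its own refinement'' as a separate auxiliary lemma by the very same induction (in the spirit of the paper's Prop.~\ref{prop:freetofree}), which also dissolves any worry about circular layering.
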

\begin{proof}
	By induction on $\jstyping{\Gamma}{v}{\tpot{\tsubset{B}{\psi}}{\phi}}$.
\end{proof}

\begin{proposition}\label{prop:freetofree}
	If $\jstyping{\Gamma}{v}{\tpot{R}{\phi}}$ and $\jval{v}$, then $\jprop{\Gamma}{\pot{\Gamma} \ge \subst{\calI(v)}{\nu}{\phi}}$.
\end{proposition}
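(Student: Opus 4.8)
The plan is to argue by induction on the derivation of $\jstyping{\Gamma}{v}{\tpot{R}{\phi}}$, exploiting the fact that $v$ is a closed value: since program variables are not themselves values, the rule \textsc{(T-Var)} never concludes the judgment, and the only applicable syntax-directed rules are the value-introduction rules \textsc{(T-SimpAtom)}, \textsc{(T-Abs)}, \textsc{(T-Abs-Lin)}, and \textsc{(T-Fix)}, together with the structural rules \textsc{(S-*)}. The quantity $\pot{\Gamma}$ is, by its definition, a sum of potential annotations each sorted $\bbN$, hence $\jprop{\Gamma}{\pot{\Gamma} \ge 0}$ always holds; this observation discharges every case in which the top-level annotation is $0$.

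For the value-introduction rules the top-level potential is indeed $0$: \textsc{(T-SimpAtom)} assigns the type $\tsubset{B}{\nu = \calI(a)}$, which abbreviates $\tpot{(\tsubset{B}{\nu=\calI(a)})}{0}$, and the three abstraction rules assign arrow types carrying no outer annotation. In each case $\subst{\calI(v)}{\nu}{\phi} = \subst{\calI(v)}{\nu}{0} = 0$, so the goal follows from non-negativity of $\pot{\Gamma}$. In particular, all the interesting bookkeeping for a constructor $C_j(\cdots)$ --- the free-potential term $\pi.\mathbf{j}(\calI(a_0))(\theta)$ demanded by \textsc{(SimpAtom-ConsD)} --- is already spent \emph{inside} the value, leaving $\phi = 0$ at the top.

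The substantive cases are the structural rules, where I would invoke the induction hypothesis on the premise and then track how the top-level annotation changes. For \textsc{(S-Relax)}, which passes from $\jstyping{\Gamma}{v}{\tpot{R}{\phi_0}}$ to $\jstyping{\Gamma,\phi'}{v}{\tpot{R}{\phi_0+\phi'}}$, the added potential $\phi'$ is sorted $\bbN$ under $\Gamma$ and hence $\nu$-free, so $\subst{\calI(v)}{\nu}{(\phi_0+\phi')} = \subst{\calI(v)}{\nu}{\phi_0} + \phi'$ while $\pot{\Gamma,\phi'} = \pot{\Gamma}+\phi'$; adding $\phi'$ to both sides of the hypothesis closes the case. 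For \textsc{(S-Transfer)} the premise $\jprop{\Gamma}{\pot{\Gamma}=\pot{\Gamma'}}$ lets me rewrite the hypothesis $\jprop{\Gamma'}{\pot{\Gamma'} \ge \subst{\calI(v)}{\nu}{\phi}}$ directly into the goal. For \textsc{(S-Subtype)} with $\jsubty{\Gamma}{\tpot{R_1}{\phi_1}}{\tpot{R_2}{\phi_2}}$, rule \textsc{(Sub-Pot)} supplies $\jprop{\Gamma,\bindvar{\nu}{R_1}}{\phi_1 \ge \phi_2}$; since $v$ is a value of type $R_1$, Proposition~\ref{prop:typingvalinterp} shows $\calI(v)$ meets $R_1$'s refinement, so I may instantiate $\nu := \calI(v)$ in this validity to obtain $\subst{\calI(v)}{\nu}{\phi_1} \ge \subst{\calI(v)}{\nu}{\phi_2}$, which chains with the hypothesis $\pot{\Gamma} \ge \subst{\calI(v)}{\nu}{\phi_1}$.

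The main obstacle is the instantiation rule \textsc{(S-Inst)}, since its premise types $v$ at a schema $\forall\alpha.S$, a form not covered by the statement, so the induction hypothesis does not apply verbatim. I would resolve this by strengthening the statement to range over all type schemas, declaring the ``top-level potential'' of a schema $\forall\alpha.S$ to be $0$. This is justified because a polymorphic value must be introduced (up to structural rules) by \textsc{(S-Gen)}, whose self-sharing premise $\jsharing{\Gamma,\alpha}{S}{S}{S}$ forces --- through \textsc{(Share-Pot)}, which demands $\phi = \phi + \phi$ at every annotation --- all potential in $S$ to vanish; moreover, a closed value cannot inhabit a bare type-variable type, so the substitution $\subst{\tpot{\tsubset{B}{\psi}}{\phi_T}}{\alpha}{S}$ can reintroduce a nonzero outer annotation only at an arrow type, whose top-level potential the generalized statement treats as $0$. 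With this strengthened hypothesis, \textsc{(S-Gen)} and \textsc{(S-Inst)} both reduce to $\jprop{\Gamma}{\pot{\Gamma} \ge 0}$, and the remaining cases proceed exactly as above.
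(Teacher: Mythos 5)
Your proof follows the same route as the paper's, which simply states ``by induction on $\jstyping{\Gamma}{v}{\tpot{R}{\phi}}$'' and leaves all case analysis implicit; you have correctly identified that the value-introduction rules carry zero top-level potential and that the real work lies in the structural rules \textsc{(S-Relax)}, \textsc{(S-Transfer)}, \textsc{(S-Subtype)}, and \textsc{(S-Gen)}/\textsc{(S-Inst)}. The strengthening to type schemas and the observation that closed values cannot inhabit bare type-variable types (so instantiation cannot reintroduce a nonzero outer annotation) are exactly the bookkeeping the paper's one-line proof presupposes.
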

\begin{proof}
	By induction on $\jstyping{\Gamma}{v}{\tpot{R}{\phi}}$.
\end{proof}

\begin{proposition}\label{prop:typingvalsharingsplit}
	If $\jstyping{\Gamma}{v}{S}$, $\jsharing{\Gamma}{S}{S_1}{S_2}$ and $\jval{v}$, then there exist $\Gamma_1$ and $\Gamma_2$ such that $\jctxsharing{\Gamma}{\Gamma_1}{\Gamma_2}$, and $\jstyping{\Gamma_1}{v}{S_1}$, $\jstyping{\Gamma_2}{v}{S_2}$.
\end{proposition}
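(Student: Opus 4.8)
The plan is to proceed by induction on the derivation of $\jstyping{\Gamma}{v}{S}$, inverting the type-sharing judgment $\jsharing{\Gamma}{S}{S_1}{S_2}$ in each case to read off the shapes of $S_1$ and $S_2$, and then to build the context split $\jctxsharing{\Gamma}{\Gamma_1}{\Gamma_2}$ so that the free potential of $\Gamma$ is apportioned exactly as the sharing apportions the potential of $S$. Since $v$ is a value, the last rule of the derivation is either an atomic-typing rule lifted through \textsc{(T-SimpAtom)}, one of the function-introduction rules \textsc{(T-Abs)}, \textsc{(T-Abs-Lin)}, \textsc{(T-Fix)}, or one of the structural rules \textsc{(S-Gen)}, \textsc{(S-Inst)}, \textsc{(S-Subtype)}, \textsc{(S-Transfer)}, \textsc{(S-Relax)}; note that \textsc{(T-Var)} and \textsc{(T-Imp)} cannot apply because values contain no program variables.

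First I would dispatch the function cases: a function type carries zero top-level potential, so the inversion of \textsc{(Share-Pot)} forces both halves to carry zero free potential, and \textsc{(Share-Arrow)}/\textsc{(Share-Poly)} only split the multiplicity $m = m_1 + m_2$. I then recover $\Gamma_1,\Gamma_2$ by splitting the multiplicity of the context accordingly (e.g.\ $m_i \times \Gamma'$ in the \textsc{(T-Abs-Lin)} case, and self-sharing $\Gamma$ in the \textsc{(T-Abs)}/\textsc{(T-Fix)} cases) and re-applying the same introduction rule on each side. Next I would treat atomic values through an inner induction on $\jatyping{\Gamma}{v}{B}$ matched with the inversion of the sharing rule: the scalar cases are immediate, \textsc{(Share-Prod)} recurses on the two components, and the central case is a constructor $v = C_j(v_0,\tuple{v_1,\cdots,v_{m_j}})$ at a datatype $\tinduct{C}{T}{m}{\theta}$ shared by \textsc{(Share-Dtype)} into annotations $\theta = \theta_1 \oplus_{\Delta_\theta} \theta_2$. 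Here I rely on the well-definedness rule \textsc{(Dtype-Index)}, which is precisely what guarantees both that the extracted potential splits as $\pi.\mathbf{j}(\calI(v_0))(\theta) = \pi.\mathbf{j}(\calI(v_0))(\theta_1) + \pi.\mathbf{j}(\calI(v_0))(\theta_2)$ and that the shifted child types share in lockstep; this lets me split the free-potential term $\pi.\mathbf{j}(\calI(v_0))(\theta)$ carried by the context, drive the induction hypothesis on the content $v_0$ and on each child $v_i$, and then reassemble both halves with \textsc{(SimpAtom-ConsD)}.

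For the structural cases the strategy is to apply the induction hypothesis to the subderivation and then re-apply the structural rule to each of the two halves. The cases \textsc{(S-Gen)} and \textsc{(S-Inst)} pair with \textsc{(Share-Poly)} and with the definition of type substitution, while \textsc{(S-Transfer)} and \textsc{(S-Relax)} are handled by redistributing the flexible free potential of the context—using \textsc{(S-Transfer)} itself, together with the invariant that the total free potential $\pot{\Gamma}$ is preserved under sharing, to re-align the split produced by the induction hypothesis with the ambient context. The main obstacle I anticipate is the \textsc{(S-Subtype)} case, where $v$ is typed at $S$ only after a subtyping step $\jsubty{\Gamma}{T_1}{S}$: to invoke the induction hypothesis I must first push the sharing of $S$ back through the subtyping, i.e.\ establish the commutation lemma that whenever $\jsubty{\Gamma}{T_1}{T_2}$ and $\jsharing{\Gamma}{T_2}{S_1}{S_2}$ there exist $T_1^{(1)},T_1^{(2)}$ with $\jsharing{\Gamma}{T_1}{T_1^{(1)}}{T_1^{(2)}}$ and $\jsubty{\Gamma}{T_1^{(i)}}{S_i}$. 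Proving this commutation reduces, at the datatype level, to exactly the consistency between subtyping and sharing baked into \textsc{(Dtype-Index)} (cf.\ \propref{subtyping} and \propref{sharingsplit}), and at the potential level it requires a little $\bbN$-arithmetic: since subtyping only lets potential decrease, the surplus can always be absorbed into one of the two shares without making any annotation negative (expressed, if need be, with $\mathbf{ite}$ in the refinement language). Once this commutation is in hand, each structural case closes by the induction hypothesis followed by the corresponding structural rule applied on both sides.
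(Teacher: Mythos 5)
Your proposal is correct and follows the paper's own (one-line) strategy exactly: the paper proves this proposition simply ``by induction on $\jstyping{\Gamma}{v}{S}$,'' and your case analysis---the inner induction on atomic typing for values, the reliance on \textsc{(Dtype-Index)} for the constructor case, and the re-application of each structural rule on both halves---is the natural unfolding of that induction. The one ingredient you make explicit that the paper leaves implicit is the commutation of sharing past subtyping needed for the \textsc{(S-Subtype)} case, and your sketch of it (absorbing the potential surplus into one of the two shares, with the datatype level discharged by the consistency conditions of \textsc{(Dtype-Index)}) is sound.
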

\begin{proof}
	By induction on $\jstyping{\Gamma}{v}{S}$.
\end{proof}

\begin{proposition}\label{prop:typingvalzero}
	If $\jstyping{\Gamma}{v}{S}$, $\jsharing{\Gamma}{S}{S}{S}$ and $\jval{v}$, then there exists $\Gamma'$ such that $\jctxsharing{\Gamma}{\Gamma}{\Gamma'}$ (so $\jctxsharing{\Gamma'}{\Gamma'}{\Gamma'}$), and $\jstyping{\Gamma'}{v}{S}$.
\end{proposition}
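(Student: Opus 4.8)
The plan is to read off $\Gamma'$ as the \emph{potential-free residual} of $\Gamma$ and then establish the two conclusions separately. First I would observe that the hypothesis $\jsharing{\Gamma}{S}{S}{S}$ pins $S$ down almost completely: inverting the sharing rules of \autoref{fig:sharing-and-subtyping}, every annotation that sharing splits must satisfy an equation of the shape $\phi = \phi + \phi$ (from \textsc{Share-Pot}), $\theta = \theta \oplus_{\Delta_\theta} \theta$ (from \textsc{Share-Dtype}), or $m = m + m$ (from \textsc{Share-TVar}/\textsc{Share-Arrow}). Hence the outermost potentials and all datatype annotations $\theta$ occurring in $S$ are forced to $0$; in particular $S$ carries no free potential and demands none when $v$ is built. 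I will record this as ``$S$ is potential-free'', the only consequence I actually need being that its datatype annotations are zero.

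Next I would build $\Gamma'$ binding-by-binding: each free-potential entry $\phi$ of $\Gamma$ is replaced by $0$, and each type binding $x\!:\!S_x$ by $x\!:\!\hat{S}_x$, where $\hat{S}_x$ is $S_x$ with every (split) annotation zeroed. The key combinatorial fact is that every well-formed type self-shares into itself and its own skeleton, i.e.\ $\jsharing{\Gamma}{S_x}{S_x}{\hat{S}_x}$ is derivable: keep the full annotation in the first component and put $0$ in the second, using $\phi = \phi + 0$, $\theta = \theta \oplus_{\Delta_\theta} \mathbf{0}$, and $m = m + 0$, while polymorphic bindings are already self-sharing by \textsc{Wf-Poly}. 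Threading these binding-wise shares through the rules of \autoref{fig:ctxsharing} (with $\phi = \phi + 0$ at each free-potential binding) yields $\jctxsharing{\Gamma}{\Gamma}{\Gamma'}$; and since $\pot{\Gamma'} = 0$, the residual trivially self-shares, giving the parenthetical $\jctxsharing{\Gamma'}{\Gamma'}{\Gamma'}$ for free.

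It remains to show $\jstyping{\Gamma'}{v}{S}$, and this is where the real work lies and the main obstacle sits. A naive induction on $\jstyping{\Gamma}{v}{S}$ that simply pins the first output context to $\Gamma$ does \emph{not} go through, because the structural rules \textsc{S-Subtype}, \textsc{S-Relax}, and \textsc{S-Inst} pass through intermediate types that need not be potential-free, so the induction hypothesis in the form of this very proposition is unavailable. Instead I would invert the value typing down to an atomic typing $\jatyping{\Gamma}{v}{B}$ with $S = \tpot{\tsubset{B}{\psi}}{0}$, and induct on that derivation, showing that a value whose base type has all-zero datatype annotations can be retyped in the zero-potential residual $\Gamma'$. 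The crux is the constructor case \textsc{SimpAtom-ConsD}: the free potential it requires is $\pi.\mathbf{j}(\calI(a_0))(\theta)$, which vanishes because $\theta$ is the zero annotation, and the children receive the shifted annotations $\lhd.\mathbf{j}(\calI(a_0))(\theta).\mathbf{i}$, which are again zero, so each child sits at a potential-free datatype and the induction applies to it---exactly the bookkeeping already carried out in Lemma~\ref{lem:consistentcons}. Alternatively, one can obtain $\jstyping{\Gamma'}{v}{S}$ from Proposition~\ref{prop:typingvalsharingsplit} with $S_1 = S_2 = S$ and then specialize the resulting split to the minimal residual $\Gamma'$, using that a potential-free type imposes no lower bound on the context's free potential. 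Either way, the one genuine difficulty is discharging the constructor case, namely verifying that extraction and shifting of the zero annotation produce no potential demand.
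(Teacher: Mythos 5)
Your proposal is correct in substance and lands on essentially the same strategy as the paper, whose entire proof is the single line ``by induction on $\jstyping{\Gamma}{v}{S}$''; what you supply is the bookkeeping that this one-liner hides. Your two key observations---that the residual $\Gamma'$ is obtained by zeroing every free potential and every splittable annotation of $\Gamma$ (via $\phi = \phi + 0$, $\theta = \theta \oplus_{\Delta_\theta} \mathbf{0}$, $m = m + 0$ binding-wise, threaded through \autoref{fig:ctxsharing}), and that the hypothesis $\jsharing{\Gamma}{S}{S}{S}$ is precisely what lets $v$ be typed without potential---are exactly what any complete induction needs, and your identification of \textsc{(SimpAtom-ConsD)} as the one genuinely delicate case is right. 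Three points of care. First, self-sharing of a multiplicity gives $m = m + m$, hence $m \in \{0, \infty\}$ rather than $m = 0$; this is harmless since you only use the datatype annotations. Second, what the constructor case really needs is that $\pi.\mathbf{j}(\calI(a_0))(\theta) = 0$ and that the shifted child annotations $\lhd.\mathbf{j}(\calI(a_0))(\theta).\mathbf{i}$ again self-share; both follow from the coherence premises of \textsc{(Dtype-Index)} applied to $\theta = \theta \oplus_{\Delta_\theta} \theta$ (which forces $\pi.\mathbf{j}(y)(\theta) = 2\,\pi.\mathbf{j}(y)(\theta)$), and arguing this way is slightly more robust than claiming $\theta$ is literally the zero annotation, since for some sorts $\oplus_{\Delta_\theta}$ is not a genuine sum. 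Third, your reduction to the atomic judgment silently drops the abstraction values $\eabs{x}{e_0}$ and $\efix{f}{x}{e_0}$; these need a separate but trivial case, where \textsc{(T-Abs)} and \textsc{(T-Fix)} already carry the premise $\jctxsharing{\Gamma}{\Gamma}{\Gamma}$ so one may take $\Gamma' = \Gamma$. Your diagnosis that a naive induction stalls at \textsc{(S-Subtype)} and \textsc{(S-Relax)}---because the intermediate type need not self-share---is a genuine point the paper's proof does not acknowledge, and your detour through atomic typing (mirroring how the paper itself proves the Consistency lemma) is a legitimate repair; the alternative route through Proposition~\ref{prop:typingvalsharingsplit} is vaguer, since that proposition yields some split $\jctxsharing{\Gamma}{\Gamma_1}{\Gamma_2}$ rather than the required $\jctxsharing{\Gamma}{\Gamma}{\Gamma'}$, and the ``specialization to the minimal residual'' would itself need the argument you give in the primary route.
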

\begin{proof}
	By induction on $\jstyping{\Gamma}{v}{S}$.
\end{proof}

\begin{lemma}\label{lem:substprop}
	If $\Gamma,\psi,\Gamma' \vdash \calJ$ and $\jprop{\Gamma}{\psi}$, then $\Gamma,\Gamma' \vdash \calJ$.
\end{lemma}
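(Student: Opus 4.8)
The plan is to proceed by induction on the derivation of $\Gamma,\psi,\Gamma' \vdash \calJ$, treating $\calJ$ uniformly across all five judgment forms (sorting, well-formedness, subtyping, sharing, and typing). The key observation is that a path constraint sitting in the middle of a context never enters a derivation \emph{structurally}: no typing, sorting, subtyping, or sharing rule inspects a path condition by name, since $\psi$ is a Boolean-sorted refinement rather than a program- or refinement-variable binding. Consequently, for every structural rule the induction hypothesis applies directly to the premises --- deleting $\psi$ from the middle of the context commutes with every context extension performed by a premise (e.g. appending $\nu:B_1$ in \textsc{(Sub-Subset)}, or $x:T_x$ in \textsc{(T-Abs)}), because those extensions append to the right of $\Gamma'$ and leave the $\Gamma,\psi,\Gamma'$ prefix shape intact. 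Thus the only places where $\psi$ can actually be consulted are the validity-checking side-conditions $\jprop{\cdot}{\cdot}$ and the context-well-formedness premises $\jwfctxt{\cdot}$.

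First I would isolate the semantic core as an auxiliary claim: \emph{if $\jprop{\Gamma,\psi,\Gamma'}{\phi}$ and $\jprop{\Gamma}{\psi}$, then $\jprop{\Gamma,\Gamma'}{\phi}$.} This follows directly from the denotational definition of validity checking. Since path conditions do not affect the set of environments, $\interps{\Gamma,\psi,\Gamma'} = \interps{\Gamma,\Gamma'}$; and since $\scrB$ conjoins each path condition while $\scrB_\Gamma$ of the bindings in $\Gamma'$ never references it, $\scrB(\Gamma,\psi,\Gamma')$ is semantically $\scrB(\Gamma,\Gamma') \wedge \psi$. Fix $E \in \interps{\Gamma,\Gamma'}$ with $E \models \scrB(\Gamma,\Gamma')$. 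Restricting $E$ to the prefix $\Gamma$ yields $E|_\Gamma \in \interps{\Gamma}$ satisfying $\scrB(\Gamma)$ (a conjunct of $\scrB(\Gamma,\Gamma')$ mentioning only $\Gamma$-variables), so $\jprop{\Gamma}{\psi}$ gives $E|_\Gamma \models \psi$; because context well-formedness guarantees $\jsort{\Gamma}{\psi}{\bbB}$, the free variables of $\psi$ all lie in $\Gamma$, hence $E \models \psi$ as well. Therefore $E \models \scrB(\Gamma,\Gamma') \wedge \psi$, and the hypothesis $\jprop{\Gamma,\psi,\Gamma'}{\phi}$ delivers $E \models \phi$, establishing $\jprop{\Gamma,\Gamma'}{\phi}$. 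The same argument covers validity premises whose context further extends $\Gamma,\psi,\Gamma'$, by instantiating the auxiliary claim with the longer suffix.

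With the auxiliary claim in hand, the induction is routine. For each structural premise I invoke the induction hypothesis; for each validity-checking side-condition I apply the auxiliary claim; and for the context-well-formedness obligations I note that deleting $\psi$ preserves well-formedness, since the sorting and type-well-formedness of all bindings in $\Gamma'$ are insensitive to the presence of the path condition. I expect the main obstacle to be purely bookkeeping: carefully tracking that $\scrB$ and $\interps{\cdot}$ treat path conditions transparently, and justifying the environment-restriction step in the auxiliary claim --- specifically that the truth of $\scrB(\Gamma)$ and of $\psi$ under $E$ depends only on $E|_\Gamma$. This rests on the standard fact that $\interp{\cdot}$ of a refinement depends only on the values assigned to its free variables, which is immediate from the evaluation semantics but must be stated to make the restriction argument rigorous.
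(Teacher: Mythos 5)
Your proposal is correct and follows essentially the same route as the paper, whose entire proof is the single line ``by induction on $\Gamma,\psi,\Gamma' \vdash \calJ$.'' You have simply filled in the details the paper leaves implicit --- in particular the auxiliary semantic claim that $\jprop{\Gamma,\psi,\Gamma'}{\phi}$ and $\jprop{\Gamma}{\psi}$ yield $\jprop{\Gamma,\Gamma'}{\phi}$, which is indeed the only place the hypothesis is used, and your justification of it via the denotational definition of validity checking is sound.
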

\begin{proof}
	By induction on $\Gamma,\psi,\Gamma' \vdash \calJ$.
\end{proof}

\begin{lemma}\label{lem:substintoref}
	Suppose $\calJ$ is a judgment other than typing.
	\begin{enumerate}
		\item If $\Gamma_1,\bindvar{x}{\tpot{\tsubset{B}{\psi}}{\phi}},\Gamma' \vdash \calJ$, $\jstyping{\Gamma_2}{t}{\tpot{\tsubset{B}{\psi}}{\phi}}$, $\jval{t}$ and $\jctxsharing{\Gamma}{\Gamma_1}{\Gamma_2}$, then $\Gamma,\subst{\calI(t)}{x}{\Gamma'} \vdash \subst{\calI(t)}{x}{\calJ}$.
		\item If $\Gamma_1,\bindvar{x}{S_x},\Gamma' \vdash \calJ$, $S_x$ is non-scalar/poly, $\jstyping{\Gamma_2}{t}{S_x}$, $\jval{t}$ and $\jctxsharing{\Gamma}{\Gamma_1}{\Gamma_2}$, then $\Gamma,\Gamma' \vdash \calJ$.
	\end{enumerate}
\end{lemma}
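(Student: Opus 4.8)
The plan is to prove both parts simultaneously by induction on the derivation of the refinement-level judgment $\Gamma_1,\bindvar{x}{S_x},\Gamma' \vdash \calJ$, where $\calJ$ ranges over sorting $\jsort{}{}{}$, type well-formedness, context well-formedness $\jwfctxt{}$, subtyping $\jsubty{}{}{}$, and sharing $\jsharing{}{}{}{}$. Since all of these ultimately bottom out in validity checking $\jprop{}{}$, the genuine content lives there, and every other rule is handled by a routine congruence step: I apply the inductive hypothesis to each premise and re-assemble the conclusion, using that the substitution $\subst{\calI(t)}{x}{\cdot}$ commutes with every syntactic construct of refinements, types, and contexts (in particular descending into the residual bindings of $\Gamma'$, whose well-formedness premises are subderivations).

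Part (2) is essentially degenerate. When $S_x$ is a function or polymorphic type, the sorting rule \textsc{(S-Var)} cannot reflect $x$ into the refinement logic, since it requires $\Gamma(x)$ to be a base type; hence $x$ occurs in no well-sorted refinement appearing in $\calJ$. Correspondingly, the binding contributes $\top$ to the collected constraints, i.e.\ $\scrB_\Gamma(\bindvar{x}{S_x}) = \top$, and is passed through unchanged by the denotation $\interps{\cdot}$. So the induction simply deletes the binding, performing no substitution, and reconstructs $\Gamma,\Gamma' \vdash \calJ$; the context-sharing hypothesis $\jctxsharing{\Gamma}{\Gamma_1}{\Gamma_2}$ guarantees only that $\Gamma$ and $\Gamma_1$ share the same non-potential structure, which is all the refinement-level judgments observe.

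For part (1) the two interesting leaves are sorting and validity checking. In \textsc{(S-Var)} applied to $x$ itself, I must check that $\calI(t)$ carries the same sort as $x$: inverting $\jstyping{\Gamma_2}{t}{\tpot{\tsubset{B}{\psi}}{\phi}}$ gives $t$ base type $B$, and a routine sub-lemma on well-sortedness of interpretations, together with functionality of $\jkind{B}{\Delta}$, yields $\jsort{\Gamma}{\calI(t)}{\Delta}$, exactly the sort assigned to $x$. For validity checking I reduce the syntactic substitution to a semantic one: every environment in $\interps{\Gamma,\subst{\calI(t)}{x}{\Gamma'}}$ satisfying the collected constraints extends, by binding $x \mapsto \interp{\calI(t)}$, to an environment of the original context, after which the two judgments evaluate identically. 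The one constraint this extension must still discharge is the refinement on $x$, namely $\subst{x}{\nu}{\psi}$, which after substitution becomes $\subst{\calI(t)}{\nu}{\psi}$; because $t$ is a \emph{well-typed value}, consistency (via inversion of its typing through \textsc{(T-SimpAtom)} and \textsc{(S-Subtype)}) gives $\jprop{\Gamma_2}{\subst{\calI(t)}{\nu}{\psi}}$. Since $\jctxsharing{\Gamma}{\Gamma_1}{\Gamma_2}$ only redistributes free potential, $\scrB(\Gamma)=\scrB(\Gamma_2)$, so this validity transfers to $\Gamma$, and \lemref{substprop} then lets me eliminate it from the obligations.

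The main obstacle will be precisely this validity-checking leaf: aligning the denotational environments before and after substitution and invoking the consistency of the value $t$ to know that $\calI(t)$ satisfies the refinement $\psi$ that $x$ was assumed to carry. Once that semantic bridge is in place, the remaining rules follow by the congruence argument, including the datatype subtyping and sharing rules \textsc{(Sub-Dtype)} and \textsc{(Share-Dtype)}, whose side conditions are themselves validity judgments and so are handled by the same leaf.
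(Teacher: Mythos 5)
Your proposal is correct and matches the paper's proof, which is stated in one line as ``by induction on $\Gamma,\bindvar{x}{S_x},\Gamma' \vdash \calJ$''; your elaboration of the \textsc{(S-Var)} and validity-checking leaves (using $\scrB_\Gamma(\bindvar{x}{S_x})=\top$ for non-scalar bindings, and the consistency of the well-typed value $t$ with its refinement $\psi$ for part (1)) is exactly the content that induction must supply.
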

\begin{proof}
	By induction on $\Gamma,\bindvar{x}{S_x},\Gamma' \vdash \calJ$.
\end{proof}

\begin{lemma}\label{lem:substatom}\
  \begin{enumerate}
    \item If $\jatyping{\Gamma_1,\bindvar{x}{\trefined{B_x}{\psi}{\phi}},\Gamma'}{e}{B}$, $\jstyping{\Gamma_2}{t}{\trefined{B_x}{\psi}{\phi}}$, $\jval{t}$ and $\jctxsharing{\Gamma}{\Gamma_1}{\Gamma_2}$, then $\jatyping{\Gamma,\subst{\calI(t)}{x}{\Gamma'}}{\subst{t}{x}{a}}{\subst{\calI(t)}{x}{B}}$. 
    \item If $\jatyping{\Gamma_1,\bindvar{x}{S_x},\Gamma'}{a}{B}$, $S_x$ is non-scalar/poly, $\jstyping{\Gamma_2}{t}{S_x}$, $\jval{t}$ and $\jctxsharing{\Gamma}{\Gamma_1}{\Gamma_2}$, then $\jatyping{\Gamma,\Gamma'}{\subst{t}{x}{a}}{B}$.
  \end{enumerate}
\end{lemma}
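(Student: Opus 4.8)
The plan is to prove parts (1) and (2) simultaneously by induction on the derivation of the atomic-typing judgment $\jatyping{\Gamma_1,\bindvar{x}{S_x},\Gamma'}{a}{B}$, where $S_x = \trefined{B_x}{\psi}{\phi}$ in part (1) and $S_x$ is non-scalar or polymorphic in part (2). This induction runs mutually with the substitution lemma for the full typing judgment $\jstyping{}{}{}$, which is needed to handle the premise $\jstyping{\Gamma_1}{a_0}{T_j}$ of rule \textsc{(SimpAtom-ConsD)}, and it appeals to \lemref{substintoref} to discharge the sorting, well-formedness, and sharing premises that are not themselves typing judgments. A fact used throughout is that interpretation commutes with value substitution, i.e.\ $\calI(\subst{t}{x}{a}) = \subst{\calI(t)}{x}{\calI(a)}$; this is what lets substituted path constraints and potential annotations match the shapes the rules demand.

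The base cases are routine. For \textsc{(SimpAtom-Bool)}, \textsc{(SimpAtom-Nat)}, and \textsc{(SimpAtom-Unit)} the atom and its base type are closed, so substitution is the identity and we simply re-apply the rule in the new context (using \propref{ctxweaken} to re-establish well-formedness). For \textsc{(SimpAtom-Var)} typing a variable $y$: if $y \neq x$ the binding is read from $\Gamma_1$ or from $\Gamma'$ and survives substitution with base type $\subst{\calI(t)}{x}{B}$; if $y = x$, which is only possible in part (1) since \textsc{(SimpAtom-Var)} requires a subset-type binding, then $\subst{t}{x}{x} = t$, and a standard inversion of $\jstyping{\Gamma_2}{t}{\trefined{B_x}{\psi}{\phi}}$ yields $\jatyping{\Gamma_2}{t}{B_x}$, which we relax to the combined context via \propref{ctxrelax}, noting that $x \notin \varsof{B_x}$ so $\subst{\calI(t)}{x}{B_x} = B_x$.

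The interesting cases are \textsc{(SimpAtom-Pair)} and \textsc{(SimpAtom-ConsD)}, where the context is split by a premise $\jctxsharing{\Gamma_1,\bindvar{x}{S_x},\Gamma'}{\cdot}{\cdot}$. The binding $x{:}S_x$ is apportioned by a sharing $\jsharing{}{S_x}{S_{x,1}}{S_{x,2}}$, so to feed the induction hypothesis to each sub-derivation I first apply \propref{typingvalsharingsplit} to obtain a split $\jctxsharing{\Gamma_2}{\Gamma_{2,1}}{\Gamma_{2,2}}$ with $\jstyping{\Gamma_{2,i}}{t}{S_{x,i}}$, then invoke the IH componentwise and recombine with the same constructor rule under a context sharing of the substituted context. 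In \textsc{(SimpAtom-ConsD)} the content premise $\jstyping{\Gamma_1}{a_0}{T_j}$ is discharged by the mutually-inductive typing substitution lemma, while the free-potential term $\pi.\mathbf{j}(\calI(a_0))(\theta)$ and the shifted child types $\tinduct{C}{T}{m}{\lhd.\mathbf{j}(\calI(a_0))(\theta).\mathbf{i}}$ must be shown to become $\pi.\mathbf{j}(\calI(\subst{t}{x}{a_0}))(\subst{\calI(t)}{x}{\theta})$ and the correspondingly shifted types after substitution; this is precisely where the $\calI$-substitution commutation is essential.

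For part (2) the reasoning is lighter: a non-scalar or polymorphic $x$ cannot occur in any refinement or potential, since the refinement logic is first-order and cannot mention function values, so $\subst{\calI(t)}{x}{\Gamma'} = \Gamma'$ and $\subst{\calI(t)}{x}{B} = B$, and moreover $x$ can never be the subject of \textsc{(SimpAtom-Var)}. Hence substitution merely deletes the binding and replaces $x$ within nested full-typing premises, handled by \lemref{substintoref}(2) together with the typing-level IH. I expect the main obstacle to be the bookkeeping in the \textsc{(SimpAtom-ConsD)} case: keeping the dependent shift operator $\lhd$ and extraction operator $\pi$ aligned across the substitution while simultaneously splitting the value-typing of $t$ along the constructor's context sharing, so that the reconstructed derivation exactly fits the premises of \textsc{(SimpAtom-ConsD)}.
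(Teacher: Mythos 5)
Your proposal matches the paper's proof in all essentials: induction on the atomic-typing derivation, mutual recursion with the full substitution theorem (\theoref{substitution}) for the content premise of \textsc{(SimpAtom-ConsD)}, \propref{typingvalsharingsplit} to split the value typing of $t$ along context sharing, \lemref{substintoref} for the non-typing side conditions, and commutation of $\calI(\cdot)$ with substitution to align the shifted annotations. The only cosmetic difference is in the $a=x$ variable case, where the paper routes through \propref{ctxrelax}, \propref{typingvalinterp}, and \propref{ctxweaken} rather than a direct ``inversion'' of the typing judgment (which must account for structural rules), but this does not change the argument.
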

\begin{proof}[Proof of (1)]
  By induction on $\jatyping{\Gamma_1,\bindvar{x}{\trefined{B_x}{\psi}{\phi}},\Gamma'}{a}{B}$:
  \begin{alignat}{2}
    \shortintertext{\bf{\textsc{(SimpAtom-Var)}$=$}}
    & \text{SPS}~a=x, B=B_x\\
		& \subst{t}{x}{a} = t, \subst{\calI(t)}{x}{B} = B_x \\
		& \jstyping{\Gamma}{t}{\tpot{\tsubset{B_x}{\psi}}{\phi}} & \text{[Prop.~\ref{prop:ctxrelax}]} \\
		& \jstyping{\Gamma}{t}{\tpot{\tsubset{B_x}{\nu = \calI(t)}}{\phi}} & \text{[Prop.~\ref{prop:typingvalinterp}]} \\
		& \jstyping{\Gamma,\subst{\calI(t)}{x}{\Gamma'}}{t}{\tpot{\tsubset{B_x}{\nu = \calI(t)}}{\phi}} & \text{[Prop.~\ref{prop:ctxweaken}]} \\
		& \jatyping{\Gamma,\subst{\calI(t)}{x}{\Gamma'}}{t}{B_x} & \text{[typing]}
		\shortintertext{\bf{\textsc{(SimpAtom-Var)}$\neq$}}
		& \text{SPS}~a=y \\
		& \subst{t}{x}{a} = y \\
		& \textbf{case}~y \in \Gamma \\
		& \enskip B = \text{base of}~ \Gamma_1(y) \\
		& \enskip \jsharing{\Gamma}{\Gamma(y)}{\Gamma_1(y)}{\Gamma_2(y)} \\
		& \enskip \Gamma(y) = \trefined{B}{\psi'}{\phi'} & \\
		& \enskip \jatyping{\Gamma,\subst{\calI(t)}{x}{\Gamma'}}{y}{B} & \text{[typing]} \\
		& \textbf{case}~y \in \Gamma' \\
		& \enskip B = \text{base of}~\Gamma'(y), \Gamma'(y) = \trefined{B}{\psi'}{\phi'} \\
		& \enskip (\subst{\calI(t)}{x}{\Gamma'})(y) =\\
		& \enskip \quad \trefined{\subst{\calI(t)}{x}{B}}{\subst{\calI(t)}{x}{\psi'}}{\subst{\calI(t)}{x}{\phi'}} \\
		& \enskip \jatyping{\Gamma,\subst{\calI(t)}{x}{\Gamma'}}{y}{\subst{\calI(t)}{x}{B}} & \text{[typing]}
    %
    \shortintertext{\bf\textsc{(SimpAtom-ConsD)}}
    & \text{SPS}~a=C_j(a_0,\tuple{a_1,\cdots,a_{m_j}}),B=\tinduct{C}{T}{m}{\theta} \\
    & \vdash \Gamma_1,\bindvar{x}{\trefined{B_x}{\psi}{\phi}},\Gamma'',\pi.\mathbf{j}(\calI(a_0))(\theta) \sharing \\
    & \quad \Gamma_{11},\bindvar{x}{\trefined{B_1}{\psi}{\phi_1},\Gamma_1'},\phi_1' \mid \\
    & \quad \Gamma_{12},\bindvar{x}{\trefined{B_2}{\psi}{\phi_2},\Gamma_2'},\phi_2' & \text{[premise]} \\
    & \jstyping{\Gamma_{11},\bindvar{x}{\trefined{B_1}{\psi}{\phi_1}},\Gamma_1',\phi_1'}{a_0}{T_j} & \text{[premise]}  \label{eq:substa:conspremh}\\
    & \jatyping{\Gamma_{12},\bindvar{x}{\trefined{B_2}{\psi}{\phi_2}},\Gamma_2',\phi_2'}{ \tuple{a_1,\cdots,a_{m_j}} }{ \textstyle\prod_{i=1}^{m_j} \tinduct{C}{T}{m}{ \lhd.\mathbf{j}(\calI(a_0))(\theta).\mathbf{i} } } & \text{[premise]} \label{eq:substa:conspremt} \\
    & \text{exist}~\Gamma_{21},\Gamma_{22}~\text{s.t.}~\jctxsharing{\Gamma_2}{\Gamma_{21}}{\Gamma_{22}}, \\
    & \quad \jstyping{\Gamma_{21}}{t}{\trefined{B_1}{\psi}{\phi_1}},\jstyping{\Gamma_{22}}{t}{\trefined{B_2}{\psi}{\phi_2}} & \text{[\propref{typingvalsharingsplit}]} \\
    & \sharing(\Gamma_{11},\Gamma_{21}), \subst{\calI(t)}{x}{(\Gamma_1',\phi_1')} \vdash \\
    & \quad \subst{t}{x}{a_0} \dblcolon \subst{\calI(t)}{x}{T_j} & \text{[\theoref{substitution}, \eqref{eq:substa:conspremh}]} \\
    & \text{ind. hyp. on \eqref{eq:substa:conspremt}} \\
    & \sharing(\Gamma_{12},\Gamma_{22}), \subst{\calI(t)}{x}{(\Gamma_2',\phi_2')} \vdash \\
    & \quad \subst{t}{x}{\tuple{a_1,\cdots,a_{m_j}}} : \tinduct{C}{\subst{\calI(t)}{x}{T}}{m}{\subst{\calI(t)}{x}{\theta}} \\
    & \jctxsharing{\Gamma}{\Gamma_1}{\Gamma_2} \implies & \\
    & \quad \jctxsharing{\Gamma}{(\sharing(\Gamma_{11},\Gamma_{21})}{\sharing(\Gamma_{12},\Gamma_{22})} \\
    & \Gamma,\bindvar{x}{\trefined{B_x}{\psi}{\phi}} \vdash \Gamma',\pi.\mathbf{j}(\calI(a_0))(\theta) \sharing \Gamma_1',\phi_1'\mid \Gamma_2',\phi_2' \implies  \\
    & \quad \Gamma \vdash \subst{\calI(t)}{x}{(\Gamma'',\pi.\mathbf{j}(\calI(a_0))(\theta))} \sharing \subst{\calI(t)}{x}{(\Gamma_1',\phi_1')} \mid \subst{\calI(t)}{x}{(\Gamma_2',\phi_2')} & \text{[\lemref{substintoref}]} \\
    & {\Gamma,\subst{\calI(t)}{x}{\Gamma''},{\pi.\mathbf{j}(\calI([t/x]a_0))(\subst{\calI(t)}{x}{\theta})}} \vdash [t/x]a : \\
    & \quad \tinduct{C}{\subst{\calI(t)}{x}{T}}{m}{\lhd.\mathbf{j}(\calI([t/x]a_0))(\subst{\calI(t)}{x}{\theta}).\mathbf{i} }  & \text{[typing]}
  \end{alignat}
\end{proof}

\begin{theorem}[Substitution]\label{the:substitution}\
	\begin{enumerate}
		\item If $\jstyping{\Gamma_1,\bindvar{x}{\tpot{\tsubset{B}{\psi}}{\phi}},\Gamma'}{e}{S}$, $\jstyping{\Gamma_2}{t}{\tpot{\tsubset{B}{\psi}}{\phi}}$, $\jval{t}$ and $\jctxsharing{\Gamma}{\Gamma_1}{\Gamma_2}$, then $\jstyping{\Gamma,\subst{\calI(t)}{x}{\Gamma'}}{\subst{t}{x}{e}}{\subst{\calI(t)}{x}{S}}$.
		\item If $\jstyping{\Gamma_1,\bindvar{x}{S_x},\Gamma'}{e}{S}$, $S_x$ is non-scalar/poly, $\jstyping{\Gamma_2}{t}{S_x}$, $\jval{t}$ and $\jctxsharing{\Gamma}{\Gamma_1}{\Gamma_2}$, then $\jstyping{\Gamma,\Gamma'}{\subst{t}{x}{e}}{S}$.
	\end{enumerate}
\end{theorem}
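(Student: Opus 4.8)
The plan is to prove both parts of the theorem simultaneously by induction on the typing derivation $\jstyping{\Gamma_1,\bindvar{x}{\cdots},\Gamma'}{e}{S}$, carried mutually with the atom-level statement \lemref{substatom}: the rule \textsc{(T-SimpAtom)} appeals to \lemref{substatom}, while the \textsc{(SimpAtom-ConsD)} case of \lemref{substatom} appeals back to \theoref{substitution}, so the two are run together on a shared induction metric. The guiding fact is that substituting an interpretable value commutes with the interpretation map, i.e. $\calI(\subst{t}{x}{a}) = \subst{\calI(t)}{x}{\calI(a)}$; I would first record this by structural induction on $a$ from the clauses defining $\calI(\cdot)$, since it is what lets every path constraint and potential annotation mentioning $x$ be rewritten correctly once $t$ is substituted.

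For each typing rule I case-split on where the binding $\bindvar{x}{\cdots}$ lands under the context-sharing premise. The characteristic difficulty is affinity: in every multi-premise rule (\textsc{(T-Let)}, \textsc{(T-App)}, \textsc{(T-App-SimpAtom)}, \textsc{(T-MatP)}, \textsc{(T-MatD)}, \textsc{(SimpAtom-Pair)}, \textsc{(SimpAtom-ConsD)}) the hypothesis context is split by $\jctxsharing{\Gamma_1,\bindvar{x}{\tpot{\tsubset{B}{\psi}}{\phi}},\Gamma'}{\Delta_1}{\Delta_2}$, and inverting this split (via \textsc{(Share-Bind-Type)}/\textsc{(Share-Pot)}) shows the binding of $x$ is apportioned into $\bindvar{x}{\tpot{\tsubset{B}{\psi}}{\phi_1}}$ and $\bindvar{x}{\tpot{\tsubset{B}{\psi}}{\phi_2}}$ with $\phi = \phi_1 + \phi_2$. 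To feed the induction hypotheses I must correspondingly split the value derivation $\jstyping{\Gamma_2}{t}{\tpot{\tsubset{B}{\psi}}{\phi}}$ into derivations at $\phi_1$ and $\phi_2$; this is exactly \propref{typingvalsharingsplit}, which yields $\jctxsharing{\Gamma_2}{\Gamma_{21}}{\Gamma_{22}}$ together with $\jstyping{\Gamma_{21}}{t}{\tpot{\tsubset{B}{\psi}}{\phi_1}}$ and $\jstyping{\Gamma_{22}}{t}{\tpot{\tsubset{B}{\psi}}{\phi_2}}$. I then apply the induction hypothesis in each subderivation and recombine, using that the conclusion's context sharing decomposes as the diagonal pairing of the two sharings; the \textsc{(SimpAtom-ConsD)} case of \lemref{substatom} (already written above) is the template for this recombination.

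For the non-typing premises — sorting, well-formedness, subtyping, sharing, and the datatype-indexing side conditions of \textsc{(Sub-Dtype)}/\textsc{(Share-Dtype)} via \textsc{(Dtype-Index)} — I would push the substitution through using \lemref{substintoref}, which already establishes that $\subst{\calI(t)}{x}{(\cdot)}$ preserves every judgment other than typing. The introduced path constraints are handled the same way: in \textsc{(T-Cond)} the branch contexts carry $\calI(a_0)$ (resp. its negation), and after substitution these become $\subst{\calI(t)}{x}{\calI(a_0)} = \calI(\subst{t}{x}{a_0})$ by the commutation fact; \textsc{(T-MatD)} is analogous but additionally requires that substitution commute with the shift and extraction operators $\lhd.\mathbf{j}$, $\pi.\mathbf{j}$ and with $\calI_D$, which holds because these are refinement-level terms acted on by the same capture-avoiding substitution. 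Rule \textsc{(S-Transfer)} uses \lemref{substprop}-style reasoning to preserve its equal-free-potential side condition.

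The subtlest syntax-directed case is the dependent application \textsc{(T-App-SimpAtom)}, whose conclusion type is itself a substitution $\subst{\calI(a_2)}{y}{T}$. After substituting $t$ for $x$ I must show $\subst{\calI(t)}{x}{\subst{\calI(a_2)}{y}{T}} = \subst{\calI(\subst{t}{x}{a_2})}{y}{\subst{\calI(t)}{x}{T}}$, i.e. the two refinement-level substitutions commute; this is where the real bookkeeping lives, and I would discharge it with a standard substitution-commutation lemma for the refinement calculus (choosing $y \neq x$ and $y$ fresh for $t$). Part (2) is comparatively routine: since the refinement logic cannot mention a non-scalar value, $\subst{\calI(t)}{x}{(\cdot)}$ acts as the identity on refinements, so the binding is simply erased via the second clause of \lemref{substintoref} and \lemref{substatom}, the only interesting case being $e = x$ in \textsc{(T-Var)}, dispatched by \propref{ctxrelax} and \propref{ctxweaken}. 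The main obstacle overall is therefore not any single rule but the disciplined propagation of the affine potential split: each recombination must verify that the potentials recovered from the two halves of $t$ sum back to $\phi$ and that the context sharings compose, so the proof is long precisely because this bookkeeping recurs in every binary rule.
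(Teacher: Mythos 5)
Your plan follows the paper's proof in all essentials: the same induction on the typing derivation, run mutually with the atomic-typing substitution lemma (Lemma~\ref{lem:substatom}); the same use of Proposition~\ref{prop:typingvalsharingsplit} to split the value derivation for $t$ in step with the affine splitting of the $x$ binding in every binary rule; the same appeal to Lemma~\ref{lem:substintoref} to push $\subst{\calI(t)}{x}{(\cdot)}$ through all non-typing premises; and the same observation that part (2) degenerates because non-scalar bindings are invisible to the refinement logic. The commutation $\calI(\subst{t}{x}{a}) = \subst{\calI(t)}{x}{\calI(a)}$ that you propose to isolate as a preliminary lemma is used silently in the paper (e.g.\ in the \textsc{(T-App-SimpAtom)} case, where the result type becomes $\subst{\calI(t),\calI(a_2)}{x,y}{T}$); making it explicit is a harmless improvement.

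The one place your plan names the wrong tool is \textsc{(S-Transfer)}. Lemma~\ref{lem:substprop} only discharges a valid path constraint from the context; it does not help here, because \textsc{(S-Transfer)} may move free potential onto or off of the binding for $x$ itself, so the rule's premise types $e$ in a context where $x$ carries a \emph{different} annotation $\phi'$, and your fixed derivation $\jstyping{\Gamma_2}{t}{\tpot{\tsubset{B}{\psi}}{\phi}}$ no longer matches what the induction hypothesis needs. The paper's fix is to split $t$'s derivation into a part carrying $\phi$ and a part carrying $0$ (Proposition~\ref{prop:typingvalsharingsplit} together with Proposition~\ref{prop:freetofree}), convert the stored potential $\subst{\calI(t)}{\nu}{\phi}$ into free potential, re-endow the zero-potential derivation with $\phi'$ via \textsc{(S-Relax)}, and then balance the accounts using the rule's side condition $\jprop{\Gamma}{\pot{\Gamma}=\pot{\Gamma'}}$. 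Without some such conversion between stored and free potential the induction stalls at this case; everything else in your outline goes through as written.
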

\begin{proof}[Proof of (1)]
	By induction on $\jstyping{\Gamma_1,\bindvar{x}{\tpot{\tsubset{B}{\psi}}{\phi}},\Gamma'}{e}{S}$:
	\begin{alignat}{2}
	     \shortintertext{\bf\textsc{(T-SimpAtom)}}
	     & \text{SPS}~e=a,S=\tsubset{B'}{\nu = \calI(a)} \\
	     & \jatyping{\Gamma_1,\bindvar{x}{\trefined{B}{\psi}{\phi}},\Gamma'}{a}{B'} & \text{[premise]} \\
	     & \jatyping{\Gamma,\subst{\calI(t)}{x}{\Gamma'}}{\subst{t}{x}{a}}{\subst{\calI(t)}{x}{B'}} & \text{[\lemref{substatom}]} \\
	     &  {\Gamma,\subst{\calI(t)}{x}{\Gamma'}} \vdash {\subst{t}{x}{a}} \dblcolon \\
	     & \quad {\tsubset{\subst{\calI(t)}{x}{B'}}{\nu = \calI(\subst{t}{x}{a})}} & \text{[typing]} \\
	     & \tsubset{\subst{\calI(t)}{x}{B'}}{\nu = \calI(\subst{t}{x}{a})} = \\
	     & \quad \subst{\calI(t)}{x}{(\tsubset{B'}{\nu = \calI(a)})}
		\shortintertext{\bf{\textsc{(T-Var)}$=$}}
		& \text{SPS}~e=x,S= \tpot{\tsubset{B}{\psi}}{\phi} \\
		& \subst{t}{x}{e} = t, \subst{\calI(t)}{x}{S} = \tpot{\tsubset{B}{\psi}}{\phi} \\
		& \jstyping{\Gamma}{t}{\tpot{\tsubset{B}{\psi}}{\phi}} & \text{[Prop.~\ref{prop:ctxrelax}]} \\
		& \jstyping{\Gamma,\subst{\calI(t)}{x}{\Gamma'}}{t}{\tpot{\tsubset{B}{\psi}}{\phi}} & \text{[Prop.~\ref{prop:ctxweaken}]}
		\shortintertext{\bf{\textsc{(T-Var)}$\ne$}}
		& \text{SPS}~e=y,S=\Gamma(y) \\
		& \subst{t}{x}{e} = y \\
		& \textbf{case}~y \in \Gamma \\
		& \enskip \text{WLOG}~\Gamma(y) = \tpot{\tsubset{B'}{\psi'}}{\phi'} \\
		& \enskip \jsharing{\Gamma}{\Gamma(y)}{\Gamma_1(y)}{\Gamma_2(y)} \\
		& \enskip \textbf{let}~\Gamma_1(y) = \tpot{\tsubset{B_1'}{\psi_1'}}{\phi_1'} \\
		& \enskip \subst{\calI(t)}{x}{S} = S = \tpot{\tsubset{B_1'}{\psi_1'}}{\phi_1'} \\
		& \enskip \jstyping{\Gamma,\subst{\calI(t)}{x}{\Gamma'}}{y}{\tpot{\tsubset{B'}{\psi'}}{\phi'}} & \text{[typing]} \\
		& \textbf{case}~y \in \Gamma' \\
		& \enskip \text{WLOG}~\Gamma'(y) = \tpot{\tsubset{B'}{\psi'}}{\phi'} \\
		& \enskip S = \tpot{\tsubset{B'}{\psi'}}{\phi'} \\
		& \enskip \subst{\calI(t)}{x}{S} = \\
		& \enskip \quad \tpot{\tsubset{\subst{\calI(t)}{x}{B'}}{\subst{\calI(t)}{x}{\psi'}}}{\subst{\calI(t)}{x}{\phi'}} \\
		& \enskip (\subst{\calI(t)}{x}{\Gamma'})(y) = \\
		& \enskip \quad \tpot{\tsubset{\subst{\calI(t)}{x}{B'}}{\subst{\calI(t)}{x}{\psi'}}}{\subst{\calI(t)}{x}{\phi'}} \\
		& \enskip \Gamma,\subst{\calI(t)}{x}{\Gamma'} \vdash y \dblcolon \\
		& \enskip \quad  \tpot{\tsubset{\subst{\calI(t)}{x}{B'}}{ \subst{\calI(t)}{x}{\psi'} }}{\subst{\calI(t)}{x}{\phi'}} & \text{[typing]}
		\shortintertext{\bf\textsc{(T-Imp)}}
		& \text{SPS}~e=\eimp, S=T \\
		& \subst{t}{x}{e} = \eimp \\
		& \subst{\calI(t)}{x}{S} = \subst{\calI(t)}{x}{T} \\
		& \jprop{\Gamma_1,\bindvar{x}{\tpot{\tsubset{B}{\psi}}{\phi}},\Gamma'}{\bot} & \text{[premise]} \label{eq:subst:impbot} \\
		& \jwftype{\Gamma_1,\bindvar{x}{\tpot{\tsubset{B}{\psi}}{\phi}},\Gamma'}{T} & \text{[premise]} \label{eq:subst:impwftype} \\
		& \jprop{\Gamma,\subst{\calI(t)}{x}{\Gamma'}}{\bot} & \text{[Lem.~\ref{lem:substintoref}, \eqref{eq:subst:impbot}]} \\
		& \jwftype{\Gamma,\subst{\calI(t)}{x}{\Gamma'}}{\subst{\calI(t)}{x}{T}} & \text{[Lem.~\ref{lem:substintoref}, \eqref{eq:subst:impwftype}]} \\
		& \jstyping{\Gamma,\subst{\calI(t)}{x}{\Gamma'}}{\eimp}{\subst{\calI(t)}{x}{T}} & \text{[typing]}
		\shortintertext{\bf\textsc{(T-Consume-P)}}
		& \text{SPS}~e=\econsume{c}{e_0}, c\ge 0, S=T \\
		& \text{SPS}~\Gamma'= \Gamma'',c & \text{[premise]} \\
		& \subst{t}{x}{e} = \econsume{c}{\subst{t}{x}{e_0}} \\
		& \subst{\calI(t)}{x}{S} = \subst{\calI(t)}{x}{T} \\
		& \jstyping{\Gamma_1,\bindvar{x}{\tpot{\tsubset{B}{\psi}}{\phi}},\Gamma''}{e_0}{T} & \text{[premise]} \label{eq:subst:consumepe0} \\
		& \text{ind. hyp. on \eqref{eq:subst:consumepe0}} \\
		& \jstyping{\Gamma,\subst{\calI(t)}{x}{\Gamma''}}{\subst{t}{x}{e_0}}{\subst{\calI(t)}{x}{T}} \\
		& \jstyping{\Gamma,\subst{\calI(t)}{x}{\Gamma''},c}{\econsume{c}{\subst{t}{x}{e_0}}}{\subst{\calI(t)}{x}{T}} & \text{[typing]} \\
		& \jstyping{\Gamma,\subst{\calI(t)}{x}{\Gamma',c}}{\econsume{c}{\subst{t}{x}{e_0}}}{\subst{\calI(t)}{x}{T}}
		\shortintertext{\bf\textsc{(T-Consume-N)}}
		& \text{SPS}~e=\econsume{c}{e_0},c<0,S=T \\
		& \subst{t}{x}{e} = \econsume{c}{\subst{t}{x}{e_0}} \\
		& \subst{\calI(t)}{x}{S} = \subst{\calI(t)}{x}{T} \\
		& \jstyping{\Gamma_1,\bindvar{x}{\tpot{\tsubset{B}{\psi}}{\phi}},\Gamma',-c}{e_0}{T} & \text{[premise]} \label{eq:subst:consumene0} \\
		& \text{ind. hyp. on \eqref{eq:subst:consumene0}} \\
		& \jstyping{\Gamma,\subst{\calI(t)}{x}{\Gamma'},-c}{\subst{t}{x}{e_0}}{\subst{\calI(t)}{x}{T}} \\
		& \jstyping{\Gamma,\subst{\calI(t)}{x}{\Gamma'}}{\econsume{c}{\subst{t}{x}{e_0}}}{\subst{\calI(t)}{x}{T}} & \text{[typing]}
		\shortintertext{\bf\textsc{(T-Cond)}}
		& \text{SPS}~e=\econd{a_0}{e_1}{e_2},S=T \\
		& \subst{t}{x}{e} = \econd{\subst{t}{x}{a_0}}{\subst{t}{x}{e_1}}{\subst{t}{x}{e_2}} \\
		& \subst{\calI(t)}{x}{S} = \subst{\calI(t)}{x}{T} \\
		& \jatyping{\Gamma_{1},\bindvar{x}{\tpot{\tsubset{B}{\psi}}{\phi}},\Gamma'}{a_0}{\tbool} & \text{[premise]} \label{eq:subst:conde0} \\
		& \Gamma_{1},\bindvar{x}{\tpot{\tsubset{B}{\psi}}{\phi}},\Gamma',\calI(a_0) \vdash \\
		& \quad 	e_1 \dblcolon T & \text{[premise]} \label{eq:subst:conde1} \\
		& \Gamma_{1},\bindvar{x}{\tpot{\tsubset{B}{\psi}}{\phi}},\Gamma', \neg\calI(a_0) \vdash \\
		& \quad e_2 \dblcolon T & \text{[premise]} \label{eq:subst:conde2} \\
       & \sharing(\Gamma_{1},\Gamma_{2}), \subst{\calI(t)}{x}{\Gamma'} \vdash \subst{t}{x}{a_0} : \tbool & \text{[\lemref{substatom}]} \label{eq:subst:conde0subst}  \\
		& \text{ind. hyp. on \eqref{eq:subst:conde1}} \\
		& \sharing(\Gamma_{1},\Gamma_{2}), \subst{\calI(t)}{x}{\Gamma'}, \subst{\calI(t)}{x}{\calI(a_0)} \vdash \\
		& \quad \subst{t}{x}{e_1} \dblcolon \subst{\calI(t)}{x}{T} \label{eq:subst:conde1subst} \\
		& \text{ind. hyp. on \eqref{eq:subst:conde2}} \\
		& \sharing(\Gamma_{1},\Gamma_{2}), \subst{\calI(t)}{x}{\Gamma'}, \subst{\calI(t)}{x}{\neg\calI(a_0)} \vdash \\
		& \quad \subst{t}{x}{e_2} \dblcolon \subst{\calI(t)}{x}{T} \label{eq:subst:conde2subst} \\
		& \text{typing on \eqref{eq:subst:conde0subst}, \eqref{eq:subst:conde1subst}, \eqref{eq:subst:conde2subst}} \\
		& \Gamma,\subst{\calI(t)}{x}{\Gamma'} \vdash \\
		& \quad \econd{\subst{t}{x}{e_0}}{\subst{t}{x}{e_1}}{\subst{t}{x}{e_2}} \dblcolon \subst{\calI(t)}{x}{T}
		\shortintertext{\bf\textsc{(T-MatP)}}
		& \text{SPS}~e=\ematp{a_0}{x_1}{x_2}{e_1}, S=T \\
		& [t/x]e = \ematp{[t/x]a_0}{x_1}{x_2}{[t/x]e_1} \\
		& [\calI(t)/x]S = [\calI(t)/x] T \\
		& \vdash \Gamma_1,\bindvar{x}{\tpot{\tsubset{B}{\psi}}{\phi}},\Gamma' \sharing \\
		& \quad \Gamma_{11},\bindvar{x}{\tpot{\tsubset{B_1}{\psi}}{\phi_1}},\Gamma_1' \mid \\
		& \quad \Gamma_{12},\bindvar{x}{\tpot{\tsubset{B_2}{\psi}}{\phi_2}},\Gamma_2' & \text{[premise]} \\
		& \jatyping{\Gamma_{11},x:\trefined{B_1}{\psi}{\phi_1},\Gamma'_1}{a_0}{\tprod{A_1}{A_2}} & \text{[premise]} \\
		& \jatyping{\Gamma_{12},x:\trefined{B_2}{\psi}{\phi_1},\Gamma'_2, x_1:A_1,x_2:A_2,\calI(a_0)=(x_1,x_2)}{e_1}{T} & \text{[premise]} \label{eq:subst:matpe1} \\
		& \text{exist $\Gamma_{21},\Gamma_{22}$ s.t. $\jctxsharing{\Gamma_2}{\Gamma_{21}}{\Gamma_{22}}$,} \\
		& \quad \jstyping{\Gamma_{21}}{t}{\trefined{B_1}{\psi}{\phi_1}}, \jstyping{\Gamma_{22}}{t}{\trefined{B_2}{\psi}{\phi_2}} & \text{[\propref{typingvalsharingsplit}]}  \\
		& \sharing(\Gamma_{11},\Gamma_{21}),[\calI(t)/x]\Gamma_1' \vdash [t/x]a_0 : [\calI(t)/x]A_1 \times [\calI(t)/x]A_2 & \text{[\lemref{substatom}]} \label{eq:subst:matpe0subst} \\
		& \text{ind. hyp. on \eqref{eq:subst:matpe1} with \eqref{eq:subst:matpe0subst}} \\
		& \sharing(\Gamma_{12},\Gamma_{22}),[\calI(t)/x]\Gamma_2',x_1:[\calI(t)/x]A_1,\\
		& \quad x_2:[\calI(t)/x]A_2,\calI([t/x]a_0)=(x_1,x_2) \vdash \\
		& \quad [t/x]e_1 \dblcolon [\calI(t)/x]T \label{eq:subst:matpe1subst} \\
		& \jctxsharing{\Gamma}{\Gamma_1}{\Gamma_2} \implies \jctxsharing{\Gamma}{( \sharing(\Gamma_{11},\Gamma_{21}) ) }{( \sharing(\Gamma_{12},\Gamma_{22}) )} \\
		& \Gamma,\bindvar{x}{\tpot{\tsubset{B}{\psi}}{\phi}} \vdash \Gamma' \sharing \Gamma'_1 \mid \Gamma'_2 \implies \\
		& \quad \Gamma \vdash \subst{\calI(t)}{x}{\Gamma'} \sharing \subst{\calI(t)}{x}{\Gamma'_1} \mid \subst{\calI(t)}{x}{\Gamma'_2} & \text{[Lem.~\ref{lem:substintoref}]} \\
		& \text{typing on \eqref{eq:subst:matpe0subst}, \eqref{eq:subst:matpe1subst}} \\
		& \Gamma,[\calI(t)/x]\Gamma' \vdash \ematp{[t/x]a_0}{x_1}{x_2}{[t/x]e_1} \dblcolon [\calI(t)/x]T
		\shortintertext{\bf\textsc{(T-MatD)}}
		& \text{SPS}~e=\mathsf{matd}(a_0,\many{C_j(x_0,\tuple{x_1,\cdots,x_{m_j}}).e_j}), ,S=T' \\
		& \subst{t}{x}{e} = \mathsf{matd}(\subst{t}{x}{a_0}, \many{C_j(x_0,\tuple{x_1,\cdots,x_{m_j}}).\subst{t}{x}{e_j} }) \\
		& \subst{\calI(t)}{x}{S} = \subst{\calI(t)}{x}{T'} \\
		& \vdash \Gamma_1,\bindvar{x}{\tpot{\tsubset{B}{\psi}}{\phi}},\Gamma' \sharing \\
		& \quad \Gamma_{11},\bindvar{x}{\tpot{\tsubset{B_1}{\psi}}{\phi_1}},\Gamma_1' \mid \\
		& \quad \Gamma_{12},\bindvar{x}{\tpot{\tsubset{B_2}{\psi}}{\phi_2}},\Gamma_2' & \text{[premise]} \\
		& \jatyping{\Gamma_{11},\bindvar{x}{\tpot{\tsubset{B_1}{\psi}}{\phi_1}},\Gamma'_1}{a_0}{\tinduct{C}{T}{m}{\theta}} & \text{[premise]} \label{eq:subst:matle0} \\
		& \forall j: \Gamma_{12},x:\trefined{B_2}{\psi}{\phi_2} , \Gamma_2', \bindvar{x_0}{T_j},\many{x_i:\tinduct{C}{T}{m}{ \lhd.\mathbf{j}(x_0)(\theta).\mathbf{i} } }, \\
		& \quad \calI(a_0)=\mu.\mathbf{j}(x_0)(\cdots),\pi.\mathbf{j}(x_0)(\theta)  \vdash  {e_j} \dblcolon {T'} & \text{[premise]} \label{eq:subst:matle1} \\
		& \text{exist}~\Gamma_{21},\Gamma_{22}~\text{s.t.}~\jctxsharing{\Gamma_2}{\Gamma_{21}}{\Gamma_{22}}, \\
		& \quad \jstyping{\Gamma_{21}}{t}{\tpot{\tsubset{B_1}{\psi}}{\phi_1}}, \jstyping{\Gamma_{22}}{t}{\tpot{\tsubset{B_2}{\psi}}{\phi_2}} & \text{[Prop.~\ref{prop:typingvalsharingsplit}]} \label{eq:subst:matlsplit} \\
		& \sharing(\Gamma_{11},\Gamma_{21}), \subst{\calI(t)}{x}{\Gamma'_1} \vdash \\
		& \quad \subst{t}{x}{a_0} : \tinduct{C}{\subst{\calI(t)}{x}{T}}{m}{\subst{\calI(t)}{x}{\theta}} & \text{[\lemref{substatom}]} \label{eq:subst:matle0subst} \\
		& \text{ind. hyp. on \eqref{eq:subst:matle1} with \eqref{eq:subst:matlsplit}} \\
		& \forall j: \sharing(\Gamma_{12},\Gamma_{22}), \subst{\calI(t)}{x}{\Gamma'_2}, x_0:\subst{\calI(t)}{x}{T_j}, \\
		& \quad \many{x_i : \tinduct{C}{[\calI(t)/x]T}{m}{ \lhd.\mathbf{j}(x_0)([\calI(t)/x]\theta).\mathbf{i} } }, \\
		& \quad \calI([t/x]a_0)=\mu.\mathbf{j}(x_0)(\cdots), \pi.\mathbf{j}(x_0)([\calI(t)/x]\theta) \vdash \\
		& \quad \subst{t}{x}{e_j} \dblcolon \subst{\calI(t)}{x}{T'} \label{eq:subst:matle1subst} \\
		& \jctxsharing{\Gamma}{\Gamma_1}{\Gamma_2} \implies \\
		& \quad \jctxsharing{\Gamma}{(\sharing(\Gamma_{11},\Gamma_{21}))}{(\sharing(\Gamma_{12},\Gamma_{22}))} \\
		& \Gamma,\bindvar{x}{\tpot{\tsubset{B}{\psi}}{\phi}} \vdash \Gamma' \sharing \Gamma'_1 \mid \Gamma'_2 \implies \\
		& \quad \Gamma \vdash \subst{\calI(t)}{x}{\Gamma'} \sharing \subst{\calI(t)}{x}{\Gamma'_1} \mid \subst{\calI(t)}{x}{\Gamma'_2} & \text{[Lem.~\ref{lem:substintoref}]} \\
		& \text{typing on \eqref{eq:subst:matle0subst}, \eqref{eq:subst:matle1subst}} \\ 
		& \Gamma,\subst{\calI(t)}{x}{\Gamma'} \vdash \\
		& \quad \mathsf{matd}({\subst{t}{x}{a_0}}, \many{C_j(x_0,\tuple{x_1,\cdots,x_{m_j}}).[t/x]e_j } )\\
		& \quad \dblcolon \subst{\calI(t)}{x}{T'}
		\shortintertext{\bf\textsc{(T-Let)}}
		& \text{SPS}~e=\elet{e_1}{y}{e_2},S=T_2 \\
		& \subst{t}{x}{e}=\elet{\subst{t}{x}{e_1}}{y}{\subst{t}{x}{e_2}} \\
		& \subst{\calI(t)}{x}{S} = \subst{\calI(t)}{x}{T_2} \\
		& \vdash \Gamma_1,\bindvar{x}{\tpot{\tsubset{B}{\psi}}{\phi}},\Gamma' \sharing \\
		& \quad \Gamma_{11},\bindvar{x}{\tpot{\tsubset{B_1}{\psi}}{\phi_1}},\Gamma_1' \mid \\
		& \quad \Gamma_{12},\bindvar{x}{\tpot{\tsubset{B_2}{\psi}}{\phi_2}},\Gamma_2' & \text{[premise]} \\
		& \jstyping{\Gamma_{11},\bindvar{x}{\tpot{\tsubset{B_1}{\psi}}{\phi_1}},\Gamma_1'}{e_1}{S_1} & \text{[premise]} \label{eq:subst:lete1} \\
		& \jstyping{\Gamma_{12},\bindvar{x}{\tpot{\tsubset{B_2}{\psi_2}}{\phi_2}},\Gamma_2',\bindvar{y}{S_1}}{e_2}{T_2} & \text{[premise]} \label{eq:subst:lete2} \\
		& \text{exist}~\Gamma_{21},\Gamma_{22}~\text{s.t.}~\jctxsharing{\Gamma_2}{\Gamma_{21}}{\Gamma_{22}}, \\
		& \quad \jstyping{\Gamma_{21}}{t}{\tpot{\tsubset{B_1}{\psi}}{\phi_1}}, \jstyping{\Gamma_{22}}{t}{\tpot{\tsubset{B_2}{\psi}}{\phi_2}} & \text{[Prop.~\ref{prop:typingvalsharingsplit}]} \label{eq:subst:letsplitval} \\
		& \text{ind. hyp. on \eqref{eq:subst:lete1} with \eqref{eq:subst:letsplitval}} \\
		& \jstyping{\sharing(\Gamma_{11},\Gamma_{21}),\subst{\calI(t)}{x}{\Gamma_1'} }{\subst{t}{x}{e_1}}{\subst{\calI(t)}{x}{S_1}} \label{eq:subst:lete1subst} \\
		& \text{ind. hyp. on \eqref{eq:subst:lete2} with \eqref{eq:subst:letsplitval}} \\
		& \sharing(\Gamma_{12},\Gamma_{22}), \subst{\calI(t)}{x}{\Gamma_2'}, \bindvar{y}{\subst{\calI(t)}{x}{S_1}} \vdash \\
		& \quad  \subst{t}{x}{e_2} \dblcolon \subst{\calI(x)}{t}{T_2} \label{eq:subst:lete2subst} \\
		& \jctxsharing{\Gamma}{\Gamma_1}{\Gamma_2} \implies \\
		& \quad \jctxsharing{\Gamma}{(\sharing(\Gamma_{11},\Gamma_{21}))}{(\sharing(\Gamma_{12},\Gamma_{22}))} \\
		& \Gamma,\bindvar{x}{\tpot{\tsubset{B}{\psi}}{\phi	}} \vdash \Gamma' \sharing \Gamma'_1 \mid \Gamma'_2 \implies \\
		& \quad \Gamma \vdash \subst{\calI(t)}{x}{\Gamma'} \sharing \subst{\calI(t)}{x}{\Gamma'_1} \mid \subst{\calI(t)}{x}{\Gamma'_2} & \text{[Lem.~\ref{lem:substintoref}]} \\
		& \text{typing on \eqref{eq:subst:lete1subst}, \eqref{eq:subst:lete2subst}} \\
		& \Gamma,\subst{\calI(t)}{x}{\Gamma'} \vdash \\
		 & \quad \elet{\subst{t}{x}{e_1}}{y}{\subst{t}{x}{e_2}} \dblcolon \subst{\calI(t)}{x}{T_2}
		\shortintertext{\bf\textsc{(T-Abs)}}
		& \text{SPS}~e=\eabs{y}{e_0},S=\tpot{(\tarrow{y}{T_y}{T})}{0} \\
		& \subst{t}{x}{e} = \eabs{y}{\subst{t}{x}{e_0}} \\
		&  \subst{\calI(t)}{x}{S} = \tpot{(\tarrow{y}{\subst{\calI(t)}{x}{T_y}}{\subst{\calI(t)}{x}{T}})}{0} \\
		& \jstyping{\Gamma_1,\bindvar{x}{\tpot{\tsubset{B}{\psi}}{\phi}},\Gamma',\bindvar{y}{T_y}}{e_0}{T} & \text{[premise]} \label{eq:subst:abse0} \\
		& \vdash {\Gamma_1,\bindvar{x}{\tpot{\tsubset{B}{\psi}}{\phi}},\Gamma'} \sharing \\
		& \quad {\Gamma_1,\bindvar{x}{\tpot{\tsubset{B}{\psi}}{\phi}},\Gamma'} \mid \\
		& \quad {\Gamma_1,\bindvar{x}{\tpot{\tsubset{B}{\psi}}{\phi}},\Gamma'} & \text{[premise]} \\
		& \jsharing{\Gamma}{\tpot{\tsubset{B}{\psi}}{\phi}}{\tpot{\tsubset{B}{\psi}}{\phi}}{\tpot{\tsubset{B}{\psi}}{\phi}} \\
		& \text{exist}~\Gamma_2'~\text{s.t.}~\jstyping{\Gamma_2'}{t}{\tpot{\tsubset{B}{\psi}}{\phi}}, \jctxsharing{\Gamma_2}{\Gamma_2}{\Gamma_2'} & \text{[Prop.~\ref{prop:typingvalzero}]} \\
		& \text{ind. hyp. on \eqref{eq:subst:abse0}} \\
		& \sharing(\Gamma_1,\Gamma_2'), \subst{\calI(t)}{x}{\Gamma'}, \bindvar{y}{\subst{\calI(t)}{x}{T_y}} \vdash \\
		& \quad \subst{t}{x}{e_0} \dblcolon \subst{\calI(t)}{x}{T} \label{eq:subst:abse0subst} \\
		& \jctxsharing{(\sharing(\Gamma_1,\Gamma_2'))}{(\sharing(\Gamma_1,\Gamma_2'))}{(\sharing(\Gamma_1,\Gamma_2'))} \\
		& \Gamma, x:\tpot{\tsubset{B}{\psi}}{\phi} \vdash \Gamma' \sharing \Gamma' \mid \Gamma' \implies \\
		& \quad \Gamma \vdash \subst{\calI(t)}{x}{\Gamma'} \sharing \subst{\calI(t)}{x}{\Gamma'} \mid \subst{\calI(t)}{x}{\Gamma'} & \text{[Lem.~\ref{lem:substintoref}]} \\
		& \vdash \sharing(\Gamma_1,\Gamma_2'),\subst{\calI(t)}{x}{\Gamma'} \sharing \\
		& \quad \sharing(\Gamma_1,\Gamma_2'),\subst{\calI(t)}{x}{\Gamma'} \mid \\
		& \quad \sharing(\Gamma_1,\Gamma_2'),\subst{\calI(t)}{x}{\Gamma'} \\
		& \text{typing on \eqref{eq:subst:abse0subst}} \\
		& \sharing(\Gamma_1,\Gamma_2') ,\subst{\calI(v)}{x}{\Gamma'} \vdash \\
		& \quad \eabs{y}{\subst{t}{x}{e_0}} \dblcolon \tarrow{y}{\subst{\calI(t)}{x}{T_y}}{\subst{\calI(t)}{x}{T}} \\
		& \Gamma,\subst{\calI(v)}{x}{\Gamma'} \vdash \\
		& \quad \eabs{y}{\subst{t}{x}{e_0}} \dblcolon \tarrow{y}{\subst{\calI(t)}{x}{T_y}}{\subst{\calI(t)}{x}{T}} & \text{[Prop.~\ref{prop:ctxrelax}]}
		\shortintertext{\bf\textsc{(T-Abs-Lin)}}
		& \text{SPS}~ e=\eabs{y}{e_0},S=\tarrowm{y}{T_y}{T}{m} \\
		& \text{SPS}~\Gamma_1,\bindvar{x}{\trefined{B}{\psi}{\phi}},\Gamma' = \\
		& \quad m \cdot (\Gamma_1'',\bindvar{x}{\trefined{B''}{\psi}{\phi''}},\Gamma''' ) \\
		& \subst{t}{x}{e} = \eabs{y}{\subst{t}{x}{e_0}} \\
		&  \subst{\calI(t)}{x}{S} = \tarrowm{y}{\subst{\calI(t)}{x}{T_y}}{\subst{\calI(t)}{x}{T}}{m} \\
		& \jstyping{\Gamma_1'',\bindvar{x}{\tpot{\tsubset{B''}{\psi}}{\phi''}},\Gamma''',\bindvar{y}{T_y}}{e_0}{T} & \text{[premise]} \label{eq:subst:absline0} \\
		& \text{exist}~\Gamma_2''~\text{s.t.}~\Gamma_2=\sharing(m\cdot\Gamma_2'',\_),~\text{and} & \text{[\propref{typingvalsharingsplit},} \\
		& \quad \jstyping{\Gamma_2''}{t}{\tpot{\tsubset{B''}{\psi}}{\phi''}} & \text{ \ref{prop:typingvalzero}]} \\
		& \text{ind. hyp. on \eqref{eq:subst:absline0}} \\
		& \sharing(\Gamma_1'',\Gamma_2''), \subst{\calI(t)}{x}{\Gamma'''}, \bindvar{y}{\subst{\calI(t)}{x}{T_y}} \vdash \\
		& \quad \subst{t}{x}{e_0} \dblcolon \subst{\calI(t)}{x}{T} \label{eq:subst:absline0subst} \\
		& \text{typing on \eqref{eq:subst:absline0subst}} \\
		& m \cdot (\sharing(\Gamma_1'',\Gamma_2'') ,\subst{\calI(v)}{x}{\Gamma'''}) \vdash \\
		& \quad \eabs{y}{\subst{t}{x}{e_0}} \dblcolon \\
		& \quad \tarrowm{y}{\subst{\calI(t)}{x}{T_y}}{\subst{\calI(t)}{x}{T}}{m} \\
		& \Gamma,\subst{\calI(v)}{x}{\Gamma'} \vdash \\
		& \quad \eabs{y}{\subst{t}{x}{e_0}} \\
		& \quad \dblcolon \tarrowm{y}{\subst{\calI(t)}{x}{T_y}}{\subst{\calI(t)}{x}{T}}{m} 
		\shortintertext{\bf\textsc{(T-Fix)}}
		& \text{SPS}~e=\efix{f}{y}{e_0},S=\forall\many{\alpha}.\tarrow{y}{T_y}{T} \\
		& \subst{t}{x}{e} = \efix{f}{y}{\subst{t}{x}{e_0}} \\
		& \vdash \Gamma_1,\bindvar{x}{\trefined{B}{\psi}{\phi}},\Gamma' \sharing \\
		& \quad \Gamma_1,\bindvar{x}{\trefined{B}{\psi}{\phi}},\Gamma' \mid \\
		& \quad \Gamma_1,\bindvar{x}{\trefined{B}{\psi}{\phi}},\Gamma' & \text{[premise]} \\
		& \jstyping{\Gamma_1,\bindvar{x}{\tpot{\tsubset{B}{\psi}}{\phi}},\Gamma',f:S,\many{\alpha},y:T_y}{e_0}{T} \\
		& \text{ind. hyp.} \\
		& \Gamma,\subst{\calI(t)}{x}{\Gamma'},f:\subst{\calI(t)}{x}{S},y:T_y \vdash \\
		& \quad {\subst{\calI(t)}{x}{e_0}} \dblcolon \subst{\calI(t)}{x}{T} \\
		& \jstyping{\Gamma,\subst{\calI(t)}{x}{\Gamma'}}{\efix{f}{y}{\subst{t}{x}{e_0}}}{{\subst{\calI(t)}{x}{S}}}
		\shortintertext{\bf\textsc{(T-App-SimpAtom)}}
		& \text{SPS}~e=\eapp{\hat{a}_1}{a_2},S=\subst{\calI(a_2)}{y}{T} \\
		& \subst{t}{x}{e} = \eapp{\subst{t}{x}{\hat{a}_1}}{\subst{t}{x}{a_2}} \\
		& \subst{\calI(t)}{x}{S} = \subst{\calI(t)}{x}{\subst{\calI(a_2)}{y}{T}} \\
		& \vdash \Gamma_1,\bindvar{x}{\tpot{\tsubset{B}{\psi}}{\phi}},\Gamma' \sharing \\
		& \quad \Gamma_{11},\bindvar{x}{\tpot{\tsubset{B_1}{\psi}}{\phi_1}},\Gamma_1' \mid \\
		& \quad \Gamma_{12},\bindvar{x}{\tpot{\tsubset{B_2}{\psi}}{\phi_2}},\Gamma_2' & \text{[premise]} \\
		& {\Gamma_{11},\bindvar{x}{\tpot{\tsubset{B_1}{\psi}}{\phi_1}},\Gamma_1'} \vdash {\hat{a}_1} \\
		& \quad \dblcolon {\tarrowm{y}{\trefined{B_y}{\psi_y}{\phi_y}}{T}{1}} & \text{[premise]} \label{eq:subst:appae1} \\ 
		& \jstyping{\Gamma_{12},\bindvar{x}{\tpot{\tsubset{B_2}{\psi}}{\phi_2}},\Gamma_2'}{a_2}{\trefined{B_y}{\psi_y}{\phi_y}} & \text{[premise]} \label{eq:subst:appae2} \\
		& \text{exist}~\Gamma_{21},\Gamma_{22}~\text{s.t.}~\jctxsharing{\Gamma_2}{\Gamma_{21}}{\Gamma_{22}}, \\
		& \quad \jstyping{\Gamma_{21}}{t}{\tpot{\tsubset{B_1}{\psi}}{\phi_1}}, \jstyping{\Gamma_{22}}{t}{\tpot{\tsubset{B_2}{\psi}}{\phi_2}} & \text{[Prop.~\ref{prop:typingvalsharingsplit}]} \label{eq:subst:appasplitval} \\
		& \text{ind. hyp. on \eqref{eq:subst:appae1} with \eqref{eq:subst:appasplitval}} \\
		& \sharing(\Gamma_{11},\Gamma_{21}), \subst{\calI(t)}{x}{\Gamma'_1} \vdash \subst{t}{x}{\hat{a}_1} \dblcolon \\
		& \quad  {\tarrowm{y}{\subst{\calI(t)}{x}{ \trefined{B_y}{\psi_y}{\phi_y} }}{\subst{\calI(t)}{x}{T}}{1}}\label{eq:subst:appae1subst} \\
		& \text{ind. hyp. on \eqref{eq:subst:appae2} with \eqref{eq:subst:appasplitval}} \\
		& \sharing(\Gamma_{12},\Gamma_{22}), \subst{\calI(t)}{x}{\Gamma'_2} \vdash \\
		& \quad \subst{t}{x}{a_2} \dblcolon \subst{\calI(t)}{x}{ \trefined{B_y}{\psi_y}{\phi_y} } \label{eq:subst:appae2subst} \\
		& \jctxsharing{\Gamma}{\Gamma_1}{\Gamma_2} \implies \\
		& \quad \jctxsharing{\Gamma}{(\sharing(\Gamma_{11},\Gamma_{21}))}{(\sharing(\Gamma_{12},\Gamma_{22}))} \\
		& \Gamma,\bindvar{x}{\tpot{\tsubset{B}{\psi}}{\phi	}} \vdash \Gamma' \sharing \Gamma'_1 \mid \Gamma'_2 \implies \\
		& \quad \Gamma \vdash \subst{\calI(t)}{x}{\Gamma'} \sharing \subst{\calI(t)}{x}{\Gamma'_1} \mid \subst{\calI(t)}{x}{\Gamma'_2} & \text{[Lem.~\ref{lem:substintoref}]} \\
		& \text{typing on \eqref{eq:subst:appae1subst}, \eqref{eq:subst:appae2subst}} \\
		& \Gamma,\subst{\calI(t)}{x}{\Gamma'} \vdash \\
		& \quad \eapp{\subst{t}{x}{\hat{a}_1}}{\subst{t}{x}{a_2}} \dblcolon \subst{\calI(t),\calI(a_2)}{x,y}{T}
		\shortintertext{\bf\textsc{(T-App)}}
		& \text{SPS}~e=\eapp{e_1}{e_2},S=T \\
		& \subst{t}{x}{e} = \eapp{\subst{t}{x}{\hat{a}_1}}{\subst{t}{x}{\hat{a}_2}} \\
		& \subst{\calI(t)}{x}{S} = \subst{\calI(t)}{x}{T} \\
		& \vdash \Gamma_1,\bindvar{x}{\tpot{\tsubset{B}{\psi}}{\phi}},\Gamma' \sharing \\
		& \quad \Gamma_{11},\bindvar{x}{\tpot{\tsubset{B_1}{\psi}}{\phi_1}},\Gamma_1' \mid \\
		& \quad \Gamma_{12},\bindvar{x}{\tpot{\tsubset{B_2}{\psi}}{\phi_2}},\Gamma_2' & \text{[premise]} \\
		& \jstyping{\Gamma_{11},\bindvar{x}{\tpot{\tsubset{B_1}{\psi}}{\phi_1}},\Gamma_1'}{\hat{a}_1}{{\tarrowm{y}{T_y}{T}{1}}} & \text{[premise]} \label{eq:subst:appe1} \\ 
		& \jstyping{\Gamma_{12},\bindvar{x}{\tpot{\tsubset{B_2}{\psi}}{\phi_2}},\Gamma_2'}{\hat{a}_2}{T_y} & \text{[premise]} \label{eq:subst:appe2} \\
		& \text{exist}~\Gamma_{21},\Gamma_{22}~\text{s.t.}~\jctxsharing{\Gamma_2}{\Gamma_{21}}{\Gamma_{22}}, \\
		& \quad \jstyping{\Gamma_{21}}{t}{\tpot{\tsubset{B_1}{\psi}}{\phi_1}}, \jstyping{\Gamma_{22}}{t}{\tpot{\tsubset{B_2}{\psi}}{\phi_2}} & \text{[Prop.~\ref{prop:typingvalsharingsplit}]} \label{eq:subst:appsplitval} \\
		& \text{ind. hyp. on \eqref{eq:subst:appe1} with \eqref{eq:subst:appsplitval}} \\
		& \sharing(\Gamma_{11},\Gamma_{21}), \subst{\calI(t)}{x}{\Gamma'_1} \vdash \subst{t}{x}{\hat{a}_1} \dblcolon \\
		& \quad  {\tarrowm{y}{\subst{\calI(t)}{x}{T_y}}{\subst{\calI(t)}{x}{T}}{1}} \label{eq:subst:appe1subst} \\
		& \text{ind. hyp. on \eqref{eq:subst:appe2} with \eqref{eq:subst:appsplitval}} \\
		& \sharing(\Gamma_{12},\Gamma_{22}), \subst{\calI(t)}{x}{\Gamma'_2}\vdash \\
		& \quad \subst{t}{x}{\hat{a}_2} \dblcolon \subst{\calI(t)}{x}{T_y} \label{eq:subst:appe2subst} \\
		& \jctxsharing{\Gamma}{\Gamma_1}{\Gamma_2} \implies \\
		& \quad \jctxsharing{\Gamma}{(\sharing(\Gamma_{11},\Gamma_{21}))}{(\sharing(\Gamma_{12},\Gamma_{22}))} \\
		& \Gamma,\bindvar{x}{\tpot{\tsubset{B}{\psi}}{\phi	}} \vdash \Gamma' \sharing \Gamma'_1 \mid \Gamma'_2 \implies \\
		& \quad \Gamma \vdash \subst{\calI(t)}{x}{\Gamma'} \sharing \subst{\calI(t)}{x}{\Gamma'_1} \mid \subst{\calI(t)}{x}{\Gamma'_2} & \text{[Lem.~\ref{lem:substintoref}]} \\
		& \text{typing on \eqref{eq:subst:appe1subst}, \eqref{eq:subst:appe2subst}} \\
		& \Gamma,\subst{\calI(t)}{x}{\Gamma'} \vdash \\
		& \quad \eapp{\subst{t}{x}{\hat{a}_1}}{\subst{t}{x}{\hat{a}_2}} \dblcolon \subst{\calI(t)}{x}{T}
		%
		%
		\shortintertext{\bf\textsc{(S-Gen)}}
		& \text{SPS}~e=v, S=\forall\alpha.S' \\
		& \subst{t}{x}{e} = \subst{t}{x}{v} \\
		& \subst{\calI(t)}{x}{S} = \forall\alpha. \subst{\calI(t)}{x}{S'} \\
		& \jstyping{\Gamma_1,\bindvar{x}{\trefined{B}{\psi}{\phi}},\Gamma',\alpha}{v}{S'} & \text{[premise]} \\
		& \text{ind. hyp.} \\
		& \jstyping{\Gamma,\subst{\calI(t)}{x}{\Gamma'},\alpha}{\subst{t}{x}{v}}{\subst{\calI(t)}{x}{S'}} \\
		& \jstyping{\Gamma,\subst{\calI(t)}{x}{\Gamma'}}{\subst{t}{x}{v}}{\forall\alpha.\subst{\calI(t)}{x}{S'}} & \text{[typing]}
		\shortintertext{\bf\textsc{(S-Inst)}}
		& \text{SPS}~S=\subst{\trefined{B'}{\psi'}{\phi'}}{\alpha}{S'} \\
		& \subst{\calI(t)}{x}{S} = \\
		& \quad \subst{\subst{\calI(t)}{x}{\trefined{B'}{\psi'}{\phi'}}}{\alpha}{\subst{\calI(t)}{x}{S'}} \\
		& \jstyping{\Gamma_1,\bindvar{x}{\trefined{B}{\psi}{\phi}},\Gamma'}{e}{\forall\alpha.S'} & \text{[premise]} \\
		& \text{ind. hyp.} \\
		& \jstyping{\Gamma,\subst{\calI(t)}{x}{\Gamma'}}{\subst{t}{x}{e}}{\forall\alpha. \subst{\calI(t)}{x}{S'}} \\
		& \jwftype{\Gamma_1,\bindvar{x}{\trefined{B}{\psi}{\phi}},\Gamma'}{\trefined{B'}{\psi'}{\phi'}} & \text{[premise]} \\
		& \jwftype{\Gamma,\subst{\calI(t)}{x}{\Gamma'}}{\subst{\calI(t)}{x}{\trefined{B'}{\psi'}{\phi'}}} & \text{[Lem.~\ref{lem:substintoref}]} \\
		& \Gamma,\subst{\calI(t)}{x}{\Gamma'} \vdash \subst{t}{x}{e} \dblcolon \\
		& \quad  \subst{\subst{\calI(t)}{x}{\trefined{B'}{\psi'}{\phi'}}}{\alpha}{\subst{\calI(t)}{x}{S'}} & \text{[typing]}
		\shortintertext{\bf\textsc{(S-Subtype)}}
		& \text{SPS}~S=T_2 \\
		& \subst{\calI(t)}{x}{S} = \subst{\calI(t)}{x}{T_2} \\
		& \jstyping{\Gamma_1,\bindvar{x}{\trefined{B}{\psi}{\phi}},\Gamma'}{e}{T_1} & \text{[premise]} \\
		& \text{ind. hyp.} \\
		& \jstyping{\Gamma,\subst{\calI(t)}{x}{\Gamma'}}{\subst{t}{x}{e}}{\subst{\calI(t)}{x}{T_1}} \\
		& \jsubty{\Gamma_1,\bindvar{x}{\trefined{B}{\psi}{\phi}},\Gamma'}{T_1}{T_2} & \text{[premise]} \\
		& \jsubty{\Gamma,\subst{\calI(t)}{x}{\Gamma'}}{\subst{\calI(t)}{x}{T_1}}{\subst{\calI(t)}{x}{T_2}} & \text{[Lem.~\ref{lem:substintoref}]} \\
		& \jstyping{\Gamma,\subst{\calI(t)}{x}{\Gamma'}}{\subst{t}{x}{e}}{\subst{\calI(t)}{x}{T_2}} & \text{[typing]}
		\shortintertext{\bf\textsc{(S-Transfer)}}
		& \text{SPS}~\Gamma_o = \Gamma_1',\bindvar{x}{\trefined{B}{\psi}{\phi'}},\Gamma'' \\
		& \textbf{let}~\tilde{\Gamma} = \Gamma_1,\bindvar{x}{\trefined{B}{\psi}{\phi}},\Gamma' \\
		& \jstyping{\Gamma_o}{e}{S} & \text{[premise]} \label{eq:subst:transprem} \\
		& \jprop{\tilde{\Gamma}}{\pot{\tilde{\Gamma}} = \pot{\Gamma_o}} & \text{[premise]} \label{eq:subst:transequal} \\
		& \jsharing{\Gamma}{\trefined{B}{\psi}{\phi}}{\trefined{B_0}{\psi}{\phi}}{\trefined{B}{\psi}{0}} \\
		& \text{Lem.~\ref{prop:sharingsplit}, exist $\Gamma_2'$ and $\Gamma_2''$ s.t.} \\
		& \quad \jctxsharing{\Gamma_2}{\Gamma_2'}{\Gamma_2''} \\
		& \quad \jstyping{\Gamma_2'}{t}{\trefined{B_0}{\psi}{\phi}} \\
		& \qquad \implies \jprop{\Gamma_2}{\pot{\Gamma_2'} \ge \subst{\calI(t)}{\nu}{\phi}}  & \text{[Prop.~\ref{prop:freetofree}]} \\
		& \quad \jstyping{\Gamma_2''}{t}{\trefined{B}{\psi}{0}} \\
		& \jstyping{\Gamma_2'',\subst{\calI(t)}{\nu}{\phi'}}{t}{\trefined{B}{\psi}{\phi'}} & \text{[relax]} \\
		& \text{ind. hyp. on \eqref{eq:subst:transprem}} \\
		& \sharing(\Gamma_1',\Gamma_2'',\subst{\calI(t)}{\nu}{\phi'}) ,\subst{\calI(t)}{x}{\Gamma''} \vdash \\
		& \quad \subst{t}{x}{e} \dblcolon \subst{\calI(t)}{x}{S} \label{eq:subst:transind} \\
		& \text{Lem.~\ref{lem:substintoref} on \eqref{eq:subst:transequal}} \\
		& \Gamma,\subst{\calI(t)}{x}{\Gamma'} \models \\
		& \quad \subst{\calI(t)}{x}{\pot{\tilde{\Gamma}}} = \subst{\calI(t)}{x}{\pot{\Gamma_o}} \\
		& \subst{\calI(t)}{x}{ \pot{\tilde{\Gamma}}  } = \pot{\Gamma_1} +\\
		& \quad  \subst{\calI(t)}{\nu}{\phi} + \pot{\subst{\calI(t)}{x}{\Gamma'}} & \text{[def.]} \label{eq:subst:transsubst1} \\
		& \subst{\calI(t)}{x}{\pot{\Gamma_o}} = \pot{\Gamma_1'} + \\
		& \quad \subst{\calI(t)}{\nu}{\phi'} + \pot{\subst{\calI(t)}{x}{\Gamma''}} & \text{[def.]} \label{eq:subst:transsubst2} \\
		& \pot{\Gamma,\subst{\calI(t)}{x}{\Gamma'}} = \\
		& \quad \pot{\Gamma_1}+\pot{\Gamma_2} + \pot{\subst{\calI(t)}{\nu}{\Gamma'}} = \\
		& \quad \pot{\Gamma_1'}+\pot{\Gamma_2'} + \pot{\Gamma_2''} +\pot{\subst{\calI(t)}{\nu}{\Gamma''}} + \\
		& \qquad \subst{\calI(t)}{\nu}{(\phi' - \phi)} \ge & \text{[\eqref{eq:subst:transsubst1}, \eqref{eq:subst:transsubst2}]} \\
		& \quad \pot{\Gamma_1'}+\pot{\Gamma_2''}+\pot{\subst{\calI(t)}{x}{\Gamma''}} + \\
		& \qquad \subst{\calI(t)}{x}{\phi'} = \\
		& \quad \pot{\sharing(\Gamma_1',\Gamma_2'',\subst{\calI(t)}{\nu}{\phi'} ), \subst{\calI(t)}{x}{\Gamma''}} \\
		& \text{recall \eqref{eq:subst:transind}, and then typing, relax} \\
		& \jstyping{\Gamma,\subst{\calI(t)}{x}{\Gamma'}}{\subst{\calI(t)}{x}{e}}{\subst{\calI(t)}{x}{S}}
		\shortintertext{\bf\textsc{(S-Relax)}}
		& \text{SPS}~S=\tpot{R}{\phi+\phi'} \\
		& \subst{\calI(t)}{x}{S} = \tpot{\subst{\calI(t)}{x}{R}}{\subst{\calI(t)}{x}{\phi} +\subst{\calI(t)}{x}{\phi'}} \\
		& \jstyping{\Gamma_1,\bindvar{x}{\trefined{B}{\psi}{\phi}},\Gamma'}{e}{\tpot{R}{\phi}} & \text{[premise]} \\
		& \text{ind. hyp.} \\
		& \jstyping{\Gamma,\subst{\calI(t)}{x}{\Gamma'}}{\subst{t}{x}{e}}{\tpot{\subst{\calI(t)}{x}{R}}{\subst{\calI(t)}{x}{\phi}}} \\
		& \jsort{\Gamma_1,\bindvar{x}{\trefined{B}{\psi}{\phi}},\Gamma'}{\phi'}{\bbN} & \text{[premise]} \\
		& \jsort{\Gamma,\subst{\calI(t)}{x}{\Gamma'}}{\subst{\calI(t)}{x}{\phi'}}{\bbN} & \text{[Lem.~\ref{lem:substintoref}]} \\
		& \Gamma,\subst{\calI(t)}{x}{\Gamma'},\subst{\calI(t)}{x}{\phi'} \vdash \subst{t}{x}{e} \dblcolon \\
		& \quad \tpot{\subst{\calI(t)}{x}{R}}{\subst{\calI(t)}{x}{\phi}+\subst{\calI(t)}{x}{\phi'}} & \text{[typing]}
	\end{alignat}
\end{proof}

\subsection{Preservation}

\Omit{
\begin{lemma}\label{lem:unfoldconst}
	Let $\Gamma= \overline{q}$ and $\jctxsharing{\Gamma}{\Gamma_1}{\Gamma_2}$.
	\begin{enumerate}
		\item If $\jtunfoldnil{\Gamma}{\Gamma_n}{T_\ell}$ and $\jstyping{\Gamma_1}{\enil}{T_\ell}$ , then $\emptyset$ is consistent with $\jctxtyping[\Gamma_1]{\emptyset}{\Gamma_n}$.
		
		\item If $\jtunfoldcons{\Gamma}{\Gamma_c}{T_\ell}$ and $\jstyping{\Gamma_1}{\econs{v_h}{v_t}}{T_\ell}$, then $V' \defeq \{ x_h \mapsto v_h, x_t \mapsto v_t\}$ is consistent with $\jctxtyping[\Gamma_1]{V'}{\Gamma_c}$.
	\end{enumerate}
\end{lemma}
\begin{proof}[Proof of (1)]
		\begin{alignat}{2}
			& \textbf{let}~V=\emptyset, E =\emptyset \\
			& \text{Obs.}~\jctxtyping{V}{\Gamma_1}, E=\calI_V(\Gamma_1), E \models \condc{V}{\Gamma_1} \\
			& \text{SPS}~\jtunfoldnil{\Gamma}{\Gamma_n}{T_\ell} \\
			& \Gamma_n = \psi',\phi'; T_\ell = \tpot{\tsubset{\tlist{T}}{\psi}}{\phi} & \text{[premise]} \\
			& \jsubty{\Gamma}{\tpot{\tsubset{\tlist{T}}{\psi \wedge \nu = 0}}{\phi}}{\tpot{\tsubset{\tlist{T}}{\psi'}}{\phi'}} & \text{[premise]} \label{eq:unfoldnil:subtype} \\
			& \textbf{let}~V'=\emptyset,E'=\emptyset \\
			& \text{Obs.}~\jctxtyping[\Gamma_1]{V'}{\Gamma_n}, E' = \calI_{V,V'}(\Gamma_1,\Gamma_n) \\
			& \text{inv. on \eqref{eq:unfoldnil:subtype}} \\
			& \jprop{\Gamma,\nu:\tsubset{\tlist{T}}{\psi \wedge \nu = 0}}{\psi' \wedge (\phi \ge \phi')} \label{eq:unfoldnil:invsubtype} \\
			& \mathbf{let}~V''=\{ \nu \mapsto \enil \},E'' = \{ \nu \mapsto 0 \} \\
			& E'' \models (\psi \wedge \nu = 0) \implies \psi' \wedge (\phi \ge \phi') & \text{[\eqref{eq:unfoldnil:invsubtype}]} \\
			& E' \models \subst{0}{\nu}{\psi} \implies \psi' \wedge (\subst{0}{\nu}{\phi} \ge \phi') \label{eq:unfoldnil:1} \\
			& \text{SPS}~\jstyping{\Gamma_1}{\enil}{T_\ell},~\text{then by Thm.~\ref{the:progress}} \\
			& E \models \condv{V}{\enil}{T_\ell} \wedge \potc{V}{\Gamma_1} \ge \potv{V}{\enil}{T_\ell} \\
			& E \models \subst{0}{\nu}{\psi} \wedge \potc{V}{\Gamma_1} \ge \subst{0}{\nu}{\phi} \\
			& E' \models \psi' \wedge \potc{V}{\Gamma_1} \ge \phi' & \text{[\eqref{eq:unfoldnil:1}]} \\
			& \condc{V,V'}{\Gamma_n} = \psi' & \text{[def.]} \\
			& \potc{V,V'}{\Gamma_n} = \phi' & \text{[def.]} \\
			& \text{done}
		\end{alignat}
\end{proof}
\begin{proof}[Proof of (2)]
		\begin{alignat}{2}
			& \textbf{let}~V=\emptyset,E=\emptyset \\
			& \text{Obs.}~\jctxtyping{V}{\Gamma_1}, E=\calI_V(\Gamma_1), E \models \condc{V}{\Gamma_1} \\
			& \text{SPS}~\jctxtyping{\Gamma}{\Gamma_c}{T_\ell} \\
			& \Gamma_c=\bindvar{x_h}{T},\bindvar{x_t}{\tlist{T}}, \psi',\phi';  T_\ell = \tpot{\tsubset{\tlist{T}}{\psi}}{\phi} & \text{[premise]} \\
			& \Gamma,\bindvar{x_h}{T},\bindvar{x_t}{\tlist{T}} \vdash \\
			& \quad \tpot{\tsubset{\tlist{T}}{\psi \wedge \nu = x_t+1}}{\phi} <: \tpot{\tsubset{\tlist{T}}{\psi'}}{\phi'} & \text{[premise]} \label{eq:unfoldcons:subtype} \\
			& \textbf{let}~V'=\{x_h \mapsto v_h, x_t \mapsto v_t\} \\
			& \textbf{let}~E' = \{ x_h \mapsto \calI(v_h), x_t \mapsto \calI(v_t) \} \\
			& \text{SPS}~\jstyping{\Gamma_1}{\econs{v_h}{v_t}}{T_\ell} \label{eq:unfoldcons:vht} \\
			& \text{Obs.}~\jctxtyping[\Gamma_1]{V'}{\Gamma_c},E'=\calI_{V,V'}(\Gamma_1,\Gamma_c) & \text{[inv. on \eqref{eq:unfoldcons:vht}]} \\
			& \text{inv. on \eqref{eq:unfoldcons:subtype}} \\
			& \Gamma,\bindvar{x_h}{T},\bindvar{x_t}{\tlist{T}},\bindvar{\nu}{\tsubset{\tlist{T}}{\psi \wedge \nu = x_t+1}} \models \\
			& \quad \psi' \wedge (\phi \ge \phi') \label{eq:unfoldcons:invsubtype} \\
			& \textbf{let}~V''=V'[\nu \mapsto \econs{v_h}{v_t}] \\
			& \textbf{let}~E''=E'[\nu \mapsto \calI(v_t) + 1] \\
			& E'' \models (\condv{V''}{v_h}{T} \wedge \condv{V''}{v_t}{\tlist{T}} \wedge \\
			& \quad \psi \wedge \nu = x_t+ 1) \implies (\psi' \wedge \phi \ge \phi') & \text{[\eqref{eq:unfoldcons:invsubtype}]} \\
			& E' \models (\condv{V}{v_h}{T} \wedge \condv{V}{v_t}{\tlist{T}} \wedge \\
			& \quad \subst{x_t+1}{\nu}{\psi}) \implies (\psi' \wedge \subst{x_t+1}{\nu}{\phi} \ge \phi') \label{eq:unfoldcons:1} \\
			& \text{Thm.~\ref{the:progress} on \eqref{eq:unfoldcons:vht}} \\
			& E \models \condv{V}{\econs{v_h}{v_t}}{T_\ell} \wedge \\
			& \quad \potc{V}{\Gamma_1} \ge \potv{V}{\econs{v_h}{v_t}}{T_\ell} \\
			& E \models \condv{V}{v_h}{T} \wedge \condv{V}{v_t}{\tlist{T}} \wedge  \\
			& \quad \subst{\calI(v_t)+1}{\nu}{\psi} \wedge\potc{V}{\Gamma_1} \ge  \\
			& \quad  \potv{V}{v_h}{T} + \potv{V}{v_t}{\tlist{T}} + \subst{\calI(v_t)+1}{\nu}{\phi} \\
			& E' \models \condv{V}{v_h}{T} \wedge \condv{V}{v_t}{\tlist{T}} \wedge \psi'  \wedge \\
			& \quad \potc{V}{\Gamma_1} \ge \\
			& \quad \potv{V}{v_h}{T}+\potv{V}{v_T}{\tlist{T}} + \phi' & \text{[\eqref{eq:unfoldcons:1}]} \\
			& \condc{V,V'}{\Gamma_c} = \condv{V}{v_h}{T} \wedge \condv{V}{v_t}{\tlist{T}} \wedge \psi' & \text{[def.]} \\
			& \potc{V,V'}{\Gamma_c} = \potv{V}{v_h}{T} + \potv{V}{v_t}{T} + \phi' & \text{[def.]} \\
			& \text{done}
		\end{alignat}
\end{proof}
}

\begin{proposition}\label{prop:evaldeter}
	If $\jstep{e}{e'}{p}{p'}$ and $\jstep{e}{e''}{q}{q'}$, then $e' = e''$ and $q - p = q' - p'$.
\end{proposition}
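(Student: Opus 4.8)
The plan is to proceed by induction on the derivation of the first step $\jstep{e}{e'}{p}{p'}$, performing inversion on the second step $\jstep{e}{e''}{q}{q'}$ within each case. The crucial structural fact is that the cost semantics is \emph{syntax-directed}: the outermost syntactic form of $e$ determines which reduction rule can fire, and, importantly, the resource parameter plays no role in rule selection nor in the shape of the resulting expression. Thus once we know both steps use the same rule, the equality $e' = e''$ is immediate from reading off the (deterministic) conclusion of that rule, and the net-cost equation reduces to a per-rule inspection.

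For the base rules such as \textsc{(E-Cond-True)}, \textsc{(E-Let-Val)}, \textsc{(E-MatP-Val)}, \textsc{(E-MatD-Val)}, \textsc{(E-App-Abs)}, and \textsc{(E-App-Fix)}, the conclusion fixes $e'$ uniquely as a substitution instance (or a selected branch), so inversion on the second derivation yields the same $e''$; moreover each of these rules leaves the resource parameter unchanged, so $p' = p$ and $q' = q$, whence $q - p = 0 = q' - p'$. The sole rule that touches resources is \textsc{(E-Tick)}: from $\jstep{\econsume{c}{e_0}}{e_0}{p}{p'}$ we read off $p' = p - c$, and likewise $q' = q - c$ for the second derivation, so $q - p = q' - p'$ again holds (both sides absorb the same $-c$). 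In every base case both conclusions of the proposition follow without invoking the induction hypothesis, which already reflects why net cost is independent of the starting resource amount.

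The inductive cases are the congruence rules — for an a-normal-form language the only one is the rule for $\elet{e_1}{x}{e_2}$ that lifts a step of $e_1$ — and here we apply the induction hypothesis to the subderivation $\jstep{e_1}{e_1'}{p}{p'}$ to obtain both $e_1' = e_1''$ and $q - p = q' - p'$, from which $e' = e''$ and the cost equation propagate unchanged. The main obstacle, and the only place a genuine auxiliary lemma is needed, is justifying syntax-directedness at $\elet{e_1}{x}{e_2}$: we must rule out any overlap between \textsc{(E-Let-Val)} and the congruence rule. This requires the standard fact that \emph{values are irreducible}, i.e.\ no $v$ with $\jval{v}$ admits a step; given this, exactly one of the two rules is applicable depending on whether $e_1$ is a value, and both derivations are forced to use the same rule. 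With irreducibility of values established, the remaining case analysis is routine.
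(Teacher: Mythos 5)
Your proposal is correct and follows exactly the paper's proof strategy: induction on the derivation of $\jstep{e}{e'}{p}{p'}$ followed by inversion on $\jstep{e}{e''}{q}{q'}$. The paper states this in one line; your elaboration of the cases (the axiom rules being resource-preserving except \textsc{(E-Tick)}, the $\mathsf{let}$ congruence case using the induction hypothesis, and irreducibility of values to ensure determinacy of rule selection) is the intended fleshing-out of that argument.
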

\begin{proof}
	By induction on $\jstep{e}{e'}{p}{p'}$ and then inversion on $\jstep{e}{e''}{q}{q'}$.
\end{proof}

\begin{theorem}[Preservation]\label{the:preservation}
	If $\Gamma=\overline{q}$, $\jstyping{\Gamma}{e}{S}$, $p \ge \potc{\emptyset}{\Gamma}$ and $\jstep{e}{e'}{p}{p'}$, then $\jstyping{p'}{e'}{S}$.
\end{theorem}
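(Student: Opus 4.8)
The plan is to prove the statement in the same generalized form used for \theoref{progress}: assume $\jstyping{\Gamma}{e}{S}$ for an arbitrary context $\Gamma=\overline{q}$ consisting solely of free-potential bindings, with $p \ge \potc{\emptyset}{\Gamma}$ and $\jstep{e}{e'}{p}{p'}$, and conclude $\jstyping{p'}{e'}{S}$. I would proceed by induction on the typing derivation $\jstyping{\Gamma}{e}{S}$, and within each case invert the reduction judgment $\jstep{e}{e'}{p}{p'}$, using \propref{evaldeter} to pin down the unique redex and the net cost of the step. The cases split into the syntax-directed rules (where $e'$ is an actual contractum) and the structural rules \textsc{(S-*)} (where the last rule leaves $e$ unchanged).

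For the structural rules the subject expression is untouched, so I would apply the induction hypothesis to the premise and re-apply the same rule. \textsc{(S-Subtype)} and \textsc{(S-Inst)} reduce to the observation that subtyping and instantiation are insensitive to the free potential of the context (free potentials are erased by the constraint-extraction $\scrB$), so the relation survives replacing the context by the singleton $p'$; \textsc{(S-Gen)} is vacuous since its subject is a value and values do not step. The delicate ones are \textsc{(S-Transfer)} and \textsc{(S-Relax)}. For \textsc{(S-Relax)} with threaded potential $\phi'$, I would use \propref{stepdelta} to re-run the given step at the resource level $p-\phi'$ actually attributed to the premise (leaving $\phi'$ reserved), invoke the induction hypothesis to obtain $\jstyping{p'-\phi'}{e'}{\tpot{R}{\phi}}$, then re-thread $\phi'$ through the result type and collapse the two free potentials into the single value $p'$ with \textsc{(S-Transfer)}; the net-cost invariance from \propref{evaldeter} guarantees the arithmetic closes.

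For the contraction rules I would lean on the Substitution Theorem (\theoref{substitution}). The value steps \textsc{(E-App-Abs)}, \textsc{(E-App-Fix)}, \textsc{(E-Let-Val)}, and \textsc{(E-MatP-Val)} each substitute one or more values for the bound variables of the continuation; since $\Gamma$ carries no program variables these subjects are genuine values, so \theoref{substitution} applies (part~(1) for scalar/interpretable arguments and part~(2) for functions, as in the fixpoint unrolling). The residual path constraints introduced by \textsc{(T-Cond)}, \textsc{(T-MatP)}, \textsc{(T-MatD)}---e.g.\ $\calI(a_0)=(x_1,x_2)$ or the measure equation---become valid after substitution of the scrutinee value, and I would discharge them with \lemref{substprop}. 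The \textsc{(E-Tick)} cases are pure resource bookkeeping: for \textsc{(T-Tick-N)} the step frees $-c$ units, matching the free potential the premise already assumes, while for \textsc{(T-Tick-P)} the context supplies $c$, so $p'=p-c \ge \potc{\emptyset}{\Gamma'}$ and the premise typing is exactly the conclusion after weakening to $p'$.

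I expect the principal obstacle to be \textsc{(T-MatD)}. Unlike construction, which is covered by \lemref{consistentcons}, destructing a value $C_j(v_0,\tuple{v_1,\dots,v_{m_j}})$ requires the \emph{converse}: that the potential bundled in the scrutinee is precisely what the branch context demands---the extracted free potential $\pi.\mathbf{j}(\calI(v_0))(\theta)$, the head potential of $x_0\!:\!T_j$, and the shifted child potentials $x_i\!:\!\tinduct{C}{T}{m}{\lhd.\mathbf{j}(x_0)(\theta).\mathbf{i}}$. This is the datatype analogue of the list-specific unfold lemma, and proving it amounts to reading the inductive identity \eqref{eq:generic-potential-function} in reverse, combined with the potential-preserving decompositions of \propref{subtyping} and \propref{sharingsplit}, so that the potential held for the scrutinee in the first sub-context is re-apportioned, via \theoref{substitution}, to the substituted branch with the total free potential conserved. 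Because \textsc{(E-MatD-Val)} carries no cost, conservation of total potential yields $\jstyping{p}{e'}{T'}$ directly, closing the case.
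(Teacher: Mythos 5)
Your proposal matches the paper's proof essentially step for step: strengthen to a context of free potentials, induct on the typing derivation, invert the reduction, dispatch the contraction cases with \theoref{substitution} and \lemref{substprop}, and handle \textsc{(T-MatD)} by inversion on the scrutinee's \textsc{(SimpAtom-ConsD)} derivation (which hands you the extracted potential, head type, and shifted child types directly, so no separate ``unfold'' lemma is needed). The only slip is in \textsc{(S-Relax)}: \propref{stepdelta} only adds resources, so to obtain the step at level $p-\phi'$ you must instead invoke Progress on the premise (using $p-\phi'\ge\potc{\emptyset}{\Gamma'}$) and then \propref{evaldeter}, exactly as the paper does.
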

\begin{proof}
	By induction on $\jstyping{\Gamma}{e}{S}$:
	\begin{alignat}{2}\footnotesize
		\shortintertext{\bf\textsc{(T-Consume-P)}}
		& \text{SPS}~\Gamma=(\Gamma',c),e=\econsume{c}{e_0},c\ge 0 \\
		& \text{SPS}~S=T \\
		& \jstyping{\Gamma'}{e_0}{T} & \text{[premise]} \label{eq:pres:consumepe0} \\
		& \text{inv. on}~\jstep{e}{e'}{p}{p'} \\
		& e' = e_0, p' = p-c \ge \potc{\emptyset}{\Gamma} - c = \potc{\emptyset}{\Gamma'} \\
		& \jstyping{p'}{e_0}{T} & \text{[relax, \eqref{eq:pres:consumepe0}]}
		\shortintertext{\bf\textsc{(T-Consume-N)}}
		& \text{SPS}~e=\econsume{c}{e_0},c<0,S=T \\
		& \jstyping{\Gamma,-c}{e_0}{T} & \text{[premise]} \label{eq:pres:consumene0} \\
		& \text{inv. on}~\jstep{e}{e'}{p}{p'} \\
		& e' = e_0, p' = p -c \ge \potc{\emptyset}{\Gamma}-c \\
		& \jstyping{p'}{e_0}{T} & \text{[relax, \eqref{eq:pres:consumene0}]}
		\shortintertext{\bf\textsc{(T-Cond)}}
		& \text{SPS}~e=\econd{a_0}{e_1}{e_2},S=T \\
		& \jatyping{\Gamma}{a_0}{ \tbool } & \text{[premise]} \label{eq:pres:conde0} \\
		& \jstyping{\Gamma, \calI(a_0) }{e_1}{T} & \text{[premise]} \label{eq:pres:conde1} \\
		& \jstyping{\Gamma, \neg\calI(a_0)}{e_2}{T} & \text{[premise]} \label{eq:pres:conde2} \\
		& \text{inv. on}~\jstep{e}{e'}{p}{p'} \\
		& \textbf{case}~\jstep{e}{e_1}{p}{p} \\
		& \enskip a_0 = \etrue & \text{[premise]} \\
		& \enskip \calI(a_0) = \top \\
		& \enskip  \jprop{\Gamma}{\top} \\
		& \enskip \jstyping{\Gamma}{e_1}{T} & \text{[\lemref{substprop}, \eqref{eq:pres:conde1}]} \\
		& \enskip p \ge \potc{\emptyset}{\Gamma} & \text{[asm.]} \\
		& \enskip \jstyping{p}{e_1}{T} & \text{[relax]} \\
		& \textbf{case}~\jstep{e}{e_2}{p}{p} \\
		& \enskip a_0 = \efalse & \text{[premise]} \\
		& \enskip \text{similar to $a_0=\etrue$}
		\shortintertext{\bf\textsc{(T-MatP)}}
		& \text{SPS}~e=\ematp{a_0}{x_1}{x_2}{e_1},S=T \\
		& \jctxsharing{\Gamma}{\Gamma_1}{\Gamma_2} & \text{[premise]} \\
		& \quad \implies \potc{\emptyset}{\Gamma} =\potc{\emptyset}{\Gamma_1} +\potc{\emptyset}{\Gamma_2}\label{eq:pres:matpsplit} \\
		& \jatyping{\Gamma_1}{a_0}{\tprod{B_1}{B_2}} & \text{[premise]} \label{eq:pres:matpe0} \\
		& \jstyping{\Gamma_2,x_1:B_1,x_2:B_2,\calI(a_0)=(x_1,x_2)}{e_1}{T} & \text{[premise]} \label{eq:pres:matpe1} \\
		& \text{inv. on $\jstep{e}{e'}{p}{p'}$} \\
		& a_0 = \epair{v_1}{v_2}, e'=\subst{v_1,v_2}{x_1,x_2}{e},p'=p \\
		& \calI(a_0) = (\calI(v_1),\calI(v_2)) \\
		& \jatyping{\Gamma_{11}}{v_1}{B_1}, \jatyping{\Gamma_{12}}{v_2}{B_2}, \jctxsharing{\Gamma_1}{\Gamma_{11}}{\Gamma_{12}} & \text{[inv.]} \label{eq:pres:matpinv} \\
		& \sharing(\Gamma_2,\Gamma_1),\calI(a_0)=(\calI(v_1),\calI(v_2)) \vdash [v_1,v_2/x_1,x_2]e_1 \dblcolon T & \text{[Thm.~\ref{the:substitution}, \eqref{eq:pres:matpe1}, \eqref{eq:pres:matpinv} ]} \\
		& \Gamma \models \calI(a_0) = (\calI(v_1),\calI(v_2)) \\
		& \Gamma \vdash [v_1,v_2/x_1,x_2]e_1 \dblcolon T & \text{[\lemref{substprop}]} \\
		& p \ge \potc{\emptyset}{\Gamma} & \text{[asm., \eqref{eq:pres:matpsplit}]} \\
		& \jstyping{p}{e'}{T'} & \text{[relax]} 
		\shortintertext{\bf\textsc{(T-MatD)}}
		& \text{SPS}~e=\mathsf{matd}(a_0, \many{ C_j(x_0,\tuple{x_1,\cdots,x_{m_j}}).e_j }),S=T' \\
		& \jctxsharing{\Gamma}{\Gamma_1}{\Gamma_2} & \text{[premise]} \\
		& \quad \implies \potc{\emptyset}{\Gamma}=\potc{\emptyset}{\Gamma_1}+\potc{\emptyset}{\Gamma_2} \label{eq:pres:matlsplit} \\
		& \jatyping{\Gamma_1}{a_0}{\tinduct{C}{T}{m}{\theta}} & \text{[premise]} \label{eq:pres:matle0} \\
		& \forall j : \Gamma_2,x_0:T_j,\many{x_i: \tinduct{C}{T}{m}{\lhd.\mathbf{j}(x_0)(\theta).\mathbf{i}}}, \\
		& \quad \calI(a_0)=\mu(C_j(x_0,\tuple{\cdots}))   \vdash {e_j} \dblcolon {T'} & \text{[premise]} \label{eq:pres:matle2} \\
		& \text{inv. on}~\jstep{e}{e'}{p}{p'} \\
& \exists j : \jstep{e}{\subst{v_0,v_1,\cdots,v_{m_j}}{x_0,x_1,\cdots,x_{m_j}}{e_j} }{p}{p} \\
		& a_0 = C_j(v_0,\tuple{v_1,\cdots,v_{m_j}}) & \text{[premise]} \\
          & \calI(a_0) = \mu.\mathbf{j}(\calI(v_0))(\calI(v_1),\cdots,\calI(v_{m_j})) \\
          &  \jstyping{\Gamma_{11}}{v_0}{T}, \\
          & \quad \jatyping{\Gamma_{12}}{\tuple{v_1,\cdots,v_{m_j} }}{ \textstyle\prod_{i=1}^{m_j} \tinduct{C}{T}{m}{ \lhd.\mathbf{j}(\calI(v_0))(\theta).\mathbf{i} } }, \\
          & \quad \Gamma_1=\Gamma_1',\pi.\mathbf{j}(\calI(v_0))(\theta), \jctxsharing{\Gamma_1'}{\Gamma_{11}}{\Gamma_{12}} & \text{[inv.]} \label{eq:pres:matlinv} \\
          &  \sharing(\Gamma_2,\Gamma_1') ,\calI(a_0) = \mu.\mathbf{j}(\calI(v_0))(\calI(v_1),\cdots,\calI(v_{m_j})), \pi.\mathbf{j}(\calI(v_0))(\theta) \vdash & \text{[Thm.~\ref{the:substitution},} \\
          & \quad \subst{v_0,v_1,\cdots,v_{m_j}}{x_0,x_1,\cdots,x_{m_j}}{e_j} \dblcolon T' & \text{\eqref{eq:pres:matle2}, \eqref{eq:pres:matlinv}]}  \\
&  \jprop{\Gamma}{\calI(a_0)=\mu.\mathbf{j}(\calI(v_0))(\calI(v_1),\cdots,\calI(v_{m_j}))} \\
&  \Gamma \vdash \subst{v_0,v_1,\cdots,v_{m_j}}{x_0,x_1,\cdots,x_{m_j}}{e_j} \dblcolon T' & \text{[\lemref{substprop}]} \\
		&  p \ge \potc{\emptyset}{\Gamma} & \text{[asm., \eqref{eq:pres:matlsplit}]} \\
		&  \jstyping{p}{e'}{T'} & \text{[relax]}
		\shortintertext{\bf\textsc{(T-Let)}}
		& \text{SPS}~e=\elet{e_1}{x}{e_2},S=T_2 \\
		& \jctxsharing{\Gamma}{\Gamma_1}{\Gamma_2} \\
		& \quad \implies \potc{\emptyset}{\Gamma} = \potc{\emptyset}{\Gamma_1}+\potc{\emptyset}{\Gamma_2} & \text{[premise]} \label{eq:pres:letsplit} \\
		& \jstyping{\Gamma_1}{e_1}{S_1} & \text{[premise]} \label{eq:pres:lete1} \\
		& \jstyping{\Gamma_2,\bindvar{x}{S_1}}{e_2}{T_2} & \text{[premise]} \label{eq:pres:lete2} \\
		& \text{inv. on}~\jstep{e}{e'}{p}{p'} \\
		& \textbf{case}~\jstep{e}{\elet{e_1'}{x}{e_2}}{p}{p'} \\
		& \enskip \jstep{e_1}{e_1'}{p}{p'} & \text{[premise]} \label{eq:pres:lete1step} \\
		& \enskip p-\potc{\emptyset}{\Gamma_2} \ge \potc{\emptyset}{\Gamma_1} & \text{[asm., \eqref{eq:pres:letsplit}]} \label{eq:pres:letindhyp} \\
		& \enskip \text{Thm.~\ref{the:progress} on \eqref{eq:pres:lete1} with \eqref{eq:pres:letindhyp}} \\
		& \enskip \jstep{e_1}{e_1'}{p-\potc{\emptyset}{\Gamma_2}}{p'-\potc{\emptyset}{\Gamma_2}} & \text{[Prop.~\ref{prop:evaldeter}, \eqref{eq:pres:lete1step}]} \label{eq:pres:lete1step2} \\
		& \enskip \text{ind. hyp. on \eqref{eq:pres:lete1} with \eqref{eq:pres:lete1step2}, \eqref{eq:pres:letindhyp}} \\
		& \enskip \jstyping{p'-\potc{\emptyset}{\Gamma_2}}{e_1'}{S_1} \\
		& \enskip \jstyping{\sharing(p'-\potc{\emptyset}{\Gamma_2},\Gamma_2)}{\elet{e_1'}{x}{e_2}}{T_2} & \text{[typing]} \\
		& \enskip \jstyping{p'}{e'}{T_2} & \text{[transfer]} \\
		& \textbf{case}~\jstep{e}{\subst{e_1}{x}{e_2}}{p}{p} \\
		& \enskip \jval{e_1} & \text{[premise]} \label{eq:pres:lete1val} \\
		& \enskip \jstyping{\sharing(\Gamma_1,\Gamma_2)}{\subst{e_1}{x}{e_2}}{T_2} & \text{[Thm.~\ref{the:substitution}, \eqref{eq:pres:lete2}]} \\
		& \enskip \jstyping{\potc{\emptyset}{\Gamma_1}+\potc{\emptyset}{\Gamma_2}}{e'}{T_2} & \text{[transfer]} \\
		& \enskip \jstyping{p}{e'}{T_2 } & \text{[relax]}
		\shortintertext{\bf\textsc{(T-App-SimpAtom)}}
		& \text{SPS}~e=\eapp{\hat{a}_1}{a_2},S=T \\
		& \jctxsharing{\Gamma}{\Gamma_1}{\Gamma_2} \\
		& \quad \implies \potc{\emptyset}{\Gamma} = \potc{\emptyset}{\Gamma_1}+\potc{\emptyset}{\Gamma_2} & \text{[premise]} \label{eq:pres:appasplit} \\
		& \jstyping{\Gamma_1}{\hat{a}_1}{\tarrowm{x}{\trefined{B_x}{\psi_x}{\phi_x}}{T}{1}} & \text{[premise]} \label{eq:pres:appae1} \\
		& \jstyping{\Gamma_2}{a_2}{\trefined{B_x}{\psi_x}{\phi_x}} & \text{[premise]} \label{eq:pres:appae2} \\
		& \text{inv. on}~\jstep{e}{e'}{p}{p'} \\
		& \textbf{case}~\jstep{e}{\subst{a_2}{x}{e_0}}{p}{p} \\
		& \enskip \hat{a}_1=\eabs{x}{e_0}, \jval{a_2} & \text{[premise]} \\
		& \enskip \text{inv. on}~\eqref{eq:pres:appae1} \\
		& \enskip \jstyping{\Gamma_1, \bindvar{x}{\trefined{B_x}{\psi_x}{\phi_x}}}{e_0}{T} \Omit{\potc{\emptyset}{\Gamma_1} \ge \phi_1} \label{eq:pres:appainvlambda} \\
		& \enskip \jstyping{\Gamma}{\subst{a_2}{x}{e_0}}{\subst{\calI(a_2)}{x}{T}} & \text{[Thm.~\ref{the:substitution}, \eqref{eq:pres:appainvlambda}]} \\
        & \enskip p \ge \potc{\emptyset}{\Gamma} & \text{[asm.]} \\
		& \enskip \jstyping{p}{e'}{T} & \text{[relax]} \\
		& \textbf{case}~\jstep{e}{\subst{e_1,a_2}{f,x}{e_0}}{p}{p} \\
		& \enskip e_1=\efix{f}{x}{e_0}, \jval{a_2} & \text{[premise]} \\
		& \enskip \text{similar to $e_1=\eabs{x}{e_0}$}
		\shortintertext{\bf\textsc{(T-App)}}
		& \text{SPS}~e=\eapp{\hat{a}_1}{\hat{a}_2},S=T \\
		& \jctxsharing{\Gamma}{\Gamma_1}{\Gamma_2} \\
		& \quad \implies \potc{\emptyset}{\Gamma} = \potc{\emptyset}{\Gamma_1}+\potc{\emptyset}{\Gamma_2} & \text{[premise]} \label{eq:pres:appsplit} \\
		& \jstyping{\Gamma_1}{\hat{a}_1}{\tarrowm{x}{T_x}{T}{1}} & \text{[premise]} \label{eq:pres:appe1} \\
		& \jstyping{\Gamma_2}{\hat{a}_2}{T_x} & \text{[premise]} \label{eq:pres:appe2} \\
		& \text{inv. on}~\jstep{e}{e'}{p}{p'} \\
		& \textbf{case}~\jstep{e}{\subst{\hat{a}_2}{x}{e_0}}{p}{p} \\
		& \enskip \hat{a}_1=\eabs{x}{e_0}, \jval{\hat{a}_2} & \text{[premise]} \\
		& \enskip \text{inv. on}~\eqref{eq:pres:appe1} \\
		& \enskip \jstyping{\Gamma_1, \bindvar{x}{T_x}}{e_0}{T} \Omit{\potc{\emptyset}{\Gamma_1} \ge \phi_1} \label{eq:pres:appinvlambda} \\
		& \enskip \jstyping{\Gamma}{\subst{\hat{a}_2}{x}{e_0}}{T} & \text{[Thm.~\ref{the:substitution}, \eqref{eq:pres:appinvlambda}]} \\
        & \enskip p \ge \potc{\emptyset}{\Gamma} & \text{[asm.]} \\
		& \enskip \jstyping{p}{e'}{T} & \text{[relax]} \\
		& \textbf{case}~\jstep{e}{\subst{e_1,\hat{a}_2}{f,x}{e_0}}{p}{p} \\
		& \enskip e_1=\efix{f}{x}{e_0}, \jval{\hat{a}_2} & \text{[premise]} \\
		& \enskip \text{similar to $e_1=\eabs{x}{e_0}$}
		%
		%
		\shortintertext{\bf\textsc{(S-Inst)}}
		& \text{SPS}~S=\subst{\tpot{\tsubset{B}{\psi}}{\phi}}{\alpha}{S'} \\
		& \jstyping{\Gamma}{e}{\forall\alpha.S' } & \text{[premise]} \label{eq:pres:instprem} \\
		& \text{ind. hyp. on \eqref{eq:pres:instprem}} \\
		& \jstyping{p'}{e'}{\forall\alpha.S'} \\
		& \jstyping{p'}{e'}{\subst{\tpot{\tsubset{B}{\psi}}{\phi}}{\alpha}{S'}} & \text{[typing]}
		\shortintertext{\bf\textsc{(S-Subtype)}}
		& \text{SPS}~S=T_2 \\
		& \jstyping{\Gamma}{e}{T_1} & \text{[premise]} \label{eq:pres:subtypeprem} \\
		& \jsubty{\Gamma}{T_1}{T_2} & \text{[premise]} \label{eq:pres:subtyperel} \\
		& \text{ind. hyp. on \eqref{eq:pres:subtypeprem}} \\
		& \jstyping{p'}{e'}{T_1} \\
		& \jstyping{p'}{e'}{T_2} & \text{[typing]}
		\shortintertext{\bf\textsc{(S-Transfer)}}
		& \jstyping{\Gamma'}{e}{S}, \jprop{\Gamma}{\pot{\Gamma}=\pot{\Gamma'}} & \text{[premise]} \label{eq:pres:transprem} \\
		& \Gamma' = \overline{q'} \wedge \potc{\emptyset}{\Gamma}=\potc{\emptyset}{\Gamma'} \\
		& p \ge \potc{\emptyset}{\Gamma'} \label{eq:pres:transindhyp} \\
		& \text{ind. hyp. on \eqref{eq:pres:transprem} with \eqref{eq:pres:transindhyp}} \\
		& \jstyping{p'}{e'}{S}
		\shortintertext{\bf\textsc{(S-Relax)}}
		& \text{SPS}~\Gamma=(\Gamma',\phi'), S=\tpot{R}{\phi+\phi'} \\
		& \jstyping{\Gamma'}{e}{\tpot{R}{\phi}}  & \text{[premise]} \label{eq:pres:relaxprem} \\
		& p -\phi' \ge \potc{\emptyset}{\Gamma'} & \text{[asm.]} \label{eq:pres:relaxindhyp} \\
		& \text{Thm.~\ref{the:progress} on \eqref{eq:pres:relaxprem} with \eqref{eq:pres:relaxindhyp}} \\
		& \jstep{e}{e'}{p-\phi'}{p'-\phi'} & \text{[Prop.~\ref{prop:evaldeter}, asm.]} \label{eq:pres:relaxstep} \\
		& \text{ind. hyp. on \eqref{eq:pres:relaxprem} with \eqref{eq:pres:relaxstep}, \eqref{eq:pres:relaxindhyp}} \\
		& \jstyping{p'-\phi'}{e'}{\tpot{R}{\phi}} \\
		& \jstyping{p'-\phi',\phi'}{e'}{\tpot{R}{\phi+\phi'}} & \text{[relax]} \\
		& \jstyping{p'}{e'}{\tpot{R}{\phi+\phi'}} & \text{[transfer]}
	\end{alignat}
\end{proof}

\fi

\end{document}
